\newtheorem{rmk}{Remark}
\newtheorem*{asu*}{Assumption}
\theoremstyle{plain}
\newtheorem{lem}{Lemma}
\newtheorem{thm}{Theorem}
\newtheorem{pro}{Proposition}
\newtheorem{cro}{Corollary}
\newtheorem{exa}{Example}
\newcommand{\argmin}{\mathop{\mathrm{argmin}}}
\newcommand{\tr}{\mathsf{Tr}}
\newcommand{\R}{\mathbb{R}}
\newcommand{\p}{{\rm I}\kern-0.18em{\rm P}}
\newcommand{\E}{{\rm I}\kern-0.18em{\rm E}}
\newcommand{\CS}{\mathcal{S}}
\newcommand{\cN}{\mathcal{N}}
\newcommand{\CR}{\mathcal{R}}
\newcommand{\CG}{\mathcal{G}}
\newcommand{\schur}[3]{{#1}/ [{#2},{#3}]}
\newcommand{\1}{{\rm 1}\kern-0.24em{\rm I}}
\title{Sparse PCA: A New Scalable Estimator Based On Integer Programming}
\author[1]{Kayhan Behdin\thanks{behdink@mit.edu}}
\author[1,2]{Rahul Mazumder\thanks{rahulmaz@mit.edu}}
\affil[1]{Operations Research Center, Massachusetts Institute of Technology}
\affil[2]{Sloan School of Management, Massachusetts Institute of Technology}
\date{}
\begin{document}
\maketitle
\begin{abstract}
We consider the Sparse Principal Component Analysis (SPCA) problem under the well-known spiked covariance model. Recent work has shown that the SPCA problem can be reformulated as a Mixed Integer Program (MIP) and can be solved to global optimality, leading to estimators that are known to enjoy optimal statistical properties. However, prior MIP algorithms for SPCA appear to be limited in terms of scalability to up to a thousand features or so. In this paper, we propose a new estimator for SPCA which can be formulated as a MIP. Different from earlier work, we make use of the underlying spiked covariance model and properties of the multivariate Gaussian distribution to arrive at our estimator. We establish statistical guarantees for our proposed estimator in terms of estimation error and support recovery. We derive guarantees under departures from the spiked covariance model, and for approximate solutions to the optimization problem. We propose a custom algorithm to solve the MIP, which scales better than off-the-shelf solvers, and demonstrate that our approach can be much more computationally attractive compared to earlier exact MIP-based approaches for the SPCA problem. Our numerical experiments on synthetic and real datasets show that our algorithms can address problems with up to $20,000$ features in minutes; and generally result in favorable statistical properties compared to existing popular approaches for SPCA.
\end{abstract}

\section{Introduction}
Principal Component Analysis (PCA)~\citep{hotelling1933analysis} is a well-known dimensionality reduction method where one seeks to find a direction, a Principal Component (PC), that describes the variance of a collection of data points as well as possible. In this paper, we consider the spiked covariance model~\citep{johnstone2001}. Suppose we are given $n$ samples $X_i\in\R^p$ for $i\in[n]$, drawn independently from a multivariate normal distribution $\mathcal{N}(0,G^*)$ (where $G^*\in\R^{p\times p}$ is the covariance matrix). 
We let $X\in\R^{n\times p}$ denote the data matrix with $i$-th row $X_{i}$.
In the spiked covariance model, we assume 
\begin{equation}\label{Gcov1}
    X_{1}, \ldots, X_{n} \stackrel{\text{iid}}{\sim} {\mathcal N}(0,G^*) ~~~\text{and}~~~G^*=I_p+\theta u^*({u^*})^T
\end{equation}
where $u^*\in\R^p$ has unit $\ell_{2}$-norm $\|u^*\|_2=1$ and $I_p$ is the identity matrix of size $p$. In model~\eqref{Gcov1}, the vector $u^*$ is referred to as the spike.
We let $\theta$ be the Signal to Noise Ratio (SNR) of the model~\eqref{Gcov1}.
Informally speaking, all other factors remaining same, a higher value of the SNR makes the task of estimating $u^*$ relatively easier. Following~\citep{bresler-10.5555/3327546.3327752}, here we consider the setting where $\theta \in (0, 1]$.
The spiked covariance model is a standard statistical model in the PCA context and has been studied extensively in the literature, for example see~\cite{johnstone2001,bresler-10.5555/3327546.3327752,deshpande2016sparse,amini2009,krauthgamer2015semidefinite}.

Given $X$, we seek to estimate the vector $u^*$ (up to a sign flip) which corresponds to the true PC. 
If $\hat{u}$ denotes an estimator of $u^*$, then we can measure the quality of this estimator via the sine of the angle between $\hat{u}$ and $u^*$ (a smaller number denoting higher accuracy). Mathematically, for $\hat{u}$ such that $\|\hat{u}\|_2=1$, we define:
\begin{equation}\label{sinedef}
     |\sin\angle(\hat{u},u^*)|=\sqrt{1-(\hat{u}^T{u^*})^2}.
\end{equation}

In the high dimensional regime when the number of features ($p$) is much larger than the number of samples i.e., $p \gg n$, \cite{johnstone2009consistency,10.2307/24307692} show that the simple PCA estimator may be inconsistent for $u^*$ unless one imposes additional assumptions on $u^*$. A popular choice is to assume that $u^*$ is sparse with $s$ nonzeros (i.e, $s$-sparse)---in this case, we find a PC with $s$ nonzeros (for example)---this is commonly known as the sparse PCA problem or SPCA for short~\citep{jolliffe2003modified,hastie2019statistical}. \cite{johnstone2009consistency} propose the diagonal thresholding algorithm for SPCA that results in consistent estimation of the support of $u^*$ if $n\gtrsim s^2\log p$ \citep{amini2009}\footnote{The notations $\lesssim,\gtrsim$ are used to show an inequality holds up to a universal constant that does not depend upon problem data.}, a notable improvement over vanilla PCA that requires $n$ to be larger than $p$. 
In this paper, we study the SPCA problem under the spiked covariance model where $u^*$ is $s$-sparse. We also study a generalization where the covariance model is a perturbation of a spiked covariance matrix.

In the remainder of this section, we first present an overview of current algorithms for SPCA and then summarize our key contributions in this paper.

\subsection{Background and Literature Review}\label{sec:lit-review}
A well-known procedure to estimate $u^*$ is to consider the following optimization problem:
\begin{equation}\label{spca-QPform1}
    \max_{u \in \mathbb{R}^p}~~ u^T X^TX u ~~~~ \text{s.t.} ~~~~\| u \|_0 \leq s, \| u \|_{2} = 1
\end{equation}
which finds a direction $u$ with at most $s$ nonzero entries that maximizes the variance of the data along that direction. \cite{amini2009} have shown that the optimal solution to this problem attains the minimax optimal rate in terms of variable selection. In fact, \cite{amini2009} show that no algorithm can recover the PC consistently unless $n\gtrsim s\log p$ (information theory lower bound). Problem~\eqref{spca-QPform1} involves the maximization of a convex quadratic over a nonconvex set, and poses computational challenges. Several algorithms~\citep[see for example]{hastie2019statistical} have been proposed to obtain good solutions to~\eqref{spca-QPform1}---this includes both exact and approximate methods. In what follows, we review some existing algorithms and divide them into three broad categories.

\smallskip

\noindent \textbf{Polynomial-time Algorithms:} 
A well-known procedure to obtain convex relaxations for Problem~\eqref{spca-QPform1} is based on Semidefinite Programming~(SDP)~\citep{d2005direct,d2008optimal}. 
SDP-based relaxations of SPCA can recover the support of PC if the SDP returns a rank one solution and when $n\gtrsim s\log p$ \citep{amini2009}. Interestingly, \cite{krauthgamer2015semidefinite} show that if $n/\log p\gtrsim s\gtrsim \sqrt{n}$, the SDP formulation may not result in a rank one solution. 

In an interesting line of work,~\citep{berthet2013,10.1214/15-AOS1369} provide theoretical evidence that under the planted clique hypothesis, it may not be possible to have a polynomial-time algorithm for SPCA if $s\gtrsim \sqrt{n}$.
\cite{deshpande2016sparse} show that a polynomial-time method called covariance thresholding can estimate $u^*$ if $n\gtrsim s^2\log p$. Moreover, when $p\leq cn$ for some constant $c>0$, \cite{deshpande2016sparse} show $n\gtrsim s^2$ samples suffice.
The recent work of \cite{bresler-10.5555/3327546.3327752} shows that SPCA solutions can be obtained by solving a sequence of sparse linear regression problems. 
One of their approaches requires solving $p$ separate Lasso-regression problems.

A key difference between the SDP-relaxation approach and the covariance thresholding method lies in their computational efficiencies: the covariance thresholding method can be scaled to instances with $p \approx 10^4$ or so, and is computationally (much more) attractive compared to SDPs which are mostly limited to instances with $p$ in the hundreds (using off-the-shelf convex optimization solvers).

\smallskip

\noindent\textbf{Exact and MIP-based Algorithms:} 
In another line of work, exact approaches i.e., methods that lead to a globally optimal solution to~\eqref{spca-QPform1} have been proposed. For example, authors in~\citep{moghaddam2006spectral} present a tailored branch-and-bound method to solve SPCA when $p\leq 40$. Fairly recently, starting with the work of~\citep{bertsimas2016best} in sparse regression, there has been significant interest in exploring Mixed Integer Programming (MIP) approaches 
in solving sparse learning problems that admit a combinatorial description. In particular, MIP-based approaches have been proposed to address Problem~\eqref{spca-QPform1} for moderate/small-scale instances, as we discuss below. 
MIP-based approaches seek to obtain the global optimum of 
Problem~\eqref{spca-QPform1}---they deliver a good feasible solution and corresponding upper bounds (aka dual bounds)---taken together they certify the quality of the solution. 
\cite{berk2019certifiably} present a tailored branch-and-bound method that can obtain near-optimal solutions to Problem~\eqref{spca-QPform1} for $p\leq 250$. In a different line of work,~\cite{dey2018convex} present MIP-formulations and algorithms for~\eqref{spca-QPform1} and a relaxation that replaces the $\ell_0$-constraint on $u$, by an $\ell_{1}$-constraint resulting in a non-convex optimization problem. This approach delivers
upper bounds i.e., dual bounds to~\eqref{spca-QPform1} for up to $p\approx 2000$.
Other recent MIP-based approaches include~\citep{bertsimas2020solving} who consider mixed integer semidefinite programming (MISDP) approaches and associated relaxations of computationally expensive semidefinite constraints.  
In another line of work, \cite{li2020exact} discuss 
MISDP formulations and approximate mixed integer linear programming approaches 
for SPCA. Some of the approaches presented in~\cite{li2020exact,bertsimas2020solving} succeed in obtaining dual bounds for SPCA for $p\approx 1000$ or so.

In this paper, we also pursue a MIP-based approach to the SPCA problem. However, instead of Problem~\eqref{spca-QPform1} which in our experience appears to be challenging to scale to large instances, we consider a different criterion based on the statistical model~\eqref{Gcov1}. 
Our estimator is given by a 
structured sparse least squares problem---our custom MIP-based approach can solve this problem to optimality for instances up to $p\approx 20,000$, which is 10X larger than earlier 
MIP approaches presented in~\cite{dey2018convex,li2020exact}.

\smallskip

\noindent \textbf{Heuristic Algorithms:} Several appealing heuristic algorithms have been proposed to obtain good feasible solutions for Problem~\eqref{spca-QPform1} and close variants (e.g., involving an $\ell_{1}$-sparsity constraint instead of the $\ell_0$-constraint)---see for example,~\citep{zou2006sparse,richtarik2020alternating,teboulleconditional,witten2009penalized}. 
A popular approach is the Truncated Power Method \citep{yuan2013truncated}, which enjoys good statistical guarantees and numerical performance if a \emph{good} initialization is available. However, in the absence of a good initialization, this method can converge to a suboptimal local solution which may be even orthogonal to the underlying PC---this is illustrated by our experiments in Section~\ref{numerical} and also discussed in \citep{https://doi.org/10.1111/rssb.12360}. Unlike MIP-based approaches discussed above, heuristic methods do not deliver dual bounds---it may not be possible to certify the quality of the solution. In practice, some of these heuristic algorithms can lead to good feasible solutions for~\eqref{spca-QPform1} for $p \approx 10^4$ or so.

\subsection{Outline of Approach and Contributions}
Different algorithms discussed above 
have their respective strengths and limitations. While there is an impressive body of work on heuristics and polynomial-time algorithms for the SPCA problem, there has been limited investigation on MIP-based approaches, despite promising recent results. In this paper, our focus is to consider a MIP-based approach for the SPCA problem. However, in departure from current approaches based on MIP, which focus on~\eqref{spca-QPform1}, we consider a different estimation criterion. We make use of the spiked covariance model~\eqref{Gcov1} and properties of a multivariate Gaussian distribution to arrive at a mixed integer second order cone program (MISOCP)
that can be interpreted as a least squares problem with structured sparsity constraints, where the constraints arise from the spiked covariance model structure. To our knowledge, the estimator we propose and study herein has not been considered earlier in the context of the SPCA problem; and we study both statistical and algorithmic aspects of the proposed estimator. 

By extending recent work on specialized algorithms for sparse regression-type problems~\cite[see for example]{hazimeh2020fast,hazimeh2020sparse,bertsimas2020sparse}, we propose a custom algorithm that can solve our proposed SPCA MIP formulation for instances with $p\approx 20,000$ (and $n\approx 15,000,s=10$) in minutes---the corresponding estimation accuracy of these solutions appear to be numerically better than several polynomial-time and heuristic algorithms that we compare with.

In terms of statistical properties, we establish an estimation error bound as $ |\sin\angle(\hat{u},u^*)|\lesssim \sqrt{s^2\log p/n}$ where $\hat{u}$ is the estimated PC by our method, when $u^*$ is $s$-sparse. 
Importantly, we also establish a novel oracle-type inequality where the underlying covariance model can be different from a (sparse) spiked covariance model. This shows that we continue to obtain an estimation error rate of 
$s^2\log p/n$ even if the underlying covariance model deviates from a spiked covariance model (as long as the deviation is not too much --- this will be made precise later).   
Our analysis is different from the work of \cite{johnstone2009consistency} who assume a certain decay for the coordinates of $u^*$. In terms of variable selection, under certain regularity conditions, our method recovers the full support of $u^*$ correctly with high probability if $n\gtrsim s\log p$. Moreover, for our estimator, the minimum signal strength required for full support recovery is weaker compared to the existing work such as~\citep{deshpande2016sparse,bresler-10.5555/3327546.3327752,amini2009}. In particular, we show an example where our estimator can provably recover the correct support while existing guarantees for common SPCA estimators fall short. 
We also derive error bounds and support recovery guarantees for an approximate solution available from our optimization framework\footnote{Our MIP-based algorithms deliver optimality certificates (available from dual bounds) along the course of the algorithm. These optimality certificates appear in the statistical error bounds of approximate solutions available from our optimization criterion.} which can be useful in practice, especially if a practitioner decides to use an approximate solution available from our MIP framework under computational constraints.

 To our knowledge, our work is novel in that it makes use of statistical modeling assumptions to obtain a MIP-based optimization formulation for the SPCA problem, which is computationally \emph{more attractive} compared to the original formulation~\eqref{spca-QPform1}. While our proposed estimator is not equivalent to~\eqref{spca-QPform1}; we establish that the statistical properties of our estimator are similar to that of an optimal solution to~\eqref{spca-QPform1} under suitable conditions. 
We summarize our contributions below:
\begin{enumerate}
    \item We present a novel MISOCP formulation for the SPCA problem under the assumption of the spiked covariance model~\eqref{Gcov1}. To the best of our knowledge, our work is the first to reformulate SPCA into a computation-friendly MIP via statistical modeling reductions.
    \item We show that our proposed estimator has an estimation error rate of ${(s^2/n)\log (p/s)}$---this rate is achievable as long as the underlying covariance model can be approximated with a sparse spiked covariance model, even if the underlying covariance does not follow such a model. We also show that if $n\gtrsim s\log p$, under certain regularity conditions, our method recovers the support of $u^*$ correctly.
    \item We propose a custom cutting plane algorithm to solve the MISOCP describing our SPCA estimator---our approach can address instances of the SPCA problem with $p\approx 20,000$, and leads to
    significant computational improvements compared to off-the-shelf MIP solvers for the same problem.  
    \item We demonstrate empirically that our framework can lead to SPCA estimators with favorable statistical properties compared to several popular SPCA algorithms.
\end{enumerate}
\textbf{Notation.} For a vector $x\in\R^p$, let $x_i$ denote the $i$-th coordinate of $x$. For a positive integer $a$, we let $[a]:=\{1, \ldots, a\}$.
For a matrix $X\in\mathbb{R}^{n\times p}$, let 
$X_{i,j}$, $X_{i,:}$ and $X_{:,i}$ denote the $(i,j)$-th coordinate, the $i$-th row and $i$-th column of $X$, respectively. For a matrix $G\in\R^{p_1\times p_2}$ and sets $A_1\subseteq[p_1],A_2\subseteq[p_2]$, we let $G_{A_1,A_2}$ denote the sub-matrix of $G$ with rows in $A_1$ and columns in $A_2$. 
In particular, if $A_2=\{j\}$, we let $G_{A_1,j}$ be the vector $\{G_{i,j}\}_{i \in A_{1}}$.
For the data matrix $X\in \R^{n\times p}$ and a set $A\subseteq[p]$, we let $X_A$ denote $X_{[n],A}$. If $x\in\R^p$ and $A\subseteq[p]$, we let $x_A$ be a sub-vector of $x$ with coordinates in $A$. For a vector $x\in\mathbb{R}^p$, we let $S(x)$ denote its support,
$S(x)=\{i\in[p]:|x_i|>0\}.$
We let $\mathcal{B}(p)$ denote the unit Euclidean unit ball in $\mathbb{R}^p$ i.e., 
$\mathcal{B}(p)=\{x\in\R^p: \|x\|_2\leq 1\}.$
For a vector $u\in\R^p$, we let $\|u\|_0$ be the $\ell_0$-pseudonorm of $u$, denoting the number of nonzero coordinates of $u$. We denote the smallest and largest eigenvalues of a positive semidefinite matrix $G\in\R^{p\times p}$, by $\lambda_{\min}(G),\lambda_{\max}(G)$, respectively. Similarly, $\sigma_{\min}(X)$, $\sigma_{\max}(X)$ denote the smallest and largest singular values of matrix $X$, respectively. For a convex function $f:\R^p\to \R$, $\partial f(x)$ denotes the set of subgradients of $f$ at $x$. The notations $\lesssim,\gtrsim$ denote that inequality holds up to a universal constant (i.e., one with no dependence on problem data). Throughout the paper, we use the convention ${0}/{0}=0$. The proofs of main results can be found in the appendix.

\section{Proposed Estimator}\label{problemform}

We present our estimator for the SPCA problem.
We motivate our estimator using the spiked covariance model~\eqref{Gcov1}---an useful and insightful model that is often used to study the SPCA problem. We draw inspiration from earlier papers on SPCA~\citep{johnstone2001,bresler-10.5555/3327546.3327752,deshpande2016sparse,amini2009,krauthgamer2015semidefinite} who study the spiked covariance model, and its variants. Although the derivation of our estimator uses the spiked covariance model assumption, in Section~\ref{statreg} we see that our estimator appears to perform well in terms of statistical properties as long as the underlying model can be well-approximated by a sparse spiked covariance model. 
For the rest of this section, to present the general idea and intuition behind our method, we assume model~\eqref{Gcov1} where  $u^*$ is $s$-sparse. 
We first recall a well-known result pertaining to multivariate Gaussian distributions.
\begin{lem}\label{lem_reg}
Let $x\in \mathbb{R}^p$ be a random vector from $\mathcal{N}(0,G)$ where
$G$ is a positive definite matrix. For any $j \in [p]$, the conditional distribution of $x_{j}$, given $x_{i}, i \neq j$ is given by
\begin{equation}\label{lem1-conditional}
    x_j|\{x_i\}_{i\neq j}\sim \cN(\sum_{i\neq j} \beta_{i,j}^* x_i, ({\sigma_j^*})^2)
\end{equation}
where, $\beta^*_{i,j}=-\frac{(G^{-1})_{j,i}}{(G^{-1})_{j,j}}$ and $({\sigma_j^*})^2=\frac{1}{(G^{-1})_{j,j}}$.
\end{lem}

The following lemma writes $\beta^*$ in terms of the parameters of the spiked covariance model~\eqref{Gcov1}.
\begin{lem}\label{Ginv}
The inverse of the covariance matrix $G^*$ in \eqref{Gcov1}, can be written as:
\begin{equation}
    {(G^*)}^{-1}=I_p-\frac{\theta u^*({u^*})^T}{1+\theta}.
\end{equation}
The parameters $\{\beta_{i,j}^*\}_{i,j}$ and $\{\sigma_i^*\}_{i}$ defined in Lemma~\ref{lem_reg} can be expressed as:
\begin{equation}\label{beta_lem}
\begin{aligned}
 \beta_{i,j}^* = \frac{\theta u^*_i u^*_j}{1+\theta -\theta ({u^*_j})^2},~i \neq j~~~~\text{and}~~~~~
    ({\sigma_j^*})^2 = 1 + \frac{\theta ({u^*_j})^2}{1 + \theta - \theta({u^*_j})^2}, ~j \in [p].
\end{aligned}
\end{equation}
\end{lem}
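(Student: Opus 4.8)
The plan is to compute $(G^*)^{-1}$ in closed form via a rank-one update formula, and then simply read off the entries $(G^*)^{-1}_{j,i}$ and $(G^*)^{-1}_{j,j}$ and substitute them into the expressions $\beta^*_{i,j}=-(G^{-1})_{j,i}/(G^{-1})_{j,j}$ and $(\sigma^*_j)^2=1/(G^{-1})_{j,j}$ furnished by Lemma~\ref{lem_reg}. No clever idea is needed; the work is essentially bookkeeping.

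First I would establish the inverse formula. Writing $G^*=I_p+(\sqrt{\theta}\,u^*)(\sqrt{\theta}\,u^*)^T$, the Sherman--Morrison identity gives
\[
(G^*)^{-1}=I_p-\frac{\theta\,u^*(u^*)^T}{1+\theta\,\|u^*\|_2^2}=I_p-\frac{\theta\,u^*(u^*)^T}{1+\theta},
\]
where the second equality uses $\|u^*\|_2=1$. Alternatively, to avoid quoting Sherman--Morrison one can verify the claim directly: expanding $(I_p+\theta u^*(u^*)^T)\bigl(I_p-\tfrac{\theta u^*(u^*)^T}{1+\theta}\bigr)$ and using $(u^*)^Tu^*=1$, the coefficient multiplying the rank-one term $u^*(u^*)^T$ is $\theta-\tfrac{\theta}{1+\theta}-\tfrac{\theta^2}{1+\theta}=0$, so the product is $I_p$, proving the first assertion.

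Next I would extract the needed entries. From the inverse formula, $(G^*)^{-1}_{j,i}=-\theta u^*_j u^*_i/(1+\theta)$ for $i\neq j$, and $(G^*)^{-1}_{j,j}=1-\theta(u^*_j)^2/(1+\theta)=(1+\theta-\theta(u^*_j)^2)/(1+\theta)$. Substituting these into the formulas of Lemma~\ref{lem_reg} and cancelling the common factor $1+\theta$ yields $\beta^*_{i,j}=\theta u^*_i u^*_j/(1+\theta-\theta(u^*_j)^2)$ and $(\sigma^*_j)^2=(1+\theta)/(1+\theta-\theta(u^*_j)^2)$; a one-line rearrangement, $(1+\theta)/(1+\theta-\theta(u^*_j)^2)=1+\theta(u^*_j)^2/(1+\theta-\theta(u^*_j)^2)$, puts the variance in the form stated in~\eqref{beta_lem}.

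I do not anticipate a real obstacle here; the only thing worth a sentence of care is checking that Lemma~\ref{lem_reg} is applicable and that the quantities above are well defined. Both are immediate: the eigenvalues of $G^*$ are $1$ (with multiplicity $p-1$) and $1+\theta>0$, so $G^*\succ 0$; and since $(u^*_j)^2\leq\|u^*\|_2^2=1$ we have $1+\theta-\theta(u^*_j)^2\geq 1>0$, so $(G^*)^{-1}_{j,j}>0$ and every denominator appearing above is strictly positive.
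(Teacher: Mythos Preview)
Your proposal is correct and is exactly the standard argument one would expect: apply Sherman--Morrison (or verify the inverse directly via the rank-one cancellation you describe), read off the $(j,i)$ and $(j,j)$ entries, and substitute into the formulas of Lemma~\ref{lem_reg}. The paper does not actually include a proof of this lemma---it is treated as an elementary computation and stated without argument---so there is nothing to compare against; your write-up fills that gap cleanly, including the small sanity checks that $G^*\succ 0$ and that the denominators $1+\theta-\theta(u^*_j)^2$ are strictly positive.
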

For notational convenience, we let the diagonal entries of the matrix $\beta$ to be zero, that is, $\beta_{i,i}^*=0$ for $i\in[p]$.
Lemma \ref{Ginv} shows that the sparsity pattern of the off-diagonal entries of
$\beta^*$ is the same as that of the off-diagonal entries of 
$G^*-I_p$. In particular, $\beta^*_{i,j}$ can be nonzero only if both the $i$-th and $j$-th coordinates of $u^*$ are nonzero i.e., $u_i^* \neq 0$,$u_j^*\neq 0$. 
As we assume $u^*$ has at most $s$ nonzeros, at most $s^2$ values of $\{\beta^*_{i,j}\}_{i,j}$ can be nonzero. 
By Lemma~\ref{lem_reg}, for every $j \in [p]$, we have
$$X_{:,j}|\left\{X_{:,i}\right\}_{i\neq j}\sim\mathcal{N}\left(\sum_{i\neq j}\beta^*_{i,j}X_{:,i},(\sigma_j^*)^2 I_n\right).$$
For a fixed $j$, we can consider estimating the parameters $\{\beta^*_{i,j}\}_{j \neq i}$ by minimizing a least squares objective 
$\| X_{:,j} - \sum_{i\neq j}\beta_{i,j}X_{:,i}  \|_2^2$
under suitable (sparsity) constraints on $\{\beta_{i,j}\}_{i \neq j}$ imposed by the sparsity structure on $u^*$. As $j$ runs from $\{1, \ldots, p\}$, we have $p$-many least squares problems subject to (structured) sparsity constraints on $\beta$.
This motivates us to consider the following optimization problem which minimizes the sum of square of the residual errors, across all $p$ regression problems: 
     \begin{subequations}\label{reg_multi}
     \begin{align}
     \min_{\beta,{z}} \quad &  \sum_{j=1}^p \left\Vert X_{:,j} -\sum_{i\neq j}\beta_{i,j}X_{:,i} \right\Vert_2^2  \\
     \text{s.t.} \quad & {z}\in \{0,1\}^p,\beta\in\R^{p\times p},\beta_{i,i}=0~\forall i\in[p]  \\
     & \beta_{i,j}(1-z_i)=\beta_{i,j}(1-z_j) = 0\quad \forall i,j\in[p]\label{secondconst}\\
     & \sum_i {z}_i \leq s \label{thirdconst}
     \end{align}
     \end{subequations}
where above, the decision variables are 
$\beta\in\R^{p\times p}, z\in\{0,1\}^p$.
The binary variable $z \in \{0, 1\}^p$ controls the sparsity pattern of $\{\beta_{i,j}\}$. 
The complementarity constraint \eqref{secondconst} states that $\beta_{i,j} \neq 0$  only if $z_{i}=z_{j}=1$; and $\beta_{i,j}=0$ otherwise. The constraint \eqref{thirdconst} ensures that $z$ has at most $s$-many nonzeros (under the assumption that $u^*$ has at most $s$ nonzeros). Problem~\eqref{reg_multi} is a mixed integer quadratic optimization problem with complementarity constraints. 
We present different integer programming formulations for~\eqref{reg_multi} in 
Section~\ref{optalg}.
\begin{rmk}
It follows from~\eqref{beta_lem} that the off-diagonal entries of $\beta^*$ are the entries of an asymmetric matrix with rank one.  
Due to computational reasons, we drop the rank constraint 
and hence, formulation~\eqref{reg_multi} can be interpreted as an approximation (by removing the rank constraint) to formulation~\eqref{spca-QPform1} under the covariance model~\eqref{Gcov1}. As we will see in our statistical error bounds, estimator~\eqref{reg_multi} leads to a sub-optimal error bound compared to estimator~\eqref{spca-QPform1}. However, Problem~\eqref{reg_multi} appears to be friendlier from a computational viewpoint, and via our tailored algorithms, we are able to compute optimal solutions to~\eqref{reg_multi} for instances with $p\approx 20,000$. Recall that in contrast, current MIP-based approaches for SPCA 
can deliver dual bounds for~\eqref{spca-QPform1} for $p \approx 1,000$.  
\end{rmk}
Our estimator~\eqref{reg_multi}, can be interpreted as estimating the inverse of the covariance matrix $G^*$ under a structured sparsity pattern (imposed by $u^*$) as determined by the spiked covariance model~\eqref{Gcov1} (see Lemma~\ref{Ginv}). 
To our knowledge, estimator~\eqref{reg_multi} with the structured sparsity constraint has not been studied earlier in the context of the SPCA problem.
A common procedure to estimate a sparse precision matrix is the nodewise sparse regression framework of \cite{meinhausenbuhlmann}, where the regression problems are solved independently across the $p$ nodes under a plain sparsity assumption on the entries of the precision matrix. (Note that we consider a structured sparsity pattern that is shared across the matrix, which may not be generally 
achievable via a nodewise regression approach).
The framework in~\citep{bresler-10.5555/3327546.3327752} for the SPCA problem is similar to~\citep{meinhausenbuhlmann} in that they also compute  $p$ different regression problems independently of one another. 
Compared to~\citep{bresler-10.5555/3327546.3327752}, our joint estimation framework generally leads to improved statistical error bounds (for estimating $\beta^*$)---see Section~\ref{statreg}, and also leads to better numerical performance (see Section~\ref{numerical}). Note that the $j$-th term of the sum appearing in the objective function of Problem~\eqref{reg_multi}
serves to approximate the negative log-likelihood of the conditional distribution~\eqref{lem1-conditional}. In particular, to estimate the coefficients $\{\beta_{ij}\}$ from~\eqref{reg_multi} we take all the variances in the conditional distributions to the same.
Once we obtain the regression coefficients, we estimate the variances via
update~\eqref{sigmahatdef}.
In what follows, we discuss how to estimate the PC in model~\eqref{Gcov1} from an estimate of $\beta^*$ available from~\eqref{reg_multi}.
Statistical properties of our estimator (both estimation and support selection properties) are discussed in Section \ref{statreg}.

\section{Statistical Properties}\label{statreg}
In this section, we establish some statistical properties of our estimator. 
We assume throughout that data is generated as per the model
\begin{equation}\label{Gcov-general}
    X_{1}, \ldots, X_{n} \stackrel{\text{iid}}{\sim} {\mathcal N}(0,G^*)
\end{equation}
for some positive definite $G^*\in\R^{p\times p}$. Note  
that we allow $G^*$ to depart from a spiked covariance model---our results allow for model misspecification, and the error stemming from approximating $G^*$ with a spiked covariance model appears in the error bound.

\smallskip 

\noindent {\bf Notation:} We first introduce some notation. Given the SNR level $\theta\in(0,1]$ and sparsity level $s>0$, we define a 
set of sparse spiked covariance matrices as: 
\begin{equation}
    \CG_{s,\theta} = \left\{I_p+\theta u u^T: \|u\|_2=1, \|u\|_0\leq s\right\}.
\end{equation}
For a general covariance matrix $G^*$, we can obtain an element in $\CG_{s,\theta}$ that is closest to it as
\begin{equation}
    \tilde{G} \in\argmin_{G\in\CG_{s,\theta}} \|G-G^*\|_F^2.
\end{equation}
Note that as $\tilde{G}\in \CG_{s,\theta}$, there exists $\tilde{u}\in\R^p$ such that $\|\tilde{u}\|_2=1$, $\|\tilde{u}\|_0\leq s$ and $\tilde{G}=I_p+\theta\tilde{u}\tilde{u}^T$. Recall that we denote $\{\beta_{i,j}^*\}$ to be the regression coefficients in Lemma~\ref{lem_reg} that correspond to the covariance matrix $G^*$. Similarly, we define the ``oracle regression coefficients'' $\{\tilde{\beta}_{i,j}\}$ to be the regression coefficients corresponding to 
the covariance matrix $\tilde{G}$. Specifically, in light of Lemma~\ref{Ginv} we define:
\begin{equation}\label{tildebeta}
    \tilde{\beta}_{i,j}=\begin{cases} \frac{\theta \tilde{u}_i\tilde{u}_j}{1+\theta-\theta\tilde{u}_j^2} & \text{~if~} i \neq j\in[p] \\
    0 & \text{otherwise.}
\end{cases}
\end{equation}
Therefore, the term $\sum_{j=1}^p\sum_{i\neq j}|\tilde{\beta}_{i,j}-\beta^*_{i,j}|$ indicates the error in approximating $\beta^*$ by $\tilde{\beta}$---this error stems from approximating $G^*$ by an $s$-sparse spiked covariance matrix
$\tilde{G} \in  \CG_{s,\theta}$. 
In particular, if $G^*\in\CG_{s,\theta}$, then $\sum_{j=1}^p\sum_{i\neq j}|\tilde{\beta}_{i,j}-\beta^*_{i,j}|=0$. Similarly, we denote variances from Lemma~\ref{lem_reg} for $G^*$ and $\tilde{G}$ as $(\sigma_j^*)^2$ and $\tilde{\sigma}_j^2$, respectively. The term $\sum_{j=1}^p (\tilde{\sigma}_j^2-(\sigma_j^*)^2)^2$ captures the error in approximating $\{\sigma^*_j\}$ by $\{\tilde{\sigma}_j\}$.

In this section, our theory will make use of some or all of the assumptions stated below.
\begin{asu*} We have the following:
\begin{enumerate}[leftmargin=*,label=\textbf{(A\arabic*})]
    \item \label{gdtsrassumption} The matrix $G^*$ satisfies $\max_{i,j\in[p]}|G^*_{i,j}|\leq 2$ and $\lambda_{\max}(G^*)\leq 2$, and 
    \begin{equation}
        \min_{\substack{S\subseteq[p] \\ |S|\leq 2s}} \lambda_{\min}(G^*_{S,S}) \geq 1.
    \end{equation}
    Moreover, for $j\in[p]$ we have $\sigma_j^*\leq 2$ where $\sigma^*_j$ is introduced in Lemma~\ref{lem_reg}.
    \item \label{orsclebetaassumtion} The oracle regression coefficients satisfy $\sum_{j=1}^p \sum_{i\neq j}|\tilde{\beta}_{i,j}-\beta^*_{i,j}|\leq 1$. 
    \item \label{sum_assum} The error in approximating $\beta^*$ by $\tilde{\beta}$ is  bounded above as:
\begin{equation}\label{sumassum}
    \sum_{j=1}^p \sum_{i\neq j}|\tilde{\beta}_{i,j}-\beta^*_{i,j}|\lesssim \frac{s\log(p/s)}{n}
\end{equation}
and the true variances $\{(\sigma_{i}^*)^2\}_{i}$ and oracle variances $\{\tilde{\sigma}_{i}^2\}_{i}$ (see Lemma~\ref{Ginv}) satisfy
\begin{equation}\label{sigmaassum}
    \sum_{j=1}^p\left(  \tilde{\sigma}_j^2-({\sigma_{j}^*})^2  \right)^2\lesssim \frac{s^2\log(p/s)}{n}.
\end{equation}
\item \label{beta_min_assum} There is a numerical constant $c_{\min}>0$ such that for $i,j\in[p]$ with $\tilde{u}_i,\tilde{u}_j\neq 0$,
we have $|\tilde{\beta}_{i,j}|>c_{\min} \sqrt{(s/n)\log p}$. We denote this lower bound by $\widetilde{\beta_{\min}}$ that is, 
$\widetilde{\beta_{\min}}=c_{\min} \sqrt{(s/n)\log p}.$
\end{enumerate}
\end{asu*}
Assumption~\ref{gdtsrassumption} ensures the matrix $G^*$ is properly scaled and does not have very large or small eigenvalues. Assumptions~\ref{orsclebetaassumtion} and~\ref{sum_assum} state that the oracle $\tilde{G}$ is a good approximation of $G^*$. In particular, Assumption~\ref{sum_assum} is a special case of Assumption~\ref{orsclebetaassumtion}. Assumption~\ref{beta_min_assum} ensures that the nonzero coordinates of $\tilde{\beta}$ are sufficiently large, to facilitate their identification. 
\begin{rmk}
As we noted earlier, the spiked covariance model 
has been frequently studied in earlier work. 
Observe that if $G^*=I_p+\theta u^*(u^*)^T$ follows a spiked covariance model with $\|u\|_2=1$ and $0<\theta\leq 1$, it readily satisfies Assumption~\ref{gdtsrassumption}. Moreover, if $u^*$ is $s$-sparse, that is $\|u^*\|_0=s$, the matrix $G^*$ also satisfies Assumptions~\ref{orsclebetaassumtion} and~\ref{sum_assum}. Hence, our results that use Assumptions~\ref{gdtsrassumption}-\ref{sum_assum} apply to a sparse spiked covariance model, and a spiked covariance model that can be well-approximated by a sparse one as in Assumptions~\ref{orsclebetaassumtion} and~\ref{sum_assum}.
\end{rmk}

\begin{rmk}\label{new-multispike-remark}
    We note that our model for the covariance matrix is different from the multiple-spike covariance models considered by~\citep{10.2307/24307692,johnstone2009consistency}.
\end{rmk}

Informally speaking, our first result Theorem~\ref{reg_thm} shows that a solution $\hat{\beta}$ to Problem~\eqref{reg_multi} is close to the corresponding population coefficients $\beta^*$ in the squared $\ell_{2}$ norm, as long as the oracle coefficients $\tilde{\beta}$ and $\beta^*$ are sufficiently close.  
Theorem~\ref{reg_thm} (for proof see Appendix~\ref{appendix:c}) presents an error bound of the estimator arising from Problem~(\ref{reg_multi}). 
\begin{thm}\label{reg_thm}
Let $\beta^*$ be the regression coefficients in Lemma~\ref{lem_reg}; and
$\tilde{\beta}$ be as defined in~\eqref{tildebeta}. Suppose Assumptions~\ref{gdtsrassumption} and~\ref{orsclebetaassumtion} hold and $s\leq p/5$. If $(\hat{\beta},\hat{z})$ is a solution to Problem~(\ref{reg_multi}) 
and $n\gtrsim s\log p$, then with high probability\footnote{An explicit expression for the probability can be found in~\eqref{thm1prob}.}, we have:
\begin{equation}\label{bound-rhs-thm1}
   \sum_{\substack{ i,j\in[p]\\ j\neq i }}(\hat{\beta}_{i,j}-\beta_{i,j}^*)^2\lesssim \frac{s^2\log(p/s)}{n} + s\sum_{j=1}^p \sum_{i\neq j}|\tilde{\beta}_{i,j}-\beta^*_{i,j}|.
\end{equation}
\end{thm}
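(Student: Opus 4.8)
The plan is to run a ``basic inequality'' argument for the joint least-squares problem~\eqref{reg_multi}, then convert a bound on the prediction error into a bound on the coefficient error using a restricted (sparse) eigenvalue inequality for the Gaussian design; the delicate ingredient is a uniform control of the Gaussian empirical process indexed by the \emph{data-dependent} active set.

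First I would record, via Lemma~\ref{lem_reg} applied row-wise, the representation $X_{:,j}=\sum_{i\neq j}\beta^*_{i,j}X_{:,i}+\varepsilon_{:,j}$ for each $j\in[p]$, where conditionally on $\{X_{:,i}\}_{i\neq j}$ we have $\varepsilon_{:,j}\sim\cN(0,(\sigma^*_j)^2 I_n)$; in particular $\varepsilon_{:,j}$ is \emph{independent} of $\{X_{:,i}\}_{i\neq j}$, and by Lemma~\ref{Ginv} together with $\theta\in(0,1]$ one has $(\sigma^*_j)^2\le 2$ uniformly. Next, the sparsified pair $(\tilde\beta,\tilde z)$ with $\tilde z=\mathbf{1}(\cdot\in\tilde S)$ is feasible for~\eqref{reg_multi}, since by~\eqref{tildebeta} the off-diagonal support of $\tilde\beta$ lies in $\tilde S\times\tilde S$ with $|\tilde S|\le s$. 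Setting $\hat\Delta:=\hat\beta-\beta^*$, $\tilde\Delta:=\tilde\beta-\beta^*$, $\hat\Delta':=\hat\beta-\tilde\beta=\hat\Delta-\tilde\Delta$ (all $p\times p$ with zero diagonal, so $X M_{:,j}=\sum_{i\neq j}M_{i,j}X_{:,i}$), optimality of $\hat\beta$ against feasibility of $\tilde\beta$, after substituting the representation and expanding squares, gives the basic inequality $\sum_{j}\|X\hat\Delta_{:,j}\|_2^2\le 2\sum_{j}\langle\varepsilon_{:,j},X\hat\Delta'_{:,j}\rangle+\sum_{j}\|X\tilde\Delta_{:,j}\|_2^2$.

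I would then bound the two right-hand terms separately. For the cross term, the structural point is that $\hat\beta,\tilde\beta$ have column supports in $\hat S,\tilde S$, so each $\hat\Delta'_{:,j}$ is supported on $T:=\hat S\cup\tilde S$ with $|T|\le 2s$ and vanishes for $j\notin T$; hence $\langle\varepsilon_{:,j},X\hat\Delta'_{:,j}\rangle\le\|P_{T,j}\varepsilon_{:,j}\|_2\|X\hat\Delta'_{:,j}\|_2$, where $P_{T,j}$ is the orthogonal projection onto $\mathrm{span}\{X_{:,i}:i\in T\setminus\{j\}\}$, and Cauchy--Schwarz over $j\in T$ bounds the cross term by $\big(\sum_{j\in T}\|P_{T,j}\varepsilon_{:,j}\|_2^2\big)^{1/2}\big(\sum_j\|X\hat\Delta'_{:,j}\|_2^2\big)^{1/2}$. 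To control $\sum_{j\in T}\|P_{T,j}\varepsilon_{:,j}\|_2^2$ uniformly in the random set $T$, for each \emph{deterministic} $W$ with $|W|\le 2s$ and each $j\in W$ I would condition on $\{X_{:,i}\}_{i\neq j}$ --- which makes $P_{W,j}$ deterministic while $\varepsilon_{:,j}$ stays $\cN(0,(\sigma^*_j)^2 I_n)$ --- so that $\|P_{W,j}\varepsilon_{:,j}\|_2^2$ equals $(\sigma^*_j)^2$ times a chi-squared variable with at most $2s$ degrees of freedom, apply a chi-squared tail bound, and union-bound over the $\lesssim s\binom{p}{2s}$ pairs $(W,j)$ using $\log\binom{p}{2s}\lesssim s\log(p/s)$; this gives $\|P_{W,j}\varepsilon_{:,j}\|_2^2\lesssim s\log(p/s)$ simultaneously, hence (applied to the random $T$) $\sum_{j\in T}\|P_{T,j}\varepsilon_{:,j}\|_2^2\lesssim s^2\log(p/s)$. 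For the approximation term I would use $\|X\tilde\Delta_{:,j}\|_2^2\le\big(\max_{i,i'}|(X^\top X)_{i,i'}|\big)\|\tilde\Delta_{:,j}\|_1^2$, the standard concentration bound $\max_{i,i'}|(X^\top X)_{i,i'}|\lesssim n$ with high probability (using $|G^*_{i,i'}|\le 2$ and $n\gtrsim\log p$), and $\sum_j\|\tilde\Delta_{:,j}\|_1^2\le\big(\sum_{i\neq j}|\tilde\Delta_{i,j}|\big)^2\le\sum_{i\neq j}|\tilde\Delta_{i,j}|$ by the hypothesis $\sum_{i\neq j}|\tilde\beta_{i,j}-\beta^*_{i,j}|\le 1$, yielding $\sum_j\|X\tilde\Delta_{:,j}\|_2^2\lesssim n\sum_{i\neq j}|\tilde\beta_{i,j}-\beta^*_{i,j}|$.

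Combining, with $A^2:=\sum_j\|X\hat\Delta_{:,j}\|_2^2$ and $B^2:=\sum_j\|X\tilde\Delta_{:,j}\|_2^2$ (and $\|X\hat\Delta'_{:,j}\|_2\le\|X\hat\Delta_{:,j}\|_2+\|X\tilde\Delta_{:,j}\|_2$), the basic inequality becomes $A^2\lesssim s\sqrt{\log(p/s)}\,(A+B)+B^2$; solving this quadratic in $A$ and returning gives $\sum_j\|X\hat\Delta'_{:,j}\|_2^2\lesssim A^2+B^2\lesssim s^2\log(p/s)+n\sum_{i\neq j}|\tilde\beta_{i,j}-\beta^*_{i,j}|$. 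Since each $\hat\Delta'_{:,j}$ is $2s$-sparse, a uniform restricted/sparse eigenvalue lower bound for the Gaussian design (valid for $n\gtrsim s\log p$) turns this into $\sum_j\|\hat\Delta'_{:,j}\|_2^2\lesssim \tfrac{s^2\log(p/s)}{n}+\sum_{i\neq j}|\tilde\beta_{i,j}-\beta^*_{i,j}|$; finally adding $\sum_j\|\tilde\Delta_{:,j}\|_2^2\le\sum_{i\neq j}|\tilde\beta_{i,j}-\beta^*_{i,j}|$ and using $\hat\Delta=\hat\Delta'+\tilde\Delta$ yields~\eqref{bound-rhs-thm1} (any less refined treatment of the approximation term only costs the extra factor $s$ that the stated bound allows). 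The hard part will be the uniform control of $\sum_{j\in T}\|P_{T,j}\varepsilon_{:,j}\|_2^2$ over the random active set $T$: the per-pair conditioning above, which relies on $\varepsilon_{:,j}\perp\{X_{:,i}\}_{i\neq j}$, is what both decouples the dependence across $j$ and produces the $s^2\log(p/s)/n$ rate; all other steps are routine.
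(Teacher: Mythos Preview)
Your proposal is correct and uses the same core ingredients as the paper: the basic inequality from optimality of $\hat\beta$ against feasibility of $\tilde\beta$, uniform control of the noise term $\sum_{j}\bigl(\varepsilon_{:,j}^\top\tfrac{X\hat\Delta'_{:,j}}{\|X\hat\Delta'_{:,j}\|_2}\bigr)^2$ via the projection $\|P_{T,j}\varepsilon_{:,j}\|_2^2$ (which the paper writes as $\sup_{S(v)=S}[\varepsilon_{:,j}^\top Xv/\|Xv\|_2]^2$ and bounds via a maximal inequality plus a union bound over $\binom{p}{2s}$ active sets), a restricted-eigenvalue lower bound to pass from prediction to coefficient error, and the elementary bound $\max_{i,i'}|(X^\top X)_{i,i'}|\lesssim n$ for the approximation term.

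The one organizational difference is worth noting. The paper applies the restricted-eigenvalue step directly to $\hat\Delta=\hat\beta-\beta^*$, which is \emph{not} sparse in the misspecified case; to cope with this it splits $\|X\hat\Delta_{:,j}\|_2^2$ into the pieces $i\in\hat S$ and $i\notin\hat S$ and controls the cross term via $\max_{i,i'}|X_{:,i}^\top X_{:,i'}|$, incurring an extra factor $s$ (from $|\hat S|\le 2s$) in front of $\sum_{i\neq j}|\tilde\beta_{i,j}-\beta^*_{i,j}|$. You instead apply the restricted-eigenvalue bound to the genuinely $2s$-sparse $\hat\Delta'=\hat\beta-\tilde\beta$ and add back $\sum_j\|\tilde\Delta_{:,j}\|_2^2\le\sum_{i\neq j}|\tilde\beta_{i,j}-\beta^*_{i,j}|$ at the end. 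This is cleaner and, as you observe, actually sharpens the stated bound by removing the factor $s$ on the approximation term; the paper's route is slightly looser here but otherwise identical in spirit.
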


In the special case when $G^*$ follows a sparse spiked covariance model, that is $G^*\in\CG_{s,\theta},$ the second term in the rhs of~\eqref{bound-rhs-thm1} is zero, and Theorem~\ref{reg_thm} shows that Problem~\eqref{reg_multi} delivers an estimate of $\beta^*$ with an error bound of $s^2\log(p/s)/n$.
In what follows, we show how to estimate $u^*$ i.e., the PC of interest, from the estimated regression coefficients $\{\hat{\beta}_{i,j}\}$.

Let us define a matrix $B^*\in \R^{p\times p}$ such that 
\begin{equation}\label{new-bstar-def}
  B^*_{i,j} = \begin{cases} \beta^*_{i,j} &\text{if}~ i \neq j \\
 (\sigma^*_i)^2-1 & \text{if}~ i = j \end{cases}   
\end{equation}
for all $i,j \in [p]$. To gather some insights on the matrix $B^*$, we study some of its properties when $G^*=I_p+\theta u^*(u^*)^T$ follows the spiked covariance model~\eqref{Gcov1}. 
From Lemma \ref{Ginv}, it can be seen that under the spiked model assumption, we have
\begin{equation}\label{bstardef}
    B^*= \frac{\theta}{1+\theta} u^* ({u^*})^T D = \left(\|Du^*\|_2\frac{\theta}{1+\theta}\right)u^*\left(\frac{Du^*}{\|Du^*\|_2}\right)^T
\end{equation}
where $D\in\R^{p\times p}$ is a diagonal matrix such that its $j$-th diagonal entry is given by
$$D_{j,j}=\left(1- \frac{\theta}{1+\theta}{(u^*_j)}^2\right)^{-1}.$$
Observe that $B^*$ is a rank one matrix with nonzero singular value $\|Du^*\|_2\theta/(1+\theta)$ and left singular vector 
$u^*$.
Since
\begin{align*}
    \|Du^*\|_2^2 & = \sum_{i=1}^p \frac{({u_i^*})^2}{(1- \frac{\theta}{1+\theta}{(u^*_j)}^2)^2} \geq \sum_{i=1}^p {(u_i^*)}^2 = 1,
\end{align*}
the nonzero singular value of $B^*$ satisfies
\begin{equation}\label{Bsigma}
    \frac{\theta}{1+\theta}\|Du^*\|_2 \gtrsim \theta.
\end{equation}
To estimate $u^*$, we first obtain an estimate of $B^*$ and consider its leading left singular vector. We show below that if we obtain a good estimate of $B^*$, then its left singular vector will be close to $u^*$. 

In the general case when $G^*$ may not follow a spiked covariance model, the matrix $B^*$ might not have rank one. Our goal in such cases is to estimate the leading left singular vector of $B^*$ corresponding to its largest singular value. For the rest of this section, we use $u^*$ to denote the leading left singular vector of $B^*$.
Our estimation framework described below provides a consistent estimator of $u^*$, showing that we can estimate the PC when $G^*$ follows a spiked covariance model (i.e. when $B^*$ has rank one as discussed above). In Appendix~~\ref{bstar-invest}, we show that when $G^*$ is a suitable perturbation of a spiked covariance model, that is:
$$G^* = I_p+\theta \bar{u}\bar{u}^T + \Delta$$
where $\|\bar{u}\|_2=1$ and $\|\Delta\|_F$ is sufficiently small, the leading left singular vector of $B^*$ is close to the PC from the spiked model, i.e., $\bar{u}$ (for details see Appendix~\ref{bstar-invest}).

Let $\hat{B}\in\R^{p\times p}$ be our estimate of $B^*$. 
A solution $\hat{\beta}$ obtained from Problem~\eqref{reg_multi} can be used to estimate the off-diagonal coordinates of $\hat{B}$, that is, $\hat{B}_{i,j}=\hat{\beta}_{i,j}$ for $i\neq j$. 
To estimate the diagonal entries of $B^*$, since 
${B}^*_{i,i}={(\sigma_i^*)}^2-1$, we need an estimate of $\sigma_i^*$ for each $i$---let us denote this estimate by $\hat{\sigma}_{i}$ for all $i$.
Note that in the special case of the spiked covariance model assumption (see Lemma~\ref{Ginv}), $\sigma_{i}^*=1$ when $u^*_i=0$ --- therefore, we use $\hat{\sigma}_{i}=1$ as an estimate for the indices $i$, where $\hat{z}_{i}=0$.
For the other coordinates, we compute $\hat{\sigma}_i^2$
based on the residual variance of the $i$-th regression problem:
\begin{equation}\label{sigmahatdef}
    \hat{\sigma}_i^2=\begin{cases}
    \frac1n \|X_{:,i}-\sum_{j\neq i}\hat{\beta}_{j,i}X_{:,j}\|_2^2 &\mbox{ if } \hat{z}_i =1\\
    1 & \mbox{ if } \hat{z}_i =0.
    \end{cases}
\end{equation}
To summarize, our estimator $\hat{B}$ is given by:
\begin{equation}\label{defn-hatB}
\hat{B}_{i, j} = \begin{cases}  \hat{\beta}_{i,j}~& ~\text{if}~ i \neq j \in [p] \\
\hat{\sigma}^2_{i} - 1 & ~\text{if}~j=i \in [p]. 
\end{cases}     
\end{equation}

Theorem \ref{Bestconst} (for proof see Appendix~\ref{appendix:c}) presents an error bound in estimating $B^*$ by $\hat{B}$.  
\begin{thm}\label{Bestconst}
Let $\tilde{\beta}$ be as defined in Theorem~\ref{reg_thm} and $B^*$ be defined in \eqref{bstardef}. Suppose Assumptions~\ref{gdtsrassumption} to~\ref{beta_min_assum} hold true. Moreover, assume  $n\gtrsim s^2\log p$. 
Let $\hat{z}$ be as defined in Theorem~\ref{reg_thm} and $\hat{B}$ as defined in~\eqref{defn-hatB}. Then with high probability\footnote{See~\eqref{probexp} for the exact probability expression.\label{probfootnote}} we have 
(i) The supports of $\hat{z}$ and $\tilde{u}$ are the same, that is $S(\hat{z})=S(\tilde{u})$; and (ii) $\|\hat{B}-B^*\|_F^2\lesssim 
(s^2/n)\log(p/s).$
\end{thm} 
In light of Theorem~\ref{Bestconst}, $\hat{B}$ is close to $B^*$. Furthermore, as  $u^*$ is the leading left singular vector of $B^*$, the leading singular vector of $\hat{B}$ should be close to $u^*$ (assuming of course, a separation between the first and second leading singular values of $B^*$)---see for example,~\cite[Chapter~V]{stewart1990matrix} and \cite{wedin1972perturbation} for basic results on the perturbation theory of eigenvectors. This leads to the following result.
\begin{cro}\label{bcro}
Suppose $\hat{u}$ is the leading left singular vector of $\hat{B}$ from Theorem \ref{Bestconst}. Moreover, assume 
$$\sigma_{\max}(B^*)-\sigma_{\max,2}(B^*)\gtrsim \theta$$
where $\sigma_{\max,2}(B^*)$ denotes the second largest singular value of $B^*$. Then, under the assumptions of Theorem \ref{Bestconst}, with high probability\footref{probfootnote} we have that
\begin{equation}
    |\sin\angle(\hat{u},u^*)| \lesssim \frac{1}{\theta}\sqrt{\frac{s^2\log(p/s)}{n}}.
\end{equation}
\end{cro}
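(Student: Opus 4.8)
The plan is to combine the Frobenius-norm estimation bound from Theorem~\ref{Bestconst} with a standard singular-vector perturbation (Davis--Kahan / Wedin) argument, using the explicit lower bound~\eqref{Bsigma} on the nonzero singular value of $B^*$. First I would recall that $B^*$ is rank one (see~\eqref{bstardef}), so its second largest singular value is exactly $0$, and its unique nonzero singular value is $\sigma_1(B^*)=\|Du^*\|_2\,\theta/(1+\theta)$, with left singular vector $u^*$. Since $\hat B$ need not be symmetric, I would invoke Wedin's $\sin\Theta$ theorem (e.g.~\citet{wedin1972perturbation}, or~\citet[Chapter~V]{stewart1990matrix}) for singular subspaces: if $\hat u$ is the leading left singular vector of $\hat B$, then
\begin{equation*}
 |\sin\angle(\hat u, u^*)| \lesssim \frac{\|\hat B - B^*\|_{\mathrm{op}}}{\sigma_1(B^*)-\sigma_2(B^*)} = \frac{\|\hat B - B^*\|_{\mathrm{op}}}{\sigma_1(B^*)},
\end{equation*}
where the last equality uses $\sigma_2(B^*)=0$; one must first check the mild condition that the perturbation is smaller than the gap (otherwise the bound is vacuous but the claimed inequality, being of order at least $1/\theta$, still holds after adjusting constants since $|\sin\angle|\le 1$).

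Next I would bound the numerator: $\|\hat B-B^*\|_{\mathrm{op}} \le \|\hat B - B^*\|_F$, and Theorem~\ref{Bestconst}(ii) gives $\|\hat B - B^*\|_F^2 \lesssim (s^2/n)\log(p/s)$ on the high-probability event of that theorem. For the denominator, the bound~\eqref{Bsigma} states $\sigma_1(B^*)=\frac{\theta}{1+\theta}\|Du^*\|_2\gtrsim \theta$ (using $\theta\in(0,1]$ so that $1+\theta\le 2$, together with $\|Du^*\|_2\ge 1$). Substituting both estimates yields
\begin{equation*}
 |\sin\angle(\hat u,u^*)| \lesssim \frac{1}{\theta}\sqrt{\frac{s^2\log(p/s)}{n}},
\end{equation*}
which is the claimed bound; the probability statement is inherited verbatim from Theorem~\ref{Bestconst} (footnote~\ref{probfootnote}).

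The only genuinely delicate point is the application of the perturbation theorem: one needs the variant that controls the angle of a single leading \emph{left} singular vector of a non-symmetric matrix (rather than symmetric eigenvectors), and one needs to handle the regime where $\|\hat B-B^*\|_{\mathrm{op}}$ is not small relative to $\sigma_1(B^*)$ — but in that regime $(1/\theta)\sqrt{s^2\log(p/s)/n}\gtrsim 1\ge|\sin\angle(\hat u,u^*)|$, so the inequality holds trivially after absorbing constants. Everything else — the operator-vs-Frobenius norm comparison, plugging in~\eqref{Bsigma}, and carrying the high-probability event through — is routine. I would also note that the sign ambiguity in $\hat u$ is harmless since $|\sin\angle(\hat u,u^*)|$ in~\eqref{sinedef} is sign-invariant.
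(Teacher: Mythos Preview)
Your proposal is correct and follows essentially the same approach as the paper: the paper does not give a detailed proof of this corollary but simply invokes Wedin's $\sin\Theta$ theorem (citing \citet{wedin1972perturbation} and \citet[Chapter~V]{stewart1990matrix}) together with the rank-one structure of $B^*$, the gap bound~\eqref{Bsigma}, and the Frobenius bound from Theorem~\ref{Bestconst}. Your write-up is in fact more explicit than the paper's, including the handling of the large-perturbation regime and the operator-versus-Frobenius comparison.
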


\begin{rmk}
Following the discussion preceding~\eqref{Bsigma},
the eigen-gap assumption in Corollary~\ref{bcro} is satisfied if $G^*$ is a spiked covariance model~\eqref{Gcov1}. Hence, Corollary~\ref{bcro} readily applies to a sparse spiked covariance model, or a spiked covariance model that can be well-approximated by a sparse one as in Assumptions~\ref{orsclebetaassumtion} and~\ref{sum_assum}.
\end{rmk}

\subsection{Support recovery under the sparse spiked covariance model}
In this section, we show that under suitable assumptions, Problem~\eqref{reg_multi} recovers the support of $u^*$ correctly  under model~\eqref{Gcov1} when $\|u^*\|_0=s$. In other words, when $\hat{z}$ is an optimal solution to Problem~\eqref{reg_multi}, 
the nonzero coordinates of $\hat{z}$ and $u^*$ are the same.
\begin{thm}\label{supportthm}
Suppose model~\eqref{Gcov1} holds with $\|u^*\|_2=1$ and $\|u^*\|_0=s$; and for some 
universal constant $\eta>0$, either one of~\ref{betamincond} or~\ref{betamincond-extend} holds: 
\begin{enumerate}
[leftmargin=*,label=\textbf{(S\arabic*)}]
    \item \label{betamincond} There exists a value $\beta_{\min}>\sqrt{(2\eta/n)\log p}$ such that for any $i,j\in[p]$ with $u_i^*,u_j^*\neq 0$, we have $|{\beta}^*_{i,j}|>\beta_{\min}$.
    \item\label{betamincond-extend} 
    For all $j\in[p]$ such that $u_j^*\neq 0$, we have that
    \begin{equation*}
    \sum_{i:i\neq j} (\beta_{i,j}^*)^2 >\frac{\eta s\log p}{n}.
\end{equation*}
\end{enumerate}
Then, if $n\geq c s\log p$ for a sufficiently large universal constant $c>0$, we have $S(\hat{z})=S(u^*)$ with high probability\footnote{An explicit expression for the probability can be found in~\eqref{supthmprob}\label{sup_thm_prob-foot}.}, where $\hat{z}$ is as defined in Theorem~\ref{reg_thm}.
\end{thm}
A complete proof of Theorem~\ref{supportthm} can be found in Appendix~\ref{appendix:d}. We present an outline of our key proof strategy.
For $j\in[p]$ and $z\in\{0,1\}^p$, let us define:
\begin{equation*}
    g_j(z) = \min_{\beta_{:,j}} ~\left\Vert X_{:,j}-\sum_{i: i\neq j}\beta_{i,j}X_{:,i}\right\Vert_2^2~~\text{s.t.} ~~\beta_{i,j}(1-z_i)=\beta_{i,j}(1-z_j)=0~\forall i\in[p],\beta_{j,j}=0.
\end{equation*}
With this definition in place, Problem~\eqref{reg_multi} can be equivalently written as
\begin{equation}\label{reg_multi_thm3-main}
    \min_{z} ~\sum_{j=1}^p g_j(z)~~~\text{s.t.}~~~z\in\{0,1\}^p, \sum_{i=1}^p z_i \leq s.
\end{equation}
To analyze Problem \eqref{reg_multi_thm3-main}, for every $j$, we compare the value of $g_j(z)$ for a feasible $z$ to an \emph{oracle} candidate $g_j(z^*)$ where $z^*$ has the same support as $u^*$. Then, we show that 
$$\sum_{j=1}^p g_j(z^*)<\sum_{j=1}^p g_j(z) $$
unless $z=z^*$, which completes the proof.
\begin{rmk}
    We note that Condition~\ref{betamincond}  
    implies Condition~\ref{betamincond-extend}. However, to facilitate comparison of our work to earlier work, we keep these two cases separate. Conceptually,~\ref{betamincond} requires each nonzero $\beta_{i,j}^*$ to be sufficiently far from zero, while~\ref{betamincond-extend} only requires that on average, the nonzero $\beta_{i,j}^*$'s are sufficiently far from zero.
\end{rmk}
\begin{rmk}\label{sufficientremark}
 In the context of support recovery in SPCA under the sparse spiked covariance model, a standard assumption appearing in the literature is: for every nonzero coordinate $u^*_{i}$, we assume
 a lower bound $|u^*_i|>u_{\min}\gtrsim 1/\sqrt{s}$---in particular, support recovery results of~\citep{bresler-10.5555/3327546.3327752,deshpande2016sparse} are derived under this assumption. Moreover, the theoretical analysis of~\citep{amini2009} for both SDP relaxation and diagonal thresholding is valid under the stronger assumption $u_i\in\{0,1/\sqrt{s},-1/\sqrt{s}\}$. The same strong assumption is also used by~\citep{krauthgamer2015semidefinite} for the SDP relaxation analysis. 
In light of identity~\eqref{beta_lem}, for $i,j \in S(u^*)$, this lower bound condition on $|u_i^*|$ leads to:
$$\theta |u_i^*u_j^*|\gtrsim \frac{\theta}{s}\gtrsim \beta_{\min} \gtrsim \sqrt{\frac{\log p}{n}}$$
which implies, $n\gtrsim (s^2\log p)/{\theta^2}.$
Therefore, a sufficient (but as we discuss in Section~\ref{subsubdeep}, not necessary) condition
for~\ref{betamincond} to hold true, is to have $u_{\min}\gtrsim 1/\sqrt{s}$ and $n\gtrsim s^2\log (p)/\theta^2$. This shows our estimator recovers the correct support in the regimes that are discussed commonly in the literature. However, we show in Section~\ref{subsubdeep} that there are examples where our theory guarantees support recovery for our estimator, but earlier results do not guarantee support recovery. 
\end{rmk}

\subsection{Discussion of related prior work}\label{regcomp}

Following our earlier discussion, the covariance thresholding algorithm of \cite{deshpande2016sparse} leads to an estimate $\bar{u}$ of $u^*$ with a bound on the error $|\sin\angle(\bar{u}, u^*)|$ scaling as $\sqrt{s^2\log(p/s^2)/n}$, which is the same as our proposal (up to logarithmic factors). On the other hand, \citet{deshpande2016sparse}'s support recovery results require $n\gtrsim s^2\log(p/s^2)$, while as we showed in Theorem~\ref{supportthm}, under suitable $\beta_{\min}$ condition, our estimator only requires $n\gtrsim s\log p$.
Our approach is different from the covariance thresholding algorithm. We operate on the precision matrix by jointly optimizing over $p$-many node-wise linear regression problems under a structured sparsity assumption, while covariance thresholding operates directly on the covariance matrix. In particular, our algorithm leads to an estimate $\hat{B}$ with nonzero entries appearing in a $s\times s$ sub-matrix, which may not be available via the covariance thresholding estimator.
Moreover, as a by-product of our estimation procedure criterion~\eqref{reg_multi}, we are able to recover (with high probability) the correct support of the PC i.e., 
$u^*$---the covariance thresholding algorithm on the other hand, needs a data-splitting method to recover the support of $u^*$. In our numerical experiments, our method appears to have a notable advantage over the covariance thresholding algorithm in terms of estimation and support recovery performance.

Our approach is related to the proposal of~\citet{bresler-10.5555/3327546.3327752} who use connections between sparse linear regression and SPCA to propose a polynomial time algorithm for SPCA. Their algorithm requires solving $p$ separate sparse linear regression problems, each problem performs a sparse regression with the $i$-th column of $X$ as response and the remaining columns as predictors.
Each sparse regression problem has an estimation error of $(s/n)\log p$ resulting in an overall error rate of $(sp/n)\log p$.
Based on these $p$ separate regression problems, Bresler et al. propose testing methods to identify the support of $u^*$ when $n\gtrsim s^2 \log p$. In contrast, Theorem~\ref{supportthm} shows that our estimator can recover the correct support under suitable signal conditions~\ref{betamincond} or~\ref{betamincond-extend} as long as $n\gtrsim s\log p$.
Additionally, the procedure of \cite{bresler-10.5555/3327546.3327752} requires choosing tuning parameters for each of $p$-many Lasso problem, which can be challenging in practice. 
Our approach differs in that we consider these $p$ different regression problems {\emph{jointly}} under a structured sparsity assumption and we require knowing
the support size $s$.  
We are able to estimate the matrix of regression coefficients with an estimation error scaling as $(s^2/n)\log(p/s)$ (see Theorem~\ref{reg_thm}), which is an improvement over~\citep{bresler-10.5555/3327546.3327752}
by a factor of $p/s$. 
Another point of difference is that the approach of \citep{bresler-10.5555/3327546.3327752} requires solving $p$ separate sparse linear regression problems which can be computationally expensive when $p$ is large, while our approach requires solving \emph{one} problem which we are able to scale to quite large instances by exploiting problem-structure. Our numerical experiments show that our estimator leads to superior statistical performance compared to~\citep{bresler-10.5555/3327546.3327752}.

\citep{https://doi.org/10.1111/rssb.12360} present an interesting polynomial-time algorithm for SPCA based on random projections. They show that their algorithm can achieve the minimax optimal rate in terms of estimation of the PC under certain incoherence assumptions when $n$ is sufficiently large. 
Their algorithm may require $O(p^2\log p)$ random projections.

To our knowledge, the oracle bounds presented under model misspecification (see Theorems~\ref{reg_thm} and~\ref{Bestconst}) are new. 
On a related note, \cite{johnstone2009consistency, cai2012} consider the SPCA problem when $u^*$ belongs to a weak $\ell_q$ ball. In another line of work, authors in \cite{10.1214/14-AOS1273} consider a misspecified model for SPCA where the covariance matrix might not necessarily follow a spiked model---their results are true under the so-called \emph{limited correlation} condition, which differs from what we consider here.
\subsubsection{Discussion of Theorem~\ref{supportthm} and comparisons to prior work}\label{subsubdeep}
We study the results of Theorem~\ref{supportthm} and examine how our theory improves upon earlier results. 

\smallskip 

\noindent\textbf{Minimum signal required for support recovery:} 
We note that our support recovery result in Theorem~\ref{supportthm} is generally tighter than the results appearing in the literature in terms of the minimum signal required (e.g., Condition~\ref{betamincond}). As mentioned in Remark~\ref{sufficientremark}, the current methods~\citep{amini2009,bresler-10.5555/3327546.3327752,deshpande2016sparse} require $|u_i^*|\gtrsim 1/\sqrt{s}$ for any $i\in[p]$ such that $u_i^*\neq 0$. This implies that for a fixed value of the sparsity budget $s$, the nonzero coordinates of $u^*$ cannot be made arbitrarily small.  
The implications of our results are different.
For example, let $p=\lceil n^{\alpha} \rceil$ for some $\alpha>0$ and fix $s$. Then, $\beta_{\min}$ in~\ref{betamincond} can be made arbitrarily small, because 
${\log p}/{n} \rightarrow 0$ as $n \rightarrow \infty$.
This shows our proposed estimator can correctly identify the support even when the nonzero entries in $u^*$ are arbitrarily close to zero. 
Specifically, we consider the following example:
\begin{exa}\label{example1}
Consider the sparse spiked covariance model~\eqref{Gcov1} with $s=17$ and $\theta=1$. Specifically, assume
    $$u^*_1 =  \sqrt{\frac{1000\eta\log p}{n}},~~u^*_2=\cdots=u^*_{17} = \frac{1}{4}\sqrt{1-\frac{1000\eta\log p}{n}},~~u_i^*=0, i=18,\cdots,p$$
     where $\eta$ is defined in Theorem~\ref{supportthm}.
Suppose
\begin{equation}\label{example-nmin}
    n\geq \frac{1000\eta\log p}{1-0.96^2}\lor \frac{2\eta\log p}{(0.24^2\times 0.5)^2}.
\end{equation}
    This implies
    $u_2^*,\cdots,u^*_{17}\geq 0.24$. Then, from~\eqref{beta_lem},
    $$\beta_{1,i}^*,\beta_{i,1}^* \geq\frac{1}{2} u_1^*u_i^* \geq 0.12\sqrt{\frac{1000\eta\log p}{n}}> \sqrt{\frac{2\eta \log p}{n}}, ~~~\text{for}~i=2,\cdots,17.$$
    Moreover, 
    $$\beta_{i,j}^*\geq \frac{1}{2}u_i^*u_j^*\geq \frac{0.24^2}{2}\geq \sqrt{\frac{2\eta\log p}{n}},~~i\neq j,~ i,j\in\{2,\cdots,17\}$$
    where the last inequality is by~\eqref{example-nmin}.
    Therefore, the underlying model described here satisfies  Condition~\ref{betamincond} of Theorem~\ref{supportthm}, implying that our estimator is guaranteed to recover the correct support for this case under condition~\eqref{example-nmin}, with high probability. 

    On the other hand, for any $c_b>0$, we have $u_1^*<c_b/\sqrt{s}=c_b/\sqrt{17}$ for sufficiently large $n\gtrsim \log p$, showing that theoretical guarantees of~\citep{amini2009,bresler-10.5555/3327546.3327752,deshpande2016sparse,krauthgamer2015semidefinite}, derived under the minimum signal assumption of $|u_1^*|\gtrsim 1/\sqrt{s}$ do not apply to this setup. 

    To gain further insight into why such methods might fail, we study the diagonal thresholding method for this case. In the diagonal thresholding method, one sorts the diagonal coordinates of $X^TX/n-I_p$ and chooses the coordinates corresponding to the $s$-largest diagonal entries of $X^TX/n-I_p$ to be in the support. Note that in the spiked covariance model, $X^TX/n-I_p$ concentrates around $u^*(u^*)^T$. Particularly, under the model we described in Example~\ref{example1}, 
    $$\E\left[\frac{(X^TX)_{1,1}}{n}-1\right]=(u_1^*)^2 = \frac{1000\eta\log p}{n},~~~~~\E\left[\frac{(X^TX)_{j,j}}{n}-1\right]=(u_j^*)^2 =0$$
for $j=18,\cdots,p$.
    Therefore, for the diagonal thresholding method to choose the first coordinate to be in the support over some coordinate $j$ for $j\geq 18$, one needs to be able to distinguish between two random variables: one converging to $1000\eta\log p/n$ and the other to zero. On the other hand, as our estimator operates on the $\{\beta_{i,j}\}$ space, it should be able to identify $\beta_{1,i}^*$, $i=2,\cdots,17$ to be different from zero. However, as we showed above, $\beta_{1,i}^*$ concentrates around some value $\gtrsim\sqrt{\log p/n}$. 
    If $n\gtrsim \log p$ is sufficiently large, we see that the diagonal thresholding method has to detect a significantly smaller signal compared to our method, which can explain why the theoretical guarantees of the diagonal thresholding method do not apply to the setup we discussed in Example~\ref{example1}.

   We now discuss the shortcomings of the Lasso-based method of~\cite{bresler-10.5555/3327546.3327752} for Example~\ref{example1}. Recall that the method of~\cite{bresler-10.5555/3327546.3327752} is based on solving $p$-many Lasso problems, where in the $j$-th Lasso problem, we regress $X_{:,j}$ onto $\{X_{:,i}\}_{i\neq j}$. \cite{bresler-10.5555/3327546.3327752} design a statistical test on the outcome of $j$-th Lasso to determine if coordinate $j$ should be included in the support or not, independent of any other coordinate. This is because these Lasso problems are solved independently; and they do not use the common sparsity structure of the underlying model to select the support.
In particular, looking at the first Lasso (regressing $X_{:,1}$ onto the remaining columns), the method of~\cite{bresler-10.5555/3327546.3327752} requires (we refer to Appendix C of~\cite{bresler-10.5555/3327546.3327752} for more details) the following condition
    $$\frac{(u_1^*)^2(1-(u_1^*)^2)}{2-(u_1^*)^2}\gtrsim \frac{1}{s}$$
     to hold true in order to declare that $u^*_1\neq 0$. This condition however does not hold in Example~\ref{example1}.
     On the other hand, our method can handle this case. 
     We believe the main reason is that in our discrete optimization based estimator, we estimate all regression coefficients \emph{jointly, in one-shot}. This allows for the use of the sparsity structure of the underlying model, resulting in better variable selection properties.
\end{exa}

\smallskip 

\noindent\textbf{Minimum number of samples required for support recovery:} 
We study an example where the number of samples required for full support recovery scales as $n\gtrsim s\log p$, which is an improvement over $n\gtrsim s^2\log p$ required by most existing estimators. 

\begin{exa}\label{example2}
Let us consider model~\eqref{Gcov1} with $\theta=1$. Let
    $$u^*_1=\cdots=u^*_{s-2} =  \sqrt{\frac{2000\eta\log p}{n}},~~u^*_{s-1}=u_s^*=  \frac{1}{\sqrt{2}}\sqrt{1-\frac{2000(s-2)\eta\log p}{n}},$$
and $u_i^*=0$ for $i>s$.
    Let $c$ be the universal constant from Theorem~\ref{supportthm}. Assume 
    \begin{equation}\label{exa2-n}
        n = (c \lor 4000\eta  )s\log p
    \end{equation}
    and $s\leq 125$. Then, we have 
    $$\frac{2000(s-2)\eta\log p}{n}\leq \frac{2000s\eta\log p}{n}\leq\frac{1}{2}$$
    which implies $u_{s-1}^*,u_s^*\geq 0.5$. Moreover, noting that $|\beta_{i,j}^*|\geq |u_i^*u_j^*|/2$, for $1\leq j\leq s-2$, we have:
    $$\sum_{i:i\neq j}(\beta_{i,j}^*)^2\geq (\beta_{s-1,j}^*)^2\geq 0.25^2 \frac{2000\eta \log p}{n}\geq \frac{s\eta\log p}{n}$$
    where the last inequality is true by using $s\leq 125$. Moreover, for $j=s-1$, 
    $$\sum_{i:i\neq s-1}(\beta_{i,s-1}^*)^2\geq (\beta_{s,s-1}^*)^2\geq \frac{1}{64}\geq \frac{s\eta\log p}{n}$$
    by~\eqref{exa2-n}. A similar statement holds for $j=s$, showing that Condition~\ref{betamincond-extend} in Theorem~\ref{supportthm} holds true. Hence, in Example~\ref{example2}, $n \geq (c \lor 4000\eta  )s\log p$ is sufficient for correct support recovery to occur. Proof of Theorem~\ref{supportthm} for the case with Condition~\ref{betamincond-extend} makes extensive use of the joint sparsity structure of Problem~\eqref{reg_multi}, showing another benefit of our integer programming based framework.  
\end{exa}
\subsection{Statistical properties of approximate solutions}
As discussed earlier, an appealing aspect of a MIP-based global optimization framework (see Section~\ref{optalg} for our custom MIP solver) is its ability to deliver both upper and lower bounds (or dual bounds) as the algorithm progresses. These upper and dual bounds, taken together, provide a certificate of how close the current objective value is to the optimal solution, and can be useful when one wishes to terminate the algorithm early due to computational budget constraints.   
Below we show that the estimation error of an approximate solution 
is comparable to an optimal solution to~\eqref{reg_multi} as long as the objective value is sufficiently close to the optimal objective.
For the following result, for simplicity, we consider the well-specified case where $u^*$ has $s$ nonzeros.
\begin{pro}\label{suboptpro}
Under model~\eqref{Gcov1} suppose $\|u^*\|_0=s$ and let $\beta^*$ be the matrix of true regression coefficients as in 
Lemma~\ref{Ginv}.
Moreover, assume $s\leq p/5$. Suppose  $\text{LB}\geq 0$ is a value of the lower bound that is, the optimal objective of Problem~\eqref{reg_multi} is at least $\text{LB}$. If $(\hat{\beta},\hat{z})$ is a feasible (though not necessarily an optimal) solution to Problem~(\ref{reg_multi}),
and $n\gtrsim s^2\log(p/s)$, then
\begin{equation}
   \sum_{\substack{ i,j\in[p]\\ j\neq i }}(\hat{\beta}_{i,j}-\beta_{i,j}^*)^2\lesssim \frac{s^2\log(p/s)}{n} + p\frac{\text{UB}-\text{LB}}{\text{LB}}
\end{equation}
with high probability\footnote{An explicit expression for the probability can be found in~\eqref{prop1prob}.} where
\begin{equation}
    \text{UB}=  \sum_{j=1}^p \left\Vert X_{:,j} -\sum_{i\neq j}\hat{\beta}_{i,j}X_{:,i} \right\Vert_2^2
\end{equation}
is the objective value of the feasible solution $\hat{\beta}$.
\end{pro}
Proposition \ref{suboptpro} states that for an approximate solution to Problem~(\ref{reg_multi}) with optimization error satisfying
$(\text{UB}-\text{LB})/\text{LB} \lesssim s^2\log(p/s)/(pn)$, the error bounds have the same scaling as that of an optimal solution to~\eqref{reg_multi} where $\text{UB}=\text{LB}$.

Next, we discuss support recovery properties of approximate solutions.
\begin{pro}\label{approx-supp}
Suppose the Assumptions of Theorem~\ref{reg_thm} hold with Condition~\ref{betamincond}. Let $(\hat{\beta},\hat{z})$ be a feasible (though not necessarily an optimal) solution to Problem~(\ref{reg_multi}) such that $\sum_{j=1}^p \hat{z}_j=s$, and let $\text{UB},\text{LB}$ be defined as in Proposition~\ref{suboptpro}. Then, with high probability\footref{sup_thm_prob-foot} the number of false negatives satisfies
$$|\{j\in[p]:z_j^*=1,\hat{z}_j=0 \}|\lesssim \frac{\text{UB}-\text{LB}}{s\log p}.$$
\end{pro}
Proposition~\ref{approx-supp} shows that as long as ${(\text{UB}-\text{LB})}/{(s\log p)}$ is sufficiently small, the support of $\hat{z}$ is correct. Even in the absence of full support recovery, Proposition~\ref{approx-supp} provides an upper bound on the total number of errors in the support, which can be useful to a practitioner using an approximate solution.

\section{Optimization Algorithms}\label{optalg}
Problem~\eqref{reg_multi} is a mixed integer optimization problem with $O(p)$ binary variables and $O(p^2)$ continuous variables. When $p$ is large (e.g., of the order of a few hundred), commercial solvers such as Gurobi and Mosek face computational challenges. Here, we propose a specialized algorithm for~\eqref{reg_multi} that can scale to much larger instances than off-the-shelf commercial solvers. The key idea of our approach is to reformulate~\eqref{reg_multi} into minimizing a convex function involving $p$ binary variables with additional cardinality constraints:
\begin{align}\label{generalmi}
     \min_{z} ~~~   F(z)  ~~~\text{s.t.} ~~  z\in\{0,1\}^p;~~   \sum_{i=1}^p z_i \leq s,
    \end{align}
where, as we show subsequently, $F:[0,1]^p\to\R$ is convex. 
Note that Problem~\eqref{generalmi} is an optimization problem with $p$ binary variables, unlike~\eqref{reg_multi} involving $O(p^2)$ continuous and $O(p)$ binary variables. To optimize~\eqref{generalmi}, we employ an outer approximation 
algorithm~\citep{duran1986outer} as we discuss below. The following section shows how to reformulate Problem~\eqref{reg_multi} into form~\eqref{generalmi}.

 \subsection{Reformulations of Problem~\eqref{reg_multi}}\label{misocpsec}
We discuss different reformulations of Problem~\eqref{reg_multi} that use the underlying statistical model to obtain important computational savings.
 
 \smallskip
 
\noindent{\bf Valid inequalities under model~\eqref{Gcov1}:} 
We discuss some inequalities or constraints that are {\emph{implied}} from the spiked covariance model in~\eqref{Gcov1}---when these constraints are included in 
formulation~\eqref{reg_multi}, the resulting solution has the same statistical properties as that of an estimator from the original formulation~\eqref{reg_multi}. We call such constraints/inequalities \emph{valid}\footnote{Our usage of the term is inspired by the notion of valid inequalities in integer programming~\citep{valid}, where a constraint is said to be valid if it does not change the optimal solution of the original integer program. We note, however, that our usage differs as the constraints we consider originate from the underlying statistical model, and the constraints are valid in the context of the statistical model~\eqref{Gcov1}.}.  
We use these inequalities or constraints as the resulting reformulation of Problem~\eqref{reg_multi} has better computational properties. The computational improvements are due in part to tighter convex relaxations in the nodes of the branch-and-bound procedure. 
 
Note that the complementarity constraint~\eqref{secondconst} in formulation~\eqref{reg_multi}, can be linearized by using BigM constraints 
\begin{equation}\label{BigM-compl1}
|\beta_{ij}| \leq M z_{i}, ~~ |\beta_{ij}| \leq M z_{j}, ~\forall i, j.
\end{equation}
Above, $M$ is a BigM parameter that controls the magnitude of $\hat{\beta}$, an optimal solution to~\eqref{reg_multi}---we need $M$ to be sufficiently large such that 
$M \geq \max_{i,j} |\hat{\beta}_{ij}|$. A large (conservative) choice of the BigM parameter $M$, would not affect the logical constraint~\eqref{secondconst}, but will affect the runtime of our mixed integer optimization algorithm~\cite[see for example]{bertsimas2016best}.
In other words, a tight estimate of $M$ is desirable from an algorithmic standpoint. While there are several ways to obtain a good estimate of $M$, here we make use of the statistical model~\eqref{Gcov1} to obtain an estimate of $M$. In particular, we show that it suffices to consider $M=1/2$. To see this, note that~\eqref{beta_lem} implies 
\begin{equation}\label{eqn-abs-con1}
|\beta^*_{i,j}|\leq \theta |u_i^*u_j^*|=\theta |u_i^*|\sqrt{1-\sum_{k:k\neq j}(u_k^*)^2}\leq \theta |u_i^*|\sqrt{1-(u_i^*)^2}\leq \theta/2
\end{equation}
so the absolute value of $\beta^*_{i,j}$ is less than $\theta/2\leq 1/2$. As a result, a choice of $M=1/2$ in~\eqref{BigM-compl1} implies that $\beta^*$ is feasible for 
Problem~\eqref{reg_multi} where, constraint~\eqref{secondconst} is replaced by~\eqref{BigM-compl1}.

Additionally, we show that under model~\eqref{Gcov1}, the squared $\ell_{2}$-norm of every column of $\beta^*$ can be bounded from above.  
From~\eqref{beta_lem}, for every column $j\in[p]$, we have:
 \begin{equation}\label{eqn-absnorm-con1}
 \sum_{i: i \neq j}{(\beta_{i,j}^*)}^2=\frac{\theta^2}{(1+\theta-\theta ({u_j^*})^2)^2}\sum_{i:i\neq j}({u_i^*})^2({u_j^*})^2\leq z_j\sum_{i:i\neq j}({u_i^*})^2\leq z_j, 
 \end{equation}
where $z\in\{0,1\}^p$ has the same support as $u^*$. Therefore, the inequality constraint $\sum_{i: i\neq j}\beta_{i,j}^2\leq M' z_j$ for some $M' \in (0, 1]$ (for example, $M'=1$) is a valid inequality for Problem~\eqref{reg_multi} under model~\eqref{Gcov1}---i.e., adding this inequality does not change the statistical properties of an optimal solution to the problem. In particular, owing to the sparsity structure in $\beta$, this implies that 
$\sum_{i,j: i \neq j} \beta_{i,j}^2 \leq sM'$. Instead of the constraint, we can also penalize the squared $\ell_2$-norm of the off-diagonal entries of $\{\beta_{i,j}\}$.
Based on the above discussion, 
we reformulate Problem~\eqref{reg_multi} as  
\begin{align}\label{intermediateform2}
     \min_{\beta,{z}} \quad &  \frac{1}{2}\sum_{j=1}^p \left\Vert X_{:,j} -\sum_{i\neq j}\beta_{i,j}X_{:,i} \right\Vert_2^2 + \lambda\sum_{j=1}^p\sum_{i\neq j}\beta_{i,j}^2 \\
     \text{s.t.} \quad & {z}\in \{0,1\}^p;~~|\beta_{i,j}|\leq Mz_i,~~ |\beta_{i,j}|\leq Mz_j, ~~\beta_{i,i}=0~~ \forall i,j\in[p];~~\sum_{i=1}^p {z}_i \leq s,\nonumber
\end{align}
where $\lambda\geq 0$ is a parameter that corresponds to $M'$---a nonzero value of $\lambda$ in~\eqref{intermediateform2} leads to computational benefits, as we discuss below. Note that by setting $\lambda=0$ we recover the original Problem~\eqref{reg_multi}.

\begin{rmk}
Under the sparse spiked covariance model assumption (i.e., the setup of Theorem~\ref{supportthm}) and Assumption~\ref{beta_min_assum}, the theoretical guarantees of Theorems~\ref{reg_thm} and~\ref{Bestconst} extend to Problem~\eqref{intermediateform2} with $M\geq 1/2$ (as a consequence of~\eqref{eqn-abs-con1}) and $\lambda\lesssim s^2 \log p$.
\end{rmk}
\noindent{\bf Perspective Formulation:} Formulation~\eqref{intermediateform2} involves bounded continuous variables and binary variables---the binary variables activate the sparsity pattern of the continuous variables. In MIP problems with a similar structure (i.e., where bounded continuous variables are set to zero by binary variables), the perspective reformulation is often used to obtain stronger MIP formulations~\citep{Frangioni10.5555/1113382.1113384,akturk2009strong,gunluk2010perspective}. In other words, these MIP formulations lead to tight continuous relaxations, which in turn can lead to improved runtimes in the overall branch-and-bound method. Here we apply a perspective relaxation on the squared $\ell_2$-term $\sum_{i\neq j}\beta_{i,j}^2$, appearing in the objective in~\eqref{intermediateform2}. This leads to the following MIP formulation:
\begin{align}\label{perspectivereform}
     \min_{\beta,{z}} ~~ &  \frac{1}{2}\sum_{j=1}^p \left\Vert X_{:,j} -\sum_{i\neq j}\beta_{i,j}X_{:,i} \right\Vert_2^2 +\lambda \sum_{j\in[p]}\sum_{i\neq j}q_{i,j}\\
     \text{s.t.} ~~ & {z}\in \{0,1\}^p;~~q_{i,j}\geq0,~~\beta_{i,j}^2\leq q_{i,j}z_j,~~|\beta_{i,j}|\leq Mz_i,~|\beta_{i,j}|\leq Mz_j~~\forall i,j\in[p]\nonumber \\
     & \beta_{i,i}=0~\forall i\in[p],~~\sum_{i=1}^p {z}_i \leq s\nonumber
\end{align}
where above we have introduced auxiliary nonnegative continuous variables $\{q_{ij}\}$ and added rotated second order cone constraints 
$\beta_{i,j}^2\leq q_{i,j}z_j$ for all $i,j$. 
When $z_{j}=0$ then $q_{i,j}=\beta_{i,j}=0$; 
when $z_{j}=1$ then, at an optimal solution, $q_{i,j}=\beta_{i,j}$. 
Note that~\eqref{perspectivereform} is an equivalent reformulation\footnote{That is, the optimal objectives of both these problems are the same, and an optimal solution to~\eqref{perspectivereform} leads to an optimal solution to~\eqref{intermediateform2} and vice-versa.} of Problem~\eqref{intermediateform2}. 
However, when $\lambda>0$, the interval relaxation of formulation~\eqref{perspectivereform} (obtained by relaxing all binary variables to the unit interval), results in a tighter relaxation compared to the interval relaxation of~\eqref{intermediateform2}~\citep[see for example]{hazimeh2020sparse}. 
Our proposed custom algorithm solves Problem~\eqref{perspectivereform} to optimality, and can handle both cases 
$\lambda=0$ and $\lambda>0$.

\smallskip

\noindent{\bf A convex integer programming formulation for~\eqref{perspectivereform}:} When $\lambda \geq 0$, Problem~\eqref{perspectivereform} is a MISOCP\footnote{In the special case of $\lambda=0$, this can be expressed as a mixed integer quadratic program}. Commercial solvers like Gurobi, Mosek can reliably solve MISOCPs for small-moderate scale problems. However, formulation~\eqref{perspectivereform} involves $O(p^2)$-many continuous variables and $O(p)$-many binary variables --- posing computational challenges for modern MIP solvers. Therefore, we propose a tailored algorithm to solve Problem~\eqref{perspectivereform}. To this end, we first reformulate Problem~\eqref{perspectivereform} into a convex integer program i.e., in the form~\eqref{generalmi}. To lighten notation, let $X_{-j}\in\mathbb{R}^{n\times p}$ be the matrix $X$ with the column $j$ replaced with zero. For $z\in[0,1]^p$, let us define the following function:
\begin{align}\label{Freg2}
     F_1(z)=\min_{\beta,\xi,W,q} \quad &  \frac{1}{2}\sum_{j=1}^p \left\Vert \xi_{:,j} \right\Vert_2^2+\lambda\sum_{j=1}^p\sum_{i\neq j}q_{i,j} \\
     \text{s.t.} \quad & q_{i,j}\geq0,~~\beta_{i,j}^2\leq q_{i,j}z_j~~\forall i,j\in[p]\nonumber \\
     & |\beta_{i,j}|\leq MW_{i,j},~~W_{i,j}\leq z_i,~~W_{i,j}\leq z_j,~~\beta_{i,i}=0~~ \forall i,j\in[p]\nonumber\\
     \quad & \xi_{:,j} = X_{:,j}-X_{-j}\beta_{:,j}~~~\forall j\in[p]\nonumber
\end{align}
obtained by minimizing~\eqref{perspectivereform} partially with respect to all continuous variables, with a fixed $z$. 
In light of formulation~\eqref{Freg2}, we see that 
Problem~\eqref{perspectivereform} is equivalent to~\eqref{generalmi} where $F(z)$ is replaced with $F_1(z)$. 

Proposition~\ref{subgradpers} shows that $F_1$ is convex and characterizes its subgradient. Before presenting the proposition, we introduce some notation.
    For $j\in[p]$, define
\begin{align}\label{linregperstosolve}
     \bar{\beta}_{:,j}\in\argmin_{\beta} ~ \frac{1}{2} \left\Vert X_{:,j} - X_{-j}\beta_{:,j} \right\Vert_2^2+\lambda\sum_{i\neq j}\beta_{i,j}^2~~ \text{s.t.} ~~|\beta_{i,j}|\leq Mz_i ~\forall i\in[p], \beta_{j,j}=0
\end{align}
when $z_{j}=1$; and $\bar{\beta}_{:,j}=0$ when $z_{j}=0$. Let
$\alpha\in \R^{n\times p}$ be a matrix, with its $j$-th column given by:
\begin{equation}\label{regpersalpha}
   \alpha_{:,j}=\begin{cases}
    X_{:,j}-X_{-j}\bar{\beta}_{:,j} &\mbox{ if } z_j=1\\
    X_{:,j} & \mbox{ if } z_j = 0.
    \end{cases}
\end{equation}
Define matrices $\Gamma^{(1)},\Gamma^{(2)}\in\mathbb{R}^{p\times p}$ as follows
\begin{equation}\label{regpersm1m2}
\begin{aligned}
     \Gamma^{(1)}_{i,j}&=\begin{cases}
    M|(X_{-j}^T\alpha_{:,j})_i-2\lambda\bar{\beta}_{i,j}|/2 &\mbox{ if } z_i=z_j=1\\
     M|(X_{-j}^T\alpha_{:,j})_i|/2 &\mbox{ if } z_i=z_j=0\\
    0 & \mbox{ if } z_i=1,z_j=0 \\
     M|(X_{-j}^T\alpha_{:,j})_i| & \mbox{ if } z_i=0,z_j = 1,
    \end{cases}
    \\
        \Gamma^{(2)}_{i,j}&=\begin{cases}
    M|(X_{-j}^T\alpha_{:,j})_i-2\lambda\bar{\beta}_{i,j}|/2 &\mbox{ if } z_i=z_j=1\\
    M|(X_{-j}^T\alpha_{:,j})_i|/2 &\mbox{ if } z_i=z_j=0\\
    M|(X_{-j}^T\alpha_{:,j})_i| & \mbox{ if } z_i=1,z_j=0 \\
     0 & \mbox{ if } z_i=0,z_j = 1.
    \end{cases}
\end{aligned}
\end{equation}
\begin{pro}\label{subgradpers}
Let $F_{1}$ be as defined in~\eqref{Freg2} with $\lambda \geq 0$. The following hold true
\begin{enumerate}
    \item (\textit{Convexity}) The function $z \mapsto F_1(z)$ on 
    $z \in [0,1]^p$ is convex. 
    \item (\textit{Subgradient}) 
Let $\bar{\beta}$, $\Gamma^{(1)}$ and $\Gamma^{(2)}$ be as defined~\eqref{linregperstosolve},\eqref{regpersalpha} and~\eqref{regpersm1m2}.
The vector $g\in\mathbb{R}^p$ with $i$-th coordinate given by
$$g_i= -\sum_{j=1}^p  \left\{ \Gamma^{(1)}_{i,j}+\Gamma^{(2)}_{j,i}+\lambda\bar{\beta}_{j,i}^2\right\}$$
for $i\in[p]$, is a subgradient of $z \mapsto F_1(z)$ for $z\in\{0,1\}^p$.
\end{enumerate}
\end{pro}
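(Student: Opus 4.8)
The plan is to first establish convexity of $F_1$ by recognizing $F_1(z)$ as a partial minimization of a jointly convex function over a convex set parametrized by $z$. Concretely, I would rewrite the feasible region of~\eqref{Freg2}: the constraints $q_{i,j} \geq 0$, $\beta_{i,j}^2 \leq q_{i,j} z_j$, $|\beta_{i,j}| \leq M W_{i,j}$, $W_{i,j} \leq z_i$, $W_{i,j} \leq z_j$, $\beta_{i,i}=0$, and $\xi_{:,j} = X_{:,j} - X_{-j}\beta_{:,j}$ jointly define a convex set in the variables $(\beta, \xi, W, q, z)$ — the only nonobvious piece is the rotated cone $\beta_{i,j}^2 \leq q_{i,j} z_j$ with $q_{i,j}, z_j \geq 0$, which is a standard convex (rotated second-order-cone) constraint. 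The objective $\tfrac12 \sum_j \|\xi_{:,j}\|_2^2 + \lambda \sum_{i \neq j} q_{i,j}$ is jointly convex (in fact the $q$ part is linear). Minimizing a jointly convex function over a convex set, with $z$ held as a parameter, yields a convex function of $z$; this is the standard partial-minimization result, so part 1 follows.

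For part 2 (the subgradient), the key observation is that when $z \in \{0,1\}^p$, the optimization~\eqref{Freg2} decouples across columns $j$, and for each $j$ the rotated cone and $W$ constraints simplify dramatically: if $z_j = 0$ then $\beta_{:,j} = 0$ and the $j$-th term contributes $\tfrac12\|X_{:,j}\|_2^2$; if $z_j = 1$ then $q_{i,j} = \beta_{i,j}^2$ at optimality (so the $\lambda$-penalty becomes the ridge term $\lambda\sum_{i\neq j}\beta_{i,j}^2$) and $W_{i,j} = \min(z_i, z_j) = z_i$, so the box constraint becomes $|\beta_{i,j}| \leq M z_i$ — exactly the problem~\eqref{linregperstosolve} defining $\bar\beta_{:,j}$. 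Thus $F_1(z) = \sum_j \left[\tfrac12\|\xi_{:,j}\|_2^2 + \lambda\sum_{i\neq j}\bar\beta_{i,j}^2\right]$ evaluated at $\bar\beta$, i.e.\ a sum of optimal values of linear-regression-with-box-constraints problems. To get a subgradient at a binary point $z$, I would apply Danskin-type / envelope reasoning: $F_1$ is a pointwise minimum (over primal variables) of functions affine in $z$ — more precisely, for fixed primal $(\beta, q, W)$, the objective plus indicator of the constraints can be bounded below by an affine-in-$z$ function obtained by dualizing the $z$-dependent constraints ($\beta_{i,j}^2 \leq q_{i,j} z_j$ and $W_{i,j} \leq z_i$, $W_{i,j}\leq z_j$). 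The subgradient is then read off as the negative of the sum of the (optimal) Lagrange multipliers on these $z$-coupling constraints. Carefully tracking which constraints are active in each of the four cases ($z_i = z_j = 1$; $=0$; $z_i=1,z_j=0$; $z_i=0,z_j=1$) produces the coefficients $M|(X_{-j}^T\alpha_{:,j})_i - 2\lambda\bar\beta_{i,j}|/2$ etc.\ that appear in $\Gamma^{(1)}, \Gamma^{(2)}$, where $\alpha_{:,j}$ is the residual~\eqref{regpersalpha} and $X_{-j}^T\alpha_{:,j}$ is (up to sign) the gradient of the least-squares loss. The term $\lambda\bar\beta_{j,i}^2$ in $g_i$ comes from differentiating the $\lambda q_{j,i} = \lambda\bar\beta_{j,i}^2$ contribution with respect to $z_i$ through the cone constraint $\beta_{j,i}^2 \leq q_{j,i} z_i$. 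Summing the multipliers associated with column $i$ as a "$z_j = z_i$" index and as a "$z_i$" index gives the two families $\Gamma^{(1)}_{i,j}$ and $\Gamma^{(2)}_{j,i}$.

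The main obstacle I anticipate is making the subgradient derivation rigorous rather than heuristic: the function $F_1$ is a value function of a conic program whose feasible set depends on $z$ in a nonsmooth way at the boundary (binary) points, so one cannot simply differentiate. The clean route is to exhibit, for the given binary $z$, an explicit affine minorant $\ell(z') = F_1(z) + \langle g, z' - z\rangle$ with $\ell(z') \leq F_1(z')$ for all $z' \in [0,1]^p$ and equality at $z' = z$; this is done by taking the optimal primal solution $\bar\beta$ at $z$, keeping it (scaled appropriately, or using the box slack) feasible for nearby $z'$, and bounding the perturbation of the objective linearly using the magnitudes $M|(X_{-j}^T\alpha_{:,j})_i|$ of the loss gradient and the ridge contributions — essentially a first-order Taylor/convexity argument on each decoupled column subproblem. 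The bookkeeping across the four activity cases, and verifying the minorant inequality holds globally on $[0,1]^p$ (not just locally), is the delicate part; everything else is routine convex-analysis manipulation.
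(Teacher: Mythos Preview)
Your plan for Part~1 is exactly the paper's argument: rewrite the constraints (in particular the rotated second-order cones $\beta_{i,j}^2\le q_{i,j}z_j$) as a convex feasible set in $(z,\xi,\beta,W,q)$, observe the objective is jointly convex, and invoke the partial-minimization rule.

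For Part~2 your core idea also matches the paper: dualize the $z$-dependent constraints and use a Danskin-type argument. One clarification and one simplification are worth noting. First, the throwaway line ``$F_1$ is a pointwise minimum (over primal variables) of functions affine in $z$'' is not the right picture---a pointwise minimum of affine functions would be concave---and you immediately correct it. The correct statement, which the paper makes explicit, is that after taking the \emph{full} Lagrangian dual of~\eqref{Freg2} (not just dualizing the $z$-coupling constraints, but all constraints, using Slater to get strong duality), $F_1(z)$ is a \emph{maximum} over dual variables of functions that are affine in $z$; Danskin then gives the subgradient as the $z$-gradient of the dual objective evaluated at an optimal dual solution. Second, this dual route is already fully rigorous and global, so the ``main obstacle'' you describe---constructing an explicit affine minorant by perturbing the primal solution $\bar\beta$ and checking the inequality on all of $[0,1]^p$---is unnecessary extra work. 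The paper simply writes down the Lagrangian, solves the KKT system to identify the optimal dual multipliers $(\Lambda^\pm,\Gamma^{(1)},\Gamma^{(2)},\delta,\alpha,\upsilon)$ at a binary $z$ (splitting into the four $z_i,z_j$ cases exactly as you anticipate), and reads off $g$ from the affine-in-$z$ terms $-\langle\Gamma^{(1)},z1_p^T\rangle-\langle\Gamma^{(2)},1_pz^T\rangle$ and the perspective term $-z_j\sum_i(\cdot)=-\lambda\sum_i\bar\beta_{i,j}^2$. Your case analysis and identification of the quantities $M|(X_{-j}^T\alpha_{:,j})_i-2\lambda\bar\beta_{i,j}|$ are on target; just pursue the dual/Danskin route rather than the primal-perturbation minorant.
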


Note that $F_{1}(z)$ in~\eqref{Freg2} is implicitly defined via the solution of a quadratic program (QP). 
For a vector $z$ which is dense, calculating $F_{1}$ in \eqref{Freg2} requires solving $p$-many QPs each with $p$ variables. However, for a feasible $z$ which is $s$-sparse, calculating $F_{1}$ requires solving $s$-many QPs each with $s$ variables, which is substantially faster---furthermore, these QPs are independent of each other and can hence be solved in parallel. Similar to the case of evaluating the objective $F_{1}(z)$, we do not have a closed form expression for a subgradient of $F_{1}(z)$ (cf Proposition~\ref{subgradpers}), but this can be computed for $z \in \{0,1\}^p$ as a by-product of solving the 
QP in~\eqref{Freg2}. 

\subsection{Outer Approximation Algorithm}\label{outersec}
We present an outer approximation (or cutting plane) algorithm~\citep{duran1986outer} to solve Problem~\eqref{generalmi}. 
Our algorithm requires access to an oracle that can compute the tuple 
$(F(z), g^{(z)})$
where, $g^{(z)} \in \partial F(z)$ is a subgradient of $F(z)$ at an integral $z$. We refer the reader to Proposition~\ref{subgradpers} for computation of $(F(z), g^{(z)})$ when $F = F_{1}$.
As $F$ is convex, we have 
$F(x)\geq F(z) +g^T(x-z)$ for all $x\in [0,1]^p$. Therefore, if $z^0,\cdots,z^{t-1}$ are feasible for \eqref{generalmi}, we have: 
    \begin{equation}\label{lowerbound}
    \begin{aligned}
     F_{LB}(z):=&\max\left\{F(z^0)+(z-z^0)^Tg^{(z^0)},\ldots,F(z^{t-1})+(z-z^{t-1})^Tg^{(z^{t-1})}\right\} \\
     \leq& F(z)
    \end{aligned}
    \end{equation}
    where, $z \mapsto F_{LB}(z)$ is a lower bound to the map $z \mapsto F(z)$ on $z \in [0,1]^p$.
        At iteration $t\geq 1$, the outer approximation algorithm finds
        $z^{t}$, a minimizer of $F_{LB}(z)$ under the constraints of Problem~\eqref{generalmi}. 
        This is equivalent to the mixed integer linear program:
    \begin{align}\label{outerMILP}
    (z^t,\eta^t) \in\argmin_{z,\eta} \quad  & \eta \\
     \text{s.t.} \quad & z\in\{0,1\}^p,~\sum_{i=1}^p z_i \leq s\nonumber\\
     & \eta \geq F(z^i)+(z-z^i)^Tg^{(z^{i})}, i = 0,\cdots, t-1\nonumber .
    \end{align}
    As the feasible set of Problem \eqref{generalmi} contains finitely many elements, an optimal solution is found after finitely many iterations\footnote{This is true as the lower bound obtained by the outer approximation in each iteration removes the current solution from the feasible set, unless it is optimal---see~\citet{duran1986outer} for details.}, say, $t$.  In addition, $\eta^t$ is a lower bound of the optimal objective value in~\eqref{generalmi}; and $z^{t}$ leads to an upper bound for Problem~\eqref{generalmi}.
    Consequently, the optimality gap of the outer approximation algorithm can be calculated as $\text{OG}=(\text{UB}-\text{LB})/\text{UB}$ where $\text{LB}$ is the current (and the best) lower bound achieved by the piecewise approximation 
    in~\eqref{outerMILP}, and $\text{UB}$ is the best upper bound thus far. The procedure is summarized in Algorithm \ref{outer}.
    \begin{algorithm}[h]
\SetAlgoLined
$t = 1$\\
\While {$\text{OG} > \text{tol}$}{
$(z^t,\eta^t)$ are solutions of \eqref{outerMILP}.\\
$F_{\text{best}}=\min_{i=0,\cdots,t} F(z^i)$\\
$\text{OG} = (F_{\text{best}}-\eta_t)/F_{\text{best}}$\\
$t = t + 1$
}
 \caption{Cutting plane algorithm}
 \label{outer}
\end{algorithm}

\begin{rmk}
Outer approximation based methods are commonly used to solve nonlinear integer programs, and have been used recently in the context of $\ell_0$-sparse regression problems~\citep[see for example]{bertsimas2020sparse,behdin2021}. Our approach differs in that we consider structured sparsity, and use a perspective formulation which is motivated by the underlying spiked covariance model resulting in a MISOCP~\eqref{perspectivereform}. 
In particular, \cite{bertsimas2020sparse} consider an outer approximation method for a ridge regularized sparse regression problem where the tuple 
$(F(z), \nabla F(z))$ can be computed in closed form, which is in contrast to our setting where $(F(z), \nabla F(z))$ is available implicitly (cf Proposition~\ref{subgradpers}). 
\cite{bertsimas2020solving}~consider an outer approximation approach for sparse PCA, but they consider a mixed integer semidefinite optimization problem, which is computationally more demanding than what we propose here. 
 \end{rmk}

\subsubsection{Initializing Algorithm~\ref{outer}}\label{initsec} 
While Algorithm~\ref{outer} leads to an optimal solution to~\eqref{generalmi} (irrespective of the initialization used), we have empirically observed that a good initial solution can decrease the number of the iterations and the overall runtime of the algorithm. 
We describe an initialization scheme that we found to be useful in our work.
We first obtain a screened set ${\mathcal A} \subset [p]$, with 
size a constant multiple of $s$, 
with the hope that this contains many of the true nonzero coefficients.  
In our experiments, we use the diagonal thresholding algorithm \citep{johnstone2009consistency} with a larger number of nonzeros $\tilde{s}=3s$ to obtain the candidate set $\mathcal{A}\subseteq[p]$ with $|\mathcal{A}|\leq 3s$. 
We then consider a restricted version of 
Problem~\eqref{intermediateform2} with $z_{i}=0$ for all $i \notin A$ to obtain an estimate for $z^0$. This reduced problem contains $|{\mathcal A}|$-binary variables, which is considerably easier to solve compared to the original formulation with $p$-binary variables. In our experiments, this usually leads to a good initialization for Problem~\eqref{generalmi}.

\section{Numerical Experiments}\label{numerical}
We demonstrate the performance of our approach via numerical experiments on synthetic and real datasets and present comparisons with several 
state-of-the-art SPCA algorithms.  We have implemented all algorithms in Julia and use Gurobi v9.1.1 to solve the mixed integer linear sub-problems. We refer to our framework as {\texttt{SPCA-SLS}} (where SLS stands for Structured Least Squares). An implementation of our framework {\texttt{SPCA-SLS}} in Julia is available at:

~~~~~~~~~~~~~~~~\url{https://github.com/kayhanbehdin/SPCA-SLS}.
\subsection{Synthetic Data}\label{synthetic}
In this section, we use synthetic data and consider different scenarios and values for parameters in the model. In all experiments, we set the SNR value to $\theta=1$ which is a lower SNR level compared to experiments in \citep{deshpande2016sparse,bresler-10.5555/3327546.3327752}---this makes the problems statistically harder.

\smallskip

\noindent \textbf{Experimental Setup.} For a fixed set of values of $p,n,s$, each coordinate of the PC, $u^*$, is drawn independently from $\text{Unif}[0,1]$. Next, $(p-s)$ randomly chosen coordinates of $u^*$ are set to zero and $u^*$ is normalized as $u^*/\|u^*\|_2$ to have unit $\ell_{2}$-norm. We draw $n$ samples from the multivariate normal distribution with mean zero and covariance $I_p+u^*({u^*})^T$. 

\smallskip

\noindent \textbf{Competing Methods.} We present comparisons with the following  algorithms for SPCA, as discussed in Section~\ref{sec:lit-review}:
\begin{itemize}
    \item The Truncated Power Method (shown as {\texttt{TrunPow}})\citep{yuan2013truncated}---this is a heuristic method
    \item  The Covariance Thresholding (shown as {\texttt{CovThresh}})\citep{deshpande2016sparse}---this is a polynomial-time method with good theoretical guarantees
    \item The method by \cite{bresler-10.5555/3327546.3327752}  (shown as {\texttt{SPCAvSLR}})---we use a Lasso for every node, leading to a polynomial-time algorithm.
\end{itemize}
All the above methods can handle problems with $p \approx 10^4$. We do not consider other MIP-based methods and the SDP-based relaxations (See  Section~\ref{sec:lit-review})
in our experiments, as they are unable to handle the 
problem sizes we consider here.

\smallskip 

\noindent \textbf{Parameter Selection.} 
For our proposed algorithm (i.e., {\texttt{SPCA-SLS}}, which considers~\eqref{perspectivereform}), and the truncated power method, we assume that the sparsity level $s$ is known. The parameter $\lambda$ in~\eqref{perspectivereform} is set to zero unless stated otherwise. Following the discussion of \cite{yuan2013truncated} on initialization, we initialize {\texttt{TrunPow}} by $e_j$ with $j$ randomly chosen in $[p]$ where $e_j$ is the vector with all coordinates equal to zero except coordinate $j$ equal to one. For {\texttt{CovThresh}}, we choose the threshold level based on the theoretical results of~\cite{deshpande2016sparse} as $\alpha\sqrt{\log(p/s^2)/n}$ for some\footnote{In our experiments, we choose a value of $\alpha \approx 2$ that leads to the best estimation performance.} $\alpha>0$.  
For {\texttt{SPCAvSLR}}, we use Lasso for each nodewise regression 
problem and use a tuning parameter as per the recommendation of~\cite{bresler-10.5555/3327546.3327752}.

\smallskip 

\noindent \textbf{Results.} 
We consider different scenarios for our numerical experiments. First, we compare our newly proposed custom algorithm to a commercial solver in Section~\ref{vsGurobi}. Next, we fix $p$ and $s$ and investigate the effect of changing $n$ on the statistical and computational performance of different methods in Section \ref{varyingn}. Then, we fix $n,p$ and vary $s$ in Section \ref{varyings}. Finally, we run our algorithm on a large dataset with $p=20,000$ in Section~\ref{miqpvsmisocp}. To compare the quality of estimation for each algorithm, we report $|\sin\angle (\hat{u},u^*)|$ in our figures, where $\hat{u}$ is the recovered PC and $u^*$ is the true PC. To compare the quality of the support recovery, we report false positives and false negatives as $(|S^*\setminus \hat{S}|+|\hat{S}\setminus {S}^*|)/2$ where $S^*$ is the correct support and $\hat{S}$ is the estimated support. We perform the experiments on 10 replications, and summarize the results.

\subsubsection{Optimization performance: Comparison with MIP solvers}\label{vsGurobi}
Before investigating the statistical performance of our proposed approach, we show how our proposed algorithm can solve large-scale instances of Problem~\eqref{perspectivereform}.

We use our custom algorithm to solve 
Problem~\eqref{perspectivereform} with $\lambda=0$ --- we call this  \texttt{SPCA-SLS}. We also consider~\eqref{perspectivereform} with a small value of $\lambda>0$, taken to be $\lambda= 0.1{\sum_{j=1}^p\|X_{:,j}-\sum_{i\neq j}\beta_{i,j}^0X_{:,i}\|_2^2}/{\sum_{j=1}^p\sum_{i\neq j}(\beta^0_{i,j})^2}$ where $\beta^0$ is the initial solution---we call this \texttt{SPCA-SLSR}. We compare our approaches with Gurobi's MIP solver which is set to solve~\eqref{perspectivereform} with $\lambda=0$.
We run experiments for different values of $p,n,s$. In Table~\ref{table1}, we report the average optimality gap achieved after at most 5 minutes, for {\texttt{SPCA-SLS}}, {\texttt{SPCA-SLSR}} and Gurobi's MIP-solver. 
We observe that Gurobi quickly struggles to solve~\eqref{perspectivereform} when $p>100$ --- our custom approach on the other hand, can obtain near-optimal solutions to~\eqref{perspectivereform} for up to $p \approx 20,000$. For reference, recall that problem~\eqref{perspectivereform} involves $O(p^2)$-many continuous variables --- therefore, our approaches {\texttt{SPCA-SLS}}, {\texttt{SPCA-SLSR}} are solving quite large MIP problems to near-optimality within a modest computation time limit of $5$ minutes.
We also see that {\texttt{SPCA-SLSR}} provides better optimality gaps---we believe this due to the use of the perspective formulation, which leads to tighter relaxations.

\begin{table}[t!]
\centering
\begin{tabular}{ |c||ccc|ccc| }
 \hline
 & \multicolumn{3}{c|}{$s=5,n=500$} & \multicolumn{3}{c|}{$s=10,n=1000$}\\
 & {\texttt{SPCA-SLS}}&{\texttt{SPCA-SLSR}} & Gurobi &{\texttt{SPCA-SLS}} &{\texttt{SPCA-SLSR}} & Gurobi \\
 \hline
$p=100$&$3.1\%$& $2.3\%$ & $16.3\%$ & $3.0\%$ & $2.1\%$&$9.21\%$ \\
$p=1000$& $4.9\%$ &$3.8\%$& - & $3.9\%$ & $3.0\%$ & -\\
$p=10000$& $5.2\%$ &$4.2\%$& -&$4.1\%$& $3.1\%$ & -\\
$p=20000$& $6.8\%$ & $5.1\%$ & -&$5.4\%$& $3.7\%$ & -\\
 \hline
\end{tabular}
\caption{\small Comparison of Gurobi's off-the-shelf solver (Gurobi) and our specialized algorithm in Section~\ref{vsGurobi}. The numbers show the average optimality gap after 5 minutes. A dash shows no non-trivial dual bound was returned.}
\label{table1}
\end{table}

\subsubsection{Estimation with varying sample sizes ($n$)}\label{varyingn}
In this scenario, we fix $p=10,000$ and let $n\in\{5000,6000,7000,8000,9000,10000\}$ and $s\in\{5,10\}$. We compare our algorithm with the competing methods outlined above. Our experiments in this scenario are done on a machine equipped with two Intel Xeon Gold 5120 CPU $@$ 2.20GHz, running CentOS version~7 and using 20GB of RAM. The runtime of our method is limited to 2 minutes\footnote{For the examples in Figure~\ref{fig:synt1}, our custom algorithm provides optimality gaps less than $4\%$ and $7\%$ for $s=5$ and $s=10$, respectively (after 2 minutes). As a point of comparison, off-the-shelf MIP-solvers are unable to solve these problem instances (cf Table~\ref{table1}).}.

The results for these examples are shown in Figure~\ref{fig:synt1}: we compare the 
estimation (top panels) and support recovery (bottom panels) performance of different algorithms.  In terms of estimation, increasing $n$ results in a lower estimation error as anticipated by our theoretical analysis in Section~\ref{statreg}. In addition, our method {\texttt{SPCA-SLS}} appears to work better than competing algorithms and leads to lower estimation error. {\texttt{SPCA-SLS}} also appears to provide the best support recovery performance in these cases. The method {\texttt{CovThresh}} does not lead to a sparse solution and therefore has a high false positive rate, but may still lead to good estimation performance. On the other hand, {\texttt{SPCAvSLR}} has a higher false negative rate compared to {\texttt{CovThresh}} which leads to worse estimation performance. We note that in our experiments, the non-zero entries of the PC were drawn from the uniform $\text{Unif}[0,1]$ distribution, which can lead to some small nonzero coordinates. As discussed in Example~\ref{example1}, for our method, the signal strength required for support recovery is weaker than other methods --- this can partially explain the better support recovery performance of our method over competing methods in the figure.

\begin{figure*}[t!]
     \centering
     \scalebox{0.95}{\begin{tabular}{lll}
&  $~~~~~~~~~~~~~~~~~~p=10^4,s=5$ & $~~~~~~~~~~~~~~~~~~p=10^4,s=10$ \\
     \rotatebox{90}{~~~~~~~~~~~~~~~~~~~~~~~~~~~~~~~~~~~$|\sin\angle(\hat{u},u^*)|$}& 
\includegraphics[width=0.46\textwidth,trim=.5cm 0cm 0cm 0cm, clip]{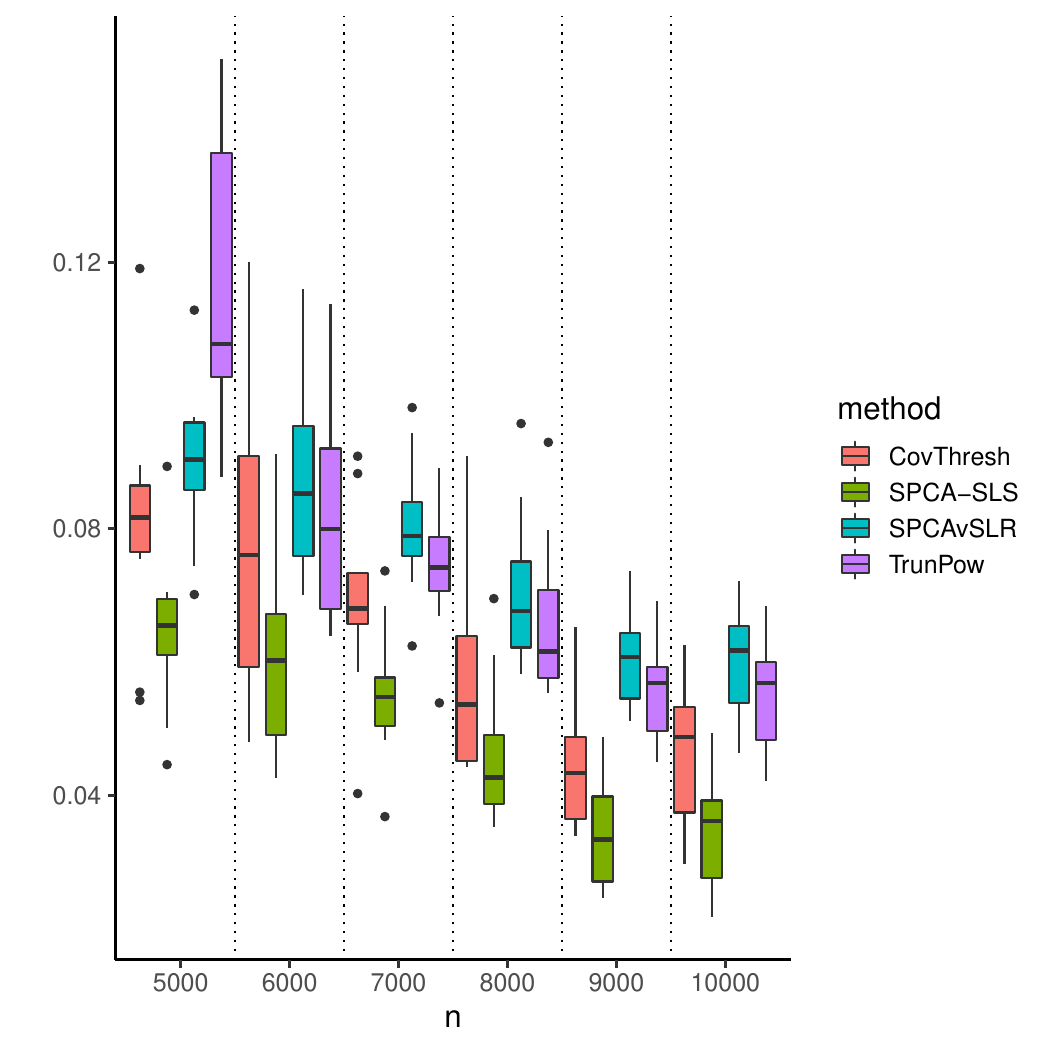}& 
 \includegraphics[width=0.46\textwidth,trim=.5cm 0cm 0cm 0cm, clip]{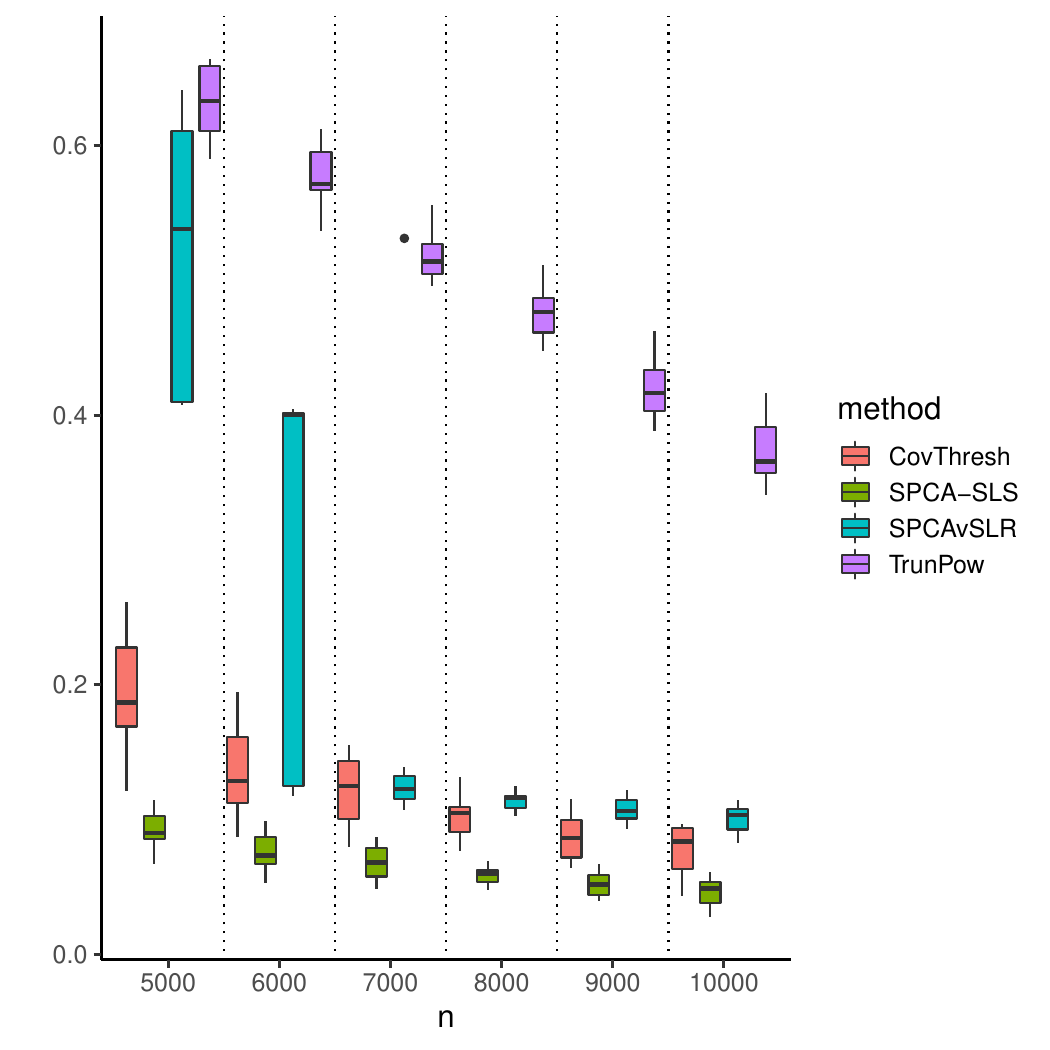}\\ 
       \rotatebox{90}{~~~~~~~~~~~~~~~~~~~~~~$(FP+FN)/2$}& \includegraphics[trim=1cm 0cm 0cm 0cm, clip,width=0.45\textwidth]{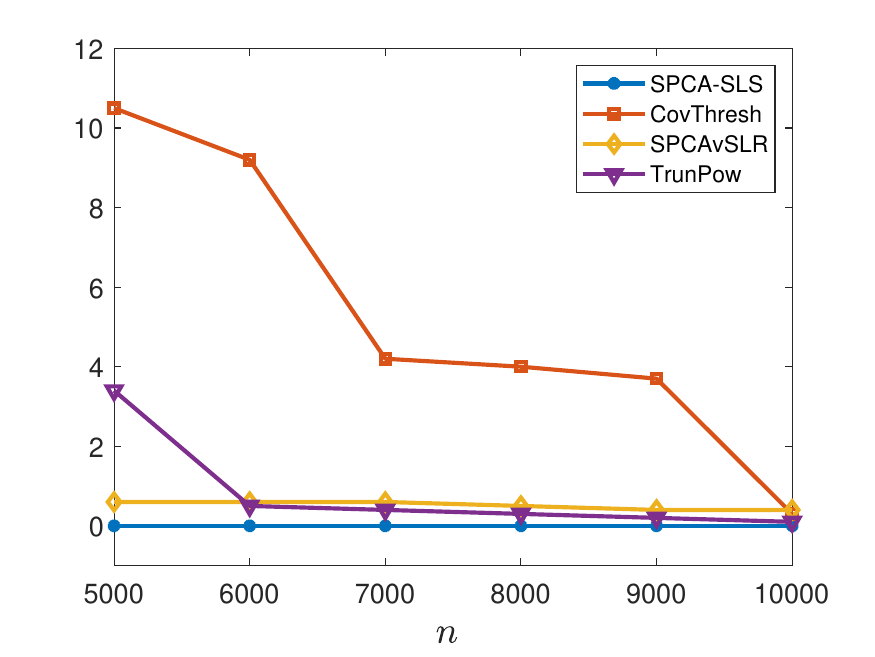}& 
      \includegraphics[trim=1cm 0cm 0cm 0cm, clip,width=0.45\textwidth]{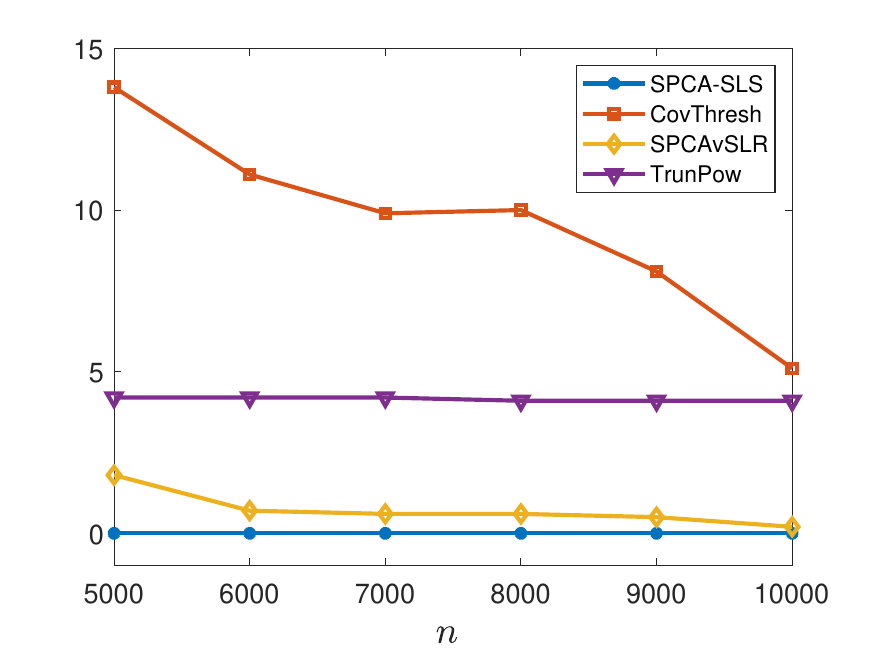}
\end{tabular}}
        \caption{\small Estimation and support recovery error as available from different methods on synthetic data with $p=10000$ and different values of $n$ (along the x-axis), as discussed in the text (Section \ref{varyingn}). 
        The left panels show results with $s=5$ and right panels $s=10$. The top panels compare estimation performance and bottom ones compare support recovery. Our proposed approach results in high-quality estimation performance, and perfect support recovery.}
        \label{fig:synt1}
\end{figure*}

\subsubsection{Estimation with varying sparsity levels ($s$)}\label{varyings}
In this scenario, we fix $p=10,000$ and $n=7500$ and consider different sparsity levels $s\in\{5,7,10,12,15\}$. We compare our {\texttt{SPCA-SLS}} algorithm with competing methods outlined above. We use the same computational setup as in Section~\ref{varyingn}.

The results for this scenario (estimation and support recovery properties) are shown in Figure \ref{fig:synt4}. 
As it can be seen, increasing $s$ results in worse statistical performance. However, our proposed approach continues to work better than other estimators in terms of estimation and support recovery\footnote{In all these cases, our algorithm delivers a near-optimal solution. The MIP-optimality gaps are $2\%,3.5\%,5\%,6.5\%$ and $8\%$ for values of $s=3,5,7,10,12,15$, respectively. The runtime of our method is limited to 2 minutes.}.

\begin{figure*}[t!]
     \centering
     \begin{tabular}{lclc}
&  Estimation Error, $p=10^4,n=7500$ & & Support Recovery, $p=10^4,n=7500$  \\
     \rotatebox{90}{~~~~~~~~~~~~~~~~~~~~~~~~~~~~~~~~~$|\sin\angle(\hat{u},u^*)|$}& \includegraphics[width=0.42\linewidth,trim=.5cm 0cm 0cm 0cm, clip]{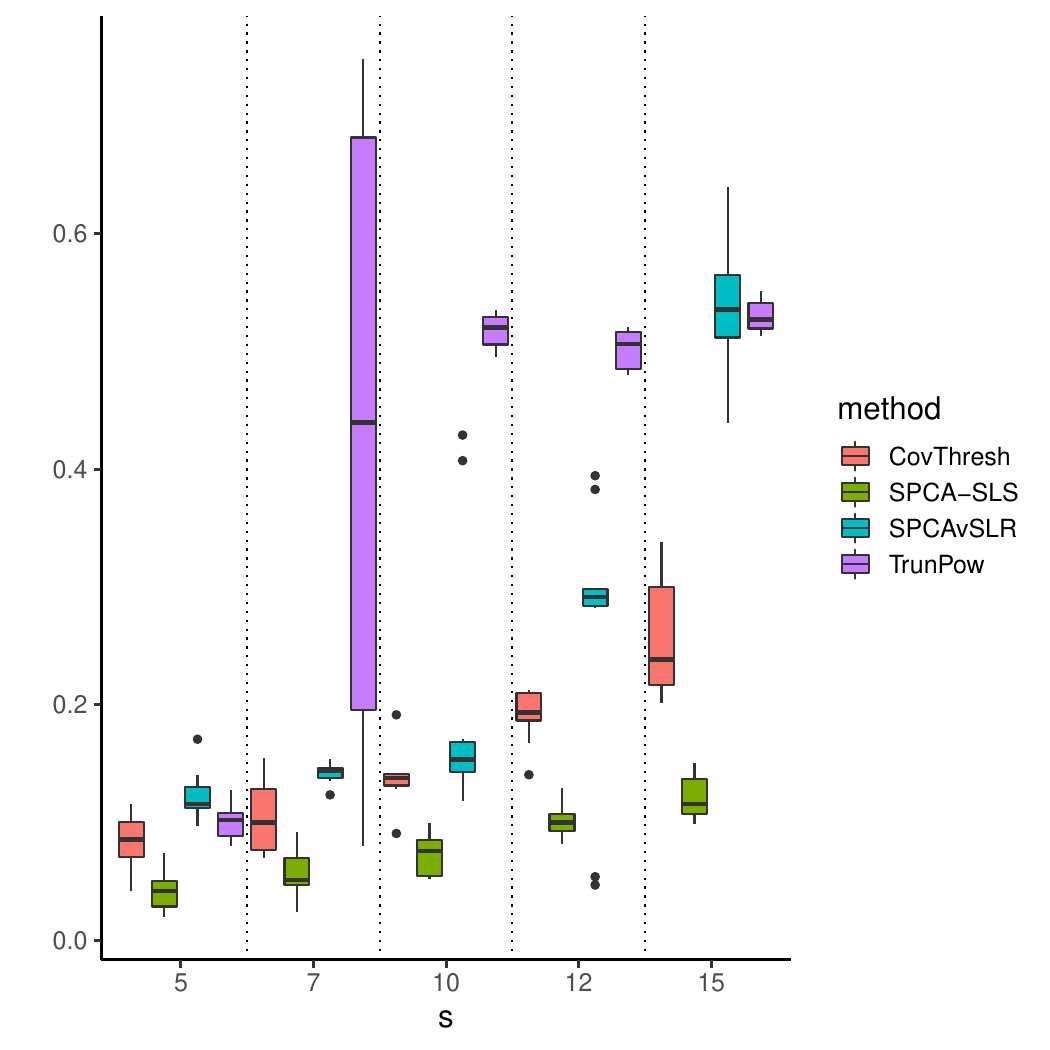} & \rotatebox{90}{~~~~~~~~~~~~~~~~$(FP+FN)/2$}& \includegraphics[width=0.4\linewidth,trim=1cm 0cm 0cm 0cm, clip]{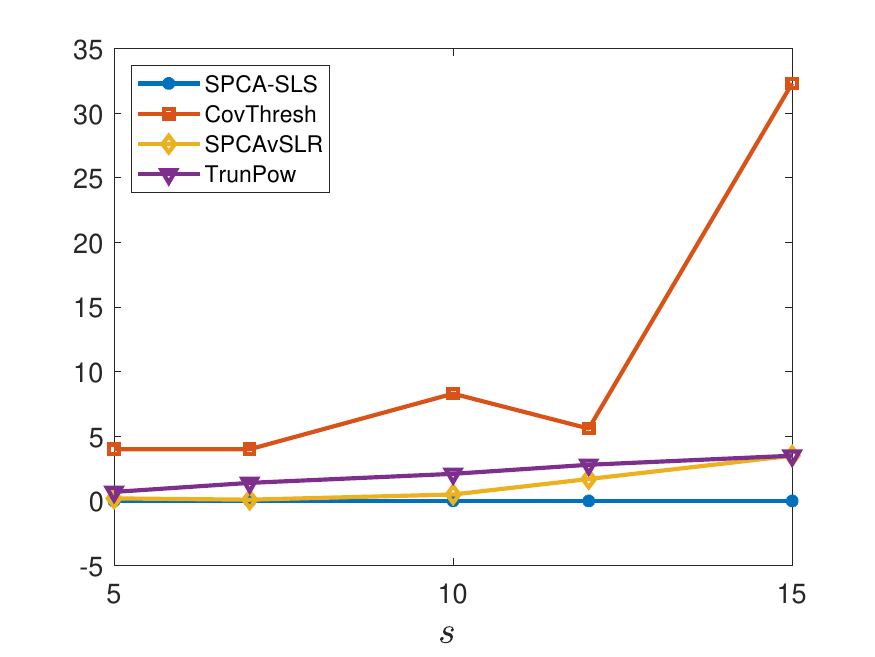}
\end{tabular}
        \caption{\small Numerical results for the synthetic dataset with $p=10000,n=7500$ in Section \ref{varyings}. The left panel shows the estimation performance and the right panel shows the support recovery performance.}
        \label{fig:synt4}
\end{figure*}

\subsubsection{A large-scale example with $p=20,000$}\label{miqpvsmisocp}
To show the scalability of our algorithm, we consider instances with $p=20,000$. We perform a set of experiments on a personal computer. We use a machine equipped with AMD Ryzen 9 5900X CPU $@$ 3.70GHz, using 32GB of RAM. However, in our experiments, Julia did not use more than 12GB of RAM. We set $p=20,000, s\in\{5,10\}$ and  $n\in\{5000,10000,15000,20000\}$. The runtime of our methods is limited to 5 minutes. 

In this case, we explore Algorithm~\ref{outer} applied to two cases of Problem~\eqref{perspectivereform}: (a) we consider the unregularized case $\lambda=0$ denoted by {\texttt{SPCA-SLS}} and (b) $\lambda>0$ chosen as in Section~\ref{vsGurobi}, denoted by \texttt{SPCA-SLSR}.
We use (b) to demonstrate the effect of the perspective regularization in~\eqref{perspectivereform}.
The results for this case are shown in Figure~\ref{fig:synt6}, which compares the estimation performance of different algorithms. As it can be seen, our method leads to better estimation performance compared to other SPCA methods. Moreover, having $\lambda>0$ (in case of {\texttt{SPCA-SLSR}}) does not negatively affect the estimation performance of our method, while it helps to achieve stronger optimality certificates\footnote{For $s=5$, \texttt{SPCA-SLS} and \texttt{SPCA-SLSR} provide optimality gaps under $9\%$ and $3\%$, respectively. For $s=10$, \texttt{SPCA-SLS} and \texttt{SPCA-SLSR} provide optimality gaps under $18\%$ and $6\%$, respectively.}.

\begin{figure*}[t!]
     \centering
     \begin{tabular}{lcc}
&  $s=5$ & $s=10$ \\
     \rotatebox{90}{~~~~~~~~~~~~~~~~~~~~~~~~~~~~~~~~~$|\sin\angle(\hat{u},u^*)|$}& \includegraphics[width=0.44\textwidth,trim=.5cm 0cm 0cm 0cm, clip]{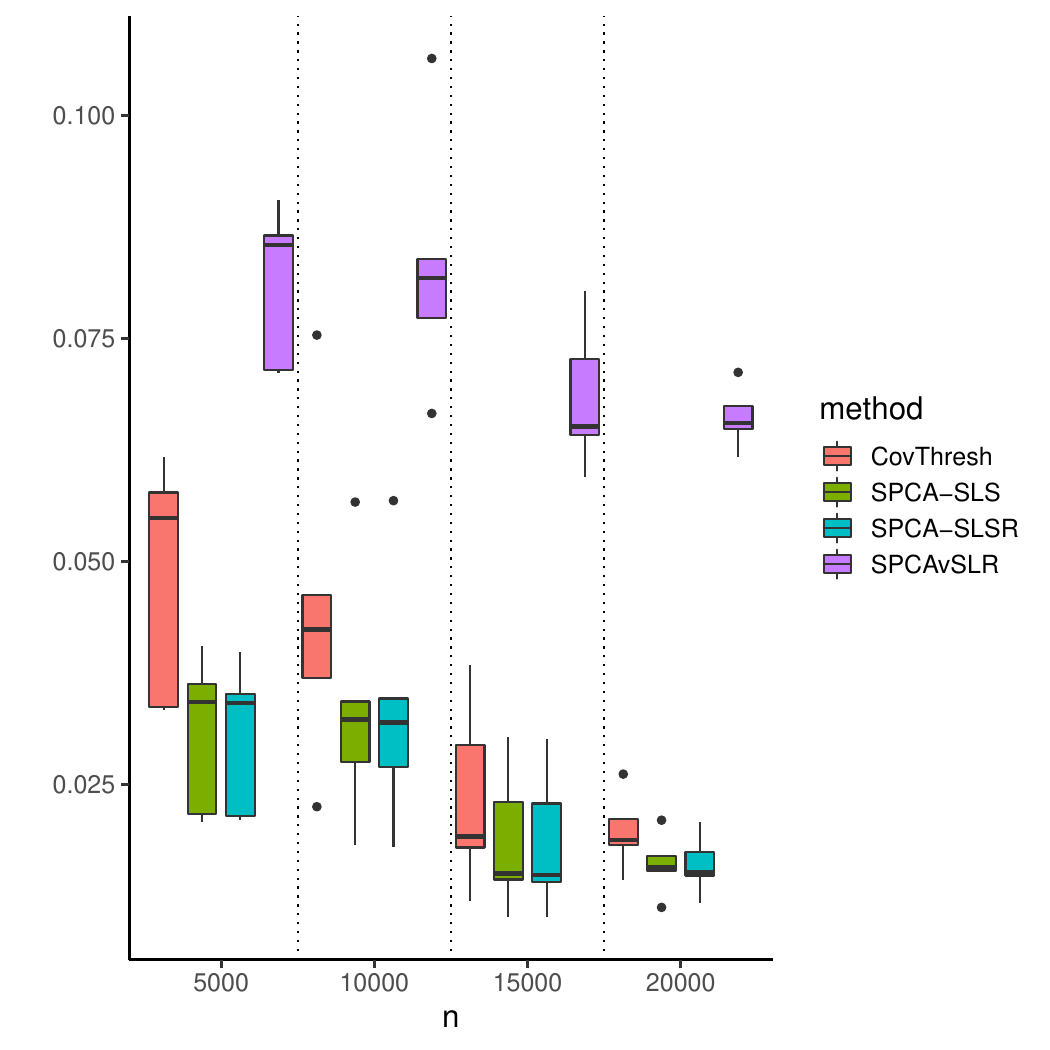}& 
 \includegraphics[width=0.44\textwidth,trim=.5cm 0cm 0cm 0cm, clip]{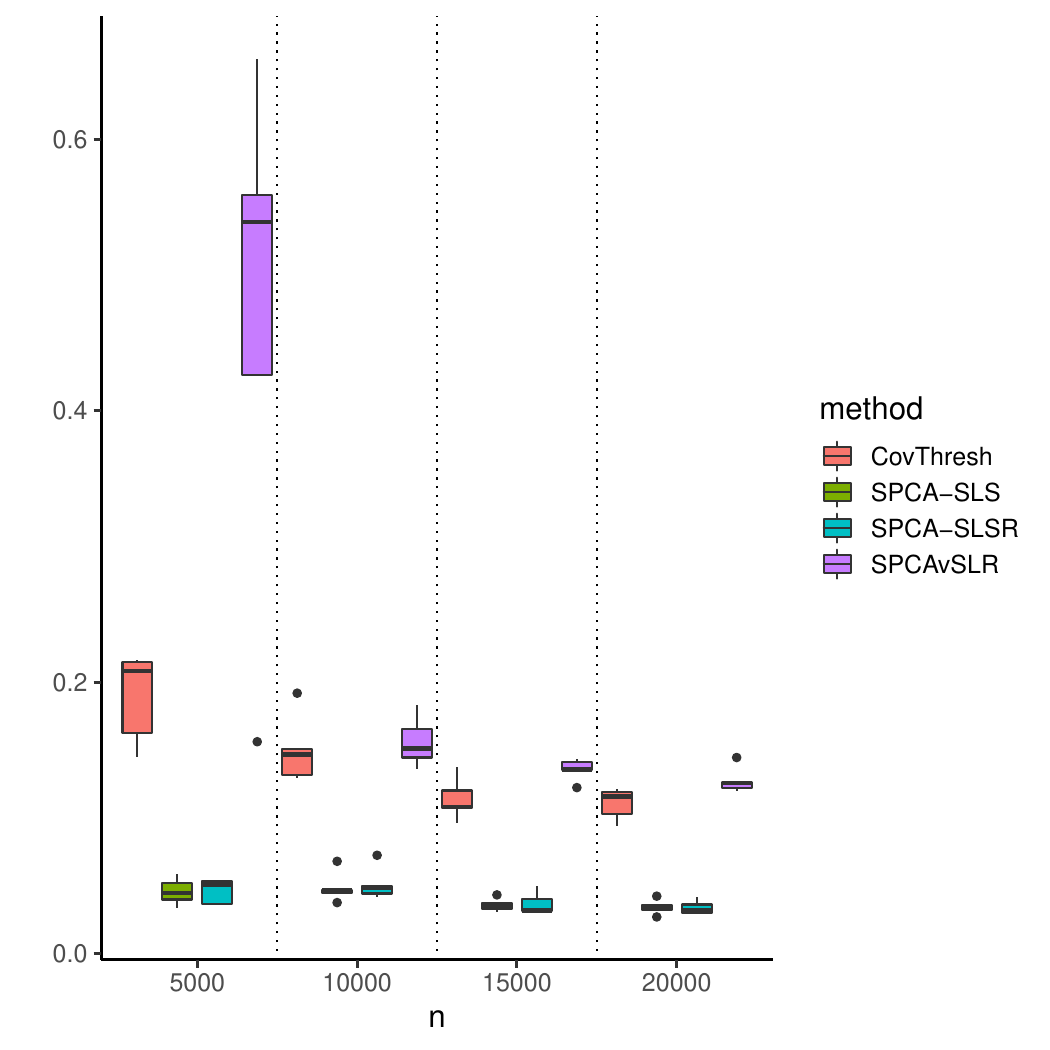}\\ 
\end{tabular}
        \caption{\small Numerical results for the synthetic dataset with $p=20000$ in Section \ref{miqpvsmisocp}. The left panel shows the statistical performance for $s=5$. The right panel shows the statistical performance for $s=10$.}
        \label{fig:synt6}
\end{figure*}

\smallskip

\noindent \textbf{Additional Numerical Results:} We present additional experiments studying how different methods perform under departures from the spiked covariance model --- see Appendix~\ref{app:add-exp}).

\subsection{Real dataset example}\label{real}
To show the scalability of our algorithm and to compare them with other well-known algorithms, we do further numerical experiments on the Gisette dataset \citep{NIPS2004_5e751896}. This dataset is a handwritten digit recognition dataset with $n=6000$ samples and $p=5000$ features. We run {\texttt{SPCA-SLS}}, {\texttt{SPCA-SLSR}} (with $\lambda$ chosen as in Section~\ref{vsGurobi}), {\texttt{TrunPow}}, {\texttt{SPCAvSLR}} and \texttt{CovTresh} for this data matrix and two values of $s\in\{4,5\}$. Parameter selection is done similar to the synthetic data experiments in Section~\ref{synthetic}. We use the same personal computer from Section~\ref{miqpvsmisocp} with Julia process limited to 6GB of RAM usage. To compare the results of different methods, we plot the absolute values of estimated PCs and examine the sparsity pattern of the associated eigenvector. Figure~\ref{fig:real} shows the results for two values of $s$ --- interestingly, we observe that the sparsity patterns available from different algorithms are different, with \texttt{CovTresh} leading to a denser support.

\begin{figure*}[t!]
     \centering
     \begin{tabular}{cc}
  $s=4$ &  $s=5$ \\
      \includegraphics[width=0.42\linewidth]{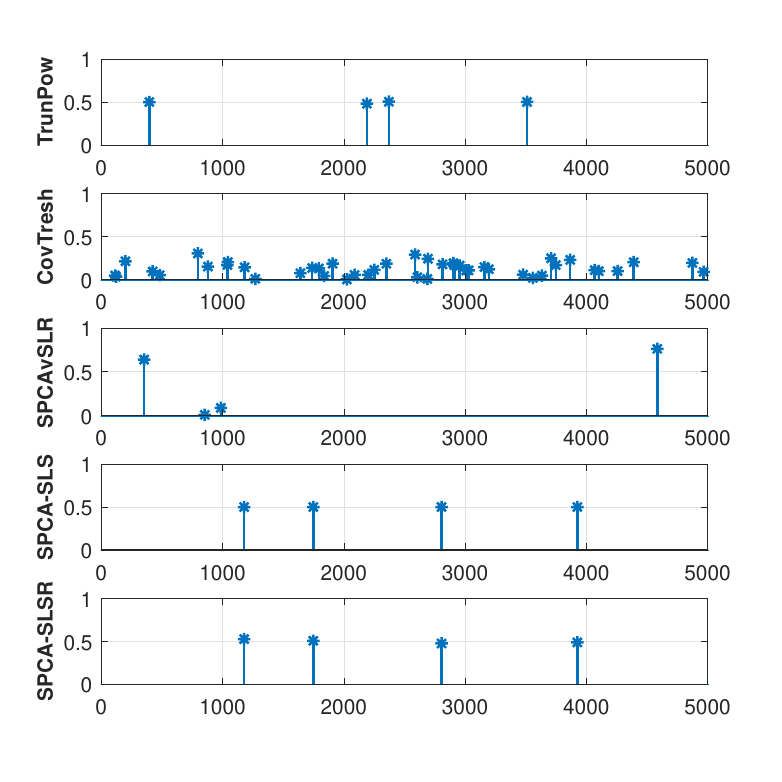}& 
       \includegraphics[width=0.42\linewidth]{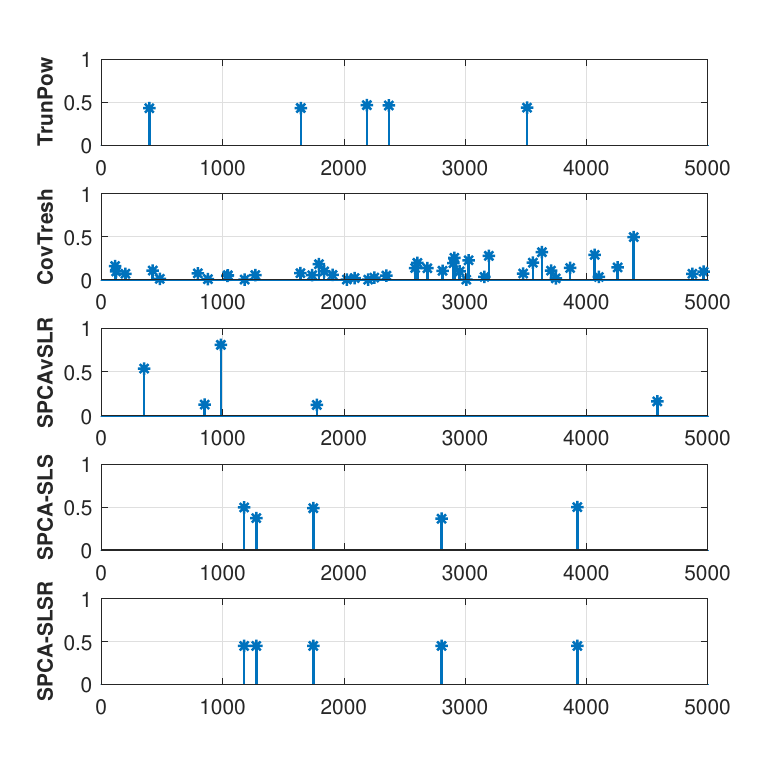}\\ 
\end{tabular}
        \caption{\small Comparison of estimated PCs by different algorithms for the real dataset in Section \ref{real}. The left panel shows the results for $s=4$ and the right panel shows the results for $s=5$. In both cases, {\texttt{SPCA-SLS}} reaches the optimality gap of less than $12\%$ and {\texttt{SPCA-SLSR}} reaches the optimality gap of less than $10\%$ after 10 minutes.}
        \label{fig:real}
\end{figure*}

\section{Conclusion}\label{sec:conclusion}
In this paper, we consider a discrete optimization-based approach for the SPCA problem under the spiked covariance model. We present a novel estimator given by the solution to a mixed integer second order conic optimization formulation. Different from prior work on MIP-based approaches for the SPCA problem, our formulation is based on the properties of the statistical model, which leads to notable improvements in computational performance. 
We analyze the statistical properties of our estimator: both estimation and variable selection properties. 
We present custom cutting plane algorithms for our optimization formulation that can obtain (near) optimal solutions for large problem instances with $p \approx 20000$ features.
Our custom algorithms offer significant improvements in terms of runtime, scalability over off-the-shelf commercial MIP-solvers, and recently proposed MIP-based approaches for the SPCA problem. 
In addition, our numerical experiments appear to suggest that our method outperforms some well-known polynomial-time and heuristic algorithms for SPCA in terms of estimation error and support recovery.

Our statistical model~\eqref{Gcov-general} assumes that the data is drawn from a multivariate Gaussian distribution. We follow prior work studying the SPCA problem under the (sparse) spiked covariance model~\citep{amini2009,bresler-10.5555/3327546.3327752}. 
It is of interest to study the
theoretical guarantees for our proposed estimator or its modifications in the case of non-Gaussian data. In Appendix~\ref{supp:nongaussian}, we explore the numerical performance of our method under two settings when the data is non-Gaussian. Overall, we see that as long as the underlying distribution does not deviate too far from the multivariate Gaussian setting, our method is able to deliver good estimation performance, compared to the alternatives. A deeper investigation for non-Gaussian settings is left as future work.

\subsection*{Acknowledgement} This research was partially supported by grants from the Office of Naval Research (ONR-N000141812298, N000142112841, N000142212665) and
the National Science Foundation (NSF-IIS-1718258).   
Kayhan Behdin contributed to this work while he was a PhD student at MIT.

\bibliographystyle{plainnat} 
\bibliography{ref.bib}

\clearpage

\appendix
\numberwithin{equation}{section}
\numberwithin{lem}{section}
\numberwithin{pro}{section}
\numberwithin{thm}{section}
\numberwithin{figure}{section}

\section*{Additional Numerical Experiments, Proofs and Technical Details}
\section{Studying $B^*$ under deviations from the spiked covariance model}\label{bstar-invest}
\subsection{A general perturbation model}
In this section, we discuss how the leading left singular vector of $B^*$ defined in~\eqref{new-bstar-def} relates to the PC of interest when $G^*$ is a perturbation of a spiked covariance matrix. That is,
\begin{equation}\label{app-gcov}
    G^* = \bar{G} + \Delta
\end{equation}
where
\begin{equation}\label{app-gcov1}
    \bar{G}=I_p+\theta\bar{u}\bar{u}^T\in\mathcal{G}_{p,\theta}
\end{equation}
follows the spiked covariance model for some $\theta\in(0,1]$. We will show that if $\|\Delta\|_F$ is sufficiently small, the leading left singular vector of $B^*$ is close to $\bar{u}$. In this section, we use the notation 
$$\bar{\Theta}=\bar{G}^{-1},~~\text{and}~~~\Theta^*=(G^*)^{-1}.$$ 
Also, we denote the regression coefficients and variances from Lemma~\ref{lem_reg} for $G^*$ and $\bar{G}$ as $\{\beta_{i,j}^*,(\sigma_i^*)^2\}$ and $\{\bar{\beta}_{i,j},\bar{\sigma}_i^2\}$ respectively. We also use these coefficients to form matrices $B^*,\bar{B}$ as in~\eqref{new-bstar-def}:
\begin{equation}\label{new-bstar-def-app}
  B^*_{i,j} = \begin{cases} \beta^*_{i,j} &\text{if}~ i \neq j \\
 (\sigma^*_i)^2-1 & \text{if}~ i = j \end{cases}~~~\text{and}~~~\bar{B}_{i,j} = \begin{cases} \bar{\beta}_{i,j} &\text{if}~ i \neq j \\
 \bar{\sigma}_i^2-1 & \text{if}~ i = j. \end{cases}   
\end{equation}

In this section, we make the following assumption.
\begin{asu*} We have the following:
\begin{enumerate}[leftmargin=*,label=\textbf{(B\arabic*})]
    \item \label{thetastarinverse} The matrix $G^*$ is positive definite and invertible.
    \item \label{b-sigmal} There exist absolute numerical constants $u_{\sigma},l_{\sigma}$ such that $u_{\sigma}\geq \sigma^*_j,\bar{\sigma}_j\geq l_{\sigma}$  for $j\in[p]$. 
\end{enumerate}
\end{asu*}
Assumption~\ref{thetastarinverse} ensures the matrix $\Theta^*$ and the values $\{\bar{\beta}_{i,j},\bar{\sigma}_i\}$ are well-defined. Assumption~\ref{b-sigmal} ensures that the matrices $B^*,\bar{B}$ are well-scaled. Our main result in this section is Theorem~\ref{app-thm}.

\begin{thm}\label{app-thm}
Suppose Assumptions~\ref{thetastarinverse} and~\ref{b-sigmal} hold. Then, we have the following:
\begin{enumerate}
    \item The matrices $B^*,\bar{B}$ satisfy
    $$\|B^*-\bar{B}\|_F^2\lesssim\|\Theta^*\|_F^2\|\bar{\Theta}\|_F^2\|\Delta\|_F^2.$$
    \item For some absolute constant $c_{gap}>0$, the two largest singular values of $B^*$ satisfy
    $$\sigma_{\max}(B^*)-\sigma_{\max,2}(B^*)\geq \theta/2 -c_{gap}\|\Theta^*\|_F\|\bar{\Theta}\|_F\|\Delta\|_F.$$
    \item The leading left singular vector of $B^*$ which is $u^*$, satisfies
    $$|\sin\angle(u^*,\bar{u})|\lesssim \frac{\|\Theta^*\|_F\|\bar{\Theta}\|_F\|\Delta\|_F}{\theta}.$$
\end{enumerate}
\end{thm}
Before presenting the proof of Theorem~\ref{app-thm}, we discuss its implications for our theoretical results in Section~\ref{statreg}.  
The first part of Theorem~\ref{app-thm} states that as long as $\|\Delta\|_F$ is sufficiently small, the matrices $B^*$ and $\bar{B}$ are close. This ensures Assumption~\ref{orsclebetaassumtion} and~\ref{sum_assum} (the assumptions that the oracle error is small) can be satisfied when the amount of perturbation is small. The second part of Theorem~\ref{app-thm} shows a gap in singular values for $B^*$ when $\|\Delta\|_F$ is sufficiently small, hence, Corollary~\ref{bcro} can be readily applied to the model. This implies that a high-quality estimate of $u^*$, the leading left singular vector of $B^*$ can be achieved by our framework. Finally, the last part of Theorem~\ref{app-thm} shows that $u^*$ is close to $\bar{u}$, which is the underlying PC of interest under the model given by~\eqref{app-gcov} and~\eqref{app-gcov1}. Therefore, $\hat{u}$, the leading left singular vector of $\hat{B}$ (as defined in Corollary~\ref{bcro}) will be a good approximation to $\bar{u}$.
We now present the proof of Theorem~\ref{app-thm}.
\begin{proof}[\bf Proof of Theorem~\ref{app-thm}.]
\textbf{Part 1)} Recall that we denote the regression coefficients and variances from Lemma~\ref{lem_reg} for $G^*$ and $\bar{G}$ as $\{\beta_{i,j}^*,(\sigma_i^*)^2\}$ and $\{\bar{\beta}_{i,j},\bar{\sigma}_i^2\}$ respectively. 
By the definitions of $\sigma_j^*,\bar{\sigma}_j$, one has
\begin{align}
    ((\sigma_j^*)^2-\bar{\sigma}_j^2)^2 & = \left(\frac{1}{\Theta^*_{j,j}}-\frac{1}{\bar{\Theta}_{j,j}}\right)^2 \nonumber \\
    & = \frac{(\bar{\Theta}_{j,j}-\Theta_{j,j}^*)^2}{(\Theta^*_{j,j}\bar{\Theta}_{j,j})^2} \nonumber \\
    & \stackrel{(a)}{\leq} {(\bar{\Theta}_{j,j}-\Theta_{j,j}^*)^2}u_{\sigma}^4\nonumber \\
    & \lesssim (\bar{\Theta}_{j,j}-\Theta_{j,j}^*)^2\label{sigma-theta}
\end{align}
where $(a)$ is by Assumption~\ref{b-sigmal}. 

Next, by the definition of $\beta^*_{i,j},\bar{\beta}_{i,j}$ 
for $j\in[p]$ we have
\begin{align}
   \sum_{i:i\neq j} (\beta^*_{i,j}-\bar{\beta}_{i,j})^2 & = \sum_{i:i\neq j} \left(\Theta^*_{i,j}(\sigma^*_j)^2-\bar{\Theta}_{i,j}\bar{\sigma}_j^2\right)^2\nonumber \\
    & = \sum_{i:i\neq j} \left[(\sigma^*_j)^2(\Theta^*_{i,j}-\bar{\Theta}_{i,j})+\bar{\Theta}_{i,j}((\sigma_j^*)^2-\bar{\sigma}_j^2)\right]^2 \nonumber \\
    & \lesssim \sum_{i:i\neq j} \left\{(\sigma_j^*)^4 (\Theta^*_{i,j}-\bar{\Theta}_{i,j})^2 + \bar{\Theta}_{i,j}^2 ((\sigma_j^*)^2-\bar{\sigma}_j^2)^2\right\} \nonumber \\
    & \stackrel{(a)}{\lesssim} \sum_{i:i\neq j} (\Theta^*_{i,j}-\bar{\Theta}_{i,j})^2 + ((\sigma_j^*)^2-\bar{\sigma}_j^2)^2 \sum_{i:i\neq j} \bar{\Theta}_{i,j}^2 \nonumber \\
    & \stackrel{(b)}{\lesssim} \sum_{i:i\neq j} (\Theta^*_{i,j}-\bar{\Theta}_{i,j})^2 + ((\sigma_j^*)^2-\bar{\sigma}_j^2)^2\nonumber \\
    & \stackrel{(c)}{\lesssim} \sum_{i=1}^p (\Theta^*_{i,j}-\bar{\Theta}_{i,j})^2
\end{align}
where $(a)$ is by Assumption~\ref{b-sigmal}, $(b)$ is true as from Lemma~\ref{Ginv} we have
\begin{align*}
    \sum_{i:i\neq j}\bar{\Theta}_{i,j}^2 & = \sum_{i:i\neq j} \frac{\theta^2 \bar{u}_{i}^2\bar{u}_j^2}{(1+\theta)^2}\leq  \sum_{i:i\neq j}\bar{u}_i^2 \leq \sum_{i=1}^p\bar{u}_i^2 = 1
\end{align*}
and $(c)$ is by~\eqref{sigma-theta}.  As a result,
\begin{align}
    \|B^*-\bar{B}\|_F^2 &= \sum_{j=1}^p  \sum_{i:i\neq j} (\beta^*_{i,j}-\bar{\beta}_{i,j})^2 + \sum_{j=1}^p ((\sigma_j^*)^2-\bar{\sigma}_j^2)^2 \nonumber \\
    & \lesssim \sum_{j=1}^p\sum_{i=1}^p (\Theta^*_{i,j}-\bar{\Theta}_{i,j})^2 = \|\Theta^*-\bar{\Theta}\|_F^2.\label{btotheta}
\end{align}
Next, note that
\begin{align}
    I_p & = I_p  + \bar{G}^{-1}\Delta - \bar{G}^{-1}\Delta \nonumber \\
    & = \bar{G}^{-1}(\bar{G}+\Delta)- \bar{G}^{-1}\Delta (\bar{G}+\Delta)^{-1}(\bar{G}+\Delta) \nonumber \\
    & = (\bar{G}^{-1}-\bar{G}^{-1}\Delta (\bar{G}+\Delta)^{-1})(\bar{G}+\Delta)\nonumber
\end{align}
which shows 
\begin{equation}
    \Theta^*= \bar{\Theta}-\bar{\Theta}\Delta \Theta^*
\end{equation}
as $G^*=\bar{G}+\Delta$. Hence, 
\begin{equation}\label{pert-norm}
    \|\Theta^*-\bar{\Theta}\|_F^2 \leq \|\Theta^*\|_F^2\|\bar{\Theta}\|_F^2\|\Delta\|_F^2.
\end{equation}
Equations~\eqref{pert-norm} and~\eqref{btotheta} together complete the proof.\\

\noindent \textbf{Part 2)} From the first part of the theorem and Weyl's inequality,
\begin{equation}
    \begin{aligned}
      \sigma_{\max}(B^*) & \geq \sigma_{\max}(\bar{B}) - \|B^*-\bar{B}\|_F \geq \sigma_{\max}(\bar{B}) - c_{gap}\|\Theta^*\|_F\|\bar{\Theta}\|_F\|\Delta\|_F/2 \\
      \sigma_{\max,2}(B^*) & \leq \sigma_{\max,2}(\bar{B}) + \|B^*-\bar{B}\|_F \leq \sigma_{\max,2}(\bar{B}) + c_{gap}\|\Theta^*\|_F\|\bar{\Theta}\|_F\|\Delta\|_F/2
    \end{aligned}
\end{equation}
where $c_{gap}$ is an absolute constant. Moreover, as $\bar{G}$ follows a spiked covariance model, from~\eqref{Bsigma}, we have $\sigma_{\max}(\bar{B})-\sigma_{\max,2}(\bar{B})\geq \theta/2$ which completes the proof. \\

\noindent \textbf{Part 3)} This follows by using Wedin's theorem~\cite{wedin1972perturbation}, part 1 of the theorem and the fact that $\sigma_{\max}(\bar{B})-\sigma_{\max,2}(\bar{B})\geq \theta/2$.
\end{proof}
\subsection{Tighter results under a structured perturbation}\label{appendix-a2-misspecification}
Here we analyze a special perturbation structure where $\Delta$ in~\eqref{app-gcov} is of a specific form. While the general results from the previous section are applicable to this model, we can obtain tighter results for the specific form of $\Delta$ we discuss here. 
We show that when the amount of perturbation is small, the left singular vector of $B^*$ satisfies $u^*=\bar{u}$ under the model described by~\eqref{app-gcov} and~\eqref{app-gcov1}. In other words, for this structure of $\Delta$, no additional error arises from the perturbation. We build our model based on the notation from the last section. Specifically, for simplicity, we assume $\|\bar{u}\|_0=s$ and $\bar{u}_{s+1}=\cdots=\bar{u}_p=0$, i.e., only the first $s$ coordinates of $\bar{u}$ are nonzero. Moreover, $\Delta$ has a block diagonal structure 
\begin{equation}
    \Delta = \begin{bmatrix} 0 & 0 \\ 0 & \Delta_0
    \end{bmatrix}
\end{equation}
where $\Delta_0\in\R^{(p-s)\times(p-s)}$. This implies $G^*$ is given by
\begin{equation}
    G^* = \begin{bmatrix}
    I_s + \theta \bar{u}_{1:s}\bar{u}_{1:s}^T & 0 \\ 0 & I_{p-s}+\Delta_0
    \end{bmatrix}.
\end{equation}
This model is inspired by the perturbation model considered in~\cite{amini2009}. 
While~\cite{amini2009} assumes that the nonzero entries of $\bar{u}$ are $\pm1/\sqrt{s}$, we do not make such an assumption, and hence ours is a more general form for $G^*$.

Let us define 
$$\bar{G}_0=I_s + \theta \bar{u}_{1:s}\bar{u}_{1:s}^T\in\R^{s\times s}~~~~\text{and}~~~~G_0=I_{p-s}+\Delta_0\in\R^{(p-s)\times(p-s)}.$$ Note that $\bar{G}_0,G_0$ can be considered as covariance matrices but they have different dimensions. 
Similar to the previous section, we define  the regression coefficients and variances from Lemma~\ref{lem_reg} for $G_0$ as $\{(\beta_{0})_{i,j},(\sigma_0)_i^2\}$. For these regression coefficients and variances, we define $B_0\in\R^{(p-s)\times(p-s)}$ similar to~\eqref{new-bstar-def-app},
\begin{equation}\label{new-bstar-def-app2}
  (B_0)_{i,j} = \begin{cases} (\beta_0)_{i,j} &\text{if}~ i \neq j \\
 (\sigma_0)_i^2-1 & \text{if}~ i = j. \end{cases}
\end{equation}
We define $\bar{B}_0\in\R^{s\times s}$ for $\bar G_0$ in a similar fashion. We make the following assumptions:

\begin{asu*} We have the following:
\begin{enumerate}[leftmargin=*,label=\textbf{(C\arabic*})]
    \item \label{g0inverse} The matrix $G_0$ is positive definite and invertible.
    \item \label{b0-sigma} The matrix $B_0$ satisfies $\sigma_{\max}(B_0)\leq \theta/10$.
\end{enumerate}
\end{asu*}
Our main result in this section is as follows.
\begin{thm}\label{app-thm2}
Under the model setup discussed, if Assumptions~\ref{g0inverse} and~\ref{b0-sigma} hold we have the following.
\begin{enumerate}
    \item The matrix $B^*$ satisfies
    $$\|B^*-\bar{B}\|_F = \|B_0\|_F.$$
    \item The leading left singular vector of $B^*$ is $\bar{u}$.
    \item The top two singular values of the matrix $B^*$ satisfy
    $$\sigma_{\max}(B^*)-\sigma_{\max,2}(B^*)\gtrsim \theta.$$
\end{enumerate}
\end{thm}
Theorem~\ref{app-thm2} shows that as long as $\|B_0\|_F$ is sufficiently small (as in Assumption~\ref{b0-sigma}), we have $u^*=\bar{u}$ and also $B^*$ has a nonzero singular value gap. This means the perturbation does not result in any additional error when estimating $\bar{u}$ through the leading left singular vector of $B^*$.

\begin{proof}[\bf Proof of Theorem~\ref{app-thm2}.]\textbf{Part 1)} 
As $G^*$ has a block diagonal structure, we have
\begin{equation}
    \Theta^* = (G^*)^{-1}=\begin{bmatrix} \bar{G}_0^{-1} & 0 \\ 0 & G_0^{-1}
    \end{bmatrix}.
\end{equation}
We compute the entries of $\{\beta_{i,j}^*\}$. 
Note that $B^*$ has a block diagonal structure: this follows by observing that $\beta_{i,j}^*=\Theta_{i,j}^*=0$ if $i>s,j\leq s$ or $i\leq s, j>s$. 
If $i,j\leq s$, then
\begin{equation}
    \beta_{i,j}^* =-\frac{\Theta^*_{i,j}}{\Theta^*_{j,j}}=-\frac{(\bar{G}_0^{-1})_{i,j}}{(\bar{G}_0^{-1})_{j,j}}.
\end{equation}
Moreover, if $i,j>s$, then 
\begin{equation}
    \beta_{i,j}^* =-\frac{\Theta^*_{i,j}}{\Theta^*_{j,j}}=-\frac{({G}_0^{-1})_{i-s,j-s}}{({G}_0^{-1})_{j-s,j-s}}.
\end{equation}
As a result, we have
\begin{equation}\label{app-bstar-diag}
    B^* =\begin{bmatrix} \bar{B}_0 & 0 \\ 0 & B_0
    \end{bmatrix}.
\end{equation}
This completes the proof of the first part.\\

\noindent \textbf{Part 2)} Suppose we have the singular value decompositions $\bar{B}_0=\bar{U}_0\bar{S}_0\bar{V}_0^T$ and ${B}_0={U}_0{S}_0{V}_0^T$. By~\eqref{app-bstar-diag}, we have that $B^*$ has the singular value decomposition: 
\begin{equation}
    B^* =\begin{bmatrix} \bar{U}_0 & 0 \\ 0 & U_0
    \end{bmatrix}\begin{bmatrix} \bar{S}_0 & 0 \\ 0 & S_0
    \end{bmatrix}\begin{bmatrix} \bar{V}^T_0 & 0 \\ 0 & V^T_0
    \end{bmatrix}.
\end{equation}
Note that as $\bar{G}_0$ follows a ($s$-dimensional) spiked covariance model, $\bar{B}_0$ is a rank-1 matrix and its nonzero singular value is at least $\theta/2$ from~\eqref{Bsigma}. As the rest of singular values of $\bar{B}_0$ are zero, the second largest singular value of $B^*$ is the largest singular value of $B_0$ by Assumption~\ref{b0-sigma}. 
As a result, the leading left singular vector of $B^*$ is the leading left singular vector of $\bar{B}_0$ concatenated with zeros. On the other hand, from~\eqref{bstardef} we know that the leading left singular vector of $\bar{B}_0$ is $\bar{u}_{1:s}$. This completes the proof of the second part.\\

\noindent \textbf{Part 3)} From the proof of 
part~2, we have
\begin{align}
    \sigma_{\max}(B^*) - \sigma_{\max,2}(B^*) \geq \theta/2 - \theta/10 \gtrsim \theta
\end{align}
where the first inequality is a result of Assumption~\ref{b0-sigma}.
\end{proof}

\section{Additional Numerical Experiments}\label{app:add-exp}
In this section, we present additional numerical experiments to show the utility of our method compared to other methods under deviations from the spiked covariance model. Our experimental setup here is similar to the one in Section~\ref{synthetic}. The main difference in this section is that we assume the generative model has the covariance matrix 
$$G^*=I_p+u^*(u^*)^T+\underbrace{\frac{u_1u_1^T}{10}+\frac{u_2u_2^T}{20}}_{\Delta}$$
where $\Delta$ is defined in~\eqref{app-gcov}, and $u^*,u_1,u_2\in\R^p$.
The vector $u^*$ is randomly drawn as explained in Section~\ref{synthetic} with $\|u^*\|_0=s$. In our first set of experiments, we set $u_2=0$. We draw each coordinate of $u_1$ from $\text{Unif}[0,1]$. Then, we use the Gram-Schmidt process to make $u_1$ a unit vector orthonormal to $u^*$ (note that $u_1$ and consequently $G^*$ are not sparse anymore). We set $p=5000$ and $s\in\{5,10\}$ and $n\in\{1000,2000,3000,4000,5000\}$. The results for this case are shown in Figure~\ref{fig:synt-th1}. 
We observe that the performance of all methods deteriorate compared to the case when $G^*$ is a spiked covariance model (i.e, $\Delta = 0$).
However, our proposed estimator still appears to work better compared to other methods --- this suggests the usefulness of our estimator even when $G^*$ deviates from a spiked covariance model.

\begin{figure*}[t!]
     \centering
     \begin{tabular}{lcc}
&  $s=5$ & $s=10$ \\
     \rotatebox{90}{~~~~~~~~~~~~~~~~~~~~~~~~~~~~~~~~~$|\sin\angle(\hat{u},u^*)|$}& \includegraphics[width=0.44\textwidth,trim=.5cm 0cm 0cm 0cm, clip]{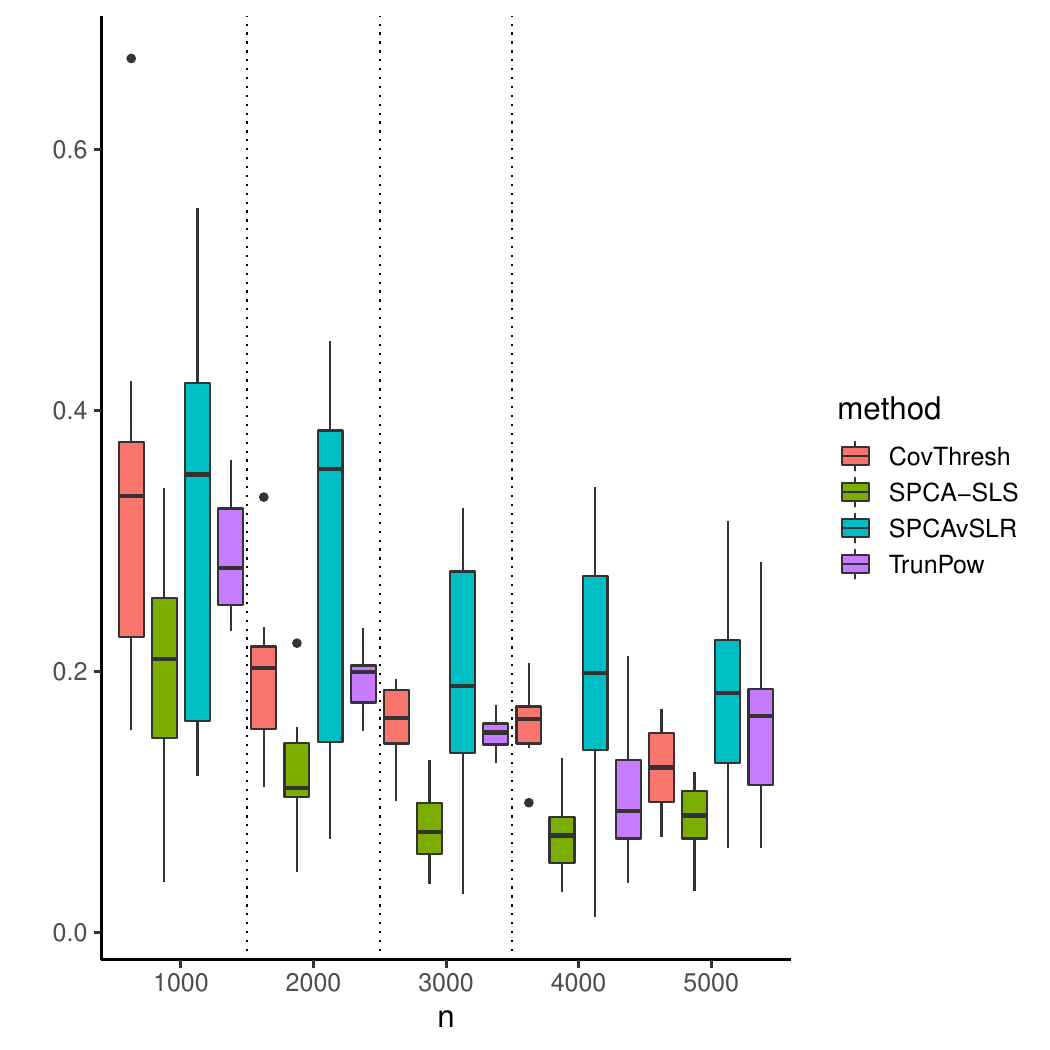}& 
 \includegraphics[width=0.44\textwidth,trim=.5cm 0cm 0cm 0cm, clip]{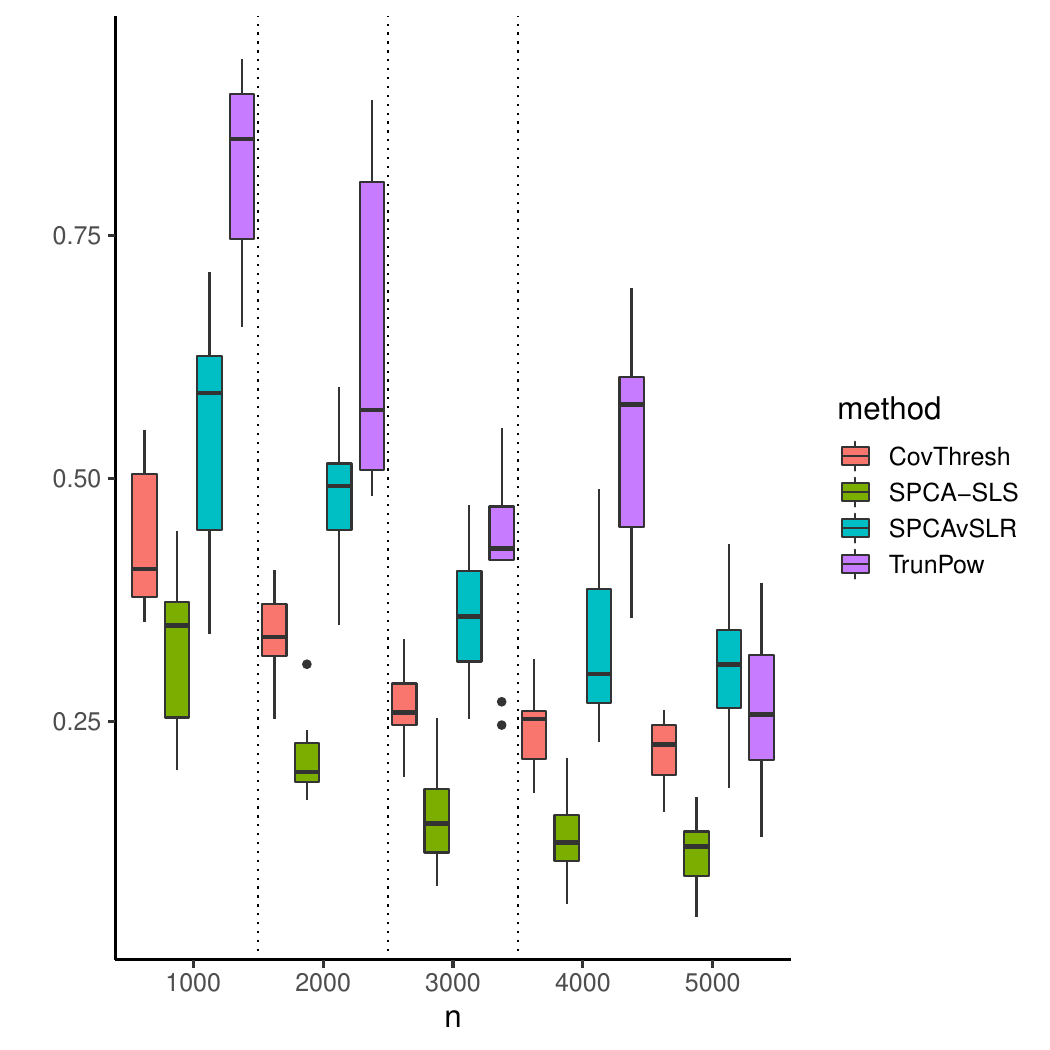}\\ 
\end{tabular}
        \caption{\small Experiment results for the first setup in Appendix~\ref{app:add-exp} }
        \label{fig:synt-th1}
\end{figure*}

In the next set of experiments, we let $u_2$ to be nonzero. Similar to $u_1$, the entries of the vector $u_2$ are drawn from a uniform distribution and then we use the Gram-Schmidt process to form the orthonormal set of vectors $u^*,u_1,u_2$. The results for this case are shown in Figure~\ref{fig:synt-th2}. Overall, we see that our proposed method is still generally outperforming other estimators, although all estimators suffer due to the generative model being misspecified.

\begin{figure*}[t!]
     \centering
     \begin{tabular}{lcc}
&  $s=5$ & $s=10$ \\
     \rotatebox{90}{~~~~~~~~~~~~~~~~~~~~~~~~~~~~~~~~~$|\sin\angle(\hat{u},u^*)|$}& \includegraphics[width=0.44\textwidth,trim=.5cm 0cm 0cm 0cm, clip]{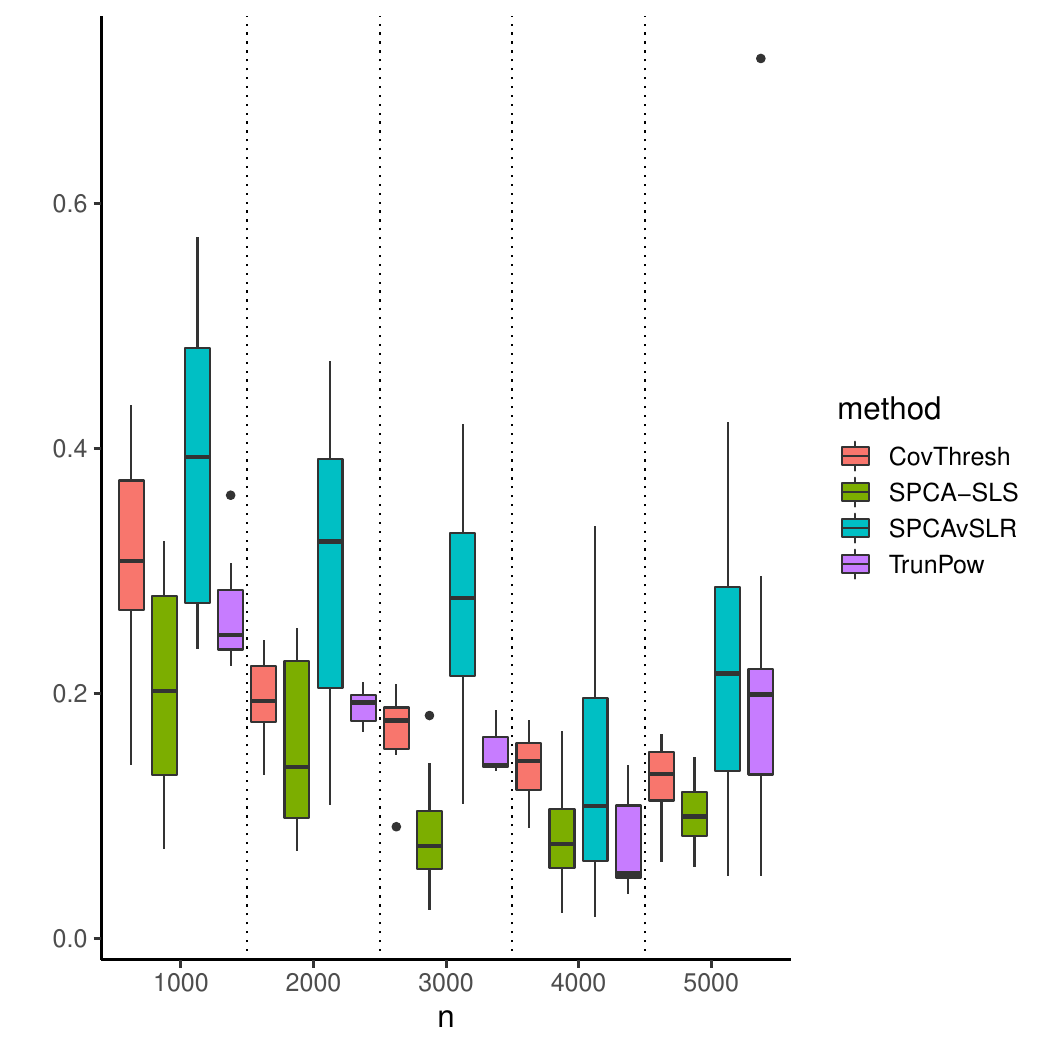}& 
 \includegraphics[width=0.44\textwidth,trim=.5cm 0cm 0cm 0cm, clip]{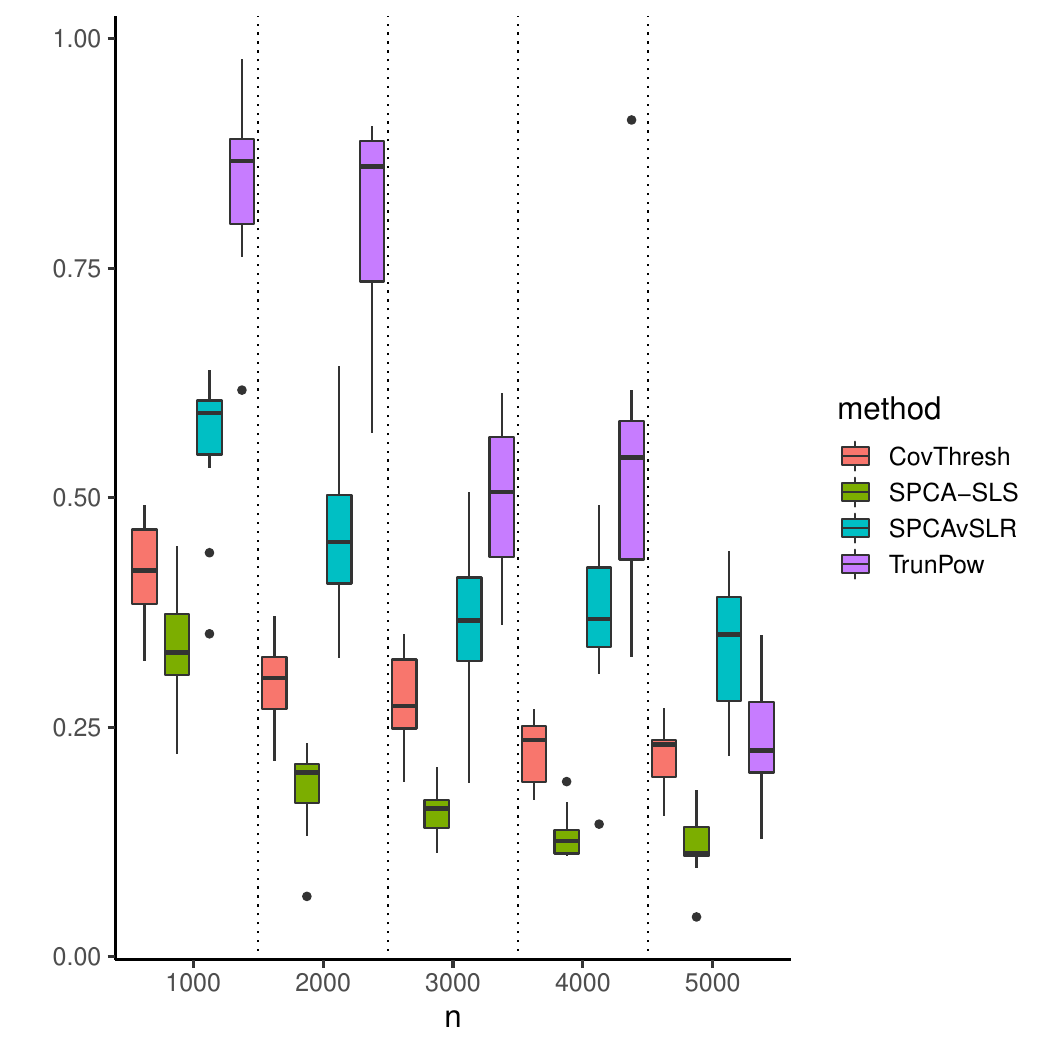}\\ 
\end{tabular}
        \caption{\small Experiment results for the second setup in Appendix~\ref{app:add-exp} }
        \label{fig:synt-th2}
\end{figure*}

\section{Proofs of Theorems~1 and 2}\label{appendix:c}
Before proceeding with the proof of main results, we present an extended version of Lemma~\ref{lem_reg} that we use throughout the proofs. 
\begin{lem}\label{lem_reg_ext}
Let $\varepsilon\in\R^{n\times p}$ be such that for $j\in[p]$,
$$\varepsilon_{:,j}=X_{:,j}-\sum_{i\neq j}\beta^*_{i,j}X_{:,i}.$$
Then, $\varepsilon_{:,j}$ and $\{X_{:,i}\}_{i\neq j}$ are independent for every $j$. Moreover, for every $j$,
$$\varepsilon_{:,j}\sim\mathcal{N}(0,(\sigma_j^*)^2I_n).$$
\end{lem}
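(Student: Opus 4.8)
The plan is to reduce the claim to the one-row version that is exactly Lemma~\ref{lem_reg}, and then lift it to all $n$ rows using independence of the samples. First I would fix a row index $k\in[n]$ and look at $X_{k,:}\sim\mathcal{N}(0,G^*)$. By Lemma~\ref{lem_reg}, the conditional law of $X_{k,j}$ given $\{X_{k,i}\}_{i\neq j}$ is $\mathcal{N}\bigl(\sum_{i\neq j}\beta^*_{i,j}X_{k,i},(\sigma_j^*)^2\bigr)$, so the scalar $\varepsilon_{k,j}:=X_{k,j}-\sum_{i\neq j}\beta^*_{i,j}X_{k,i}$ has conditional mean $0$ and conditional variance $(\sigma_j^*)^2$, both of which are free of the conditioning variables.

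Next I would upgrade this to an independence statement via joint Gaussianity. The vector $\bigl(\varepsilon_{k,j},(X_{k,i})_{i\neq j}\bigr)$ is a linear image of the Gaussian vector $X_{k,:}$, hence jointly Gaussian. Since $\E[\varepsilon_{k,j}\mid (X_{k,i})_{i\neq j}]=0$ almost surely, it follows that $\mathrm{Cov}(\varepsilon_{k,j},X_{k,i})=0$ for every $i\neq j$ (one can also verify this directly from $\beta^*_{i,j}=-(G^{-1})_{j,i}/(G^{-1})_{j,j}$ in Lemma~\ref{lem_reg}). For jointly Gaussian random variables, vanishing covariance implies independence; thus $\varepsilon_{k,j}$ is independent of $(X_{k,i})_{i\neq j}$, and $\varepsilon_{k,j}\sim\mathcal{N}(0,(\sigma_j^*)^2)$.

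Finally I would assemble the rows. Because $X_{1,:},\dots,X_{n,:}$ are i.i.d., the pairs $\bigl\{(\varepsilon_{k,j},(X_{k,i})_{i\neq j})\bigr\}_{k\in[n]}$ are i.i.d.\ across $k$, and within each $k$ the two components are independent by the previous step; hence the full vector $\varepsilon_{:,j}=(\varepsilon_{k,j})_{k\in[n]}$ is independent of the full collection $\{X_{:,i}\}_{i\neq j}$. Moreover the entries $\varepsilon_{k,j}$, $k\in[n]$, are i.i.d.\ $\mathcal{N}(0,(\sigma_j^*)^2)$, so $\varepsilon_{:,j}\sim\mathcal{N}(0,(\sigma_j^*)^2 I_n)$, which is the assertion.

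I expect the only mildly delicate point to be making the independence claim a genuinely \emph{joint} one: one must observe that independence across samples together with the per-row independence combine to yield independence of the whole random vector $\varepsilon_{:,j}$ from the whole random matrix $\{X_{:,i}\}_{i\neq j}$, rather than merely marginal independence of individual coordinates. Everything else is a direct application of Lemma~\ref{lem_reg} and the standard fact that uncorrelated jointly Gaussian variables are independent.
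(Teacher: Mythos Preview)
Your argument is correct. The paper does not supply a proof for this lemma at all---it simply presents it as an ``extended version of Lemma~\ref{lem_reg}'' and uses it without further justification---so your proposal is precisely the kind of routine verification the paper leaves implicit: apply Lemma~\ref{lem_reg} rowwise, use joint Gaussianity to pass from zero conditional mean to independence, and then invoke the i.i.d.\ structure across rows.
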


\subsection{Proof of Theorem \ref{reg_thm}}
To prove this theorem, we first mention a few basic results that we use throughout the proof.

\begin{lem}[Theorem 1.19, \cite{rigollet2015high}]\label{maximal}
Let $\omega \in\mathbb{R}^p$ be a random vector with $\omega_i\stackrel{\text{iid}}{\sim}\mathcal{N}(0,\sigma^2)$ for $i \in [p]$,  then 
\begin{equation}
    \p(\sup_{{\theta}\in \mathcal{B}(p)}{\theta}^T{\omega}>t)\leq \exp\left(-\frac{t^2}{8\sigma^2}+p\log 5\right),
\end{equation}
where $\mathcal{B}(p)$ denotes the unit Euclidean ball of dimension $p$.
\end{lem}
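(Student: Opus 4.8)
The plan is to reduce the supremum to the Euclidean norm of $\omega$, discretize the unit ball by a finite net, and then combine a union bound with the standard one-dimensional Gaussian tail. First I would observe that by Cauchy--Schwarz,
\begin{equation}
\sup_{\theta\in\mathcal{B}(p)}\theta^T\omega=\|\omega\|_2,
\end{equation}
since the inner product over the unit ball is maximized in the direction of $\omega$. Hence the event $\{\sup_{\theta}\theta^T\omega>t\}$ is exactly $\{\|\omega\|_2>t\}$, and it suffices to bound the tail of $\|\omega\|_2$. The difficulty is that $\|\omega\|_2$ involves all $p$ coordinates simultaneously, so a direct scalar tail bound is not available; this is where the covering argument enters.

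Next I would construct a $1/2$-net $N$ of the unit sphere, i.e. a finite set $N\subset\mathcal{B}(p)$ of unit vectors such that every unit vector lies within Euclidean distance $1/2$ of some element of $N$. A standard volumetric (packing) argument gives $|N|\leq (1+2/(1/2))^p=5^p$; this is precisely the source of the $p\log 5$ term in the exponent. I would then prove the discretization inequality $\|\omega\|_2\leq 2\max_{v\in N}\langle v,\omega\rangle$: writing $\omega/\|\omega\|_2$ as $v+r$ with $v\in N$ and $\|r\|_2\leq 1/2$, one has $\|\omega\|_2=\langle \omega,v\rangle+\langle\omega,r\rangle\leq \max_{v\in N}\langle v,\omega\rangle+\tfrac12\|\omega\|_2$, and rearranging yields the factor of $2$.

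Finally I would control the finite maximum over $N$. For each fixed unit vector $v\in N$, since $\omega\sim\mathcal{N}(0,\sigma^2 I_p)$, the projection $\langle v,\omega\rangle$ is a scalar $\mathcal{N}(0,\sigma^2)$ variable, so the standard Gaussian tail gives $\p(\langle v,\omega\rangle>t/2)\leq \exp\big(-t^2/(8\sigma^2)\big)$, where the $1/2$ from the discretization step produces the denominator $8$ rather than $2$. A union bound over the at-most-$5^p$ elements of $N$ then yields
\begin{equation}
\p\Big(\sup_{\theta\in\mathcal{B}(p)}\theta^T\omega>t\Big)=\p(\|\omega\|_2>t)\leq \p\Big(\max_{v\in N}\langle v,\omega\rangle>t/2\Big)\leq 5^p\exp\Big(-\frac{t^2}{8\sigma^2}\Big)=\exp\Big(-\frac{t^2}{8\sigma^2}+p\log 5\Big),
\end{equation}
which is the claimed bound. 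I expect the only genuinely delicate step to be the bookkeeping of constants: ensuring the net is taken at radius exactly $1/2$ (so the covering number is $5^p$ and the discretization factor is $2$), and tracking how that factor of $2$ propagates through the Gaussian tail to give the $8$ in the denominator. The covering-number estimate and the scalar Gaussian tail are standard and would be cited or stated without detailed rederivation.
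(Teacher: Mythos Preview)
Your proposal is correct and follows the standard $\epsilon$-net argument that underlies the cited result. The paper does not supply its own proof of this lemma---it simply invokes Theorem~1.19 of \citet{rigollet2015high}---so there is no alternative approach to compare against; your write-up is precisely the argument one would find in that reference.
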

\begin{lem}\label{minnorm}
Suppose $X\in\mathbb{R}^{n\times p}$ (with $n\geq p$) be a data matrix with independent rows, with the $i$-th row distributed as $\mathcal{N}(0,G)$ (with positive definite $G$) where $0<\underline{\sigma}\leq \sqrt{\lambda_{\min}(G)}\leq \sqrt{\lambda_{\max}(G)}\leq \overline{\sigma}$. The following holds:
$$\underline{\sigma}\left(1-c_0\left(\sqrt{\frac{p}{n}}+\frac{t}{\sqrt{n}}\right)\right)\lesssim \frac{1}{\sqrt{n}}\sigma_{\min}(X)\leq \frac{1}{\sqrt{n}}\sigma_{\max}(X)\lesssim \overline{\sigma}\left(1+c_0\left(\sqrt{\frac{p}{n}}+\frac{t}{\sqrt{n}}\right)\right)$$
with probability at least $\geq 1-\exp(-Ct^2)$ for 
some universal constants $C,c_0>0$.
\end{lem}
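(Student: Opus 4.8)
The plan is to reduce the statement to the well-known non-asymptotic bounds on the extreme singular values of a matrix with i.i.d.\ standard Gaussian entries. First I would write $X = Z\,G^{1/2}$, where $G^{1/2}$ denotes the symmetric positive definite square root of $G$ and $Z\in\R^{n\times p}$ has i.i.d.\ $\mathcal{N}(0,1)$ entries; this representation is valid because the rows $X_{i,:}$ are independent $\mathcal{N}(0,G)$ and $G$ is positive definite, so that the rows of $Z = X\,G^{-1/2}$ are independent $\mathcal{N}(0,I_p)$.

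Next I would exploit submultiplicativity of singular values under this factorization. For any $x\in\R^p$ we have $\|Z G^{1/2} x\|_2 \le \sigma_{\max}(Z)\,\|G^{1/2}x\|_2 \le \sigma_{\max}(Z)\sqrt{\lambda_{\max}(G)}\,\|x\|_2$, and likewise $\|Z G^{1/2} x\|_2 \ge \sigma_{\min}(Z)\,\|G^{1/2}x\|_2 \ge \sigma_{\min}(Z)\sqrt{\lambda_{\min}(G)}\,\|x\|_2$, where the last inequality uses invertibility of $G^{1/2}$ (guaranteed by positive-definiteness of $G$). Taking infima/suprema over unit vectors $x$ gives
\[
\sqrt{\lambda_{\min}(G)}\,\sigma_{\min}(Z)\ \le\ \sigma_{\min}(X)\ \le\ \sigma_{\max}(X)\ \le\ \sqrt{\lambda_{\max}(G)}\,\sigma_{\max}(Z).
\]
Combining this with the hypotheses $\underline{\sigma}\le\sqrt{\lambda_{\min}(G)}$ and $\sqrt{\lambda_{\max}(G)}\le\overline{\sigma}$ reduces the problem to controlling $\sigma_{\min}(Z)$ and $\sigma_{\max}(Z)$ for the standard Gaussian matrix $Z$.

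Finally I would invoke the standard concentration bound for extreme singular values of a Gaussian random matrix: with probability at least $1-2\exp(-t^2/2)$,
\[
\sqrt{n}-\sqrt{p}-t\ \le\ \sigma_{\min}(Z)\ \le\ \sigma_{\max}(Z)\ \le\ \sqrt{n}+\sqrt{p}+t.
\]
If one prefers a self-contained derivation, this can be obtained from a $\tfrac14$-net argument on the unit sphere together with a maximal inequality of the type stated in Lemma~\ref{maximal} applied to $\|Zx\|_2$ for fixed $x$; otherwise it is a textbook fact in non-asymptotic random matrix theory. Dividing the last display by $\sqrt{n}$ and substituting into the previous display yields
\[
\underline{\sigma}\Bigl(1-\sqrt{\tfrac{p}{n}}-\tfrac{t}{\sqrt{n}}\Bigr)\ \le\ \tfrac{1}{\sqrt n}\sigma_{\min}(X)\ \le\ \tfrac{1}{\sqrt n}\sigma_{\max}(X)\ \le\ \overline{\sigma}\Bigl(1+\sqrt{\tfrac{p}{n}}+\tfrac{t}{\sqrt{n}}\Bigr),
\]
which is the claimed bound with $c_0=1$; the probability $1-2\exp(-t^2/2)$ is absorbed into the form $1-\exp(-Ct^2)$ by a harmless adjustment of the universal constant $C$ (and an accompanying rescaling of $t$ if one wishes, which also accounts for the $\lesssim,\gtrsim$ in the statement).

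I do not anticipate a genuine obstacle here, as this is essentially a standard result; the only points deserving a line of care are (i) the use of invertibility of $G^{1/2}$ for the lower singular-value bound, and (ii) matching the stated probability expression by folding the constant factor and exponent into the universal constant $C$.
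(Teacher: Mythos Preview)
Your proposal is correct and follows essentially the same approach as the paper: whiten $X$ to a standard Gaussian matrix (you use $Z=XG^{-1/2}$ with the symmetric square root, the paper uses $Y=XUS^{-1/2}$ via the eigendecomposition $G=USU^T$, which differ only by a right-multiplication by the orthogonal matrix $U^T$ and hence have identical singular values), then combine the variational/submultiplicativity bound $\sqrt{\lambda_{\min}(G)}\,\sigma_{\min}(Z)\le\sigma_{\min}(X)$ with a standard concentration result for the extreme singular values of a Gaussian matrix. The only cosmetic difference is that you quote the sharp Davidson--Szarek/Gordon bound while the paper cites Vershynin's Theorem~4.6.1 for sub-Gaussian matrices; both suffice.
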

\begin{proof}
Let $G = USU^{T}$ be the eigendecomposition of $G$ with diagonal $S$ containing the eigenvalues of $G$. Let $Y=XUS^{-1/2}$. For any $i,j\in[n]$ such that $i\neq j$, we have:
$$\E[Y_{i,:}^TY_{i,:}]=\E[S^{-1/2}U^TX_{i,:}^TX_{i,:}US^{-1/2}]=S^{-1/2}U^TG US^{-1/2}=I_p,$$
$$\text{and}~~~~\E[Y_{i,:}^TY_{j,:}]=\E[S^{-1/2}U^TX_{i,:}^TX_{j,:}US^{-1/2}]=0$$
showing that $Y \sim \mathcal{N}(0,{I}_{p})$. Therefore, by Theorem 4.6.1 of \citet{vershynin2018high}, the following holds
\begin{equation}\label{eqn:ineq1}
    1-c_0\left(\sqrt{\frac{p}{n}}+\frac{t}{\sqrt{n}} \right)\lesssim \frac{1}{\sqrt{n}}\sigma_{\min}(Y)\leq \frac{1}{\sqrt{n}}\sigma_{\max}(Y)\lesssim 1+c_0\left(\sqrt{\frac{p}{n}}+\frac{t}{\sqrt{n}}\right)
    \end{equation}
with probability at least $\geq 1-\exp(-Ct^2)$ for some universal constants $C,c_0>0$.
Moreover,
\begin{align*}
   \sigma_{\min}(X) & =\inf_{\|z\|_2=1}\|Xz\|_2 \\
  & =  \inf_{\|z\|_2=1}\|YS^{1/2}U^Tz\|_2\\
  & =  \inf_{\|z\|_2=1}\|YS^{1/2}z\|_2\\
  & = \inf_{\|z\|_2=1}\|S^{1/2}z\|_2\left\Vert Y\frac{S^{1/2}z}{\|S^{1/2}z\|_2}\right\Vert_2\\
  & \geq \sqrt{\lambda_{\min}(G)}~\sigma_{\min}(Y).
\end{align*}
Combining the above with~\eqref{eqn:ineq1}, we arrive at the first inequality in this lemma. 
The proof for the third inequality follows by an argument similar to the above, but for $\sigma_{\max}$.
\end{proof}
\begin{lem}\label{minnorm2}
Suppose the rows of the matrix $X\in\mathbb{R}^{n\times p}$ are iid draws from a multivariate Gaussian distribution $\mathcal{N}(0,G)$. Moreover, suppose for any $S\subseteq [p]$ such that $|S|\leq s$, 
$$\lambda_{\min}(G_{S,S})\geq\kappa^{2}>0.$$
Then, if $n\gtrsim s\log p$, with probability at least $1-s\exp(-10s\log p)$, we have:
$$\sigma_{\min}(X_S)\gtrsim \kappa\sqrt{n}~~~\text{for all $S$ with $|S|\leq s$}.$$
(We recall that $X_S$ is a sub-matrix of $X$ restricted to the columns indexed by $S$).
\end{lem}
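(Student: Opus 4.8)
The plan is to reduce the claim to the fixed-support estimate already provided by Lemma~\ref{minnorm}, and then to upgrade it to a statement that is uniform over all supports of size at most $s$ by a union bound, choosing the deviation parameter large enough that the per-support failure probability dominates the combinatorial count of supports. First I would fix a subset $S\subseteq[p]$ with $|S|=k\le s$. The rows of the submatrix $X_S$ are iid $\mathcal{N}(0,G_{S,S})$, and by hypothesis $\lambda_{\min}(G_{S,S})\ge\kappa^2$, so $\sqrt{\lambda_{\min}(G_{S,S})}\ge\kappa$. Applying Lemma~\ref{minnorm} with the ambient dimension $k$ in place of $p$ and $\underline{\sigma}=\kappa$ gives, for every $t>0$,
\begin{equation*}
\frac{1}{\sqrt n}\,\sigma_{\min}(X_S)\;\gtrsim\;\kappa\Bigl(1-c_0\Bigl(\sqrt{\tfrac{k}{n}}+\tfrac{t}{\sqrt n}\Bigr)\Bigr)
\end{equation*}
with probability at least $1-\exp(-Ct^2)$, for universal constants $C,c_0>0$.

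Next I would take $t=c_1\sqrt{s\log p}$ for a constant $c_1$ to be fixed. Since $k\le s$ and, by assumption, $n\ge c_2\,s\log p$ for a sufficiently large constant $c_2$ (this is the precise meaning of $n\gtrsim s\log p$ here), both $\sqrt{k/n}$ and $t/\sqrt n=c_1\sqrt{s\log p/n}$ can be forced below $1/(4c_0)$, so the bracketed correction factor is at least $1/2$. Hence for each fixed $S$ with $|S|\le s$ we obtain $\p\bigl(\sigma_{\min}(X_S)\lesssim\kappa\sqrt n\bigr)\le\exp(-Cc_1^2 s\log p)$. A union bound over all subsets of size at most $s$, whose number satisfies $\sum_{k=0}^{s}\binom{p}{k}\le\sum_{k=0}^{s}p^k\le(s+1)p^s=(s+1)\exp(s\log p)$, shows that the estimate fails for some such $S$ with probability at most $(s+1)\exp\bigl((1-Cc_1^2)s\log p\bigr)$. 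Choosing $c_1$ large enough that $Cc_1^2-1\ge 11$ makes this at most $(s+1)\exp(-11 s\log p)\le s\exp(-10 s\log p)$ (using $s\ge1$ and $p\ge2$, say), which is the claimed bound; on the complementary event, $\sigma_{\min}(X_S)\gtrsim\kappa\sqrt n$ for all $S$ with $|S|\le s$ simultaneously.

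The hard part will not be conceptual — it is purely a matter of bookkeeping with constants. The one point that genuinely requires care is coordinating the two constants: the entropy of the union bound is of order $\log\binom{p}{s}\asymp s\log p$, which forces the inflation $t\asymp\sqrt{s\log p}$; one must then make sure that the constant hidden in $n\gtrsim s\log p$ is chosen large enough (relative to $c_0$) that the correction factor $1-c_0(\sqrt{k/n}+t/\sqrt n)$ remains bounded away from zero after this inflation, and finally that $c_1$ is large enough that the surviving exponent is at least $10$.
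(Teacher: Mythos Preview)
Your proposal is correct and follows essentially the same route as the paper's proof: apply Lemma~\ref{minnorm} to a fixed support with $t\asymp\sqrt{s\log p}$, then union bound over at most $\sum_{k\le s}\binom{p}{k}$ supports and absorb the entropy term by choosing the constant in $t$ large enough, finally using $n\gtrsim s\log p$ to keep the correction factor bounded away from zero. The only cosmetic differences are the particular binomial bound used ($p^k$ versus $(ep/k)^k$) and the order in which the constants are fixed.
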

\begin{proof}
Suppose $S$ is fixed. Take $t=\sqrt{cs\log(p)}$ for some constant $c>0$. By Lemma \ref{minnorm}, with probability greater than $1-\exp(-Ccs\log(p))$
$$\frac{1}{\sqrt{n}}\sigma_{\min}(X_S)\gtrsim \kappa\left(1-c_0\left(\sqrt{\frac{|S|}{n}}+\sqrt{\frac{cs\log p}{n}}\right)\right)\gtrsim \kappa \left(1-c_0(1+\sqrt{c})\sqrt{\frac{s\log p}{n}}\right).$$
We consider a union bound over all possible sets $S$ (of size at most $s$):
\begin{align*}
    1-\sum_{k\leq s}\sum_{|S|=k}\exp(-Ccs\log(p)) & \geq 1-\sum_{k\leq s}{p \choose k}\exp(-Ccs\log(p)) \\
    & \stackrel{(a)}{\geq}     1-\sum_{k\leq s}\left(\frac{ep}{k}\right)^{k}\exp(-Ccs\log(p)) \\
    & \geq 1-\sum_{k\leq s}\left(ep\right)^{s}\exp(-Ccs\log(p)) \\
    & = 1-s \exp(-Ccs\log p + s\log(ep))
\end{align*}
where $(a)$ is due to the inequality ${p \choose k}\leq \left(\frac{ep}{k}\right)^{k}$.

Therefore, with probability greater than $1-s \exp(-Ccs\log p + s\log(ep))$,
we have
$$\frac{1}{\sqrt{n}}\sigma_{\min}(X_S)\gtrsim  \kappa \left(1-c_0(1+\sqrt{c})\sqrt{\frac{s\log p}{n}}\right)~~~\text{for all $S \subset [p]$ with $|S|\leq s$}.$$
Finally, the result follows by choosing $c$ large enough to have $-Ccs\log p+s\log(ep)<- 10s\log p$ and $n$ such that 
$n>2c_0^2(1+\sqrt{c})^2s\log p.$
\end{proof}
\begin{lem}\label{bernlem}
For fixed $j_1,j_2\in[p]$, and $n>\frac{2}{C_b}\log(1/\delta)$, we have
\begin{equation}\label{bern2}
    \p\left(\left\vert \frac{1}{n}\sum_{i=1}^n X_{i,j_1}X_{i,j_2} -  G^*_{j_1,j_2}\right\vert>c_{\psi}\sqrt{\frac{\log(1/\delta)}{C_b n}}\right) \leq 2\delta
\end{equation}
for some universal constants $C_b,c_{\psi}>0$ (We recall that $G^*$ is the covariance matrix from~\eqref{Gcov-general}).
\end{lem}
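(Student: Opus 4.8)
The plan is to prove Lemma~\ref{bernlem} by expressing the quantity $\frac{1}{n}\sum_{i=1}^n X_{i,j_1}X_{i,j_2}$ as an average of i.i.d. random variables and then applying a Bernstein-type concentration inequality for sub-exponential random variables. First I would fix $j_1, j_2 \in [p]$ and define $Z_i = X_{i,j_1}X_{i,j_2}$ for $i \in [n]$, so that $Z_1,\ldots,Z_n$ are i.i.d. with $\E[Z_i] = G^*_{j_1,j_2}$. Since $(X_{i,j_1},X_{i,j_2})$ is a bivariate Gaussian vector with covariance matrix a $2\times 2$ submatrix of $G^* = I_p + \theta u^*(u^*)^T$, each $X_{i,j_k}$ is sub-Gaussian with variance proxy bounded by $1+\theta \le 2$ (the diagonal entries of $G^*$ are $1 + \theta(u_j^*)^2 \le 1+\theta$). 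The product of two sub-Gaussian random variables is sub-exponential, so $Z_i - \E[Z_i]$ is a centered sub-exponential random variable with sub-exponential norm bounded by a universal constant (using $\theta \le 1$).

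The key steps, in order, would be: (i) invoke the standard fact that for jointly Gaussian $X_{i,j_1}, X_{i,j_2}$, the centered product $X_{i,j_1}X_{i,j_2} - \E[X_{i,j_1}X_{i,j_2}]$ has a bounded sub-exponential (Orlicz $\psi_1$) norm, controlled by the product of the individual sub-Gaussian norms — see, e.g., Lemma 2.7.7 in \citet{vershynin2018high}; (ii) apply Bernstein's inequality for sums of independent sub-exponential random variables (e.g., Theorem 2.8.1 in \citet{vershynin2018high}), which gives
\begin{equation*}
    \p\left(\left\vert \frac{1}{n}\sum_{i=1}^n Z_i - \E[Z_1] \right\vert > t\right) \le 2\exp\left(-C n \min\left\{ \frac{t^2}{K^2}, \frac{t}{K}\right\}\right)
\end{equation*}
for a universal constant $C > 0$ and $K$ an upper bound on the sub-exponential norm of $Z_1$; and (iii) set $t = c_\psi \sqrt{\log(1/\delta)/(C_b n)}$ and verify that in the regime $n > \frac{2}{C_b}\log(1/\delta)$ the quadratic term in the Bernstein bound dominates, so that $\min\{t^2/K^2, t/K\} = t^2/K^2$, yielding the bound $2\delta$ after absorbing constants into $C_b$ and $c_\psi$.

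The main obstacle — though a mild one — is bookkeeping the universal constants: one must ensure that the threshold $n > \frac{2}{C_b}\log(1/\delta)$ is exactly what is needed to guarantee $t \le K/C$ (or whatever the precise condition is) so that Bernstein's inequality is in its sub-Gaussian (quadratic) regime rather than its sub-exponential (linear) regime. This requires tracking how $c_\psi$, $C_b$, and the sub-exponential norm bound $K$ relate to one another and to the absolute constant in Bernstein's inequality. Everything else is routine: the independence of rows is given by the model~\eqref{Gcov1}, and the sub-Gaussianity of the Gaussian marginals with the stated variance proxy follows directly from $\theta \in (0,1]$. I would present the argument cleanly by first isolating the sub-exponential norm bound as a sub-claim, then plugging into the cited Bernstein inequality.
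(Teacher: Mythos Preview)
Your proposal is correct and follows essentially the same approach as the paper: both arguments establish that $X_{i,j_1}X_{i,j_2}$ is sub-exponential with a universally bounded norm (the paper is terser here, while you explicitly invoke Lemma~2.7.7 of \citet{vershynin2018high}), apply Bernstein's inequality (Theorem~2.8.1 of \citet{vershynin2018high}), and then use the condition $n>\frac{2}{C_b}\log(1/\delta)$ to ensure $t/c_\psi<1$ so that the sub-Gaussian regime of the bound applies.
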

\begin{proof}
Suppose $i\in[n]$ and $j_1,j_2\in[p]$. We have $|\E[X_{i,j_1}X_{i,j_2}]|=|G^*_{j_1,j_2}|\leq 2$ by Assumption~\ref{gdtsrassumption} and $\E[X_{i,j_1}]=0$. As a result, by Bernstein's inequality \cite[Theorem 2.8.1]{vershynin2018high}, we have
    \begin{equation}
        \p\left(\left\vert \frac{1}{n}\sum_{i=1}^n X_{i,j_1}X_{i,j_2} -  G^*_{j_1,j_2}\right\vert>t\right)\leq 2\exp\left(-C_bn((t/c_{\psi})^2 \land (t/c_{\psi}))\right)\label{bernstein}
    \end{equation}
    for some universal constants $c_{\psi},C_b>0$. Therefore, if $n> \frac{2}{C_b}\log(\frac{1}{\delta})$ and $t=c_{\psi}\sqrt{\frac{\log(1/\delta)}{C_b n}}$, we have:
    $$\frac{t}{c_{\psi}}=\sqrt{\frac{\log(1/\delta)}{C_bn}}<1$$
which implies $(t/c_{\psi})^2\land(t/c_{\psi})=(t/c_{\psi})^2$.
    As a result, from \eqref{bernstein}, we arrive at~\eqref{bern2}. \end{proof}

\noindent The proof of Theorem \ref{reg_thm} is based on the following technical lemma. 
\begin{lem}\label{thm3techlem}
Under the assumptions of Theorem \ref{reg_thm}, one has
\begin{align*}
    \frac{1}{n}\sum_{j=1}^p \left[\left\Vert \sum_{i\neq j}(\hat{\beta}_{i,j}-\beta_{i,j}^*)X_{:,i} \right\Vert_2^2\right]&\lesssim  \frac{s^2\log(p/s)}{n} + \frac{1}{n}\sum_{j=1}^p\left\Vert\sum_{i\neq j}(\beta^*_{i,j}-\tilde{\beta}_{i,j})X_{:,i}\right\Vert_2^2 
\end{align*}
with probability greater than $1-2s\exp(-10s\log(p/s))$.
\end{lem}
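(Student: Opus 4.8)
\textbf{Proof plan for Lemma~\ref{thm3techlem}.}

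The plan is to run a ``basic inequality'' argument simultaneously across all $p$ node-wise regressions. Since $(\hat\beta,\hat z)$ is optimal for~\eqref{reg_multi} and $\tilde\beta$ is feasible (its support lies in $\tilde S\times\tilde S$, which has at most $s^2$ off-diagonal entries, and the corresponding $\tilde z=\1(\cdot\in\tilde S)$ has at most $s$ ones), we have
\[
\sum_{j=1}^p\Bigl\Vert X_{:,j}-\sum_{i\neq j}\hat\beta_{i,j}X_{:,i}\Bigr\Vert_2^2
\le \sum_{j=1}^p\Bigl\Vert X_{:,j}-\sum_{i\neq j}\tilde\beta_{i,j}X_{:,i}\Bigr\Vert_2^2.
\]
First I would substitute $X_{:,j}=\sum_{i\neq j}\beta^*_{i,j}X_{:,i}+\varepsilon_{:,j}$ (Lemma~\ref{lem_reg_ext}) into both sides and expand the squares. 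Writing $\Delta_{:,j}:=\sum_{i\neq j}(\hat\beta_{i,j}-\beta^*_{i,j})X_{:,i}$ and $R_{:,j}:=\sum_{i\neq j}(\tilde\beta_{i,j}-\beta^*_{i,j})X_{:,i}$, this rearranges to
\[
\sum_{j=1}^p\Vert\Delta_{:,j}\Vert_2^2 \;\le\; \sum_{j=1}^p\Vert R_{:,j}\Vert_2^2 \;+\; 2\sum_{j=1}^p \varepsilon_{:,j}^T(\Delta_{:,j}-R_{:,j}).
\]
So the whole game reduces to controlling the cross term $2\sum_j \varepsilon_{:,j}^T\Delta_{:,j}$ and $2\sum_j\varepsilon_{:,j}^T R_{:,j}$ (with a sign), after which the $\sum_j\Vert R_{:,j}\Vert_2^2$ term is exactly the second term on the right-hand side of the claimed bound.

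The key step is bounding the stochastic term $\sum_{j=1}^p \varepsilon_{:,j}^T\Delta_{:,j}$. The crucial structural fact is that $\hat\beta_{:,j}$ and hence $\Delta_{:,j}$ are supported on the (random) set $\hat S=S(\hat z)$ together with $j$, where $|\hat S|\le s$; moreover $\tilde\beta_{:,j}$ is supported on $\tilde S$, so $\Delta_{:,j}-R_{:,j}$ lies in the column span of $X_{S}$ for $S=\hat S\cup\tilde S\cup\{j\}$, a set of size $\le 2s+1$. Conditioning on $\{X_{:,i}\}_{i\neq j}$, Lemma~\ref{lem_reg_ext} gives $\varepsilon_{:,j}\sim\mathcal N(0,(\sigma_j^*)^2 I_n)$ independent of those columns, so I can write
\[
\varepsilon_{:,j}^T(\Delta_{:,j}-R_{:,j}) \;\le\; \Vert P_{S\setminus\{j\}}\,\varepsilon_{:,j}\Vert_2\;\Vert \Delta_{:,j}-R_{:,j}\Vert_2,
\]
where $P_{S\setminus\{j\}}$ is the orthogonal projection onto the span of those columns, a subspace of dimension $\le 2s$. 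The quantity $\Vert P_{S\setminus\{j\}}\varepsilon_{:,j}\Vert_2$ is a $\chi$-type variable, but the projection itself depends on the data; to handle this I would take a union bound over all candidate subsets $S'\subseteq[p]$ with $|S'|\le 2s$ of the event $\sup\{\Vert P_{S'}\varepsilon_{:,j}\Vert_2\}$, using Lemma~\ref{maximal} (the $\sup_{\theta\in\mathcal B(2s)}\theta^T\omega$ bound applied to $\omega=$ restriction of $\varepsilon_{:,j}$ to the $2s$-dimensional subspace) with $t\asymp\sqrt{s\log(p/s)}\,\sigma_j^*$. Since there are $\binom{p}{2s}\le(ep/2s)^{2s}$ such subsets, the exponent $-t^2/(8(\sigma_j^*)^2)+2s\log5$ must beat $2s\log(ep/2s)$, which forces the $\sqrt{s\log(p/s)}$ scaling and yields a failure probability of order $s\exp(-10s\log(p/s))$ after a further union bound over $j\in[p]$ (here I would also use that $\sigma_j^*$ is bounded, e.g.\ $(\sigma_j^*)^2\le 1+\theta\le 2$ from~\eqref{beta_lem}). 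This gives $\bigl|\sum_j\varepsilon_{:,j}^T(\Delta_{:,j}-R_{:,j})\bigr|\lesssim \sqrt{s\log(p/s)}\sum_j(\Vert\Delta_{:,j}\Vert_2+\Vert R_{:,j}\Vert_2)$.

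Finally I would close the argument with a standard ``$ab\le \tfrac12 a^2+\tfrac12 b^2$'' step: combining the basic inequality with the stochastic bound,
\[
\sum_j\Vert\Delta_{:,j}\Vert_2^2 \;\lesssim\; \sum_j\Vert R_{:,j}\Vert_2^2 + \sqrt{s\log(p/s)}\Bigl(\sum_j\Vert\Delta_{:,j}\Vert_2\Bigr) + \sqrt{s\log(p/s)}\Bigl(\sum_j\Vert R_{:,j}\Vert_2\Bigr).
\]
By Cauchy--Schwarz, $\sum_j\Vert\Delta_{:,j}\Vert_2\le\sqrt p\,(\sum_j\Vert\Delta_{:,j}\Vert_2^2)^{1/2}$; but a crisper route is to note that only $j\in\hat S\cup\tilde S$ (at most $2s$ indices) contribute nonzero $\Delta_{:,j}$ or $R_{:,j}$, so $\sum_j\Vert\Delta_{:,j}\Vert_2\le\sqrt{2s}\,(\sum_j\Vert\Delta_{:,j}\Vert_2^2)^{1/2}$. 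Then $\sqrt{s\log(p/s)}\cdot\sqrt{2s}\,(\sum_j\Vert\Delta_{:,j}\Vert_2^2)^{1/2} = \sqrt{2s^2\log(p/s)}\,(\sum_j\Vert\Delta_{:,j}\Vert_2^2)^{1/2}$, and applying $ab\le \tfrac12\epsilon a^2+\tfrac{1}{2\epsilon}b^2$ with $a=(\sum_j\Vert\Delta_{:,j}\Vert_2^2)^{1/2}$ absorbs the $\Delta$ term into the left-hand side and leaves the residual $s^2\log(p/s)$, plus the $\sum_j\Vert R_{:,j}\Vert_2^2$ term and a cross term in $R$ handled the same way. Dividing by $n$ gives exactly the stated inequality.

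The main obstacle I anticipate is the data-dependence of the support $\hat S$ in the stochastic term: one cannot simply apply a concentration bound to a fixed projection, so the union bound over all $\binom{p}{2s}$ subsets is essential, and one must check carefully that the resulting entropy term $2s\log(ep/2s)$ is dominated by the Gaussian exponent for the chosen $t$ — this is what pins down the $s\log(p/s)$ factor (rather than $s\log p$) and the probability $1-2s\exp(-10s\log(p/s))$. Everything else (expansion of squares, Cauchy--Schwarz, the AM--GM absorption, bounding $\sigma_j^*$) is routine.
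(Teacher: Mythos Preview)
Your plan is essentially the paper's own argument: the same basic inequality, the same decomposition via $X_{:,j}=\sum_{i\neq j}\beta^*_{i,j}X_{:,i}+\varepsilon_{:,j}$, the same control of the cross term by projecting $\varepsilon_{:,j}$ onto a $\le 2s$-dimensional column span and union-bounding over subsets of size $2s$, and the same AM--GM absorption. The paper squares the normalized inner product and bounds $\sum_j\bigl(\varepsilon_{:,j}^T\hat y^{(j)}/\Vert\hat y^{(j)}\Vert_2\bigr)^2$ directly, whereas you go through $\Vert P\varepsilon_{:,j}\Vert_2\cdot\Vert\hat y^{(j)}\Vert_2$ and Cauchy--Schwarz over the (at most $2s$) active $j$'s; these are equivalent routes to the same bound.

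One point needs correction. In the misspecified setting of Theorem~\ref{reg_thm}, $u^*$ need not be $s$-sparse, so $\beta^*_{:,j}$ can be nonzero for $j\notin\hat S\cup\tilde S$; consequently your claim that ``only $j\in\hat S\cup\tilde S$ contribute nonzero $\Delta_{:,j}$ or $R_{:,j}$'' is false for $\Delta_{:,j}$ and $R_{:,j}$ individually. What \emph{is} true is that $\Delta_{:,j}-R_{:,j}=\sum_{i\neq j}(\hat\beta_{i,j}-\tilde\beta_{i,j})X_{:,i}$ vanishes for $j\notin S(\hat z)\cup\tilde S$, since both $\hat\beta_{:,j}$ and $\tilde\beta_{:,j}$ are then zero. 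So do not split $\Vert\Delta_{:,j}-R_{:,j}\Vert_2$ before invoking sparsity: bound $\sum_j\Vert\Delta_{:,j}-R_{:,j}\Vert_2\le\sqrt{2s}\bigl(\sum_j\Vert\Delta_{:,j}-R_{:,j}\Vert_2^2\bigr)^{1/2}$ first, and only then use $\Vert\Delta_{:,j}-R_{:,j}\Vert_2^2\le 2\Vert\Delta_{:,j}\Vert_2^2+2\Vert R_{:,j}\Vert_2^2$ to absorb. This is exactly what the paper does with its $\hat y^{(j)}$ notation. Relatedly, your ``union bound over $j\in[p]$'' is wasteful; since only $j$ in the candidate support matters, the paper union-bounds over $j\in S$ (at most $2s$ choices) inside the union over subsets $S$, which is what produces the clean $2s\exp(-10s\log(p/s))$ probability.
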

\begin{proof}
Let $\beta_{i,j}^*=-\frac{({(G^*)}^{-1})_{j,i}}{({(G^*)}^{-1})_{j,j}}$ be the true regression coefficients and $\tilde{\beta}_{i,j}=-\frac{(\tilde{G}^{-1})_{j,i}}{(\tilde{G}^{-1})_{j,j}}$ be the oracle regression coefficients defined in~\eqref{tildebeta}.
By taking $z$ to represent the support of $\tilde{u}$, we can see that the matrix $\{\tilde{\beta}_{i,j}\}$ is feasible for Problem~\eqref{reg_multi}.
    Using the optimality of $\hat{\beta}$ and feasibility of $\tilde{\beta}$ for Problem~\eqref{reg_multi}, we have:
\begin{align}
    & \sum_{j=1}^p \left\Vert X_{:,j} -\sum_{i\neq j}\hat{\beta}_{i,j}X_{:,i} \right\Vert_2^2 \leq   \sum_{j=1}^p \left\Vert X_{:,j} -\sum_{i\neq j}\tilde{\beta}_{i,j}X_{:,i} \right\Vert_2^2 \nonumber \\
    \Rightarrow & \sum_{j=1}^p \left[\left\Vert \sum_{i\neq j}(\hat{\beta}_{i,j}-\beta_{i,j}^*)X_{:,i} \right\Vert_2^2-\left\Vert \sum_{i\neq j}(\tilde{\beta}_{i,j}-\beta_{i,j}^*)X_{:,i} \right\Vert_2^2\right]\leq 2\sum_{j=1}^p {\varepsilon}_{:,j}^T\left(\sum_{i\neq j}(\hat{\beta}_{i,j}-\tilde{\beta}_{i,j})X_{:,i}\right) \label{regthmineq1}
\end{align}
where we used the representation $X_{:,j}=\sum_{i\neq j}\beta^*_{i,j}X_{:,i}+\varepsilon_{:,j}$ by Lemma \ref{lem_reg_ext}. 

By using the inequality $2ab\leq 4a^2 + b^2/4$, we have
\begin{align}
    2 {\varepsilon}_{:,j}^T\left(\sum_{i\neq j}(\hat{\beta}_{i,j}-\tilde{\beta}_{i,j}){X}_{:,i}\right) &=2{\varepsilon}_{:,j}^T\frac{\sum_{i\neq j}(\hat{\beta}_{i,j}-\tilde{\beta}_{i,j})X_{:,i}}{\|\sum_{i\neq j}(\hat{\beta}_{i,j}-\tilde{\beta}_{i,j})X_{:,i}\|_2}\|\sum_{i\neq j}(\hat{\beta}_{i,j}-\tilde{\beta}_{i,j})X_{:,i}\|_2  \nonumber\\ 
    & \leq 4\left[{\varepsilon}_{:,j}^T\frac{\sum_{i\neq j}(\hat{\beta}_{i,j}-\tilde{\beta}_{i,j})X_{:,i}}{\|\sum_{i\neq j}(\hat{\beta}_{i,j}-\tilde{\beta}_{i,j})X_{:,i}\|_2}\right]^2+\frac{1}{4}\|\sum_{i\neq j}(\hat{\beta}_{i,j}-\tilde{\beta}_{i,j})X_{:,i}\|_2^2. \label{regthmineq2}
\end{align}
In addition,
\begin{equation}\label{regthmineq3}
    \frac{1}{4}\|\sum_{i\neq j}(\hat{\beta}_{i,j}-\tilde{\beta}_{i,j})X_{:,i}\|_2^2 \leq \frac{1}{2} \|\sum_{i\neq j}(\hat{\beta}_{i,j}-\beta^*_{i,j})X_{:,i}\|_2^2+ \frac{1}{2}\|\sum_{i\neq j}(\beta^*_{i,j}-\tilde{\beta}_{i,j})X_{:,i}\|_2^2.
\end{equation}
Substituting \eqref{regthmineq2} and \eqref{regthmineq3} into \eqref{regthmineq1}, we have:
\begin{align}\label{eq1}
& \sum_{j=1}^p \left\Vert \sum_{i\neq j}(\hat{\beta}_{i,j}-\beta_{i,j}^*)X_{:,i} \right\Vert_2^2  \\ \leq & 8\sum_{j=1}^p \left[{\varepsilon}_{:,j}^T\frac{\sum_{i\neq j}(\hat{\beta}_{i,j}-\tilde{\beta}_{i,j})X_{:,i}}{\|\sum_{i\neq j}(\hat{\beta}_{i,j}-\tilde{\beta}_{i,j})X_{:,i}\|_2}\right]^2 + 3\sum_{j=1}^p\left\Vert\sum_{i\neq j}(\beta^*_{i,j}-\tilde{\beta}_{i,j})X_{:,i}\right\Vert_2^2.\nonumber
\end{align}
For convenience, let us introduce the following notation:
\begin{align}
    \hat{y}^{(j)}&=\sum_{i\neq j}(\hat{\beta}_{i,j}-\tilde{\beta}_{i,j})X_{:,i},\label{yhatdef}\\
\hat{S}&=S(\hat{z})\cup S(\tilde{u}).
\end{align}
Note that $|\hat{S}|\leq 2s$. In the rest of the proof, $\p(\cdot)$ denotes the probability w.r.t. the data matrix $X$ (which includes $\varepsilon$). Note that $\hat{\beta}_{i,j}-\tilde{\beta}_{i,j}$ can be nonzero only if $i,j\in\hat{S}$ (as $\hat{\beta}_{i,j},\tilde{\beta}_{i,j}$ can be nonzero for $i,j\in S(\hat{z}) $ and $i,j\in S(\tilde{u})$, respectively) so $\hat{y}^{(j)}=0$ for $j\notin \hat{S}$. As a result, one has
\begin{align}
    \p\left(\sum_{j=1}^p \left[\varepsilon_{:,j}^T \frac{\hat{y}^{(j)}}{\|\hat{y}^{(j)}\|_2}\right]^2>t\right)&= \p\left(\sum_{j\in\hat{S}} \left[\varepsilon_{:,j}^T \frac{\hat{y}^{(j)}}{\|\hat{y}^{(j)}\|_2}\right]^2>t\right) \nonumber\\
    & \leq \p\left(\max_{\substack{ S\subseteq[p]\\ |S|=2s}} \sum_{j\in S} \sup_{\substack{v\in\R^p\\ S(v)=S  }}\left[\varepsilon_{:,j}^T \frac{\sum_{i\neq j}v_iX_{:,i}}{\|\sum_{i\neq j}v_iX_{:,i}\|_2}\right]^2>t\right) \label{techlemineq0}
\end{align}
where the inequality~\eqref{techlemineq0} takes into account the definition of
$\hat{y}^{(j)}$ in~\eqref{yhatdef} and the observation that
$$ \sum_{j \in \hat{S}} \left[\varepsilon_{:,j}^T \frac{\hat{y}^{(j)}}{\|\hat{y}^{(j)}\|_2}\right]^2 \leq \sum_{j\in \hat{S}} \sup_{\substack{v\in\R^p\\ S(v)=\hat{S}  }}\left[\varepsilon_{:,j}^T \frac{\sum_{i\neq j}v_iX_{:,i}}{\|\sum_{i\neq j}v_iX_{:,i}\|_2}\right]^2 .$$
Let $\mathbb{X}^{(j)}\in\mathbb{R}^{n\times (p-1)}$ be the matrix $X$ with column $j$ removed. Let $\Phi_S\in\mathbb{R}^{n\times |S|}$ be an orthonormal basis for the linear space spanned by $X_{:,i}$, $i\in S$ for $S\subseteq[p]$.
For a fixed set $S$ and $j\in S$, and $t>0$,
\begin{align}
    \p\left(\sup_{\substack{v\in\R^p\\S(v)=S }}\left[\varepsilon_{:,j}^T \frac{\sum_{i\neq j}v_iX_{:,i}}{\|\sum_{i\neq j}v_iX_{:,i}\|_2}\right]^2>t\right) & \stackrel{(a)}{=} \p\left(\left.\sup_{\substack{v\in\R^p\\ S(v)=S }}\left[\varepsilon_{:,j}^T \frac{\sum_{i\neq j}v_i{X}_{:,i}}{\|\sum_{i\neq j}v_iX_{:,i}\|_2}\right]^2>t\right\vert \mathbb{X}^{(j)}\right)\nonumber\\
    & \stackrel{(b)}{\leq} \p\left(\left.\sup_{\alpha\in \mathcal{B}(|S\setminus\{j\}|)}\left[\varepsilon_{:,j}^T\Phi_{S\setminus\{j\}}\alpha\right]^2>t\right\vert \mathbb{X}^{(j)}\right) \nonumber\\
    & \leq  \p\left(\left.\sup_{{\alpha}\in\mathcal{B}(|S\setminus\{j\}|)}\left[\varepsilon_{:,j}^T{\Phi}_{S\setminus\{j\}}{\alpha}\right]>\sqrt{t}\right\vert \mathbb{X}^{(j)}\right)\nonumber\\
     & \stackrel{(c)}{\leq} \exp\left(-\frac{t}{8({\sigma^*_{j}})^2}+|S\setminus\{j\}|\log 5\right) \nonumber \\
     &  \stackrel{(d)}{\leq} \exp\left(-\frac{t}{16}+2s\log 5\right)\label{techlemineq1}
\end{align}
 where $(a)$ is due to the independence of $\varepsilon_{:,j}$ and $\mathbb{X}^{(j)}$ from Lemma~\ref{lem_reg_ext}, $(b)$ is due to the fact that $\sum_{i\neq j}v_iX_{:,i}$ is in the column span of $\Phi_{S\setminus\{j\}}$, $(c)$ is due to Lemma \ref{maximal} and the conditional distribution $\Phi_{S\setminus\{j\}}^T\varepsilon_{:,j}|\mathbb{X}^{(j)}\stackrel{}{\sim}\mathcal{N}(0,({\sigma_j^*})^2I_{|S|-1})$, and $(d)$ is due to $|S|\leq 2s$ and $({\sigma_j^*})^2\leq 2$. Therefore, by taking $$t_1=c_1 s\log(p/s)+16\log(1/\delta)$$ with $c_1=32\log 5$ in~\eqref{techlemineq1}, we get
\begin{equation}\label{techlemineq2}
 \p\left(\sup_{\substack{v\in\R^p\\S(v)=S }}\left[\varepsilon_{:,j}^T \frac{\sum_{i\neq j}v_iX_{:,i}}{\|\sum_{i\neq j}v_iX_{:,i}\|_2}\right]^2>t_1\right)  \leq \exp\left(-2s\log 5\log(p/s)+\log\delta +2s\log 5\right) \leq \delta,
\end{equation}
where in the second inequality in~\eqref{techlemineq2} we used the assumption $p/s\geq 5>e$. Therefore, by taking $t = 2c_1s^2\log(p/s) +32s\log(1/\delta)$ and 
\begin{equation}\label{deltatechlem}
   \delta=\exp(-(10+e)s\log(2p/s)), 
\end{equation}
we have
 \begin{align}
     \p\left(\sum_{j=1}^p \left[\varepsilon_{:,j}^T \frac{\hat{y}^{(j)}}{\|\hat{y}^{(j)}\|_2}\right]^2> t\right) & \stackrel{(a)}{\leq} \p\left(\max_{\substack{ S\subseteq[p]\\ |S|=2s}} \sum_{j\in S} \sup_{\substack{v\in\R^p\\ S(v)=S }}\left[\varepsilon_{:,j}^T \frac{\sum_{i\neq j}v_iX_{:,i}}{\|\sum_{i\neq j}v_iX_{:,i}\|_2}\right]^2>t\right) \nonumber\\
     & \stackrel{(b)}{\leq} \sum_{\substack{ S\subseteq[p]\\ |S|=2s}} \p\left(\sum_{j\in S} \sup_{\substack{v\in\R^p\\ S(v)=S }}\left[\varepsilon_{:,j}^T \frac{\sum_{i\neq j}v_iX_{:,i}}{\|\sum_{i\neq j}v_iX_{:,i}\|_2}\right]^2>t\right)\nonumber\\
     & \stackrel{(c)}{\leq} \sum_{\substack{ S\subseteq[p]\\ |S|=2s}} \p\left(\exists j\in S:  \sup_{\substack{v\in\R^p\\ S(v)=S }}\left[\varepsilon_{:,j}^T \frac{\sum_{i\neq j}v_iX_{:,i}}{\|\sum_{i\neq j}v_iX_{:,i}\|_2}\right]^2>t_1 \right)\nonumber \\
     & \stackrel{(d)}{\leq} \sum_{\substack{ S\subseteq[p]\\ |S|=2s}} \sum_{j\in S} \p\left( \sup_{\substack{v\in\R^p\\ S(v)=S }}\left[\varepsilon_{:,j}^T \frac{\sum_{i\neq j}v_iX_{:,i}}{\|\sum_{i\neq j}v_iX_{:,i}\|_2}\right]^2>t_1 \right) \nonumber\\
     & \stackrel{(e)}{\leq}  2s {p \choose 2s} \delta\nonumber \\
     &  \stackrel{(f)}{\leq}2s \left(\frac{ep}{2s}\right)^{2s}\delta \nonumber\\
     & \stackrel{(g)}{\leq} 2s\left(\frac{2p}{s}\right)^{es}\delta \nonumber\\
     & = 2s \exp(-10s\log(2p/s)) \nonumber \\
     & \leq 2s \exp(-10s\log(p/s)), \label{techlemineq5}
 \end{align}
where $(a)$ is due to \eqref{techlemineq0}, $(b),(d)$ are true because of union bound, $(c)$ follows by noting that: for $j\in S$, if  
$$\sup_{\substack{v\in\R^p\\ S(v)=S }}\left[\varepsilon_{:,j}^T \frac{\sum_{i\neq j}v_iX_{:,i}}{\|\sum_{i\neq j}v_iX_{:,i}\|_2}\right]^2<t_1,$$
then 
$$\sum_{j\in S} \sup_{\substack{v\in\R^p\\ S(v)=S }}\left[\varepsilon_{:,j}^T \frac{\sum_{i\neq j}v_iX_{:,i}}{\|\sum_{i\neq j}v_iX_{:,i}\|_2}\right]^2<2st_1=t.$$
Inequality $(e)$ above, is due to \eqref{techlemineq2} and the fact that $|S|=2s$, $(f)$ is due to the inequality ${p\choose 2s}\leq (ep/2s)^{2s}$ and $(g)$ is true as $e>2$. By \eqref{eq1} and \eqref{techlemineq5}, and substituting $\delta$ from~\eqref{deltatechlem}, 
\begin{align*}
    \sum_{j=1}^p \left[\left\Vert \sum_{i\neq j}(\hat{\beta}_{i,j}-\beta_{i,j}^*)X_{:,i} \right\Vert_2^2\right]&\lesssim  {s^2\log(p/s)} + \sum_{j=1}^p\left\Vert\sum_{i\neq j}(\beta^*_{i,j}-\tilde{\beta}_{i,j})X_{:,i}\right\Vert_2^2
\end{align*}
    with the probability greater than $1-2s\exp(-10s\log(p/s))$.
\end{proof}

\begin{proof}[\bf Proof of Theorem \ref{reg_thm}]
Let $\mathbb{X}^{(j)}\in\mathbb{R}^{n\times (p-1)}$ be the matrix $X$ with column $j$ removed. Let us define the following events:
\begin{equation}\label{eventsthm1}
    \begin{aligned}
    \mathcal{A}&=\left\{ \max_{j_{1}, j_{2} \in [p]}~\frac{1}{n}|X_{:,j_1}^TX_{:,j_2}|\leq 3\right\},\\
      \mathcal{E}_j& =\left\{\mathbb{X}^{(j)}: \sigma_{\min}(\mathbb{X}^{(j)}_{:,S})\geq c_r \sqrt{n} \,\,\,\forall S\subseteq [p-1], |S|\leq 2s\right\},~j\in[p]
    \end{aligned}
\end{equation}
for some constant $c_r>0$. From \eqref{bern2} for fixed $j_1,j_2\in[p]$, with probability at least $1-2\delta$, we have:
\begin{equation}
    \frac{1}{n}|X_{:,j_1}^TX_{:,j_2}|\leq |G^*_{j_1,j_2}|+c_{\psi}\sqrt{\frac{\log(1/\delta)}{C_bn}}.
\end{equation}
Let $\delta=p^{-10}$ and $n>10c_{\psi}^2\log p/{C_b}$. 
Noting that $|G^*_{j_1,j_2}|\leq 2$ for all $j_{1}, j_{2}$, and by a union bound over all $j_1,j_2\in[p]$, the following holds true
\begin{equation}\label{proof-thm3-in3}
   \p\left(\mathcal{A}\right) >1-2p^{-8}.
\end{equation}
Note that for any $S\subseteq[p]$, with $1\leq|S|\leq 2s$, by Assumption~\ref{gdtsrassumption}, we have:
$$\lambda_{\min}\left(G^*_{{S},{S}}\right)\geq 1.$$
Hence, by Lemma \ref{minnorm2}, as $n\gtrsim s\log p$,
$$\p(\mathcal{E}_j)\geq 1-2s\exp(-20s\log (p-1)).$$
For the rest of the proof, we assume $\mathcal{A},\mathcal{E}_j$ (for all $j\in[p]$) and the event of Lemma~\ref{thm3techlem} hold true. Note that the intersection of these events holds true with probability at least 
 \begin{equation}\label{thm1prob}
     1-2p^{-8}-2sp\exp(-20s\log (p-1))-2s\exp(-10s\log(p/s))
 \end{equation}
 by union bound. \\
 Let $\hat{S}= S(\hat{z})\cup S(\tilde{u})$. Recall that $\hat{\beta}_{i,j} = 0$ for all $i \notin \hat{S}$. We have the following:
\begin{align}
    & \left\Vert \sum_{i:i\neq j}(\hat{\beta}_{i,j}-\beta_{i,j}^*)X_{:,i} \right\Vert_2^2 \nonumber \\ = & \left\Vert \sum_{i\in \hat{S}}(\hat{\beta}_{i,j}-\beta_{i,j}^*)X_{:,i}+\sum_{i\notin \hat{S}}(\hat{\beta}_{i,j}-\beta_{i,j}^*)X_{:,i} \right\Vert_2^2 \nonumber\\
     = & \left\Vert \sum_{i\in \hat{S}}(\hat{\beta}_{i,j}-\beta_{i,j}^*)X_{:,i}\right\Vert_2^2+\left\Vert \sum_{i\notin \hat{S}}\beta_{i,j}^*X_{:,i}\right\Vert_2^2-2\sum_{\substack{i_1\notin \hat{S}\\i_2\in \hat{S}}}\beta^*_{i_1,j}(\hat{\beta}_{i_2,j}-\beta^*_{i_2,j})X_{:,i_1}^TX_{:,i_2}\nonumber\\
     \geq & \left\Vert \sum_{i\in \hat{S}}(\hat{\beta}_{i,j}-\beta_{i,j}^*)X_{:,i}\right\Vert_2^2-2\sum_{\substack{i_1\notin \hat{S}\\i_2\in \hat{S}}}|\beta^*_{i_1,j}||\hat{\beta}_{i_2,j}-\beta^*_{i_2,j}||X_{:,i_1}^TX_{:,i_2}|\label{proof-thm3-aux1}.
\end{align}
Next, note that
\begin{align}
    2\sum_{\substack{i_1\notin \hat{S}\\i_2\in \hat{S}}}|\beta^*_{i_1,j}||\hat{\beta}_{i_2,j}-\beta^*_{i_2,j}||X_{:,i_1}^TX_{:,i_2}|& = 2\sum_{\substack{i_1\notin \hat{S}\\i_2\in \hat{S}}}\sqrt{|\beta^*_{i_1,j}|}\sqrt{|\beta^*_{i_1,j}|}|\hat{\beta}_{i_2,j}-\beta^*_{i_2,j}||X_{:,i_1}^TX_{:,i_2}|\nonumber\\
    & \stackrel{(a)}{\leq}  \sum_{\substack{i_1\notin \hat{S}\\i_2\in \hat{S}}}\left(c_{\beta}|\beta^*_{i_1,j}| +\frac{1}{c_{\beta}} |\beta^*_{i_1,j}||\hat{\beta}_{i_2,j}-\beta^*_{i_2,j}|^2\right)|X_{:,i_1}^TX_{:,i_2}|\nonumber\\
    & \stackrel{(b)}{\leq}  \frac{4n}{c_{\beta}}\sum_{i_1\notin \hat{S}}|\beta^*_{i_1,j}|\sum_{i_2\neq j}(\beta_{i_2,j}^*-\hat{\beta}_{i_2,j})^2  +c_{\beta}\sum_{\substack{i_1\notin \hat{S}\\i_2\in \hat{S}}}|\beta^*_{i_1,j}||X_{:,i_1}^TX_{:,i_2}|\nonumber\\
    & \stackrel{(c)}{\leq}\frac{4n}{c_{\beta}}\sum_{i:i\neq j}(\beta_{i,j}^*-\hat{\beta}_{i,j})^2  +c_{\beta}\sum_{\substack{i_1\notin \hat{S}\\i_2\in \hat{S}}}|\beta^*_{i_1,j}||X_{:,i_1}^TX_{:,i_2}|,\label{proof-thm3-aux2}
\end{align}
for some sufficiently large constant $c_{\beta}>0$ (that we discuss later), where $(a)$ is due to the inequality $2ab\leq c_{\beta}a^2+b^2/c_{\beta}$, $(b)$ is due to event $\mathcal{A}$ defined in~\eqref{eventsthm1} and $(c)$ is true as 
$$\sum_{i\notin \hat{S}}|\beta^*_{i,j}|=\sum_{i\notin \hat{S}}|\tilde{\beta}_{i,j}-\beta^*_{i,j}|\leq \sum_{j=1}^p\sum_{i\neq j}|\tilde{\beta}_{i,j}-\beta^*_{i,j}|\leq 1$$
by Assumption~\ref{orsclebetaassumtion}. As a result, from~\eqref{proof-thm3-aux1} and~\eqref{proof-thm3-aux2},
 \begin{equation} \label{proof-thm3-ie1}
     \left\Vert \sum_{i:i\neq j}(\hat{\beta}_{i,j}-\beta_{i,j}^*)X_{:,i} \right\Vert_2^2 \geq \left\Vert \sum_{i\in \hat{S}}(\hat{\beta}_{i,j}-\beta_{i,j}^*)X_{:,i}\right\Vert_2^2 -\frac{4n}{c_{\beta}}\sum_{i:i\neq j}(\beta_{i,j}^*-\hat{\beta}_{i,j})^2  -c_{\beta}\sum_{\substack{i_1\notin \hat{S}\\i_2\in \hat{S}}}|\beta^*_{i_1,j}||X_{:,i_1}^TX_{:,i_2}|.
 \end{equation}
 On event $\mathcal{E}_j$ for $j\in [p]$, 
\begin{align}
   \frac{1}{n} \left\Vert\sum_{i\in \hat{S}}(\hat{\beta}_{i,j}-\beta^*_{i,j})X_{:,i}\right\Vert_2^2 & =  \frac{1}{n} \left\Vert X_{\hat{S}}(\hat{\beta}_{\hat{S},j}-\beta^*_{\hat{S},j})\right\Vert_2^2\nonumber\\
   & \geq \frac{1}{n}\sigma_{\min}\left(\mathbb{X}^{(j)}_{:,\hat{S}\setminus \{j\}}\right)^2\left\Vert \hat{\beta}_{\hat{S}\setminus \{j\},j}-\beta^*_{\hat{S}\setminus \{j\},j}\right\Vert_2^2\nonumber\\
   & \geq c_r\left\Vert \hat{\beta}_{\hat{S},j}-\beta^*_{\hat{S},j}\right\Vert_2^2 \nonumber\\
   & \geq c_r \sum_{i\in \hat{S}}(\hat{\beta}_{i,j}-\beta^*_{i,j})^2\label{proof-thm3-in2},
\end{align}
where in the first inequality, we used the convention $\beta^*_{j,j}=\hat{\beta}_{j,j}=0$. As a result, for $j\in[p]$
\begin{align}
  & \frac{1}{n} \left\Vert \sum_{i:i\neq j}(\hat{\beta}_{i,j}-\beta_{i,j}^*)X_{:,i} \right\Vert_2^2 \nonumber
    \\ \stackrel{(a)}{\geq} & \frac{1}{n} \left\Vert \sum_{i\in \hat{S}}(\hat{\beta}_{i,j}-\beta_{i,j}^*)X_{:,i}\right\Vert_2^2-\frac{c_{\beta}}{n}\sum_{\substack{i_1\notin \hat{S}\\i_2\in \hat{S}}}|\beta^*_{i_1,j}||X_{:,i_1}^TX_{:,i_2}|-\frac{4}{c_{\beta}}\sum_{i:i\neq j}(\beta_{i,j}^*-\hat{\beta}_{i,j})^2\nonumber \\
    \stackrel{(b)}{\geq} & c_r\sum_{i\in \hat{S}}(\hat{\beta}_{i,j}-\beta^*_{i,j})^2-\frac{c_{\beta}}{n}\sum_{\substack{i_1\notin \hat{S}\\i_2\in \hat{S}}}|\beta^*_{i_1,j}||X_{:,i_1}^TX_{:,i_2}|-\frac{4}{c_{\beta}}\sum_{i:i\neq j}(\beta_{i,j}^*-\hat{\beta}_{i,j})^2\nonumber \\
     \stackrel{(c)}{\gtrsim} & \sum_{i\in \hat{S}}(\hat{\beta}_{i,j}-\beta^*_{i,j})^2 -  \sum_{\substack{i_1\notin \hat{S}\\i_2\in \hat{S}}}|\beta^*_{i_1,j}|\nonumber\\
     \stackrel{(d)}{\gtrsim} & \sum_{i\in \hat{S}}(\hat{\beta}_{i,j}-\beta^*_{i,j})^2 - s \sum_{i\notin \hat{S}}|\beta^*_{i,j}|\nonumber\\
     \stackrel{(e)}{\gtrsim} & \sum_{i\in \hat{S}}(\hat{\beta}_{i,j}-\beta^*_{i,j})^2 - s \sum_{i:i\neq j} |\tilde{\beta}_{i,j}-\beta^*_{i,j}|\label{proof-thm3in4}
\end{align}
where $(a)$ is due to \eqref{proof-thm3-ie1}, $(b)$ is due to \eqref{proof-thm3-in2} and \eqref{proof-thm3-in3}, $(c)$ is by taking $16/c_r>c_{\beta}>8/c_r$, $(d)$ is true because $|\hat{S}|\leq 2s$ and $(e)$ is true as $\tilde{\beta}_{i,j}=0$ for $i\notin \hat{S}$. Therefore, 
\begin{align}
   & \sum_{j=1}^p \sum_{i:i\neq j}(\hat{\beta}_{i,j}-\beta^*_{i,j})^2 \nonumber \\ = & \sum_{j=1}^p \sum_{i\in \hat{S}}(\hat{\beta}_{i,j}-\beta^*_{i,j})^2 + \sum_{j=1}^p \sum_{i\notin\hat{S}}(\beta^*_{i,j})^2 \nonumber\\
     \stackrel{(a)}{=} & \sum_{j=1}^p \sum_{i\in \hat{S}}(\hat{\beta}_{i,j}-\beta^*_{i,j})^2 + \sum_{j=1}^p \sum_{i\notin\hat{S}}(\tilde{\beta}_{i,j}-\beta^*_{i,j})^2 \nonumber\\
     \stackrel{(b)}{\lesssim} & \frac{1}{n} \sum_{j=1}^p\left\Vert \sum_{i:i\neq j}(\hat{\beta}_{i,j}-\beta_{i,j}^*)X_{:,i} \right\Vert_2^2 + \sum_{j=1}^p \sum_{i:i\neq j}\left[(\tilde{\beta}_{i,j}-\beta^*_{i,j})^2+s|\tilde{\beta}_{i,j}-\beta^*_{i,j}|\right]\label{proof-thm3-ineq8}
\end{align}
where $(a)$ is true as $\tilde{\beta}_{i,j}=0$ for $i\notin \hat{S}$ and $(b)$ is true because of \eqref{proof-thm3in4}. In addition,
 \begin{align}
     \sum_{j=1}^p\left\Vert\sum_{i:i\neq j}(\beta^*_{i,j}-\tilde{\beta}_{i,j})X_{:,i}\right\Vert_2^2 & \leq \sum_{j=1}^p\left[\sum_{i:i\neq j}|\beta^*_{i,j}-\tilde{\beta}_{i,j}|\|X_{:,i}\|_2\right]^2 \nonumber\\
     & \lesssim n\sum_{j=1}^p\left[\sum_{i:i\neq j}|\beta^*_{i,j}-\tilde{\beta}_{i,j}|\right]^2 \nonumber \\
     & \lesssim n\left[\sum_{j=1}^p\sum_{i:i\neq j}|\beta^*_{i,j}-\tilde{\beta}_{i,j}|\right]^2\label{proof-thm3-in4}
 \end{align}
 where above, we used event $\mathcal{A}$ to arrive at the second inequality. 
 Starting with~\eqref{proof-thm3-ineq8}, we arrive at the following:
 \begin{align*}
 &\sum_{j=1}^p \sum_{i:i\neq j}(\hat{\beta}_{i,j}-\beta^*_{i,j})^2\nonumber \\  {\lesssim} & \frac{1}{n} \sum_{j=1}^p\left\Vert \sum_{i:i\neq j}(\hat{\beta}_{i,j}-\beta_{i,j}^*)X_{:,i} \right\Vert_2^2 + \sum_{j=1}^p \sum_{i:i\neq j}\left[(\tilde{\beta}_{i,j}-\beta^*_{i,j})^2+s|\tilde{\beta}_{i,j}-\beta^*_{i,j}|\right] \\
  \stackrel{(a)}{\lesssim} & \frac{s^2\log(p/s)}{n} + \frac{1}{n}\sum_{j=1}^p\left\Vert\sum_{i:i\neq j}(\beta^*_{i,j}-\tilde{\beta}_{i,j})X_{:,i}\right\Vert_2^2 + \sum_{j=1}^p \sum_{i:i\neq j}\left[(\tilde{\beta}_{i,j}-\beta^*_{i,j})^2+s|\tilde{\beta}_{i,j}-\beta^*_{i,j}|\right]\\
  \stackrel{(b)}{\lesssim} &  \frac{s^2\log(p/s)}{n} + \sum_{j=1}^p\left[\sum_{i:i\neq j}|\beta^*_{i,j}-\tilde{\beta}_{i,j}|\right]^2+ \sum_{j=1}^p \sum_{i:i\neq j}\left[(\tilde{\beta}_{i,j}-\beta^*_{i,j})^2+s|\tilde{\beta}_{i,j}-\beta^*_{i,j}|\right] \\
  \lesssim & \frac{s^2\log(p/s)}{n} + \left[\sum_{j=1}^p\sum_{i:i\neq j}|\beta^*_{i,j}-\tilde{\beta}_{i,j}|\right]^2+ s\sum_{j=1}^p \sum_{i:i\neq j}|\tilde{\beta}_{i,j}-\beta^*_{i,j}| \\
  \stackrel{(c)}{\lesssim} & \frac{s^2\log(p/s)}{n} + s\sum_{j=1}^p \sum_{i:i\neq j}|\tilde{\beta}_{i,j}-\beta^*_{i,j}|,
 \end{align*}
 where above,  $(a)$ is due to Lemma \ref{thm3techlem}, $(b)$ is true because of \eqref{proof-thm3-in4} and $(c)$ is true by Assumption~\ref{orsclebetaassumtion}. 
\end{proof}

\subsection{Proof of Theorem \ref{Bestconst}}
We first present a few lemmas that are needed in the proof of Theorem~\ref{Bestconst}.
\begin{lem}[Lemma 8, \cite{6034739}]\label{sparseinner}
Let $s<p/4$. Suppose ${\varepsilon}\sim\mathcal{N}(0,\sigma^2{I}_n)$, ${X}\in\mathbb{R}^{n\times p}$ and ${\theta}\in\mathbb{R}^p$ is such that $\|{\theta}\|_0\leq 2s$. We have
$$\frac{1}{n}\left\vert{\varepsilon}^T{X}{\theta}\right\vert\leq 9\sigma \frac{\left\Vert{X}{\theta}\right\Vert_2}{n}\sqrt{s\log(p/s)}$$
with probability greater than $1-\exp(-10s\log(p/2s))$.
\end{lem}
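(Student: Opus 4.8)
The plan is to view $\varepsilon^T X\theta$ as a Gaussian linear functional evaluated at the unit vector $X\theta/\|X\theta\|_2$, and to reduce the claim to a supremum over the finitely many low-dimensional subspaces in which $X\theta$ can lie. If $X\theta=0$ the inequality is trivial, so assume $\|X\theta\|_2>0$; multiplying through by $n/\|X\theta\|_2$, it suffices to show that on an event of the stated probability
\[
\sup_{\substack{S\subseteq[p],\ |S|\leq 2s}}\ \sup_{\substack{\theta\in\R^p,\ S(\theta)\subseteq S}}\ \frac{|\varepsilon^T X\theta|}{\|X\theta\|_2}\ \leq\ 9\sigma\sqrt{s\log(p/s)}.
\]
Fix a support $S$ with $|S|\leq 2s$. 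Every vector $X\theta$ with $S(\theta)\subseteq S$ lies in the column space of $X_S$; write $\Pi_S$ for the orthogonal projector onto that space and $r_S=\rank(X_S)\leq 2s$. Since $\Pi_S X\theta=X\theta$, we have $\varepsilon^T X\theta=(\Pi_S\varepsilon)^T X\theta$, so by Cauchy--Schwarz $|\varepsilon^T X\theta|\leq\|\Pi_S\varepsilon\|_2\|X\theta\|_2$, with the bound attained by taking $X\theta$ parallel to $\Pi_S\varepsilon$; hence the inner supremum equals exactly $\|\Pi_S\varepsilon\|_2$, and the whole left-hand side equals $\max_{|S|\leq 2s}\|\Pi_S\varepsilon\|_2$.

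It remains to control $\|\Pi_S\varepsilon\|_2$ for a fixed $S$ and then union bound over $S$. Expressing $\Pi_S\varepsilon$ in an orthonormal basis of the column space of $X_S$ shows (as $X$ is deterministic and $\varepsilon\sim\mathcal{N}(0,\sigma^2 I_n)$) that $\|\Pi_S\varepsilon\|_2^2/\sigma^2\sim\chi^2_{r_S}$ with $r_S\leq 2s$. I would then invoke a sharp chi-squared tail bound, for instance the Laurent--Massart inequality $\p(\chi^2_r\geq r+2\sqrt{rx}+2x)\leq e^{-x}$, together with $\binom{p}{2s}\leq(ep/2s)^{2s}$, to obtain
\[
\p\!\left(\exists\,S,\ |S|\leq 2s:\ \|\Pi_S\varepsilon\|_2^2\geq\sigma^2\big(2s+2\sqrt{2sx}+2x\big)\right)\ \leq\ \exp\!\left(2s\log\tfrac{ep}{2s}-x\right).
\]
Choosing $x=12s\log(p/2s)+2s$ makes the right-hand side at most $\exp(-10s\log(p/2s))$. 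Since $s<p/4$ forces $p/2s>2$ and hence $\log(p/2s)>\log 2$, the additive $O(s)$ terms satisfy $2s+2\sqrt{2sx}+2x\lesssim x\lesssim s\log(p/2s)\leq s\log(p/s)$, and a short check of constants shows the implied numerical factor is below $81$, that is $\sigma^2(2s+2\sqrt{2sx}+2x)\leq 81\,\sigma^2 s\log(p/s)$, giving $\|\Pi_S\varepsilon\|_2\leq 9\sigma\sqrt{s\log(p/s)}$ uniformly over such $S$. Combined with the reduction of the first paragraph, this proves the lemma. (Applying the net bound of Lemma~\ref{maximal} to $\Phi_S^T\varepsilon$ gives the same structure but a larger leading constant, which is why the sharper $\chi^2$ tail is preferred here.)

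The one genuinely delicate point is this final numerical calibration: the explicit constant $9$ must be large enough that, once the union bound has paid the combinatorial price $2s\log(ep/2s)$ and the $\chi^2$ deviation has been inverted, the surviving exponent is still at least $10s\log(p/2s)$, while the resulting deviation level stays below $9\sigma\sqrt{s\log(p/s)}$; the hypothesis $s<p/4$ is precisely what keeps $\log(ep/2s)$ and the $O(s)$ remainders controlled relative to $\log(p/s)$. Everything else --- reducing to a supremum of a Gaussian linear functional over a union of subspaces of dimension at most $2s$, the projection and Cauchy--Schwarz step, and the union bound over supports --- is routine.
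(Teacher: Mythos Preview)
The paper does not give its own proof of this lemma: it is quoted verbatim as Lemma~8 of \citet{6034739} and invoked without argument. So there is no in-paper proof to compare against, and your task reduces to whether your argument stands on its own.

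It does. Your reduction---writing $\varepsilon^TX\theta=(\Pi_S\varepsilon)^TX\theta$ and using Cauchy--Schwarz to bound the inner ratio by $\|\Pi_S\varepsilon\|_2$, then recognizing $\|\Pi_S\varepsilon\|_2^2/\sigma^2\sim\chi^2_{r_S}$ with $r_S\le 2s$ and union-bounding over the $\binom{p}{2s}$ supports---is exactly the standard route. The choice $x=12s\log(p/2s)+2s$ makes the exponent $2s\log(ep/2s)-x=-10s\log(p/2s)$ on the nose, and the deviation level $2s+2\sqrt{2sx}+2x$ is at most $81\,s\log(p/s)$ for all $s<p/4$: writing $L=\log(p/s)\ge\log 4$, one checks $\big(6+24(L-\log 2)+4\sqrt{6(L-\log 2)+1}\big)/L$ stays below roughly $27$ uniformly, comfortably under $81$. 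One clarification worth making explicit: the lemma as stated leaves ambiguous whether $\theta$ is fixed or the bound is uniform over all $2s$-sparse $\theta$; the way the paper applies it (in Lemma~\ref{thm4techlem}, with $\theta=\hat\beta_{:,j}-\tilde\beta_{:,j}$ depending on $\varepsilon_{:,j}$ through the optimization) forces the uniform reading you adopted, and your proof delivers that.
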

\begin{lem}\label{thm4techlem}
Let $\hat{\beta}$ be an optimal solution to~\eqref{reg_multi} and $\beta^*$ be the true regression coefficients. Under the assumptions of Theorem \ref{Bestconst},
\begin{equation}
    \frac{1}{n}\sum_{j=1}^p\left\vert \varepsilon_{:,j}^T\sum_{i:i\neq j}(\hat{\beta}_{i,j}-\beta^*_{i,j})X_{:,i}\right\vert\lesssim \frac{s^2\log(p/s)}{n}
\end{equation}
with probability greater than 
$$ 1-4p^{-8}-2sp\exp(-20s\log (p-1))-2s\exp(-10s\log(p/s))-p\exp(-10s\log(p/2s)).$$
\end{lem}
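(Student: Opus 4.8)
The plan is to split the cross term for each $j$ as
\[
\varepsilon_{:,j}^T\sum_{i\neq j}(\hat\beta_{i,j}-\beta^*_{i,j})X_{:,i}
=\varepsilon_{:,j}^T\sum_{i\neq j}(\hat\beta_{i,j}-\tilde\beta_{i,j})X_{:,i}
+\sum_{i\neq j}(\tilde\beta_{i,j}-\beta^*_{i,j})\,\varepsilon_{:,j}^T X_{:,i},
\]
and bound the two pieces separately. The key structural observation I would use is that the vector $\theta^{(j)}$ with entries $\hat\beta_{i,j}-\tilde\beta_{i,j}$ is supported on $\hat S:=S(\hat z)\cup S(\tilde u)$, hence $2s$-sparse; moreover $\theta^{(j)}=0$ whenever $j\notin\hat S$, so at most $2s$ columns $j$ contribute a nonzero first piece.

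\textbf{First piece.} Set $y^{(j)}:=\sum_{i\neq j}(\hat\beta_{i,j}-\tilde\beta_{i,j})X_{:,i}$. I would condition on $\mathbb X^{(j)}$ (the data matrix with column $j$ removed): by Lemma~\ref{lem_reg_ext}, $\varepsilon_{:,j}\sim\mathcal N(0,(\sigma_j^*)^2 I_n)$ is independent of $\mathbb X^{(j)}$ and $(\sigma_j^*)^2\le 2$, so applying Lemma~\ref{sparseinner} with design $\mathbb X^{(j)}$, noise $\varepsilon_{:,j}$ and the $2s$-sparse vector $\theta^{(j)}$ gives, on an event of probability $\ge 1-\exp(-10s\log(p/2s))$,
\[
\tfrac1n\bigl|\varepsilon_{:,j}^T y^{(j)}\bigr|\lesssim \sqrt{s\log(p/s)}\;\frac{\|y^{(j)}\|_2}{n};
\]
a union bound over $j$ costs $p\exp(-10s\log(p/2s))$. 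Summing, using $y^{(j)}=0$ for $j\notin\hat S$ and Cauchy--Schwarz over the $\le 2s$ indices of $\hat S$,
\[
\tfrac1n\sum_{j=1}^p\bigl|\varepsilon_{:,j}^T y^{(j)}\bigr|\lesssim \frac{\sqrt{s\log(p/s)}\,\sqrt{s}}{n}\Bigl(\sum_{j=1}^p\|y^{(j)}\|_2^2\Bigr)^{1/2}.
\]
To control $\sum_j\|y^{(j)}\|_2^2$ I would use $\|y^{(j)}\|_2\le\|\sum_{i\neq j}(\hat\beta_{i,j}-\beta^*_{i,j})X_{:,i}\|_2+\|\sum_{i\neq j}(\tilde\beta_{i,j}-\beta^*_{i,j})X_{:,i}\|_2$, then Lemma~\ref{thm3techlem} for the first term and $\sum_j\|\sum_{i\neq j}(\tilde\beta_{i,j}-\beta^*_{i,j})X_{:,i}\|_2^2\lesssim n\bigl(\sum_j\sum_{i\neq j}|\tilde\beta_{i,j}-\beta^*_{i,j}|\bigr)^2$ on event $\mathcal A$ (exactly as in~\eqref{proof-thm3-in4}); by (A2) and (A3) the latter is $\lesssim s^2\log(p/s)$. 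Hence $\sum_j\|y^{(j)}\|_2^2\lesssim s^2\log(p/s)$, and the display above is $\lesssim s^2\log(p/s)/n$.

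\textbf{Second piece and probability accounting.} For $i\neq j$, $\varepsilon_{:,j}$ and $X_{:,i}$ are independent mean-zero Gaussians with uniformly bounded variances, so $\frac1n\varepsilon_{:,j}^TX_{:,i}$ concentrates at $0$; Bernstein's inequality (as in Lemma~\ref{bernlem}) with $\delta=p^{-10}$ and a union bound over ordered pairs gives $\frac1n|\varepsilon_{:,j}^TX_{:,i}|\lesssim\sqrt{\log p/n}\le 1$ for all $i\neq j$ simultaneously, with probability $\ge 1-2p^{-8}$. Thus, using (A2),
\[
\tfrac1n\sum_{j=1}^p\sum_{i\neq j}|\tilde\beta_{i,j}-\beta^*_{i,j}|\,|\varepsilon_{:,j}^TX_{:,i}|\lesssim \sum_{j=1}^p\sum_{i\neq j}|\tilde\beta_{i,j}-\beta^*_{i,j}|\lesssim \frac{s\log(p/s)}{n}\le\frac{s^2\log(p/s)}{n}.
\]
Adding the two pieces gives the claimed rate. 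The failure probability is the union of: the events of Lemma~\ref{thm3techlem}, of $\cap_j\mathcal E_j$, and of $\mathcal A$ from the proof of Theorem~\ref{reg_thm}, contributing $2p^{-8}+2sp\exp(-20s\log(p-1))+2s\exp(-10s\log(p/s))$; the extra $2p^{-8}$ from the Bernstein step for $\{\varepsilon_{:,j}^TX_{:,i}\}$; and $p\exp(-10s\log(p/2s))$ from Lemma~\ref{sparseinner} --- matching the stated bound.

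\textbf{Main obstacle.} The delicate point is extracting the sharp power of $s$: one must simultaneously exploit the $2s$-sparsity of each $\theta^{(j)}$ (giving the $\sqrt{s\log(p/s)}$ factor from Lemma~\ref{sparseinner}) \emph{and} the fact that at most $2s$ columns contribute a nonzero residual difference (supplying the extra $\sqrt s$ via Cauchy--Schwarz); discarding either observation loses at least a factor of $s$. A secondary technical subtlety is that Lemma~\ref{sparseinner} must be invoked conditionally on $\mathbb X^{(j)}$, since $\varepsilon_{:,j}$ is a function of the full data matrix but is independent of $\mathbb X^{(j)}$ by Lemma~\ref{lem_reg_ext}.
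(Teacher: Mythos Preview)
Your proposal is correct and follows essentially the same route as the paper: the same split $\hat\beta-\beta^*=(\hat\beta-\tilde\beta)+(\tilde\beta-\beta^*)$, the same use of Lemma~\ref{sparseinner} (conditionally on $\mathbb X^{(j)}$) for the $2s$-sparse first piece, the same Cauchy--Schwarz over the at most $2s$ contributing columns together with Lemma~\ref{thm3techlem} and \eqref{proof-thm3-in4} to control $\sum_j\|y^{(j)}\|_2^2$, and the same Bernstein-based event $\{\max_{i\neq j}\tfrac1n|\varepsilon_{:,j}^TX_{:,i}|\lesssim 1\}$ for the second piece. Your probability accounting also matches the paper's exactly.
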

\begin{proof}
Let us define the following events:
\begin{equation}
    \begin{aligned}
      \mathcal{C} &= \left\{\max_{\substack{i,j\in[p]\\ i\neq j}} |\varepsilon_{:,j}^TX_{:,i}|\lesssim n\right\},\\
      \mathcal{F}_j &= \left\{ \frac{1}{n}|\varepsilon_{:,j}^T\hat{y}^{(j)}| \lesssim \frac{\sqrt{s\log(p/s)}}{n}\|\hat{y}^{(j)}\|_2 \right\}
    \end{aligned}
\end{equation}
where $\hat{y}^{(j)}$ is defined in \eqref{yhatdef} for $j \in [p]$. As $\varepsilon_{:,j}$ has iid normal coordinates with bounded variance, similar to \eqref{proof-thm3-in3}, by Bernstein inequality 
$$\p(\mathcal{C})\geq 1-2p^{-8}.$$
Moreover, based on Lemma \ref{sparseinner}, we have:
\begin{align}\label{thm4techin2}
    \p(\mathcal{F}_j) \geq 1-\exp(-10s\log(p/2s))
\end{align}
 for all $j\in[p]$. In the rest of the proof, we assume $\mathcal{C}, \mathcal{F}_j$ (for all $j\in[p]$) and the event of Theorem~\ref{reg_thm} hold true. By union bound, this happens with probability greater than 
$$ 1-4p^{-8}-2sp\exp(-20s\log (p-1))-2s\exp(-10s\log(p/s))-p\exp(-10s\log(p/2s)).$$
One has for a fixed $j\in [p]$,
\begin{align}
    \frac{1}{n}\left\vert \varepsilon_{:,j}^T\sum_{i\neq j}(\hat{\beta}_{i,j}-\beta^*_{i,j})X_{:,i} \right\vert & \leq \frac{1}{n}\left\vert \varepsilon_{:,j}^T\sum_{i\neq j}(\tilde{\beta}_{i,j}-\hat{\beta}_{i,j})X_{:,i} \right\vert + \frac{1}{n}\left\vert \varepsilon_{:,j}^T\sum_{i\neq j}(\beta^*_{i,j}-\tilde{\beta}_{i,j})X_{:,i} \right\vert \nonumber \\
    & = \frac{1}{n}|\varepsilon_{:,j}^T\hat{y}^{(j)}| + \frac{1}{n}\left\vert \varepsilon_{:,j}^T\sum_{i\neq j}(\beta^*_{i,j}-\tilde{\beta}_{i,j})X_{:,i} \right\vert \nonumber \\
    & \leq \frac{1}{n}|\varepsilon_{:,j}^T\hat{y}^{(j)}| + \frac{1}{n}\sum_{i\neq j}|\beta^*_{i,j}-\tilde{\beta}_{i,j}|\left\vert \varepsilon_{:,j}^TX_{:,i} \right\vert \nonumber \\
    & \stackrel{(a)}{\lesssim}\frac{1}{n}|\varepsilon_{:,j}^T\hat{y}^{(j)}| + \sum_{i\neq j}|\beta^*_{i,j}-\tilde{\beta}_{i,j}|\label{thm4techlemin3}
\end{align} 
 where $(a)$ follows as we are considering the event $\mathcal{C}$. Additionally, noting that $\hat{y}^{(j)}=0$ for $j\notin \hat{S}$ we have:
\begin{align}
    \sum_{j=1}^p \|\hat{y}^{(j)}\|_2 & = \sum_{j\in\hat{S}} \|\hat{y}^{(j)}\|_2  \nonumber \\
  &  \stackrel{(a)}{\lesssim} \sqrt{s}\sqrt{\sum_{j\in\hat{S}} \|\hat{y}^{(j)}\|_2^2 } \nonumber \\
    & =\sqrt{s}\sqrt{\sum_{j=1}^p \left\Vert\sum_{i\neq j}(\tilde{\beta}_{i,j}-\hat{\beta}_{i,j})X_{:,i} \right\Vert_2^2 } \nonumber \\
    & \lesssim \sqrt{s}\sqrt{\sum_{j=1}^p \left\Vert\sum_{i\neq j}({\beta}^*_{i,j}-\hat{\beta}_{i,j})X_{:,i} \right\Vert_2^2 } + \sqrt{s}\sqrt{\sum_{j=1}^p \left\Vert\sum_{i\neq j}(\tilde{\beta}_{i,j}-{\beta}^*_{i,j})X_{:,i} \right\Vert_2^2 } \nonumber \\
    & \stackrel{(b)}{\lesssim} \sqrt{s}\sqrt{s^2\log(p/s)+\sum_{j=1}^p\left\Vert\sum_{i\neq j}(\beta^*_{i,j}-\tilde{\beta}_{i,j})X_{:,i}\right\Vert_2^2 } + \sqrt{s}\sqrt{\sum_{j=1}^p \left\Vert\sum_{i\neq j}(\tilde{\beta}_{i,j}-{\beta}^*_{i,j})X_{:,i} \right\Vert_2^2 }\nonumber\\ 
    & \lesssim \sqrt{s^3\log(p/s)} + \sqrt{s}\sqrt{\sum_{j=1}^p \left\Vert\sum_{i\neq j}(\tilde{\beta}_{i,j}-{\beta}^*_{i,j})X_{:,i} \right\Vert_2^2 } \nonumber \\
    & \stackrel{(c)}{\lesssim}\sqrt{s^3\log(p/s)} + \sqrt{sn}\sum_{j=1}^p\sum_{i\neq j}|\beta^*_{i,j}-\tilde{\beta}_{i,j}|\label{thm4techlemin1}
\end{align}
where $(a)$ is true as $|\hat{S}|\leq 2s$, $(b)$ is due to Lemma \ref{thm3techlem} and $(c)$ is true because of \eqref{proof-thm3-in4}. 

Starting with~\eqref{thm4techlemin3}, we have the following:
\begin{align}
 &   \frac{1}{n}\sum_{j=1}^p\left\vert \varepsilon_{:,j}^T\sum_{i\neq j}(\hat{\beta}_{i,j}-\beta^*_{i,j})X_{:,i} \right\vert \nonumber \\ \lesssim & \frac{1}{n}\sum_{j=1}^p|\varepsilon_{:,j}^T\hat{y}^{(j)}| + \sum_{j=1}^p\sum_{i\neq j}|\beta^*_{i,j}-\tilde{\beta}_{i,j}| \nonumber\\
     \lesssim &\frac{\sqrt{s\log(p/s)}}{n}\sum_{j=1}^p \|\hat{y}^{(j)}\|_2 + \sum_{j=1}^p\sum_{i\neq j}|\beta^*_{i,j}-\tilde{\beta}_{i,j}|\nonumber\\
     \lesssim & \frac{\sqrt{s\log(p/s)}}{n}\sqrt{s^3\log(p/s)} +\left(1+\sqrt{\frac{s^2\log(p/s)}{n}}\right)\sum_{j=1}^p\sum_{i\neq j}|\beta^*_{i,j}-\tilde{\beta}_{i,j}| \nonumber \\
     \lesssim &\frac{s^2\log(p/s)}{n} 
\end{align}
where above, the second inequality is a result of the event $\mathcal{F}_j$, the third inequality is due to \eqref{thm4techlemin1} and the last inequality is true because of \eqref{sumassum}.\end{proof}

\begin{lem}\label{thm4techlem2}
Let $\{\hat{\sigma}_j\}$ be as defined in~\eqref{sigmahatdef}.
Under the assumptions of Theorem~\ref{Bestconst}, we have:
\begin{equation}\label{thm4ineq201}
     \sum_{j\in S(\hat{z})} \left(  \hat{\sigma}^2_{j}-({\sigma_{j}^*})^2  \right)^2 \lesssim \frac{s^2\log(p/s)}{n}
\end{equation}
with probability greater than 
$$ 1-6p^{-8}-4ps\exp(-20s\log (p-1))-4s\exp(-10s\log(p/s))-p\exp(-10s\log(p/2s))-2pe^{-Cs\log(p/s)}$$
for some universal constant $C>2$.
\end{lem}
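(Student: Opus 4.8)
The plan is to expand $\hat\sigma_j^2-(\sigma_j^*)^2$ into three additive pieces for each $j\in S(\hat z)$ and to control each piece after squaring and summing. By Lemma~\ref{lem_reg_ext}, $X_{:,j}=\sum_{i\neq j}\beta^*_{i,j}X_{:,i}+\varepsilon_{:,j}$ with $\varepsilon_{:,j}\sim\mathcal N(0,(\sigma_j^*)^2 I_n)$. For $j\in S(\hat z)$ the estimate $\hat\sigma_j^2$ is the residual quantity in~\eqref{sigmahatdef}, so writing $r^{(j)}:=\sum_{i\neq j}(\beta^*_{i,j}-\hat\beta_{i,j})X_{:,i}$ we have $X_{:,j}-\sum_{i\neq j}\hat\beta_{i,j}X_{:,i}=\varepsilon_{:,j}+r^{(j)}$, hence
\[
\hat\sigma_j^2-(\sigma_j^*)^2=\Big(\tfrac1n\|\varepsilon_{:,j}\|_2^2-(\sigma_j^*)^2\Big)+\tfrac2n\,\varepsilon_{:,j}^T r^{(j)}+\tfrac1n\|r^{(j)}\|_2^2=:a_j+b_j+c_j .
\]
Since $(a_j+b_j+c_j)^2\le 3(a_j^2+b_j^2+c_j^2)$ and $|S(\hat z)|\le s$, it suffices to show $\sum_{j\in S(\hat z)}a_j^2$, $\sum_{j\in S(\hat z)}b_j^2$ and $\sum_{j\in S(\hat z)}c_j^2$ are each $\lesssim s^2\log(p/s)/n$.

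For the $a_j$ term: by~\eqref{beta_lem} and $\theta\in(0,1]$ one has $1\le(\sigma_j^*)^2\le 2$, and $a_j$ has the law $(\sigma_j^*)^2(\tfrac1n\chi^2_n-1)$. Using the sub-exponential tail $\p(|\tfrac1n\chi^2_n-1|>t)\le 2\exp(-cnt^2)$ for $0<t<1$, and noting that (A3) gives $n\gtrsim s^2\log p\gtrsim s\log(p/s)$, choose $t=c_1\sqrt{s\log(p/s)/n}<1$ for a suitably large constant $c_1$; then $\p(|a_j|\gtrsim\sqrt{s\log(p/s)/n})\le 2\exp(-Cs\log(p/s))$ with $C>2$. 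A union bound over all $j\in[p]$ — needed because $\hat z$ is data-dependent — gives, with probability at least $1-2p\exp(-Cs\log(p/s))$, that $|a_j|\lesssim\sqrt{s\log(p/s)/n}$ for every $j$, so $\sum_{j\in S(\hat z)}a_j^2\lesssim s\cdot\frac{s\log(p/s)}{n}$.

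For the $c_j$ and $b_j$ terms I would feed in already-proved estimates. Since $\sum_{j=1}^p\|r^{(j)}\|_2^2=\sum_{j=1}^p\big\|\sum_{i\neq j}(\hat\beta_{i,j}-\beta^*_{i,j})X_{:,i}\big\|_2^2$, Lemma~\ref{thm3techlem}, then~\eqref{proof-thm3-in4}, then~\eqref{sumassum} of (A2), and finally $\log(p/s)/n\lesssim1$ give
\[
\sum_{j=1}^p\|r^{(j)}\|_2^2\lesssim s^2\log(p/s)+n\Big(\sum_{j=1}^p\sum_{i\neq j}|\beta^*_{i,j}-\tilde\beta_{i,j}|\Big)^2\lesssim s^2\log(p/s).
\]
Hence $\sum_{j\in S(\hat z)}c_j^2\le\tfrac1{n^2}\big(\sum_j\|r^{(j)}\|_2^2\big)^2\lesssim\big(\tfrac{s^2\log(p/s)}{n}\big)^2\lesssim\tfrac{s^2\log(p/s)}{n}$, the last step using $s^2\log(p/s)/n\lesssim 1$ from (A3). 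For $b_j$, Lemma~\ref{thm4techlem} already gives $\sum_{j=1}^p|b_j|=\tfrac2n\sum_j|\varepsilon_{:,j}^T r^{(j)}|\lesssim\tfrac{s^2\log(p/s)}{n}$, so, since $\ell_2\le\ell_1$, $\sum_{j\in S(\hat z)}b_j^2\le\big(\sum_{j=1}^p|b_j|\big)^2\lesssim\big(\tfrac{s^2\log(p/s)}{n}\big)^2\lesssim\tfrac{s^2\log(p/s)}{n}$.

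Combining the three bounds yields~\eqref{thm4ineq201}, and a union bound over the events invoked — the event of Lemma~\ref{thm4techlem} (which already carries the events of Theorem~\ref{reg_thm}, Lemma~\ref{thm3techlem}, and the auxiliary events $\mathcal A,\{\mathcal E_j\},\mathcal C,\{\mathcal F_j\}$) together with the new chi-square event above — gives the stated probability. I do not expect a genuine obstacle: the only delicate point is that the chi-square control of $a_j$ must hold uniformly over all $j\in[p]$, which costs the extra factor $p$ in the failure probability; this is harmless because $n\gtrsim s^2\log p$ lets us take the per-coordinate tail as small as $\exp(-Cs\log(p/s))$ with $C>2$. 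Everything else is bookkeeping on top of Lemmas~\ref{thm3techlem} and~\ref{thm4techlem} and the normalization $s^2\log(p/s)/n\lesssim 1$.
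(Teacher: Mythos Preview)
Your proposal is correct and follows essentially the same approach as the paper: the same three-term decomposition $a_j+b_j+c_j$, the same chi-square concentration with a union bound over $j\in[p]$ for the $a_j$ part, Lemma~\ref{thm4techlem} for the $b_j$ part, and Lemma~\ref{thm3techlem} together with~\eqref{proof-thm3-in4} and~\eqref{sumassum} for the $c_j$ part. The only cosmetic differences are the constant in the elementary inequality ($3$ versus the paper's $4$) and that the paper explicitly re-invokes Theorem~\ref{reg_thm} alongside Lemma~\ref{thm4techlem}, which is why the stated probability bound carries some redundant terms; your accounting is at least as tight.
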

\begin{proof}
Define the event
\begin{equation}\label{dj-def}
    \mathcal{D}_j=\left\{\left\vert\frac{1}{n({\sigma^*_j})^2}\|\varepsilon_{:,j}\|_2^2-1\right\vert \lesssim \sqrt{\frac{s\log(p/s)}{n}}\right\}.
\end{equation}
Note that ${\|\varepsilon_{:,j}\|_2^2}/{({\sigma_{j}^*})^2}\sim \chi^2_n$ and by Bernstein inequality, 
\begin{equation}
 \p\left(\mathcal{D}_j\right)\geq 1- 2e^{-Cs\log(p/s)}   \label{thm4ineq104}
\end{equation}
for some numerical constant $C>2$ as $n\gtrsim s\log (p/s)$. Note that from Lemma~\ref{thm4techlem},
\begin{multline}\label{event01}
    \p\left(\frac{1}{n}\sum_{j=1}^p\left\vert \varepsilon_{:,j}^T\sum_{i\neq j}(\hat{\beta}_{i,j}-\beta^*_{i,j})X_{:,i}\right\vert\lesssim \frac{s^2\log(p/s)}{n}\right) \\ \geq 1-4p^{-8}-2sp\exp(-20s\log (p-1))-2s\exp(-10s\log(p/s))-p\exp(-10s\log(p/2s))
\end{multline}
and from Theorem~\ref{reg_thm},
\begin{multline}\label{event02}
\p\left(\sum_{j=1}^p \sum_{i:i\neq j}(\hat{\beta}_{i,j}-\beta^*_{i,j})^2\lesssim    \frac{s^2\log(p/s)}{n} + s\sum_{j=1}^p \sum_{i:i\neq j}|\tilde{\beta}_{i,j}-\beta^*_{i,j}| \right)\\ \geq  1-2p^{-8}-2sp\exp(-20s\log (p-1))-2s\exp(-10s\log(p/s)).
\end{multline}
The rest of the proof is on the intersection of events $\mathcal{D}_j$ for all $j\in[p]$ and events in~\eqref{event01} and \eqref{event02}, with probability of at least 
$$ 1-6p^{-8}-4ps\exp(-20s\log (p-1))-4s\exp(-10s\log(p/s))-p\exp(-10s\log(p/2s))-2pe^{-Cs\log(p/s)}.$$
 To lighten the notation, let $\gamma^{(j)}= \sum_{i:i\neq j}(\hat{\beta}_{i,j}-\beta^*_{i,j})X_{:,i}$. By \eqref{sigmahatdef}, one has
\begin{align}
  &  \sum_{j\in S(\hat{z})} \left(  \hat{\sigma}^2_{j}-({\sigma_{j}^*})^2  \right)^2   \nonumber \\ = & \sum_{j\in S(\hat{z})} \left( \frac{\left\Vert X_{:,j} -\sum_{i\neq j}\hat{\beta}_{i,j}X_{:,i} \right\Vert_2^2}{n} -({\sigma_{j}^*})^2  \right)^2 \nonumber\\
     \stackrel{(a)}{=} & \sum_{j\in S(\hat{z})} \left( \frac{\left\Vert \sum_{i\neq j}(\hat{\beta}_{i,j}-\beta^*_{i,j})X_{:,i} \right\Vert_2^2}{n} -2\varepsilon_{:,j}^T \frac{\sum_{i\neq j}(\hat{\beta}_{i,j}-\beta^*_{i,j})X_{:,i}}{n}+\left(\frac{\|\varepsilon_{:,j}\|_2^2}{n} -({\sigma_{j}^*})^2\right)  \right)^2 \nonumber\\
   \stackrel{(b)}{\leq} & 4\sum_{j\in S(\hat{z})} \left\{\left(\frac{ \|\gamma^{(j)}\|_2^2}{n}\right)^2 + \left(2\varepsilon_{:,j}^T \frac{\gamma^{(j)}}{n}\right)^2 + ({\sigma_{j}^*})^4\left(\frac{\left\Vert\frac{\varepsilon_{:,j}}{{\sigma_{j}^*}}\right\Vert_2^2}{n} -1\right)^2\right\}\nonumber\\
  \leq & 4 \left(\sum_{j=1}^p \frac{1}{n}\|\gamma^{(j)}\|_2^2\right)^2 + 4\left(\sum_{j=1}^p \frac{2}{n}|\varepsilon_{:,j}^T \gamma^{(j)}|\right)^2 +4 \sum_{j\in S(\hat{z})}({\sigma_{j}^*})^4\left(\frac{\left\Vert\frac{\varepsilon_{:,j}}{{\sigma_{j}^*}}\right\Vert_2^2}{n} -1\right)^2 \label{thm4ineq101}
\end{align}
where $(a)$ uses the representation $X_{:,j}=\sum_{i\neq j}\beta^*_{i,j}X_{:,i}+\varepsilon_{:,j}$ and $(b)$ is a result of the inequality $(a+b+c)^2\leq 4(a^2+b^2+c^2)$. Note that 
\begin{align}
    \sum_{j=1}^p \frac{1}{n}\|\gamma^{(j)}\|_2^2 & \stackrel{(a)}{\lesssim} \frac{s^2\log(p/s)}{n} + \frac{1}{n}\sum_{j=1}^p\left\Vert\sum_{i\neq j}(\beta^*_{i,j}-\tilde{\beta}_{i,j})X_{:,i}\right\Vert_2^2 \nonumber \\
   & \stackrel{(b)}{\lesssim} \frac{s^2\log(p/s)}{n} + \left[\sum_{j=1}^p\sum_{i\neq j}|\beta^*_{i,j}-\tilde{\beta}_{i,j}|\right]^2 \nonumber\\
   & \stackrel{(c)}{\lesssim} \frac{s^2\log(p/s)}{n}\label{thm4ineq102}
\end{align}
where $(a)$ is due to Lemma \ref{thm3techlem}, $(b)$ is due to \eqref{proof-thm3-in4} and $(c)$ is due to \eqref{sumassum}. Additionally,
\begin{equation}\label{thm4ineq103}
    \frac{1}{n}\sum_{j=1}^p\left\vert \varepsilon_{:,j}^T\gamma^{(j)} \right\vert\lesssim \frac{s^2\log(p/s)}{n}
\end{equation}
by Lemma \ref{thm4techlem}. 
Moreover, from Assumption~\ref{gdtsrassumption}, $({\sigma_j^*})^2\leq 2$ so $({\sigma_j^*})^4 \leq 4$. By substituting \eqref{thm4ineq102}, \eqref{thm4ineq103} into \eqref{thm4ineq101},  and considering that event $\mathcal{D}_j$ hold true, we have
\begin{align*}
    \sum_{j=1}^p  \sum_{j\in S(\hat{z})} \left(  \hat{\sigma}^2_{j}-({\sigma_{j}^*})^2  \right)^2 & \lesssim \frac{s^4\log^2(p/s)}{n^2} + |S(\hat{z})| \frac{s\log(p/s)}{n} \\
    & \lesssim \frac{s^4\log^2(p/s)}{n^2} +  \frac{s^2\log(p/s)}{n} \\
    & \lesssim  \frac{s^2\log(p/s)}{n} 
\end{align*}
where the second inequality is by the fact $|S(\hat{z})|\leq s$ and the last inequality follows by observing that $s^4\log^2(p/s)/n^2\lesssim s^2\log(p/s)/n$ as $n\gtrsim s^2\log(p/s)$.
\end{proof}

\begin{proof}[\bf Proof of Theorem \ref{Bestconst}]
Part i) We prove this part by contradiction. Suppose there exists $j_0\in S(\tilde{u})$ such that $j_0\notin S(\hat{z})$. In this case, $\hat{\beta}_{j_0,1},\cdots,\hat{\beta}_{j_0,p}=0$ while $|\tilde{\beta}_{j_0,i}|\geq \widetilde{\beta_{\min}}$ for $i\in S(\tilde{u})$, $i\neq j$. Therefore by  
Assumption~\ref{beta_min_assum}:
\begin{align}
   c_{\min}^2 \frac{s^2\log p}{n}& \leq  s\widetilde{\beta_{\min}}^2 \nonumber \\
    & \stackrel{(a)}{\lesssim}\sum_{\substack{i\in S(\tilde{u}) \\ i\neq j_0 }}(\tilde{\beta}_{j_0,i})^2 \nonumber\\
    & = \sum_{\substack{i\in S(\tilde{u}) \\ i\neq j_0 }}(\tilde{\beta}_{j_0,i}-\beta^*_{j_0,i}+\beta^*_{j_0,i})^2 \nonumber\\
    & \lesssim \sum_{\substack{i\in S(\tilde{u}) \\ i\neq j_0 }}(\beta^*_{j_0,i})^2+\sum_{\substack{i\in S(\tilde{u}) \\ i\neq j_0 }}(\tilde{\beta}_{j_0,i}-\beta^*_{j_0,i})^2 \nonumber \\
    &  \stackrel{(b)}{=}\sum_{\substack{i\in S(\tilde{u}) \\ i\neq j_0 }}(\beta^*_{j_0,i}-\hat{\beta}_{j_0,i})^2+\sum_{\substack{i\in S(\tilde{u}) \\ i\neq j_0 }}(\tilde{\beta}_{j_0,i}-\beta^*_{j_0,i})^2 \nonumber\\ 
    & \leq \sum_{j=1}^p\sum_{i:i\neq j}(\beta^*_{j,i}-\hat{\beta}_{j,i})^2+\sum_{j=1}^p\sum_{i:i\neq j}(\tilde{\beta}_{j_0,i}-\beta^*_{j_0,i})^2 \nonumber\\
    & \stackrel{(c)}{\lesssim} \frac{s^2\log(p/s)}{n} + s\sum_{j=1}^p \sum_{i:i\neq j}|\tilde{\beta}_{i,j}-\beta^*_{i,j}|+(\sum_{j=1}^p \sum_{i:i\neq j}|\tilde{\beta}_{i,j}-\beta^*_{i,j}|)^2 \nonumber \\
    & \stackrel{(d)}{\lesssim} \frac{s^2\log(p/s)}{n}+ \frac{s^2\log^2(p/s)}{n^2}\nonumber\\
    & \stackrel{(e)}{\lesssim} \frac{s^2\log(p/s)}{n}\label{thm4contrad}
\end{align}
with probability greater than 
$$1-2p^{-8}-2sp\exp(-20s\log (p-1))-2s\exp(-10s\log(p/s))$$
where $(a)$ is true because of the $\widetilde{\beta_{\min}}$ condition in 
Assumption~\ref{beta_min_assum} 
and the fact $|S(\tilde{u})|=s$, $(b)$ is true because for $i\in S(\tilde{u})$, $\hat{\beta}_{j_0,i}=0$, $(c)$ is due to Theorem \ref{reg_thm}, $(d)$ is because of \eqref{sumassum} 
and $(e)$ is due to the fact $n\gtrsim s^2\log p$ is sufficiently large. Note that if $c_{\min}$ is chosen large enough, \eqref{thm4contrad} leads to a contradiction---this shows $S(\tilde{u})\subseteq S(\hat{z})$ and as $|S(\hat{z})|\leq s$, we arrive at $S(\tilde{u})= S(\hat{z})$. \\~\\
\noindent Part ii) This part of proof is on the events: Part 1 of this theorem, Lemma~\ref{thm4techlem2} and Theorem~\ref{reg_thm}. This happens with probability greater than 
\begin{equation}\label{probexp}
\begin{aligned}
     1-10p^{-8}-8sp\exp(-20s\log (p-1))-8s\exp(-10s\log(p/s))\\
     -p\exp(-10s\log(p/2s))-2pe^{-Cs\log(p/s)}.
     \end{aligned}
\end{equation}
By \eqref{sigmahatdef}, $\hat{\sigma}_j=1$ for $j\notin S(\hat{z})$. As a result,  
\begin{align*}
    \|\hat{B}-B^*\|_F^2 & = \sum_{j=1}^p\sum_{i\neq j}(\hat{\beta}_{i,j}-\beta^*_{i,j})^2 + \sum_{j\in S(\hat{z})}\left(  \hat{\sigma}^2_{j}-({\sigma_{j}^*})^2  \right)^2+\sum_{j\notin S(\hat{z})}\left(  1-({\sigma_{j}^*})^2  \right)^2 \\
    & \stackrel{(a)}{=}\sum_{j=1}^p\sum_{i\neq j}(\hat{\beta}_{i,j}-\beta^*_{i,j})^2 + \sum_{j\in S(\hat{z})}\left(  \hat{\sigma}^2_{j}-({\sigma_{j}^*})^2  \right)^2+\sum_{j\notin S(\tilde{u})}\left(  \tilde{\sigma}_j^2-({\sigma_{j}^*})^2  \right)^2 \\
    & \stackrel{(b)}{\lesssim}\sum_{j=1}^p\sum_{i\neq j}(\hat{\beta}_{i,j}-\beta^*_{i,j})^2 + \sum_{j\in S(\hat{z})}\left(  \hat{\sigma}^2_{j}-({\sigma_{j}^*})^2  \right)^2+\frac{s^2\log(p/s)}{n} \\
    & \stackrel{(c)}{\lesssim}\sum_{j=1}^p\sum_{i\neq j}(\hat{\beta}_{i,j}-\beta^*_{i,j})^2 +\frac{s^2\log(p/s)}{n} \\
    & \stackrel{(d)}{\lesssim}\frac{s^2\log(p/s)}{n} + s\sum_{j=1}^p \sum_{i\neq j}|\tilde{\beta}_{i,j}-\beta^*_{i,j}| \\
    & \stackrel{(e)}{\lesssim}\frac{s^2\log(p/s)}{n}
\end{align*}
where $(a)$ is true because by Part 1 of the theorem, $S(\hat{z})=S(\tilde{u})$ and the observation $\tilde{\sigma}_j=1$ if $\tilde{u}_j=0$, $(b)$ is due to \eqref{sigmaassum}, $(c)$ is due to \eqref{thm4ineq201}, $(d)$ is due to Theorem \ref{reg_thm} and $(e)$ is due to \eqref{sumassum}. 
\end{proof}

\section{Proving Theorem \ref{supportthm}}\label{appendix:d}
We first introduce some notation that we will be using in this proof. 

\noindent \textbf{Notation.} In this proof, we use the following notation. For any $S\subseteq [p]$, we let $X_S$ be the submatrix of $X$ with columns indexed by $S$. For $G\in\R^{p\times p}$ and $S_1,S_2\subseteq[p]$, we let $G_{S_1,S_2}$ be the submatrix of $G$ with rows in $S_1$ and columns in $S_2$. We define the operator and max norm of $G\in\R^{p_1\times p_2}$ as
\begin{equation*}
    \begin{aligned}
     \|G\|_{\text{op}}&=\max_{\substack {x\in\R^{p_1}\\x\neq 0}}\frac{\|Gx\|_2}{\|x\|_2}, \\
\|G\|_{\max} &= \max_{(i,j)\in[p_1]\times [p_2]} |G_{i,j}|
    \end{aligned}
\end{equation*}
respectively. We denote the projection matrix onto  the column span of $X_S$ by ${P}_{{X}_{S}}$. Note that if $X_S$ has linearly independent columns, ${P}_{{X}_{S}}={X}_{S}({X}_{S}^T{X}_{S})^{-1}{X}_{S}^T$. In our case, as the data is drawn from a normal distribution with a full-rank covariance matrix, for any $S\subseteq[p]$ with $|S|<n$, $X_S$ has linearly independent columns with probability one. The solution to the least squares problem with the support restricted to $S$,
\begin{equation}\label{limitedls}
    \min_{\beta_{S^c}=0}\frac{1}{n}\|y-X\beta\|_2^2
\end{equation}
for $y\in\R^n$ and $X\in\R^{n\times p}$ is given by $$\beta_{S}=({X}_{S}^T{X}_{S})^{-1}{X}_{S}^Ty.$$
Consequently, we denote the optimal objective in~\eqref{limitedls} by 
\begin{equation}\label{thm3-proof-cr}
    \CR_{S}({y})= \frac{1}{n}{y}^T({I}_n-{P}_{{X}_{S}}){y}.
\end{equation}
We let $\hat{G}=X^TX/n$ be the sample covariance. Throughout this proof, we use the notation $\CS^*_j= S(u^*)\setminus \{j\}$ for $j\in S(u^*)$ and $\CS_j^*=\emptyset$ otherwise. Finally, for $S_1,S_2\subseteq [p]$, $G\in\R^{p\times p}$ positive definite and $S_0=S_2\setminus S_1$, we let
\begin{equation}
    \schur{G}{S_1}{S_2} = G_{S_0,S_0}- G_{S_0,S_1}G^{-1}_{S_1,S_1}G_{S_1,S_0}.
\end{equation}
Note that $\schur{G}{S_1}{S_2}$ is the Schur complement of the matrix
\begin{equation}\label{gs1s2}
  G(S_1,S_2)=  \begin{bmatrix}
G_{S_1,S_1} & G_{S_1,S_0} \\
G_{S_0,S_1} & G_{S_0,S_0}
\end{bmatrix}.
\end{equation}
Let us define the following events for $j\in[p]$ and $\CS\subseteq[p]$:
\begin{equation}\label{supevents}
    \begin{aligned}
    \mathcal{E}_1(j) &= \left\{  {({\beta}^*_{\CS^*_j,j})}^T \hat{G}_{\CS^*_j,\CS^*_j} {\beta}^*_{\CS^*_j,j} \geq 0.8\eta\frac{s\log p}{n}\right\}\\
       \mathcal{E}_2(j,\CS) &= \left\{ {({\beta}^*_{\CS^0_j,j})}^T (\schur{\hat{G}}{\CS}{\CS_j^*}) {\beta}^*_{\CS^0_j,j} \geq  0.8\eta \frac{|\CS_j^*\setminus\CS|\log p}{n}\right\}\\
       \mathcal{E}_3(j,\CS) &= \left\{  \frac{1}{n}\varepsilon_{:,j}^T ({I}_n-{P}_{{X}_{\CS}}){X}_{\CS^0_j}{\beta}^*_{\CS^0_j,j}\geq -c_{t_1} \sqrt{{({\beta}^*_{\CS^0_j,j})}^T (\schur{\hat{G}}{\CS}{\CS_j^*}) {\beta}^*_{\CS^0_j,j}}\sqrt{\frac{|\CS_j^*\setminus\CS|\log p}{n}}\right\}\\
       \mathcal{E}_4(j,\CS) &= \left\{ -c_{t_2}\frac{s\log p}{n}  \leq \frac{1}{n}\varepsilon_{:,j}^T P_{X_S}\varepsilon_{:,j} \leq c_{t_2}\frac{s\log p}{n}   \right\}\\
       \mathcal{E}_5(j,\CS) &= \left\{ \varepsilon_{:,j}^T ({P}_{{X}_{\CS}}-{P}_{{X}_{\CS_j^*}}) \varepsilon_{:,j}\leq  c_{t_3}|\CS_j^*\setminus\CS|\log p\right\}\\
       \mathcal{E}_6(j) & = \left\{2\left(\frac{\varepsilon_{:,j}^T  X_{\CS_j^*}\beta^*_{\CS_j^*,j}}{\|X_{\CS_j^*}\beta^*_{\CS_j^*,j}\|_2}\right)^2\leq c_{t_4}s\log p\right\}
    \end{aligned}
\end{equation}
where $c_{t_1},c_{t_2},c_{t_3},c_{t_4}$ are some given universal constants and $\CS^0_j = \CS_j^*\setminus\CS$. In the following lemmas, we prove that the above events hold true with high probability.

\subsection{Some technical lemmas} We present a few lemmas that will be useful for the proof of Theorem~\ref{supportthm}.
\begin{lem}\label{suptechlem4}
Let 
\begin{equation*}
    \begin{aligned}
      A & = \bigcap_{j\in S(u^*)}\mathcal{E}_1(j) \\
      B & = \bigcap_{j\in S(u^*)}\bigcap_{\substack{\CS\subseteq[p]\setminus\{j\} \\ |\CS|\leq s }}\mathcal{E}_2(j,\CS).
    \end{aligned}
\end{equation*}
Then, under the assumptions of Theorem \ref{supportthm},
\begin{equation}\label{lem11-part1}
    \p\left(A\right)\geq 1-2s\exp(-10s\log p).
\end{equation}
      Moreover, under condition~\ref{betamincond},
      \begin{equation}\label{lem11-part2}
           \p(B)\geq 1-2s\exp(-10s\log p).
      \end{equation}
\end{lem}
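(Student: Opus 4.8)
The plan is to deduce both~\eqref{lem11-part1} and~\eqref{lem11-part2} from a single high-probability event — that every small principal sub-block of the sample covariance $\hat{G}=X^TX/n$ is uniformly well-conditioned. Concretely, I would work on
\[
\mathcal{E} = \Bigl\{\, \lambda_{\min}\bigl(\hat{G}_{T,T}\bigr)\geq 0.9 \ \text{ for every } T\subseteq[p] \text{ with } |T|\leq 2s \,\Bigr\},
\]
first establishing $\p(\mathcal{E})\geq 1-p^{-8}$, and then showing that on $\mathcal{E}$ the separation condition~\eqref{betamincond} \emph{deterministically} forces every event $\mathcal{E}_1(j)$, $j\in S(u^*)$, and every event $\mathcal{E}_2(j,\CS)$, $j\in S(u^*)$, $\CS\subseteq[p]\setminus\{j\}$, $|\CS|\leq s$. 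Since this gives $\mathcal{E}\subseteq A$ and $\mathcal{E}\subseteq B$, both bounds in the lemma follow immediately.

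For the probability bound, the point is that the rows of $X_T$ are i.i.d.\ $\mathcal{N}(0,G^*_{T,T})$ and $G^*_{T,T}=I_{|T|}+\theta u^*_T(u^*_T)^T\succeq I_{|T|}$ — the same observation used in the proof of Theorem~\ref{reg_thm} — so $\sqrt{\lambda_{\min}(G^*_{T,T})}\geq 1$. Applying the non-asymptotic Gaussian smallest-singular-value bound underlying Lemma~\ref{minnorm} to each $X_T$ with $\underline{\sigma}=1$ and $t=\sqrt{Ms\log p}$ for a large universal constant $M$, and invoking $n\gtrsim s^2\log p$ with the hidden constant taken large enough, gives $\lambda_{\min}(\hat{G}_{T,T})=\tfrac1n\sigma_{\min}(X_T)^2\geq\bigl(1-c_0(\sqrt{2s/n}+\sqrt{Ms\log p/n})\bigr)^2\geq 0.9$ with probability at least $1-\exp(-cMs\log p)$. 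A union bound over the $(2s+1)\binom{p}{2s}$ sets $T$ with $|T|\leq 2s$ (using $\binom{p}{k}\leq(ep/k)^k$) then yields $\p(\mathcal{E})\geq 1-p^{-8}$ once $M$ is taken large enough. Equivalently, this is a union-bounded version of Lemma~\ref{minnorm2} applied with $2s$ in place of $s$ and $\kappa=1$, while tracking the numerical constant in its lower bound.

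Finally I would check that $\mathcal{E}$ forces the events. For $\mathcal{E}_1(j)$: the vector $\beta^*_{\CS_j^*,j}$ is supported on $\CS_j^*=S(u^*)\setminus\{j\}$, which has $s-1$ entries, so~\eqref{betamincond} gives $\|\beta^*_{\CS_j^*,j}\|_2^2\geq(s-1)\beta_{\min}^2>2(s-1)\eta\tfrac{\log p}{n}$; since $|\CS_j^*|\leq 2s$, on $\mathcal{E}$ the Rayleigh bound yields $(\beta^*_{\CS_j^*,j})^T\hat{G}_{\CS_j^*,\CS_j^*}\beta^*_{\CS_j^*,j}\geq\lambda_{\min}(\hat{G}_{\CS_j^*,\CS_j^*})\|\beta^*_{\CS_j^*,j}\|_2^2\geq 0.9\cdot 2(s-1)\eta\tfrac{\log p}{n}\geq 0.8\eta\tfrac{s\log p}{n}$ for $s\geq 2$, i.e.\ $\mathcal{E}_1(j)$ holds. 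For $\mathcal{E}_2(j,\CS)$: put $S_0=\CS_j^0=\CS_j^*\setminus\CS$ and $T=\CS\cup S_0$, so $|T|\leq 2s$ and $\CS,S_0$ are disjoint; the variational description of the Schur complement gives, for $v=\beta^*_{S_0,j}$, that $v^T(\schur{\hat{G}}{\CS}{\CS_j^*})v=\min_{u}w(u)^T\hat{G}_{T,T}w(u)\geq\lambda_{\min}(\hat{G}_{T,T})\|v\|_2^2\geq 0.9\|v\|_2^2$, where $w(u)$ denotes the vector on $T$ equal to $v$ on $S_0$ and to $u$ on $\CS$. Since $S_0\subseteq S(u^*)$, condition~\eqref{betamincond} again gives $\|v\|_2^2\geq|S_0|\beta_{\min}^2>2\eta\tfrac{|S_0|\log p}{n}$, hence $v^T(\schur{\hat{G}}{\CS}{\CS_j^*})v>1.8\eta\tfrac{|\CS_j^*\setminus\CS|\log p}{n}\geq 0.8\eta\tfrac{|\CS_j^*\setminus\CS|\log p}{n}$, i.e.\ $\mathcal{E}_2(j,\CS)$ holds (the degenerate case $S_0=\emptyset$ being trivial, with both sides equal to $0$). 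The main technical obstacle is the first step — obtaining the \emph{uniform} lower bound $0.9$ on the smallest eigenvalues of all $\leq 2s$-dimensional blocks of $\hat{G}$ while keeping the failure probability below $p^{-8}$; this is exactly where the hypothesis $n\gtrsim s^2\log p$ (with a sufficiently large universal constant) is used, and it is the step requiring the most care with constants. Everything downstream of $\mathcal{E}$ is a purely deterministic consequence of~\eqref{betamincond} together with $G^*_{T,T}\succeq I$.
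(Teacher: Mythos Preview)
Your argument is correct and proceeds along a somewhat different line than the paper's. The paper establishes a single high-probability event too, but it is the \emph{entrywise} concentration $\mathcal{E}_0=\{\|\hat G-G^*\|_{\max}\lesssim\sqrt{\log p/n}\}$ (via Lemma~\ref{bernlem} plus a union bound over $p^2$ entries). From there it passes to operator norm on any $\leq 2s$-block by the crude inequality $\|M\|_{\text{op}}\leq |\CS|\,\|M\|_{\max}$, which yields $\|\hat G_{\CS,\CS}-G^*_{\CS,\CS}\|_{\text{op}}\lesssim s\sqrt{\log p/n}\leq 0.2$ precisely because $n\gtrsim s^2\log p$; Weyl then gives $\lambda_{\min}\geq 0.8$. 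For the Schur step the paper invokes the eigenvalue interlacing property $\lambda_{\min}(\schur{\hat G}{\CS}{\CS_j^*})\geq \lambda_{\min}(\hat G(\CS,\CS_j^*))$ from \citet{zhang2006schur}, whereas you use the variational characterization of the Schur complement directly.

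Your route—applying the Gaussian smallest-singular-value bound (Lemma~\ref{minnorm}/Lemma~\ref{minnorm2}) to each $\leq 2s$-block and union bounding—actually requires only $n\gtrsim s\log p$, not $n\gtrsim s^2\log p$ as you state; the latter is what the paper needs because of the lossy $\|\cdot\|_{\text{op}}\leq |\CS|\,\|\cdot\|_{\max}$ step. So your argument is slightly sharper in sample size (though the theorem hypothesis supplies the stronger condition anyway), while the paper's argument has the advantage that a single $\|\hat G-G^*\|_{\max}$ event controls all blocks at once without a combinatorial union bound over subsets. Both the Schur arguments are standard and equivalent. One small omission: your $\mathcal{E}_1(j)$ computation uses $0.9\cdot 2(s-1)\geq 0.8s$, which needs $s\geq 2$; the case $s=1$ is degenerate ($\CS_j^*=\emptyset$) and should be noted separately.
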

\begin{proof}

Let the event $\mathcal{E}_0(\CS)$ for $\CS\subseteq[p],|\CS|\leq 2s$ be defined as (recall, $\hat{G} = X^TX/n$):
$$\mathcal{E}_0(\CS)=\left\{\left\|\hat{G}_{\CS,\CS}-G^*_{\CS,\CS}\right\|_{\text{op}} \lesssim \|G^*_{\CS,\CS}\|_{\text{op}}\sqrt{\frac{s\log p}{n}}\leq c_b \sqrt{\frac{s\log p}{n}}\right\}$$
for some universal constant $c_b>0$ (note that $\|G^*_{\CS,\CS}\|_{\text{op}}\leq 2$).
As $n\gtrsim s\log p$, one has (for example, by Theorem~5.7 of~\citet{rigollet2015high} with $\delta=\exp(-12s\log p)$)
$$\p(\mathcal{E}_0(\CS))\geq 1-\exp(-12s\log p).$$
Let 
\begin{equation}\label{new-e0}
  \mathcal{E}_0 = \bigcap_{\substack{\CS\subseteq[p]\\ |\CS|\leq 2s}}\mathcal{E}_0(\CS).  
\end{equation}
Then, by union bound,
\begin{align}
    \p\left(\mathcal{E}_0\right)&\geq 1-  \sum_{\substack{\CS\subseteq[p]\\ |\CS|\leq 2s}}\p\left(\mathcal{E}_0(\CS)\right)\nonumber \\
    & \geq 1 - \sum_{\substack{\CS\subseteq[p]\\ |\CS|\leq 2s}}\exp(-12s\log p)  \nonumber \\ & \geq 1- 2sp^{2s}\exp(-12s\log p) = 1-2s\exp(-10s\log p).
\end{align}
Rest of the proof is on the event $\mathcal{E}_0$ defined in~\eqref{new-e0}. 
 As a result, for any $\CS\subseteq[p]$ with $|\CS|\leq s$,
\begin{align}
     \|\hat{G}_{\CS,\CS}-G^*_{\CS,\CS}\|_{\text{op}} &  \leq c_b \sqrt{\frac{s\log p}{n}} =\pi\label{stewineq1}
\end{align}
where $\pi=c_b\sqrt{s\log p/n}$. For $n\gtrsim s\log p$ that is sufficiently large, we have $\pi\leq 0.2$. Hence, by Weyl's inequality, 
\begin{align*}
    \lambda_{\min}(\hat{G}_{\CS,\CS}) 
    &\geq  \lambda_{\min}({G}^*_{\CS,\CS}) - \|\hat{G}_{\CS,\CS}-{G}^*_{\CS,\CS}\|_{\text{op}}  \geq 1 - \pi > 0.8.
\end{align*}
As a result, 
\begin{align*}
    {({\beta}^*_{\CS^*_j,j})}^T \hat{G}_{\CS^*_j,\CS^*_j} {\beta}^*_{\CS^*_j,j} \geq \lambda_{\min}(\hat{G}_{\CS^*_j,\CS^*_j})\|{\beta}^*_{\CS^*_j,j}\|_2^2 \stackrel{(a)}{\geq}  0.8\eta\frac{s\log p}{n}
\end{align*}
where $(a)$ follows from either condition~\ref{betamincond} or condition~\ref{betamincond-extend}. This proves the inequality in display~\eqref{lem11-part1}. \\
Similar to \eqref{stewineq1}, for any $j\in S(u^*)$ and $\CS\subseteq[p]\setminus \{j\}$ with $|\CS|\leq s$,
\begin{align}
     \|\hat{G}(\CS,\CS_j^*)-G^*(\CS,\CS_j^*)\|_{\text{op}} \leq\pi\label{stewineq2}
\end{align}
as $|\CS\cup \CS_j^*|\leq 2s$.
One has 
\begin{align}
    \lambda_{\min}(\schur{\hat{G}}{\CS}{\CS_j^*}) & \stackrel{(a)}{\geq} \lambda_{\min}({\hat{G}}({\CS},{\CS_j^*})) 
    &\stackrel{(b)}{\geq}  \lambda_{\min}({{G}^*}({\CS},{\CS_j^*})) - \|{\hat{G}}({\CS},{\CS_j^*}) - {{G}^*}({\CS},{\CS_j^*})\|_{\text{op}} \nonumber \\  & \geq 1 - \pi > 0.8 \label{newhelperD11}
\end{align}
where $(a)$ is by Corollary 2.3 of \citep{zhang2006schur} and $(b)$ is due to Weyl's inequality. Finally,
\begin{align*}
    {({\beta}^*_{\CS^0_j,j})}^T (\schur{\hat{G}}{\CS}{\CS_j^*}) {\beta}^*_{\CS^0_j,j} \geq \lambda_{\min}(\schur{\hat{G}}{\CS}{\CS_j^*})\|{\beta}^*_{\CS^0_j,j}\|_2^2 \stackrel{(a)}{\geq} 0.8\beta_{\min}^2 |\CS^0_j|\stackrel{(b)}{\geq} 0.8\eta \frac{|\CS^0_j|\log p}{n}.
\end{align*}
where $(a)$ is achieved by~\eqref{newhelperD11} and Condition~\ref{betamincond} in Theorem~\ref{supportthm} and $(b)$ is achieved by substituting $\beta_{\min}$ from Condition~\ref{betamincond}.
This completes the proof of~\eqref{lem11-part2}.
\end{proof}

\begin{lem}\label{suptechlem1}
Let us define the event 
$$C = \bigcap_{j\in S(u^*)}\bigcap_{\substack{ \CS\subseteq[p]\setminus\{j\} \\ |\CS|=s-1 }}\mathcal{E}_3(j,\CS). $$
Then, $\p(C) \geq 1-sp^{-7}$.
\end{lem}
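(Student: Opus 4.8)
The plan is to condition on $\mathbb{X}^{(j)}$ (the columns of $X$ other than the $j$-th one) and exploit the Gaussianity of $\varepsilon_{:,j}$. Fix $j\in S(u^*)$ and $\CS\subseteq[p]\setminus\{j\}$ with $|\CS|=s-1$, and set $k=|\CS^0_j|$ with $\CS^0_j=\CS_j^*\setminus\CS$. If $k=0$ then $\CS=\CS_j^*$, $X_{\CS^0_j}\beta^*_{\CS^0_j,j}=0$, and $\mathcal{E}_3(j,\CS)$ holds trivially; so assume $k\geq1$. Since $|\CS|=s-1<n$, the columns of $X_\CS$ are a.s.\ linearly independent, so $P_{X_\CS}$ is well defined. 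By Lemma~\ref{lem_reg_ext}, $\varepsilon_{:,j}$ is independent of $\mathbb{X}^{(j)}$ and $\varepsilon_{:,j}\sim\cN(0,(\sigma_j^*)^2I_n)$; hence, conditionally on $\mathbb{X}^{(j)}$, the scalar
\[
Z:=\frac{1}{n}\,\varepsilon_{:,j}^T(I_n-P_{X_\CS})X_{\CS^0_j}\beta^*_{\CS^0_j,j}
\]
is $\cN(0,\tau^2)$ with $\tau^2=\frac{(\sigma_j^*)^2}{n^2}\bigl\|(I_n-P_{X_\CS})X_{\CS^0_j}\beta^*_{\CS^0_j,j}\bigr\|_2^2$.

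The main algebraic step is the identity
\[
\frac{1}{n}\bigl\|(I_n-P_{X_\CS})X_{\CS^0_j}\beta^*_{\CS^0_j,j}\bigr\|_2^2=\bigl(\beta^*_{\CS^0_j,j}\bigr)^T\bigl(\schur{\hat{G}}{\CS}{\CS_j^*}\bigr)\beta^*_{\CS^0_j,j},
\]
obtained by expanding $\|(I_n-P_{X_\CS})v\|_2^2=v^Tv-v^TX_\CS(X_\CS^TX_\CS)^{-1}X_\CS^Tv$ with $v=X_{\CS^0_j}\beta^*_{\CS^0_j,j}$, substituting $X_\CS^TX_\CS=n\hat{G}_{\CS,\CS}$ and $X_\CS^TX_{\CS^0_j}=n\hat{G}_{\CS,\CS^0_j}$, and recognizing the Schur complement $\schur{\hat{G}}{\CS}{\CS_j^*}=\hat{G}_{\CS^0_j,\CS^0_j}-\hat{G}_{\CS^0_j,\CS}\hat{G}_{\CS,\CS}^{-1}\hat{G}_{\CS,\CS^0_j}$. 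Consequently $\tau^2=\frac{(\sigma_j^*)^2}{n}\bigl(\beta^*_{\CS^0_j,j}\bigr)^T\bigl(\schur{\hat{G}}{\CS}{\CS_j^*}\bigr)\beta^*_{\CS^0_j,j}$, and $(\sigma_j^*)^2\le2$ by~\eqref{beta_lem}. Put $t=c_{t_1}\sqrt{(\beta^*_{\CS^0_j,j})^T(\schur{\hat{G}}{\CS}{\CS_j^*})\beta^*_{\CS^0_j,j}}\sqrt{k\log p/n}$ (if this quadratic form vanishes then $Z\equiv0$ and $\mathcal{E}_3(j,\CS)$ is automatic). Since $t^2/(2\tau^2)=c_{t_1}^2 k\log p/(2(\sigma_j^*)^2)\ge c_{t_1}^2 k\log p/4$ is a deterministic quantity, the Gaussian lower-tail bound gives
\[
\p\bigl(\mathcal{E}_3(j,\CS)^c\bigr)=\E\!\left[\p\bigl(Z<-t\mid\mathbb{X}^{(j)}\bigr)\right]\le\exp\!\Bigl(-\tfrac{t^2}{2\tau^2}\Bigr)\le\exp\!\Bigl(-\tfrac{c_{t_1}^2}{4}k\log p\Bigr).
\]

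Finally I would close with a union bound. For fixed $j$ and $k\in\{1,\dots,s-1\}$, the number of sets $\CS\subseteq[p]\setminus\{j\}$ with $|\CS|=s-1$ and $|\CS^0_j|=k$ equals $\binom{s-1}{k}\binom{p-s}{k}\le p^{2k}$, so
\[
\p\Bigl(\bigcup_{\substack{\CS\subseteq[p]\setminus\{j\}\\ |\CS|=s-1}}\mathcal{E}_3(j,\CS)^c\Bigr)\le\sum_{k=1}^{s-1}p^{2k}\exp\!\Bigl(-\tfrac{c_{t_1}^2}{4}k\log p\Bigr)\le p^{-7},
\]
once $c_{t_1}$ is taken to be a sufficiently large universal constant (for instance $c_{t_1}^2\ge44$ bounds the sum by $\sum_{k\ge1}p^{-9k}\le p^{-7}$ for $p\ge2$). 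Summing over the $s$ indices $j\in S(u^*)$ yields $\p(C^c)\le sp^{-7}$, i.e.\ $\p(C)\ge1-sp^{-7}$. The only mildly delicate point is the variance identity relating the projection residual $\frac1n\|(I_n-P_{X_\CS})X_{\CS^0_j}\beta^*_{\CS^0_j,j}\|_2^2$ to the sample Schur complement $\schur{\hat{G}}{\CS}{\CS_j^*}$; after that the argument is a routine conditional-Gaussian tail estimate followed by a counting union bound, with $c_{t_1}$ absorbing the combinatorial factor.
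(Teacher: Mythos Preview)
Your proof is correct and follows essentially the same route as the paper: condition on $\mathbb{X}^{(j)}$, use Gaussianity of $\varepsilon_{:,j}$ for a tail bound, identify $\frac{1}{n}\|(I_n-P_{X_\CS})X_{\CS^0_j}\beta^*_{\CS^0_j,j}\|_2^2$ with the sample Schur complement, and close with a union bound stratified by $k=|\CS^0_j|$. The one minor difference is that the paper bounds $\varepsilon_{:,j}^T\gamma^{(j,\CS)}/\|\gamma^{(j,\CS)}\|_2$ via a covering argument (Lemma~\ref{maximal}), picking up an extraneous $t\log 5$ term, whereas you observe directly that this is a conditionally one-dimensional Gaussian and apply the scalar tail bound $\p(Z<-t)\le\exp(-t^2/(2\tau^2))$; your version is slightly cleaner but otherwise identical in spirit and yields the same conclusion.
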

\begin{proof}
First, fix  $j\in S(u^*)$ and $\CS\subseteq [p]\setminus \{j\}$ such that $|\CS|=|\CS_j^*|=s-1$. Let $\CS_j^0=\CS_j^*\setminus \CS$ and $t=|\CS_j^0|$. Moreover, let 
\begin{equation*}
    {\gamma}^{(j,\CS)}= ({I}_n-{P}_{{X}_{\CS}}){X}_{\CS^0_j}{\beta}^*_{\CS^0_j,j}.
\end{equation*}
Note that ${\gamma}^{(j,\CS)}$ is in the column span of $ ({I}_n-{P}_{{X}_{\CS}}){X}_{\CS^0_j}$ and the matrix $ ({I}_n-{P}_{{X}_{\CS}}){X}_{\CS^0_j}$ has rank at most $t$ as $X_{\CS_j^0}$ has $t$ columns. In addition, by Lemma \ref{lem_reg_ext}, $\varepsilon_{:,j}$ and ${X}_{\CS\cup \CS_j^*}$ are independent. By following an argument similar to~\eqref{techlemineq1} (considering an orthonormal basis for the column span of $({I}_n-{P}_{{X}_{\CS}}){X}_{\CS^0_j}$), for $x>0$, we have:
\begin{align}
    \p\left(\frac{\varepsilon_{:,j}^T {\gamma}^{(j,\CS)}}{\|{\gamma}^{(j,\CS)}\|_2}< -x\right) \leq  \exp\left(-\frac{x^2}{8({\sigma_j^*})^2}+t\log 5\right).\label{suplem1ineq1}
\end{align}
Let 
$$x^2 = {8\xi^2 ({\sigma_j^*})^2 t\log p }$$
for $\xi^2>10+{\log 5}$. As a result, by \eqref{suplem1ineq1}, 
$$\p\left(\frac{\varepsilon_{:,j}^T {\gamma}^{(j,\CS)}}{n}<- \sqrt{8}\xi\sigma_j^*\|n^{-1/2}{\gamma}^{(j,\CS)}\|_2\sqrt{\frac{t\log p}{n}}\right) \leq \exp(-10t\log p).$$
Next, note that by the definition of $\gamma^{(j,\CS)}$,
\begin{align*}
    \|n^{-1/2}{\gamma}^{(j,\CS)}\|_2^2 & = \frac{1}{n}{({\beta}^*_{\CS^0_j,j})}^T X^T_{\CS_j^0}({I}_n-{P}_{{X}_{\CS}}){X}_{\CS^0_j} {\beta}^*_{\CS^0_j,j} \\
    & =  \frac{1}{n}{({\beta}^*_{\CS^0_j,j})}^T\left(X^T_{\CS_j^0}X_{\CS_j^0}-X^T_{\CS_j^0}X_{\CS}(X_{\CS}^TX_{\CS})^{-1}X_{\CS}^T X_{\CS_j^0}\right) {\beta}^*_{\CS^0_j,j} \\
    & = \frac{1}{n}{({\beta}^*_{\CS^0_j,j})}^T (\schur{n\hat{G}}{\CS}{\CS_j^*}) {\beta}^*_{\CS^0_j,j} \\
    & ={({\beta}^*_{\CS^0_j,j})}^T (\schur{\hat{G}}{\CS}{\CS_j^*}) {\beta}^*_{\CS^0_j,j}
\end{align*}
where the second equality follows from the definition of $P_{X_{\CS}}$ and the third equality follows from the definition of $\schur{\hat{G}}{\CS}{\CS_j^*}$. This completes the proof for a fixed $\CS$. Finally, we use union bound over all possible choices of $\CS$, $t$ and $j$. As a result, $1-\p(C)$ is bounded from above as
\begin{align*}
    p\sum_{t=1}^s {p-s \choose t}{s-1 \choose t}\exp\left(-10t\log p\right) & \leq  p\sum_{t=1}^s p^{2t}\exp\left(-10t\log p\right)  \leq p\sum_{t=1}^s \exp\left(-8\log p\right)= sp^{-7}.
\end{align*}
\end{proof}

\begin{lem}\label{suptechlem2}
Let 
$$D = \bigcap_{j\in [p]}\bigcap_{\substack{\CS\subseteq[p]\setminus \{j\} \\|\CS|=s-1  }}\mathcal{E}_4(j,\CS).$$
Then, $\p\left(D\right) \geq 1-2sp^{-7}.$
\end{lem}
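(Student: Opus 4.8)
The plan is to reduce everything to a single pair $(j,\CS)$, identify the exact law of the quadratic form $\varepsilon_{:,j}^T P_{X_{\CS}}\varepsilon_{:,j}$, control its upper tail by a chi-squared bound, and then take a union bound over all pairs.

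First I would fix $j\in[p]$ and $\CS\subseteq[p]\setminus\{j\}$ with $|\CS|=s-1$. By Lemma~\ref{lem_reg_ext}, $\varepsilon_{:,j}\sim\mathcal{N}(0,(\sigma_j^*)^2 I_n)$ and $\varepsilon_{:,j}$ is independent of $\{X_{:,i}\}_{i\neq j}$, hence of $X_{\CS}$. Since $n\gtrsim s^2\log p>s-1$ and the rows of $X$ are iid draws from a Gaussian with positive definite covariance (so any collection of at most $n$ columns of $X$ is a.s.\ linearly independent), $X_{\CS}$ has full column rank $s-1$ almost surely, and thus $P_{X_{\CS}}$ is an orthogonal projector of rank $s-1$. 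Conditioning on $X_{\CS}$ and writing $P_{X_{\CS}}=QQ^T$ with $Q\in\R^{n\times(s-1)}$ having orthonormal columns, we get $\varepsilon_{:,j}^T P_{X_{\CS}}\varepsilon_{:,j}=\|Q^T\varepsilon_{:,j}\|_2^2$, where $Q^T\varepsilon_{:,j}\mid X_{\CS}\sim\mathcal{N}(0,(\sigma_j^*)^2 I_{s-1})$; as this conditional law does not depend on $X_{\CS}$, unconditionally $\varepsilon_{:,j}^T P_{X_{\CS}}\varepsilon_{:,j}/(\sigma_j^*)^2\sim\chi^2_{s-1}$. (Equivalently, one may bound $\|P_{X_{\CS}}\varepsilon_{:,j}\|_2$ by applying Lemma~\ref{maximal} on the unit ball of the $(s-1)$-dimensional column span of $X_{\CS}$, which yields the same rate.)

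Next I would dispatch the two inequalities defining $\mathcal{E}_4(j,\CS)$. The left one holds deterministically, since $\varepsilon_{:,j}^T P_{X_{\CS}}\varepsilon_{:,j}\geq 0>-c_{t_2}s\log p$. For the right one, recall from~\eqref{beta_lem} that $(\sigma_j^*)^2\leq 2$, so it suffices to bound the upper tail of a $\chi^2_{s-1}$ variable: using $\p\big(\chi^2_{s-1}-(s-1)\geq t\big)\leq\exp\big(-\tfrac18\min(t^2/(s-1),\,t)\big)$ with $t\asymp s\log p$ and $s-1<s\leq s\log p$, one obtains, for a sufficiently large constant $c_{t_2}>0$, that $\p\big(\tfrac1n\varepsilon_{:,j}^T P_{X_{\CS}}\varepsilon_{:,j}>c_{t_2}\tfrac{s\log p}{n}\big)\leq\exp(-9s\log p)$ --- the precise choice of $c_{t_2}$ is exactly what this estimate dictates.

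Finally I would take a union bound over the at most $p\binom{p}{s-1}\leq p^{s}$ pairs $(j,\CS)$, which gives $\p(D^c)\leq p^{s}\exp(-9s\log p)=p^{-8s}\leq p^{-8}\leq 2sp^{-7}$, as claimed. The main (and essentially only) subtlety is the first step: because $P_{X_{\CS}}$ is itself random, one must invoke the independence in Lemma~\ref{lem_reg_ext} to legitimately condition on $X_{\CS}$ and conclude the exact $\chi^2_{s-1}$ distribution; once that is in place, the rest is a routine chi-squared tail bound together with counting, and I do not anticipate a genuine obstacle.
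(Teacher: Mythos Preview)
Your proposal is correct and essentially parallel to the paper's proof, with two pleasant simplifications. The paper also fixes $(j,\CS)$, uses the independence of $\varepsilon_{:,j}$ and $X_{\CS}$ from Lemma~\ref{lem_reg_ext}, and then applies the Hanson--Wright inequality to the quadratic form $\varepsilon_{:,j}^T P_{X_{\CS}}\varepsilon_{:,j}$ (after recording that $P_{X_{\CS}}$ has operator norm $1$, Frobenius norm and trace at most $s$), obtaining a two-sided deviation bound with probability $\geq 1-2\exp(-10s\log p)$, and then finishes with the same union bound as in Lemma~\ref{suptechlem1}. By contrast, you go one step further and identify the \emph{exact} conditional law $\varepsilon_{:,j}^T P_{X_{\CS}}\varepsilon_{:,j}/(\sigma_j^*)^2\sim\chi^2_{s-1}$, which lets you replace Hanson--Wright by a standard chi-squared tail bound; and you observe that the lower inequality in $\mathcal{E}_4(j,\CS)$ is automatic because the quadratic form is nonnegative. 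Both shortcuts are legitimate and make the argument slightly more elementary, while the paper's Hanson--Wright route is more in keeping with the tools used in neighboring lemmas (e.g.\ Lemma~\ref{suptechlem3}, where the difference of projectors is not positive semidefinite and an exact chi-squared identification is unavailable).
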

\begin{proof}
First, fix  $S\subseteq[p]\setminus\{j\}$ such that $|S|=s-1$. Note that $\|P_{X_S}\|_{\text{op}}=1$, $\|P_{X_S}\|_F\leq s$ and $\tr(P_{X_S})\leq s$ as $P_{X_S}$ is a projection matrix. In addition, $P_{X_S}$ and $\varepsilon_{:,j}$ are independent by Lemma~\ref{lem_reg_ext} as $j\notin S$. Therefore, 
\begin{equation}\label{expectPX}
    \E[\varepsilon_{:,j}^T{P}_{X_S}\varepsilon_{:,j}]=\E[\varepsilon_{:,j}^T{P}_{X_S}\varepsilon_{:,j}|P_{X_S}]=\E[\tr({P}_{X_S}\varepsilon_{:,j}\varepsilon_{:,j}^T)|P_{X_S}]\leq s({\sigma_j^*})^2.
\end{equation}
By Hanson-Wright inequality (see Theorem 1.1 of \cite{10.1214/ECP.v18-2865} and the calculations leading to (6.7) of \cite{fan2020best}), for $x>0$, the following
\begin{align}\label{expectPX111}
   \E[\varepsilon_{:,j}^T{P}_{X_S}\varepsilon_{:,j}] - ({\sigma_j^*})^2x \leq \varepsilon_{:,j}^T P_{X_S}\varepsilon_{:,j} \leq   \E[\varepsilon_{:,j}^T{P}_{X_S}\varepsilon_{:,j}] + ({\sigma_j^*})^2x 
\end{align}
holds with probability greater than $1-2\exp(-c\min(x,x^2/s))$ for some $c>0$. Note that by~\eqref{expectPX}, the event in~\eqref{expectPX111} implies
$$    -({\sigma_j^*})^2x  \leq \varepsilon_{:,j}^T P_{X_S}\varepsilon_{:,j} \leq s({\sigma_j^*})^2 + ({\sigma_j^*})^2x.$$
Take $x=\xi s\log p$ for $\xi$ that is sufficiently large. As $\sigma^*_j\leq \sqrt{2}$ by Lemma~\ref{Ginv}, this leads to
$$-\frac{s\log p}{n} \lesssim\frac{1}{n} \varepsilon_{:,j}^T P_{X_S}\varepsilon_{:,j} \lesssim \frac{s\log p}{n} $$
with probability greater than $1-2\exp(-10s\log p)$, which completes the proof for a fixed $\CS$. Similar to Lemma \ref{suptechlem1}, we use union bound to achieve the desired result.
\end{proof}

\begin{lem}\label{suptechlem3}
Let 
$$F=\bigcap_{j\in S(u^*)}\bigcap_{\substack{ \CS\subseteq[p]\setminus\{j\} \\ |\CS|=s-1 }}\mathcal{E}_5(j,\CS).$$
Then, $\p\left( F\right) \geq 1-4sp^{-7}.$
\end{lem}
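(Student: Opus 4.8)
The plan is to control the quadratic form in the \emph{difference} of two projections by a quadratic form in a \emph{single} projection of rank exactly $|\CS_j^*\setminus\CS|$, then invoke a conditional $\chi^2$ tail bound, and finally union bound over $j$ and $\CS$. Fix $j\in S(u^*)$ and $\CS\subseteq[p]\setminus\{j\}$ with $|\CS|=s-1$, and set $t=|\CS_j^*\setminus\CS|$; since $\|u^*\|_0=s$ we have $|\CS_j^*|=s-1$, so $t=|\CS_j^*\setminus\CS|=|\CS\setminus\CS_j^*|$. If $t=0$ then $\CS=\CS_j^*$ and $\mathcal{E}_5(j,\CS)$ holds trivially, so assume $t\ge1$. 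Because the column span of $X_{\CS}$ and that of $X_{\CS_j^*}$ are both contained in the column span of $X_{\CS\cup\CS_j^*}$, the matrix $Q:=P_{X_{\CS\cup\CS_j^*}}-P_{X_{\CS_j^*}}$ is itself an orthogonal projection (onto the orthogonal complement of the column span of $X_{\CS_j^*}$ inside that of $X_{\CS\cup\CS_j^*}$), and since $X$ has columns in general position with probability one and $|\CS\cup\CS_j^*|\le 2(s-1)<n$, its rank equals $|\CS\cup\CS_j^*|-|\CS_j^*|=|\CS\setminus\CS_j^*|=t$. Moreover $P_{X_\CS}\preceq P_{X_{\CS\cup\CS_j^*}}$ in the Loewner order, hence $P_{X_\CS}-P_{X_{\CS_j^*}}\preceq Q$, which gives $\varepsilon_{:,j}^T(P_{X_\CS}-P_{X_{\CS_j^*}})\varepsilon_{:,j}\le \varepsilon_{:,j}^T Q\,\varepsilon_{:,j}$.

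Next I would use that $\CS\cup\CS_j^*\subseteq[p]\setminus\{j\}$, so by Lemma~\ref{lem_reg_ext} the vector $\varepsilon_{:,j}$ is independent of $Q$, and conditionally on $X_{\CS\cup\CS_j^*}$ it is distributed as $\mathcal{N}(0,(\sigma_j^*)^2 I_n)$; therefore, conditionally, $\varepsilon_{:,j}^T Q\,\varepsilon_{:,j}/(\sigma_j^*)^2\sim\chi^2_t$. Applying the Laurent--Massart bound $\p(\chi^2_t\ge t+2\sqrt{tx}+2x)\le e^{-x}$ with $x=\xi t\log p$ for a sufficiently large universal constant $\xi$, and using $(\sigma_j^*)^2\le 2$ (Lemma~\ref{Ginv}) together with $\log p\ge1$, yields $\varepsilon_{:,j}^T Q\,\varepsilon_{:,j}\le c_{t_3}\, t\log p$ for $c_{t_3}$ a suitable multiple of $\xi$, on an event of conditional probability at least $1-p^{-\xi t}$; since the bound is uniform in the conditioning variable, it holds unconditionally with the same probability. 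Recalling $t=|\CS_j^*\setminus\CS|$, this establishes $\mathcal{E}_5(j,\CS)$ for the fixed pair $(j,\CS)$.

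Finally I would union bound. For fixed $j$, the number of admissible $\CS$ with $|\CS_j^*\setminus\CS|=t$ is $\binom{s-1}{t}\binom{p-s}{t}\le p^{2t}$ (pick $s-1-t$ coordinates inside $\CS_j^*$ and $t$ coordinates in $[p]\setminus S(u^*)$), so
\begin{equation*}
\p(F^c)\;\le\;\sum_{j\in S(u^*)}\sum_{t=1}^{s-1}p^{2t}\,p^{-\xi t}\;\le\; s\sum_{t\ge1}p^{-(\xi-2)t}\;\le\;2s\,p^{-9}\;\le\;4s\,p^{-7}
\end{equation*}
for $\xi\ge 11$ (and $p\ge2$, $s\le p$). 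The step needing the most care is the first one: one should resist trying to control the indefinite matrix $P_{X_\CS}-P_{X_{\CS_j^*}}$ directly (e.g.\ via Hanson--Wright as in the proof of Lemma~\ref{suptechlem2}), and instead dominate it by the rank-$t$ projection $Q$; getting the rank of $Q$ to be exactly $t=|\CS_j^*\setminus\CS|$ (rather than $\approx s$) is what keeps the union-bound cost at $p^{O(t)}$, which is essential for the geometric sum to close at the claimed $4sp^{-7}$ rate.
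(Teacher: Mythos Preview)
Your proof is correct and shares the paper's overall skeleton---reduce the indefinite difference of projections to quadratic forms in projections of rank $t=|\CS_j^*\setminus\CS|$, apply a Gaussian tail bound conditionally on $X$ (using that $\varepsilon_{:,j}$ is independent of the relevant columns), and union bound over $(j,\CS)$ stratified by $t$---but the reduction step is genuinely different. The paper writes $P_{X_\CS}-P_{X_{\CS_j^*}}=P_{\mathcal U}-P_{\mathcal V}$, where $\mathcal U,\mathcal V$ are the orthogonal complements of the common subspace $\mathcal W=\operatorname{span}(X_{\CS\cap\CS_j^*})$ inside the spans of $X_\CS$ and $X_{\CS_j^*}$ respectively (each of dimension $\le t$), and then controls $\varepsilon_{:,j}^T P_{\mathcal U}\varepsilon_{:,j}$ from above and $\varepsilon_{:,j}^T P_{\mathcal V}\varepsilon_{:,j}$ from below via two separate applications of Hanson--Wright; this is where the factor $4$ in $4sp^{-7}$ comes from. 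You instead use the one-line Loewner domination $P_{X_\CS}-P_{X_{\CS_j^*}}\preceq P_{X_{\CS\cup\CS_j^*}}-P_{X_{\CS_j^*}}=:Q$, with $Q$ a \emph{single} rank-$t$ projection, and invoke a one-sided $\chi^2_t$ tail directly. Your route is slightly cleaner (one concentration inequality instead of two, and an exact $\chi^2$ law rather than Hanson--Wright), while the paper's symmetric decomposition would also yield two-sided control of the difference if that were needed elsewhere. Your closing caution about ``resisting Hanson--Wright'' is therefore a bit off: the paper \emph{does} use Hanson--Wright here, but on rank-$t$ pieces obtained via the intersection decomposition, which is what keeps the union bound at $p^{O(t)}$ just as your domination by $Q$ does.
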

\begin{proof}
The proof is similar to the proof of (6.7) in \citep{fan2020best}. However, we recall that the work of \citep{fan2020best} considers a fixed design $X$, while here, we deal with random design. 

First, fix  $j\in S(u^*)$ and $\CS\subseteq [p]\setminus \{j\}$ such that $|\CS|=|\CS_j^*|=s-1$ and $t=|\CS_j^*\setminus \CS|$. Let $\mathcal{W}$ be the column span of ${X}_{\CS\cap\CS_j^*}$. Moreover, let $\mathcal{U},\mathcal{V}$ be orthogonal complements of $\mathcal{W}$ as subspaces of column spans of ${X}_{\CS}$ and ${X}_{\CS_j^*}$, respectively. Let ${P}_{\mathcal{U}},{P}_{\mathcal{V}},P_{\mathcal{W}}$ be projection matrices onto $\mathcal{U},\mathcal{V},\mathcal{W}$, respectively. With this notation in place, one has
\begin{align*}
    \varepsilon_{:,j}^T ({P}_{{X}_{\CS}}-{P}_{{X}_{\CS_j^*}}) \varepsilon_{:,j} & = \varepsilon_{:,j}^T \left(( {P}_{\mathcal{U}}+ {P}_{\mathcal{W}})-( {P}_{\mathcal{V}}+ {P}_{\mathcal{W}})\right) \varepsilon_{:,j} \\
    & = \varepsilon_{:,j}^T ( {P}_{\mathcal{U}}-{P}_{\mathcal{V}} ) \varepsilon_{:,j}.
\end{align*}
Note that $\dim(\mathcal{U}),\dim(\mathcal{V})\leq t$. In addition, by Lemma~\ref{lem_reg_ext}, $\varepsilon_{:,j}$ and $X_{\CS\cup\CS_j^*}$ are independent. As a result, 
$$\E[\varepsilon_{:,j}^T{P}_{\mathcal{U}}\varepsilon_{:,j}]=\E[\varepsilon_{:,j}^T{P}_{\mathcal{U}}\varepsilon_{:,j}|X_{\CS\cup\CS_j^*}]\leq  t({\sigma_j^*})^2 ~~\text{and}~~\E[\varepsilon_{:,j}^T{P}_{\mathcal{V}}\varepsilon_{:,j}]=\E[\varepsilon_{:,j}^T{P}_{\mathcal{V}}\varepsilon_{:,j}|X_{\CS\cup\CS_j^*}]\leq  t({\sigma_j^*})^2.$$
Therefore, by Hanson-Wright inequality, for $x>0$, we have:
\begin{align*}
\p \left(    \varepsilon_{:,j}^T{P}_{\mathcal{U}}\varepsilon_{:,j} \leq t({\sigma_j^*})^2 + ({\sigma_j^*})^2x,~~~\varepsilon_{:,j}^T{P}_{\mathcal{V}}\varepsilon_{:,j} \geq - ({\sigma_j^*})^2x \right) \geq 1-4\exp(-c \min(x,x^2/t)).
\end{align*} 
Take $x=\xi t\log p$ for $\xi$ that is sufficiently large so we have 
$$\p \left(  \varepsilon_{:,j}^T ({P}_{{X}_{\CS}}-{P}_{{X}_{\CS_j^*}}) \varepsilon_{:,j}\leq  c_{t_3}t\log p \right) \geq 1-4\exp(-10t \log p).$$
Then, similar to Lemma \ref{suptechlem1}, we use union bound to complete the proof.
\end{proof}

\begin{lem}\label{suptechlem5}
Let us define the event
$$H=\bigcap_{j\in S(u^*)} \mathcal{E}_6(j).$$
One has
\begin{equation}
    \p\left(H\right)\geq 1-2sp^{-7}.
\end{equation}
\end{lem}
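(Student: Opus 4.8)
The plan is to fix $j\in S(u^*)$, control the single Gaussian quadratic form $\bigl(\varepsilon_{:,j}^T w_j\bigr)^2$ with $w_j:=X_{\CS_j^*}\beta^*_{\CS_j^*,j}/\|X_{\CS_j^*}\beta^*_{\CS_j^*,j}\|_2$, and then union bound over the $s$ indices of $S(u^*)$.

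First I would observe that $\CS_j^*=S(u^*)\setminus\{j\}$ does not contain $j$, so $w_j$ is a measurable function of $\{X_{:,i}\}_{i\neq j}$ only. By Lemma~\ref{lem_reg_ext}, $\varepsilon_{:,j}$ is independent of $\{X_{:,i}\}_{i\neq j}$, hence of $w_j$. Conditioning on $\{X_{:,i}\}_{i\neq j}$, the vector $w_j$ is a fixed unit vector and $\varepsilon_{:,j}\sim\mathcal{N}(0,(\sigma_j^*)^2 I_n)$, so $\varepsilon_{:,j}^T w_j \mid \{X_{:,i}\}_{i\neq j}\sim\mathcal{N}(0,(\sigma_j^*)^2)$. (If the denominator $\|X_{\CS_j^*}\beta^*_{\CS_j^*,j}\|_2$ vanishes, the convention $0/0=0$ makes $\mathcal{E}_6(j)$ hold trivially, so we may assume $w_j$ is a genuine unit vector; generically this holds since $X_{\CS_j^*}$ has full column rank almost surely and $\beta^*_{\CS_j^*,j}\neq 0$ when $s\geq 2$.)

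Next, for a centered Gaussian $Z$ with variance $(\sigma_j^*)^2\leq 2$ — the bound $(\sigma_j^*)^2\leq 2$ coming from Lemma~\ref{Ginv}, exactly as used elsewhere in the paper — the elementary Gaussian tail bound gives
$$\p\bigl(2Z^2 > c_{t_4}s\log p\bigr)=\p\Bigl(|Z|>\sqrt{c_{t_4}s\log p/2}\Bigr)\leq 2\exp\Bigl(-\frac{c_{t_4}s\log p}{4(\sigma_j^*)^2}\Bigr)\leq 2\exp\Bigl(-\frac{c_{t_4}s\log p}{8}\Bigr).$$
Applying this conditionally on $\{X_{:,i}\}_{i\neq j}$ and taking expectation over $\{X_{:,i}\}_{i\neq j}$ yields $\p(\mathcal{E}_6(j)^c)\leq 2\exp(-c_{t_4}s\log p/8)=2p^{-c_{t_4}s/8}$. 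Choosing the universal constant $c_{t_4}$ large enough (e.g. $c_{t_4}\geq 56$), and using $s\geq 1$, this is at most $2p^{-7}$.

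Finally I would union bound over $j\in S(u^*)$: since $|S(u^*)|=s$,
$$\p(H^c)\leq \sum_{j\in S(u^*)}\p(\mathcal{E}_6(j)^c)\leq 2sp^{-7},$$
which is the claim. There is essentially no obstacle here: the only points needing (routine) care are the independence argument via Lemma~\ref{lem_reg_ext} and the choice of $c_{t_4}$; the remainder is the elementary Gaussian tail estimate, in the same spirit as the subgaussian computations already carried out in Lemmas~\ref{suptechlem1}--\ref{suptechlem3}.
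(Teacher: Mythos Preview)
Your proof is correct and takes a genuinely more elementary route than the paper's. The paper bounds $\mathcal{E}_6(j)^c$ by first enlarging the single quadratic form to the supremum $\sup_{S(v)=S}\bigl[\varepsilon_{:,j}^T\sum_{i\neq j}v_iX_{:,i}/\|\sum_{i\neq j}v_iX_{:,i}\|_2\bigr]^2$ over all $v$ supported on a set $S$ of size $2s$, invokes the covering bound~\eqref{techlemineq1}--\eqref{techlemineq2} (via Lemma~\ref{maximal}) to get a per-event probability $\exp(-10s\log p)$, and then pays a union bound over all $\binom{p}{2s}$ such sets as well as over $j\in S(u^*)$. You instead observe that for each fixed $j$ the direction $w_j$ is \emph{determined} (no supremum is needed), condition on $\{X_{:,i}\}_{i\neq j}$ so that $\varepsilon_{:,j}^T w_j$ is a single centered Gaussian with variance $(\sigma_j^*)^2\leq 2$, apply the elementary Gaussian tail bound, and union-bound only over the $s$ indices in $S(u^*)$. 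Your argument is shorter and avoids the extraneous union bound over subsets; the paper's route simply reuses machinery already established earlier in the appendix, at the cost of a heavier (but still sufficient) probability computation. Both approaches yield the stated $1-2sp^{-7}$ once the free constant $c_{t_4}$ is taken large enough, which is permissible since $\eta$ in Theorem~\ref{supportthm} is subsequently chosen depending on $c_{t_2},c_{t_3},c_{t_4}$.
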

\begin{proof}
Note that for a given $S\subseteq[p]$ such that $|S|=2s$ and $j\in S$, 
\begin{equation}\label{suptechlem5ineq}
    \p \left(\sup_{\substack{v\in\R^p\\S(v)=S }}\left[\varepsilon_{:,j}^T \frac{\sum_{i\neq j}v_iX_{:,i}}{\|\sum_{i\neq j}v_iX_{:,i}\|_2}\right]^2 \lesssim s\log p \right) \geq 1 - \exp(-10s\log p)
\end{equation}
which is a consequence of~\eqref{techlemineq2}. As a result, 
\begin{align*}
    \p(H^c) & \leq \p\left(\max_{j\in S(u^*)}\max_{\substack{S\subseteq[p] \\ |S|=2s}}\sup_{\substack{v\in\R^p\\S(v)=S }}\left[\varepsilon_{:,j}^T \frac{\sum_{i\neq j}v_iX_{:,i}}{\|\sum_{i\neq j}v_iX_{:,i}\|_2}\right]^2\gtrsim s\log p\right) \\
    & \stackrel{(a)}{\leq}\sum_{j\in S(u^*)}\sum_{\substack{S\subseteq[p] \\ |S|=2s}} \p\left(\sup_{\substack{v\in\R^p\\S(v)=S }}\left[\varepsilon_{:,j}^T \frac{\sum_{i\neq j}v_iX_{:,i}}{\|\sum_{i\neq j}v_iX_{:,i}\|_2}\right]^2\gtrsim s\log p\right) \\
    & \stackrel{(b)}{\leq } \sum_{j\in S(u^*)}\sum_{\substack{S\subseteq[p] \\ |S|=2s}}\exp(-10s\log p) \\
    & \leq 2s p^{2s}p^{-10s}\leq 2sp^{-8s}\leq 2sp^{-7}
\end{align*}
where $(a)$ is due to union bound and $(b)$ is due to~\eqref{suptechlem5ineq}. This completes the proof.
\end{proof}

\subsection{Proof of  Theorem \ref{supportthm}}

\begin{proof}
In this proof, we assume events given in~\eqref{supevents} (for all $j,\CS$, as shown in Lemmas~\ref{suptechlem4}--\ref{suptechlem5}, above) hold true. Note that by Lemmas~\ref{suptechlem4}, \ref{suptechlem1},~\ref{suptechlem2},~\ref{suptechlem3} and~\ref{suptechlem5}, this happens with probability greater than
\begin{equation}\label{supthmprob}
    1-9sp^{-7}-4s\log(-10s\log p).
\end{equation}
Suppose $z\in\{0,1\}^p$ with $S=S(z)$. Let 
$$\text{$\CS_j=S\setminus\{j\}$,~~$\CS_j^*=S(u^*)\setminus\{j\}$~~and~~$\CS_j^0=\CS_j^*\setminus \CS_j$}.$$ 
Recalling the definition of $\CR_{\CS}(\cdot)$ from~\eqref{thm3-proof-cr}, we have the following (see calculations in (6.1) of \citep{fan2020best}):
\begin{align}
    n\CR_{\CS_j}(X_{:,j})  =& ({X}_{\CS_j}{\beta}^*_{\CS_j,j} + {X}_{\CS^0_j}{\beta}^*_{\CS^0_j,j}+ \varepsilon_{:,j})^T({I}_n-{P}_{{X}_{\CS_j}})({X}_{\CS_j}{\beta}^*_{\CS_j,j} + {X}_{\CS^0_j}{\beta}^*_{\CS^0_j,j}+ \varepsilon_{:,j}) \nonumber \\
    =&  ( {X}_{\CS^0_j}{\beta}^*_{\CS^0_j,j} +  \varepsilon_{:,j})^T({I}_n-{P}_{{X}_{\CS_j}})( {X}_{\CS^0_j}{\beta}^*_{\CS^0_j,j} + \varepsilon_{:,j}) \nonumber \\
    =& n{({\beta}^*_{\CS^0_j,j})}^T (\schur{\hat{{G}}}{\CS_j}{\CS_j^*}) {\beta}^*_{\CS^0_j,j} + 2\varepsilon_{:,j}^T ({I}_n-{P}_{{X}_{\CS_j}}){X}_{\CS^0_j}{\beta}^*_{\CS^0_j,j} + \varepsilon_{:,j}^T({I}_n-{P}_{{X}_{\CS_j}})\varepsilon_{:,j}\label{thmsup1}
\end{align}
and 
\begin{align}
    n\CR_{\CS^*_j}(X_{:,j})  = \varepsilon_{:,j}^T({I}_n-{P}_{{X}_{\CS^*_j}})\varepsilon_{:,j}.\label{thmsup2}
\end{align}
For $j\in[p]$ and $z\in\{0,1\}^p$, let us define:
\begin{equation}\label{defn-g_j=z-app1}
    g_j(z) = \min_{\beta_{:,j}} ~\left\Vert X_{:,j}-\sum_{i: i\neq j}\beta_{i,j}X_{:,i}\right\Vert_2^2~~\text{s.t.} ~~\beta_{i,j}(1-z_i)=\beta_{i,j}(1-z_j)=0~~\forall i\in[p],\beta_{j,j}=0.
\end{equation}
Note that in~\eqref{defn-g_j=z-app1}, if $z_j=0$, then $\beta_{:,j}=0$ and 
$g_j(z)=\|X_{:,j}\|_2^2$.
On the other hand, if $z_j=1$, we have $g_j(z)=n\CR_{\CS_j}(X_{:,j})$ as $\beta_{i,j}=0$ for $i\notin \CS_j$. With this definition in place, Problem~\eqref{reg_multi} can be equivalently written as
\begin{equation}\label{reg_multi_thm3}
    \min_{z} ~\sum_{j=1}^p g_j(z)~~~\text{s.t.}~~~z\in\{0,1\}^p, \sum_{i=1}^p z_i \leq s.
\end{equation}

To analyze Problem \eqref{reg_multi_thm3}, for every $j$, we compare the value of $g_j(z)$ for a feasible $z$ to a ``oracle'' candidate $g_j(z^*)$ where $z^*$ has the same support as $u^*$. Then, we show that 
$$\sum_{j=1}^p g_j(z^*)<\sum_{j=1}^p g_j(z) $$
unless $z=z^*$, which completes the proof. We let $S^*=S(u^*)$ and $S$ be a feasible support for Problem~\eqref{reg_multi_thm3}. We also recall that $\CS_j=S\setminus\{j\}$, $\CS_j^*=S^*\setminus\{j\}$ and $t=|S\setminus S(u^*)|$. 

\smallskip

We first present the proof under condition~\ref{betamincond} and then present the more general proof for~\ref{betamincond-extend}.

\subsubsection{Proof of Theorem~\ref{supportthm} under Condition~\ref{betamincond}}\label{proof-subsub-conds1}
We consider three cases for $j$.\\
{\bf Case 1} ($j\notin S^*,j\in S$): In this case, we have $g_j(z)=n\CR_{\CS_j}(X_{:,j})$ and $g_j(z^*)=\|X_{:,j}\|_2^2=\varepsilon_{:,j}^T\varepsilon_{:,j}$. As a result, we have
\begin{align}
  g_j(z)-g_j(z^*)=  n\CR_{\CS_j}(X_{:,j}) -\varepsilon_{:,j}^T\varepsilon_{:,j} \stackrel{(a)}{=} -\varepsilon_{:,j}^T P_{X_{\CS_j}}\varepsilon_{:,j} \stackrel{(b)}{\geq} -c_{t_2}s\log p \label{case1}
\end{align}
where $(a)$ is true because $\beta_{\CS_j^0,j}=0$ and by \eqref{thmsup1} and $(b)$ is due to the event $\mathcal{E}_4(\cdot,\cdot)$ in~\eqref{supevents}.\\
{\bf Case 2} ($j\in S^*,j\notin S$): In this case, we have $g_j(z)=\|X_{:,j}\|_2^2$ and $g_j(z^*)=n\CR_{\CS_j^*}(X_{:,j})$. As a result, we have the following:
\begin{align}
    g_j(z)-g_j(z^*)&=\|X_{:,j}\|_2^2 -  n\CR_{\CS^*_j}(X_{:,j})\nonumber \\
    &\stackrel{(a)}{= } \|X_{:,j}\|_2^2 - \varepsilon_{:,j}^T (I_n - P_{X_{\CS_j^*}})\varepsilon_{:,j} \nonumber\\
   & =  \|X_{\CS_j^*}\beta^*_{\CS_j^*,j}+\varepsilon_{:,j}\|_2^2 - \varepsilon_{:,j}^T (I_n - P_{X_{\CS_j^*}})\varepsilon_{:,j} \nonumber\\
   & = n{\beta^*_{\CS_j^*,j}}^T \hat{G}_{\CS_j^*,\CS_j^*}\beta^*_{\CS_j^*,j} + 2\varepsilon_{:,j}^T  X_{\CS_j^*}\beta^*_{\CS_j^*,j} + \varepsilon_{:,j}^T P_{X_{\CS_j^*}}\varepsilon_{:,j} \nonumber \\
   & = n{\beta^*_{\CS_j^*,j}}^T \hat{G}_{\CS_j^*,\CS_j^*}\beta^*_{\CS_j^*,j} + 2\frac{\varepsilon_{:,j}^T  X_{\CS_j^*}\beta^*_{\CS_j^*,j}}{\|X_{\CS_j^*}\beta^*_{\CS_j^*,j}\|_2}\|X_{\CS_j^*}\beta^*_{\CS_j^*,j}\|_2 + \varepsilon_{:,j}^T P_{X_{\CS_j^*}}\varepsilon_{:,j} \nonumber \\
   & \stackrel{(b)}{\geq} \frac{n}{2}{\beta^*_{\CS_j^*,j}}^T \hat{G}_{\CS_j^*,\CS_j^*}\beta^*_{\CS_j^*,j} - 2\left(\frac{\varepsilon_{:,j}^T  X_{\CS_j^*}\beta^*_{\CS_j^*,j}}{\|X_{\CS_j^*}\beta^*_{\CS_j^*,j}\|_2}\right)^2 + \varepsilon_{:,j}^T P_{X_{\CS_j^*}}\varepsilon_{:,j} \nonumber \\
   & \stackrel{(c)}{\geq } \frac{n}{2}{\beta^*_{\CS_j^*,j}}^T \hat{G}_{\CS_j^*,\CS_j^*}\beta^*_{\CS_j^*,j}  + \varepsilon_{:,j}^T P_{X_{\CS_j^*}}\varepsilon_{:,j} - c_{t_4}s\log p \nonumber \\
   & \stackrel{(d)}{\geq } \frac{n}{2}{\beta^*_{\CS_j^*,j}}^T \hat{G}_{\CS_j^*,\CS_j^*}\beta^*_{\CS_j^*,j}  - (c_{t_2} + c_{t_4})s\log p \nonumber \\
    & \stackrel{(e)}{\geq } 0.8\eta s \log p- (c_{t_2} + c_{t_4})s\log p\label{case2}
\end{align}
where $(a)$ is due to \eqref{thmsup2}, $(b)$ is due to the inequality $2ab\geq -2a^2-b^2/2$, $(c)$ is due to event $\mathcal{E}_6(\cdot,\cdot)$,
$(d)$ is because of event $\mathcal{E}_4(\cdot,\cdot)$ and $(e)$ is due to event $\mathcal{E}_1(\cdot,\cdot)$.\\
{\bf Case 3} ($j\in S^*,j\in S$): In this case, $ g_j(z)=n\CR_{\CS_j}(X_{:,j})$ and $g_j(z^*)=n\CR_{\CS^*_j}(X_{:,j})$. As a result, we have
\begin{align}
    & g_j(z)-g_j(z^*)\nonumber \\ = & n\CR_{\CS_j}(X_{:,j}) - n\CR_{\CS^*_j}(X_{:,j}) \nonumber \\
     \stackrel{(a)}{\geq} & n{({\beta}^*_{\CS^0_j,j})}^T (\schur{\hat{{G}}}{\CS_j}{\CS_j^*}) {\beta}^*_{\CS^0_j,j} + 2\varepsilon_{:,j}^T ({I}_n-{P}_{{X}_{\CS_j}}){X}_{\CS^0_j}{\beta}^*_{\CS^0_j,j} + \varepsilon_{:,j}^T({P}_{{X}_{\CS^*_j}}-{P}_{{X}_{\CS_j}})\varepsilon_{:,j} \nonumber\\
    \stackrel{(b)}{\geq} & n{({\beta}^*_{\CS^0_j,j})}^T (\schur{\hat{{G}}}{\CS_j}{\CS_j^*}) {\beta}^*_{\CS^0_j,j}  -c_{t_1} \sqrt{n{({\beta}^*_{\CS^0_j,j})}^T (\schur{\hat{G}}{\CS}{\CS_j^*}) {\beta}^*_{\CS^0_j,j}}\sqrt{{t\log p}} -c_{t_3}t\log p\nonumber \\
     \stackrel{(c)}{\geq} & n{({\beta}^*_{\CS^0_j,j})}^T (\schur{\hat{{G}}}{\CS_j}{\CS_j^*}) {\beta}^*_{\CS^0_j,j}(1-\frac{c_{t_1}}{\sqrt{0.8\eta}})-c_{t_3}t\log p \nonumber\\
     \stackrel{(d)}{\geq} & \left(0.8\eta(1-\frac{c_{t_1}}{\sqrt{0.8\eta}})-c_{t_3} \right) t\log p\stackrel{(e)}{>}0\label{case3}
\end{align}
where $(a)$ is due to \eqref{thmsup1}, \eqref{thmsup2}, $(b)$ is true because of events $\mathcal{E}_3$ and $\mathcal{E}_5$, $(c)$ is due to event $\mathcal{E}_2$, $(d)$ is due to event $\mathcal{E}_2$ and by taking $\eta>c_{t_1}^2/0.8$, and $(e)$ is true by taking $\eta$ large enough so 
$$\left(0.8\eta(1-\frac{c_{t_1}}{\sqrt{0.8\eta}})-c_{t_3} \right)>1.$$
Considering the sum of all terms appearing in \eqref{case1}, \eqref{case2} and \eqref{case3}, the difference between optimal cost of $S,S^*$ is at least
\begin{align}
  \sum_{j=1}^p\left\{g_j(z)-g_j(z^*)\right\} \geq & t \left( 0.8\eta s \log p- (c_{t_2} + c_{t_4})s\log p\right)-tc_{t_2}s\log p \nonumber \\ \geq & ts\log p\left(0.8\eta -(2c_{t_2}+c_{t_4})\right)>0 \label{sup-thm-final}
\end{align}
when $\eta>(2c_{t_2}+c_{t_4})/0.8$, because there are $t$ instances in cases 1 and 2 above. This shows $S$ cannot be optimal unless $S=S^*$.

\medskip 
\subsubsection{Proof of Theorem~\ref{supportthm} under Condition~\ref{betamincond-extend}} Similar to the case of Section~\ref{proof-subsub-conds1}, we consider three cases here. As we do not use the event $\mathcal{E}_2$ in either of the first two cases above,~\eqref{case1} and~\eqref{case2} hold. Let us study the third case with $j\in S^*,j\in S$. We have
\begin{align}
   &  g_j(z)-g_j(z^*) \nonumber \\ = &n\CR_{\CS_j}(X_{:,j}) - n\CR_{\CS^*_j}(X_{:,j}) \nonumber \\
     \stackrel{(a)}{\geq} & n{({\beta}^*_{\CS^0_j,j})}^T (\schur{\hat{{G}}}{\CS_j}{\CS_j^*}) {\beta}^*_{\CS^0_j,j} + 2\varepsilon_{:,j}^T ({I}_n-{P}_{{X}_{\CS_j}}){X}_{\CS^0_j}{\beta}^*_{\CS^0_j,j} + \varepsilon_{:,j}^T({P}_{{X}_{\CS^*_j}}-{P}_{{X}_{\CS_j}})\varepsilon_{:,j} \nonumber\\
     \stackrel{(b)}{\geq} & n{({\beta}^*_{\CS^0_j,j})}^T (\schur{\hat{{G}}}{\CS_j}{\CS_j^*}) {\beta}^*_{\CS^0_j,j}  -c_{t_1} \sqrt{n{({\beta}^*_{\CS^0_j,j})}^T (\schur{\hat{G}}{\CS}{\CS_j^*}) {\beta}^*_{\CS^0_j,j}}\sqrt{{t\log p}} -c_{t_3}t\log p\nonumber \\
     \stackrel{(c)}{\geq} & \frac{3n}{4}{({\beta}^*_{\CS^0_j,j})}^T (\schur{\hat{{G}}}{\CS_j}{\CS_j^*}) {\beta}^*_{\CS^0_j,j}-(c_{t_1}^2+c_{t_3})t\log p \nonumber\\
    \geq & -(c_{t_1}^2+c_{t_3})t\log p\label{case3-new}
\end{align}
where $(a)$ is due to \eqref{thmsup1}, \eqref{thmsup2}, $(b)$ is true because of events $\mathcal{E}_3$ and $\mathcal{E}_5$ and $(c)$ is by $ab\geq -a^2-b^2/4$. Summing~\eqref{case1},~\eqref{case2} and~\eqref{case3-new} and considering there are $s-t$ values of $j$ in case 3, we achieve
\begin{align}\label{sup-thm-final-new}
  \sum_{j=1}^p\left\{g_j(z)-g_j(z^*)\right\}&  \geq t \left( 0.8\eta s \log p- (c_{t_2} + c_{t_4})s\log p\right)-tc_{t_2}s\log p -(c_{t_1}^2+c_{t_3})t(s-t)\log p\nonumber \\
  & =(0.8\eta- (c_{t_1}^2 +2c_{t_2} +c_{t_3}+ c_{t_4}) )st\log p + (c_{t_1}^2+c_{t_3})t^2>0
\end{align}
where $\eta>(c_{t_1}^2 +2c_{t_2} +c_{t_3}+ c_{t_4})/0.8$ and $t\geq 1$, which completes the proof.
\end{proof}

\section{Proofs of other main results}
\subsection{Proof of Proposition \ref{suboptpro}}
\begin{proof}
In this proof, we use the following results. From~\eqref{proof-thm3-in2}, 
\begin{equation}\label{prop1ineq}
    \sum_{j=1}^p \left\Vert \sum_{i\neq j}(\hat{\beta}_{i,j}-\beta_{i,j}^*)X_{:,i} \right\Vert_2^2\gtrsim n\sum_{j=1}^p\sum_{i:i\neq j}(\hat{\beta}_{i,j}-\beta^*_{i,j})
\end{equation}
with probability greater than $1-2sp\exp(-20s\log (p-1))$ as $n\gtrsim s\log p$. From~\eqref{techlemineq5}, 
\begin{equation}\label{prop2ineq}
   \sum_{j=1}^p \left(\varepsilon_{:,j}^T\frac{\sum_{i\neq j}(\hat{\beta}_{i,j}-{\beta}^*_{i,j})X_{:,i}}{\|\sum_{i\neq j}(\hat{\beta}_{i,j}-{\beta}^*_{i,j})X_{:,i}\|_2}\right)^2 \lesssim s^2\log(p/s)
\end{equation}
with probability greater than 
$1-2s\exp(-10s\log(p/s)).$
Moreover, from~\eqref{thm4ineq104}, 
\begin{equation}\label{prop3ineq}
    \|\varepsilon_{:,j}\|_2^2\lesssim n
\end{equation}
with probability greater than $1-2pe^{-Cs\log(p/s)}$ as $n\gtrsim s\log p$. Therefore, the intersection of events appearing in~\eqref{prop1ineq}, \eqref{prop2ineq} and~\eqref{prop3ineq} holds true with probability at least
\begin{equation}\label{prop1prob}
    1-2pe^{-Cs\log(p/s)}-2sp\exp(-20s\log (p-1))-2s\exp(-10s\log(p/s)).
\end{equation}
Let $\tau = (\text{UB}-\text{LB})/\text{LB}$. By the notation introduced in the proposition, 
\begin{align}
    & \sum_{j=1}^p \left\Vert X_{:,j} -\sum_{i\neq j}{\beta}^*_{i,j}X_{:,i} \right\Vert_2^2 \geq \text{LB} \nonumber \\
    \Rightarrow & \sum_{j=1}^p \left\Vert X_{:,j} -\sum_{i\neq j}\hat{\beta}_{i,j}X_{:,i} \right\Vert_2^2 - \sum_{j=1}^p \left\Vert X_{:,j} -\sum_{i\neq j}{\beta}^*_{i,j}X_{:,i} \right\Vert_2^2 \leq \text{UB}-\text{LB}\nonumber \\
    \Rightarrow & \sum_{j=1}^p \left\Vert X_{:,j} -\sum_{i\neq j}\hat{\beta}_{i,j}X_{:,i} \right\Vert_2^2 \leq \left(1+ \frac{\text{UB}-\text{LB}}{\sum_{j=1}^p \left\Vert X_{:,j} -\sum_{i\neq j}{\beta}^*_{i,j}X_{:,i} \right\Vert_2^2}\right)\sum_{j=1}^p \left\Vert X_{:,j} -\sum_{i\neq j}{\beta}^*_{i,j}X_{:,i} \right\Vert_2^2\nonumber.
\end{align}
As a result,
\begin{align}
    & \sum_{j=1}^p \left\Vert X_{:,j} -\sum_{i\neq j}\hat{\beta}_{i,j}X_{:,i} \right\Vert_2^2 \leq  (1+\tau) \sum_{j=1}^p \left\Vert X_{:,j} -\sum_{i\neq j}{\beta}^*_{i,j}X_{:,i} \right\Vert_2^2 \nonumber \\
    \stackrel{(a)}{\Rightarrow} & \sum_{j=1}^p \left\Vert \sum_{i\neq j}(\hat{\beta}_{i,j}-\beta_{i,j}^*)X_{:,i} \right\Vert_2^2\leq 2\sum_{j=1}^p \varepsilon_{:,j}^T\left(\sum_{i\neq j}(\hat{\beta}_{i,j}-{\beta}^*_{i,j})X_{:,i}\right)+\tau \sum_{j=1}^p \left\Vert X_{:,j} -\sum_{i\neq j}{\beta}^*_{i,j}X_{:,i} \right\Vert_2^2 \nonumber \\
    \stackrel{(b)}{\Rightarrow}& \sum_{j=1}^p \left\Vert \sum_{i\neq j}(\hat{\beta}_{i,j}-\beta_{i,j}^*)X_{:,i} \right\Vert_2^2\leq 2\sum_{j=1}^p \varepsilon_{:,j}^T\left(\sum_{i\neq j}(\hat{\beta}_{i,j}-{\beta}^*_{i,j})X_{:,i}\right)+\tau \sum_{j=1}^p \left\Vert \varepsilon_{:,j} \right\Vert_2^2 \label{apppropineq1}
\end{align}
where $(a)$ is by the representation $X_{:,j}=\sum_{i\neq j}\beta_{i,j}^*X_{:,i}+\varepsilon_{:,j}$ by Lemma \ref{lem_reg_ext}, $(b)$ is by the definition of $\varepsilon_{:,j}$ in Lemma~\ref{lem_reg_ext}. Next, note that for $j\in[p]$,
\begin{align}
    2\varepsilon_{:,j}^T\left(\sum_{i\neq j}(\hat{\beta}_{i,j}-{\beta}^*_{i,j})X_{:,i}\right) &= 2\frac{ \varepsilon_{:,j}^T\left(\sum_{i\neq j}(\hat{\beta}_{i,j}-{\beta}^*_{i,j})X_{:,i}\right)}{\|\sum_{i\neq j}(\hat{\beta}_{i,j}-{\beta}^*_{i,j})X_{:,i}\|_2}\|\sum_{i\neq j}(\hat{\beta}_{i,j}-{\beta}^*_{i,j})X_{:,i}\|_2 \nonumber \\
    & \leq  2\left(\varepsilon_{:,j}^T\frac{\sum_{i\neq j}(\hat{\beta}_{i,j}-{\beta}^*_{i,j})X_{:,i}}{\|\sum_{i\neq j}(\hat{\beta}_{i,j}-{\beta}^*_{i,j})X_{:,i}\|_2}\right)^2 + \frac{1}{2}\|\sum_{i\neq j}(\hat{\beta}_{i,j}-{\beta}^*_{i,j})X_{:,i}\|_2^2\label{apppropineq2}
\end{align}
where the above inequality follows from the observation $2ab \leq 2a^2+b^2/2$. As a result, from~\eqref{apppropineq1} and~\eqref{apppropineq2}, we arrive at
\begin{align}\label{propproofineq}
\sum_{j=1}^p \left\Vert \sum_{i\neq j}(\hat{\beta}_{i,j}-\beta_{i,j}^*)X_{:,i} \right\Vert_2^2 \lesssim \sum_{j=1}^p\left(\varepsilon_{:,j}^T\frac{\sum_{i\neq j}(\hat{\beta}_{i,j}-{\beta}^*_{i,j})X_{:,i}}{\|\sum_{i\neq j}(\hat{\beta}_{i,j}-{\beta}^*_{i,j})X_{:,i}\|_2}\right)^2+\tau \sum_{j=1}^p \left\Vert \varepsilon_{:,j} \right\Vert_2^2.
\end{align}
From \eqref{propproofineq},~\eqref{prop1ineq},~\eqref{prop2ineq} and~\eqref{prop3ineq}, we have:
\begin{align*}
    \sum_{j=1}^p\sum_{i:i\neq j}(\hat{\beta}_{i,j}-\beta^*_{i,j}) & \lesssim \frac{1}{n} \sum_{j=1}^p \left\Vert \sum_{i\neq j}(\hat{\beta}_{i,j}-\beta_{i,j}^*)X_{:,i} \right\Vert_2^2 \\
    & \lesssim \frac{s^2\log(p/s)}{n} + \frac{\tau}{n}  \sum_{j=1}^p \left\Vert \varepsilon_{:,j}\right\Vert_2^2\\
    & \lesssim \frac{s^2\log(p/s)}{n} + p\tau
\end{align*}
with probability at least as large as the quantity appearing in~\eqref{prop1prob}. This completes the proof.
\end{proof}

\subsection{Proof of Proposition \ref{approx-supp}}
\begin{proof}
Building on the notation we used in the proof of Theorem~\ref{supportthm}, from~\eqref{sup-thm-final} we have
\begin{align}\label{approx-supp-helper}
      \sum_{j=1}^p\left\{g_j(\hat{z})-g_j(z^*)\right\} \geq  c_{0}ts\log p
\end{align}
with high probability for some absolute constant $c_0>0$, where
$$t = |\{j\in[p]:z_j^*=1,\hat{z}_j=0\}|$$
is the total number of mistakes in the support. Similar to the proof of Proposition~\ref{suboptpro}, let $\tau=(\text{UB}-\text{LB})/\text{LB}$. Then,
\begin{align*}
    &\sum_{j=1}^p g_j(z^*) \geq \text{LB} \\
    \Rightarrow & \sum_{j=1}^p g_j(\hat{z}) -\sum_{j=1}^p g_j(z^*) \leq \text{UB}-\text{LB}.
\end{align*}
As a result, from~\eqref{approx-supp-helper}, one has
$$t\lesssim \frac{\text{UB}-\text{LB}}{s\log p}.$$
\end{proof}

\subsection{Proof of Proposition \ref{subgradpers}}
\begin{proof}
{\bf Part 1)} The map appearing in the cost function of~\eqref{Freg2}, that is:
\begin{equation}\label{Gconvex}
    (z,\xi,\beta,W,q) \mapsto H(z,\xi,\beta,W,q):=\frac{1}{2}\sum_{j=1}^p \left\Vert \xi_{:,j} \right\Vert_2^2+\lambda\sum_{j=1}^p\sum_{i\neq j}q_{i,j}
\end{equation}
is jointly convex in $z,\xi,\beta,W,q$. Let the (3-dimensional) rotated second order cone be denoted by
$$\mathcal{Q}=\{x\in\R^3: x_1^2\leq x_2x_3, x_2,x_3\geq 0\}.$$
Note that ${\mathcal Q}$ is convex. 
Using the above notation, Problem~\eqref{Freg2} can be written as
\begin{align}\label{Freg2-append}
    F_1(z)=\min_{\beta,\xi,W,q} \quad &  \frac{1}{2}\sum_{j=1}^p \left\Vert \xi_{:,j} \right\Vert_2^2+\lambda\sum_{j=1}^p\sum_{i\neq j}q_{i,j} \\
     \text{s.t.} \quad & (\beta_{i,j},q_{i,j},z_j)\in\mathcal{Q},~~|\beta_{i,j}|\leq MW_{i,j},~ \forall i\neq j;~~\beta_{i,i}=0,~~W_{i,j}\leq z_i,~~W_{i,j}\leq z_j~~ \forall i,j\in[p]\nonumber\\
     \quad & \xi_{:,j} = X_{:,j}-X_{-j}\beta_{:,j}~~~\forall j\in[p].\nonumber
\end{align}
Let 
\begin{multline*}
    \mathbb{S} = \bigg\{ (z,\xi,\beta,W,q): (\beta_{i,j},q_{i,j},z_j)\in\mathcal{Q}~\forall i\neq j;~~|\beta_{i,j}|\leq MW_{i,j},\beta_{i,i}=0,~~W_{i,j}\leq z_i,~~W_{i,j}\leq z_j~~ \forall i,j\in[p]\\
      \xi_{:,j} = X_{:,j}-X_{-j}\beta_{:,j}~~~\forall j\in[p],~~z\in[0,1]^p\bigg\}\subseteq\R^{p}\times \R^{n\times p}\times \R^{p\times p}\times \R^{p\times p}\times \R^{p\times p}.
\end{multline*}
Note that as $\mathcal{Q}$ is a convex set, $\mathbb{S}$ defined above is also convex. Let $\1_C(\cdot)$ denote the characteristic function of a set $C$, 
$$\1_C(x) = \begin{cases} 0 &\text{if}~ x\in C \\
 \infty & \text{if}~ x\notin C. \end{cases} $$
 Note that if $C$ is a convex set, $\1_C(\cdot)$ is a convex function. With this notation in place, Problem~\eqref{Freg2-append} can be written as 
 \begin{equation}\label{Freg_chara}
     F_1(z)=\min_{\beta,\xi,W,q} ~~ H(z,\xi,\beta,W,q) +\1_{\mathbb{S}}(z,\xi,\beta,W,q)
 \end{equation}
 where $H$ is defined in~\eqref{Gconvex}. Based on our discussion above, the  function $(z,\xi,\beta,W,q) \mapsto H(z,\xi,\beta,W,q) +\1_{\mathbb{S}}(z,\xi,\beta,W,q)$ 
 is convex. As $F_1(z)$ is obtained after a marginal minimization of a jointly convex function over a convex set, the map $z \mapsto F_1(z)$ is convex on $z \in [0,1]^p$ \citep[Chapter 3]{boyd2004convex}.

\smallskip

\noindent {\bf Part 2)} We prove this part for $\lambda>0$. The proof for $\lambda=0$ is similar.

Let us define 
    \begin{align}\label{fregdual}
     \tilde{F}(z)=\max_{\Lambda,\Gamma,\delta\geq 0,\alpha,\upsilon} ~~ & \sum_{j=1}^p \bigg\{-\frac{1}{2}  \|\alpha_{:,j}\|_2^2+\alpha_{:,j}^TX_{:,j}-z_j\sum_{i:i\neq j}\frac{( \Lambda^{+}_{i,j}-\Lambda^{-}_{i,j}-(X_{-j}^T\alpha_{:,j})_i)^2}{4(\lambda-\delta_{i,j})}\bigg\}- \langle \Gamma^1,z1_p^T\rangle - \langle \Gamma^2, 1_p{z}^T\rangle\\
     \text{s.t.} ~~~~~ &    \Gamma^1+\Gamma^2-M\Lambda^+-M\Lambda^-=0,~~\delta_{i,j}\leq \lambda~\forall i,j\in[p] \nonumber \\
     &  (\Lambda^{+})_{i,i}-(\Lambda^{-})_{i,i}=-\upsilon_i~~\forall i\in[p]\nonumber.
\end{align}

~~~~{\bf Claim 1:} For $z\in(0,1]^p$, $F_1(z)=\tilde{F}(z)$.

~~~~{\bf Proof of Claim 1:}
 Note that as the objective function of \eqref{Freg2-append} is convex [see Part 1] and its non-affine constraints are strictly feasible, strong duality holds for \eqref{Freg2-append} by enhanced Slater's condition (see \citet{boyd2004convex}, Section 5.2.3). 
  Considering Lagrange multipliers $\delta,\zeta$, $\Lambda^{+},\Lambda^{-}$, $\Gamma^1,\Gamma^2\in\R^{p\times p}$, $\upsilon\in\R^p$  and $\alpha\in\R^{n\times p}$ for problem constraints, the Lagrangian for this problem
${\mathcal L}(\beta,\xi,q,\zeta,\delta,\Lambda^{+},\Lambda^{-},\Gamma^1,\Gamma^2,\upsilon,\alpha)$ or ${\mathcal L}$ (for short),
 can be written as
\begin{equation}\label{persconds1}
    \begin{aligned}
     \mathcal{L} = &  \sum_{j=1}^p\left\{ \frac{1}{2}\left\Vert\xi_{:,j} \right\Vert_2^2  + \lambda \sum_{i\neq j}q_{i,j}\right\}+   \langle\Lambda^{+},\beta - MW\rangle -\langle\Lambda^{-},\beta + MW\rangle\\
     & +\sum_{\substack{i,j \\ i\neq j}}\left\{\zeta_{i,j}(\beta_{i,j}^2-q_{i,j}z_j)-\delta_{i,j}q_{i,j}\right\}
     +   \langle\Gamma^1,W - z1_p^T\rangle +\langle\Gamma^2,W - 1_pz^T\rangle \\ & +\sum_{j=1}^p\alpha_{:,j}^T(X_{:,j} - X_{-j}\beta_{:,j}-\xi_{:,j}) +\sum_{i=1}^p \upsilon_i \beta_{i,i}  \\
     = & \sum_{j=1}^p \left\{\frac{1}{2}\|\xi_{:,j}\|_2^2 + \beta_{:,j}^T(-X_{-j}^T\alpha_{:,j}+(\Lambda^{+}-\Lambda^{-})_{:,j})+\sum_{i:i\neq j}\zeta_{i,j}\beta_{i,j}^2\right\}\\ &  -\langle W,\Gamma^1+\Gamma^2-M\Lambda^+-M\Lambda^-\rangle
      -  \langle \Gamma^1,z1_p^T\rangle -  \langle \Gamma^2, 1_p{z}^T\rangle \\&+\sum_{j=1}^p\alpha_{:,j}^T(X_{:,j} -\xi_{:,j})+\sum_{ \substack{i,j\\ i\neq j}}(\lambda-\zeta_{i,j}z_j-\delta_{i,j})q_{i,j}+\sum_{i=1}^p \upsilon_i \beta_{i,i}
\end{aligned}
\end{equation}
where $1_p\in\R^p$ is a vector of all ones. By setting the gradient of the Lagrangian w.r.t. $\beta$, $\xi,W$ and $q$ equal to zero, we get
    \begin{equation}\label{kkt}
        \begin{aligned}
         \xi_{:,j} & = \alpha_{:,j}~~\forall j\in[p], \\
         (\Lambda^{+})_{i,j}-(\Lambda^{-})_{i,j}&=(X_{-j}^T\alpha_{:,j})_i-2\zeta_{i,j}\beta_{i,j}~~\forall i,j\in[p], i\neq j \\
(\Lambda^{+})_{i,i}-(\Lambda^{-})_{i,i}&=-\upsilon_i~~\forall i\in[p]\\
          \Gamma^1+\Gamma^2&=M\Lambda^++M\Lambda^-,\\
          \delta_{i,j}&=\lambda-\zeta_{i,j}z_j~~\forall i,j\in[p],i\neq j.
        \end{aligned}
    \end{equation}
    By rearranging some terms above, we get 
    \begin{equation}\label{persconds2}
        \begin{aligned}
         \beta_{i,j} &=\frac{(X_{-j}^T\alpha_{:,j})_i- (\Lambda^{+})_{i,j}+(\Lambda^{-})_{i,j}}{2\zeta_{i,j}}, i \neq j\\
          \zeta_{i,j}& = \frac{\lambda-\delta_{i,j}}{z_j}, i\neq j.
        \end{aligned}
    \end{equation}
    Additionally, note that as $\zeta_{i,j},z_j\geq 0$, we have $\delta_{i,j}\leq \lambda$. By substituting~\eqref{persconds2} into~\eqref{persconds1}, we can obtain a dual of~\eqref{Freg2-append} which is given by~\eqref{fregdual}.
    This completes the proof of claim 1.

    ~~~~{\bf Claim 2:} For $z\in[0,1]^p$, $F_1(z)=\tilde{F}(z)$.

~~~~{\bf Proof of Claim 2:}
    Without loss of generality, assume all entries of $z_{1:k}$ are nonzero for some $k\geq 1$, while $z_{k+1:p}=0$. Define $\check{z}=z_{1:k}$, $\check{X}=X_{:,1:k}$ and $\check{\beta}=\beta_{1:k,1:k}$. We define a version of $F_1(z)$ restricted to $z_{1:k}$, as

    \begin{align}\label{Freg2-append-restrict}
   \check{F}_1(\check{z})=\min_{\check{\beta},\check\xi,\check{W},\check{q}} \quad &  \frac{1}{2}\sum_{j=1}^k \left\Vert \check{\xi}_{:,j} \right\Vert_2^2+\lambda\sum_{j=1}^k\sum_{i\neq j}\check{q}_{i,j} \\
     \text{s.t.} \quad & (\check{\beta}_{i,j},\check{q}_{i,j},\check{z}_j)\in\mathcal{Q},~~|\check{\beta}_{i,j}|\leq M\check{W}_{i,j},~ \forall i\neq j;~~\check{\beta}_{i,i}=0,~~\check{W}_{i,j}\leq \check{z}_i,~~\check{W}_{i,j}\leq \check{z}_j~~ \forall i,j\in[k]\nonumber\\
     \quad & \check{\xi}_{:,j} = \check{X}_{:,j}-\check{X}_{-j}\check{\beta}_{:,j}~~~\forall j\in[k].\nonumber
\end{align}
From claim 1, we have that 
    \begin{align}\label{fregdual-restrict}
     \check{F}_1(\check{z})=\max_{\check{\Lambda},\check{\Gamma},\check{\delta}\geq 0,\check{\alpha},\check{\upsilon}} ~~ & \sum_{j=1}^k \bigg\{-\frac{1}{2}  \|\check{\alpha}_{:,j}\|_2^2+\check{\alpha}_{:,j}^T\check{X}_{:,j}-\check{z}_j\sum_{i:i\neq j}\frac{( \check{\Lambda}^{+}_{i,j}-\check{\Lambda}^{-}_{i,j}-(\check{X}_{-j}^T\check{\alpha}_{:,j})_i)^2}{4(\lambda-\check{\delta}_{i,j})}\bigg\}- \langle \check{\Gamma}^1,\check{z}1_k^T\rangle - \langle \check{\Gamma}^2, 1_k\check{z}^T\rangle\\
     \text{s.t.} ~~ &    \check{\Gamma}^1+\check{\Gamma}^2-M\check{\Lambda}^+-M\check{\Lambda}^-=0,~~\check{\delta}_{i,j}\leq \lambda~\forall i,j\in[k] \nonumber\\
     & (\check{\Lambda}^{+})_{i,i}-(\check{\Lambda}^{-})_{i,i}=-\check{\upsilon}_i~~\forall i\in[k]\nonumber.
\end{align}    
Let us define the variables: $\bar{\check{\Lambda}},\bar{\check{\Gamma}},\bar{\check{\delta}},\bar{\check{\alpha}},\bar{\check{\upsilon}}$ be the optimal solution to~\eqref{fregdual-restrict}. Let
$$\bar{\alpha}_{:,j} = \begin{cases}
   \bar{\check{\alpha}}_{:,j} & j\leq k \\
  X_{:,j} & j>k
\end{cases},~~~~~~~\bar{\delta}_{i,j} = \begin{cases}
   \bar{\check{\delta}}_{i,j} & i,j\leq k \\
  0 & \text{otherwise},
\end{cases}$$
$$\bar{\Lambda}^+_{i,j} = \begin{cases}
    \bar{\check{\Lambda}}^+_{i,j} & i,j\leq k \\
  (X_{-j}^T\bar{\alpha}_{:,j})_i & i>k\lor j>k, (X_{-j}^T\bar{\alpha}_{:,j})_i\geq 0 \\
  0 & \text{otherwise}
\end{cases},$$
$$\bar{\Lambda}^-_{i,j} = \begin{cases}
    \bar{\check{\Lambda}}^-_{i,j} & i,j\leq k \\
  (X_{-j}^T\bar{\alpha}_{:,j})_i & i>k\lor j>k, (X_{-j}^T\bar{\alpha}_{:,j})_i< 0 \\
  0 & \text{otherwise},
\end{cases}$$

$$\bar{\Gamma}^2_{i,j} = \begin{cases}
    \bar{\check{\Gamma}}^2_{i,j} & i,j\leq k \\
   0& i>k, j\leq k \\
  M(\bar{\Lambda}^+_{i,j} + \bar{\Lambda}^-_{i,j}) & \text{otherwise}
\end{cases},~~~~~~\bar{\Gamma}^1_{i,j} = \begin{cases}
    \bar{\check{\Gamma}}^1_{i,j} & i,j\leq k \\
  M(\bar{\Lambda}^+_{i,j} + \bar{\Lambda}^-_{i,j}) & i>k, j\leq k\\
  0 & j>k,
\end{cases}$$
$$\bar{\upsilon}_{j} = \begin{cases}
   \bar{\check{\upsilon}}_{j} & j\leq k \\
  \bar{\Lambda}_{i,i}^- - \bar{\Lambda}_{i,i}^+ & j>k
\end{cases}.$$
Similar to the case of Claim~1, one can see that  $\bar{{\Lambda}},\bar{{\Gamma}},\bar{{\delta}},\bar{{\alpha}},\bar{{\upsilon}}$
are optimal for~\eqref{fregdual} upon inspection, with the objective of~\eqref{fregdual} given as
\begin{align}
   & \sum_{j=1}^p \bigg\{-\frac{1}{2}  \|\bar{\alpha}_{:,j}\|_2^2+\bar{\alpha}_{:,j}^TX_{:,j}-z_j\sum_{i\in[p]:i\neq j}\frac{( \bar{\Lambda}^{+}_{i,j}-\bar{\Lambda}^{-}_{i,j}-(X_{-j}^T\bar{\alpha}_{:,j})_i)^2}{4(\lambda-\bar{\delta}_{i,j})}\bigg\}- \langle \bar{\Gamma}^1,z1_p^T\rangle - \langle \bar{\Gamma}^2, 1_p{z}^T\rangle 
     = F_1(z)
\end{align}
which completes the proof of the claim.

For the rest of proof, we assume $z\in\{0,1\}^p$ as we focus on the subgradient for feasible binary solutions. To this end, we make use of the following result.

\smallskip

~~~~{\bf Claim 3:} We claim at optimality of the primal/dual pair~\eqref{Freg2-append}/\eqref{fregdual}, for $j\in[p]$
\begin{equation}\label{claim}
    \begin{aligned}
      \Lambda^{+}_{i,j}-\Lambda^{-}_{i,j}&=(X_{-j}^T\alpha_{:,j})_i-2\lambda\bar{\beta}_{i,j}, i\neq j\\
       \Lambda^{+}_{j,j}-\Lambda^{-}_{j,j} +\upsilon_j &= \bar{\beta}_{j,j} = 0,\\
      z_j\sum_{i:i\neq j}\frac{\left( \Lambda^{+}_{i,j}-\Lambda^{-}_{i,j}-(X_{-j}^T\alpha_{:,j})_i\right)^2}{4(\lambda-\delta_{i,j})}&=\frac{\lambda}{z_j}\sum_{i=1}^p \bar{\beta}_{i,j}^2
    \end{aligned}
\end{equation}
where $\bar{\beta}_{:,j}$ is given by \eqref{linregperstosolve}. \\

~~~~{\bf Proof of Claim 3:} First, note that the second part of the claim is a result of feasibility of~\eqref{fregdual}. To prove the rest of the claim, we consider the following cases:
\begin{enumerate}
    \item $z_j=0$: In this case, from~\eqref{persconds2} we have $(\Lambda^{+})_{i,j}-(\Lambda^{-})_{i,j}-(X_{-j}^T\alpha_{:,j})_i=-2\zeta_{i,j}\beta_{i,j}=0$ so 
    $$z_j\sum_{i:i\neq j}\frac{\left( \Lambda^{+}_{i,j}-\Lambda^{-}_{i,j}-(X_{-j}^T\alpha_{:,j})_i\right)^2}{4(\lambda-\delta_{i,j})}=0=\frac{\lambda}{z_j} \sum_{i}\bar{\beta}_{i,j}^2.$$
    \item $z_j>0,\beta_{i,j}=0$: In this case, similar to the above case (1), $(\Lambda^{+})_{i,j}-(\Lambda^{-})_{i,j}-(X_{-j}^T\alpha_{:,j})_i=-2\zeta_{i,j}\beta_{i,j}=0$ so
    $$z_j\frac{\left( \Lambda^{+}_{i,j}-\Lambda^{-}_{i,j}-(X_{-j}^T\alpha_{:,j})_i\right)^2}{4(\lambda-\delta_{i,j})}=0=\frac{\lambda}{z_j} \bar{\beta}_{i,j}^2.$$
    \item $z_j>0,\beta_{i,j}\neq 0$: In this case, from complementary slackness we have $\delta_{i,j}q_{i,j}=0$ and as $q_{i,j}=\beta_{i,j}^2>0$, we have $\delta_{i,j}=0$. This implies $\zeta_{i,j}=\lambda/z_j$. Therefore, from~\eqref{kkt}, 
    $$z_j\frac{\left( \Lambda^{+}_{i,j}-\Lambda^{-}_{i,j}-(X_{-j}^T\alpha_{:,j})_i\right)^2}{4(\lambda-\delta_{i,j})}= z_j\frac{4\zeta_{i,j}^2\bar{\beta}_{i,j}^2}{4z_j^2\lambda}=\frac{\lambda}{z_j}\bar{\beta}_{i,j}^2.$$
\end{enumerate}
This completes the proof of claim~\eqref{claim}.\\

In~\eqref{fregdual}, the cost function is larger if the value of $\Gamma^1,\Gamma^2$ is smaller. Therefore, at optimality, we have that
\begin{equation}
    \Lambda^+_{i,j}=\begin{cases}
     (X_{-j}^T\alpha_{:,j})_{i}-2\lambda\bar{\beta}_{i,j} &\mbox{ if }  (X_{-j}^T\alpha_{:,j})_{i}-2\lambda\bar{\beta}_{i,j}\geq 0\\
      0 &\mbox{ if } (X_{-j}^T\alpha_{:,j})_{i}-2\lambda\bar{\beta}_{i,j}< 0,
    \end{cases}
    \end{equation}
    \begin{equation}
    \Lambda^-_{i,j}=\begin{cases}
     0 &\mbox{ if }  (X_{-j}^T\alpha_{:,j})_{i}-2\lambda\bar{\beta}_{i,j}\geq 0\\
      -(X_{-j}^T\alpha_{:,j})_{i}+2\lambda\bar{\beta}_{i,j} &\mbox{ if } (X_{-j}^T\alpha_{:,j})_{i}-2\lambda\bar{\beta}_{i,j}< 0,
    \end{cases}
\end{equation}
(which implies $\Lambda^+_{i,i}=\Lambda^-_{i,i}=0$). This is true as by~\eqref{claim}, $\Lambda^{+}_{i,j}-\Lambda^{-}_{i,j}=(X_{-j}^T\alpha_{:,j})_i-2\lambda\bar{\beta}_{i,j}$ for $i\neq j$.
By this choice of $\Lambda^+,\Lambda^-$, the optimal value of $\Gamma^1,\Gamma^2$ are given by \eqref{regpersm1m2}. This true as for $i\neq j$, if $z_i=z_j=1$, the contribution of $(\Gamma^1)_{i,j},(\Gamma^2)_{i,j}$ in the dual cost is $(\Gamma^1)_{i,j}+(\Gamma^2)_{i,j}=M|(X_{-j}^T\alpha_{:,j})_{i}-2\lambda\bar{\beta}_{i,j}|$. If $z_i=z_j=0$, the contribution of this term is zero. In other cases, by taking $\Gamma^1,\Gamma^2$ such as in \eqref{regpersm1m2}, the contribution of $(\Gamma^1)_{i,j},(\Gamma^2)_{i,j}$ is zero which is dual optimal. For $i=j$, the choice of $\Lambda^+_{i,i}=\Lambda^-_{i,i}=0$ leads to $(\Gamma^1_{i,i}+\Gamma^2_{i,i})z_i=0$ which is dual optimal. In addition, by complementary slackness, at optimality, $\alpha_{:,j}=\xi_{:,j}=X_{:,j}-X_{-j}\bar{\beta}_{:,j}$. In \eqref{fregdual}, $F_1(z)$ is written as the maximum of linear functions of $z$, therefore, the gradient of the dual cost function w.r.t $z$ at an optimal dual solution, is a subgradient of $F_1(z)$ by Danskin's Theorem \citep{bertsekas1997nonlinear}. Finally, the gradient of cost in~\eqref{fregdual} w.r.t. $z_i$ is $-\sum_j \Gamma^1_{i,j}$ and the gradient w.r.t. $z_j$ is 
$$-\sum_{i=1}^p \Gamma^2_{i,j}-\sum_{i:i\neq j}\frac{\left( \Lambda^{+}_{i,j}-\Lambda^{-}_{i,j}-(X_{-j}^T\alpha_{:,j})_i\right)^2}{4(\lambda-\delta_{i,j})}-\frac{(-\upsilon_j- (\Lambda^{+})_{j,j}+(\Lambda^{-})_{j,j})^2}{4(\lambda-\delta_{j,j})} = -\sum_{i=1}^p\Gamma^2_{i,j}-\lambda\sum_{i=1}^p\bar{\beta}_{i,j}^2$$
by claim~\eqref{claim}.

\end{proof}

\section{Numerical Experiments with Non-Gaussian Data}\label{supp:nongaussian}
Our proposed estimator was designed based on an underlying multivariate Gaussian distribution assumption. To understand how our estimator works when the data distribution deviates from a multivariate Gaussian distribution, we conduct some numerical experiments.

Let $G^*=I_p+(u^*)(u^*)^T+0.05u_1u_1^T$ be a spiked covariance matrix with multiple spikes (below we discuss our choices of $u^*$ and $u_1$). We denote $H^*=(G^*)^{1/2}$ to be the matrix square root of $G^*$. 
We draw $\tilde{X}\in\R^{n\times p}$ with independent coordinates (aka entries) from a non-Gaussian distribution $\mathcal{D}$ as we discuss below. This distribution has zero mean and unit variance. We then obtain the data matrix as $X=\tilde{X}H^*\in\R^{n\times p}$. Note that the rows of $X$ are independent random variables, as $\tilde{X}$ has independent coordinates. Moreover, we have that
$$\E[X]=0~~~~~\text{and}~~~~~\frac{1}{n}\E[X^TX]=\frac{1}{n}H^*\E[\tilde{X}^T\tilde X]H^*=(H^*)^2=G^*$$
as the distribution $\mathcal{D}$ has zero mean and unit variance. Therefore, the data $X$ has zero mean, and its covariance follows a spiked covariance model. However, the rows of $X$ do not follow a multivariate Gaussian distribution. We study how our estimator and other estimators perform under this non-Gaussian spiked covariance model.

For the rest of this section, we will assume $u^*_1,\cdots,u^*_s=1/\sqrt{s}$, while $u^*_{s+1},\cdots,u^*_p=0$. We set the first $s$ coordinates of $u_1$ to be zero, and draw the rest of the coordinates independently from $\text{Unif}[0,1]$ and normalize $u_1$ to have unit norm. In what follows, we present our results for two data generation scenarios. Throughout this section, we set $s=5,p=500$ and vary $n\in\{300,500\}$.

\subsection{Truncated Gaussian Distribution}\label{supp:truncated}

First, we study a case where the non-Gaussian distribution $\mathcal{D}$ follows a (normalized) truncated Gaussian distribution. Let us denote:
$$\varphi(x)=\frac{1}{\sqrt{2\pi}}\exp\left(-\frac{x^2}{2}\right),~~~~~~~~~\Phi(x)=\int_{-\infty}^x \varphi(x)dx$$
to be the probability density function (pdf), and cumulative distribution function (cdf) of a standard one dimensional Gaussian distribution, respectively. For any given $b>0$, we say a distribution follows a truncated Gaussian distribution with threshold $b$, if its pdf is given by
\begin{equation}\label{truncated}
    f(x;b)=  \begin{cases}
       \frac{\varphi(x)}{\Phi(b)-\Phi(-b)} & \text{~if~} x\in[-b,b]\\
       0 & \text{~otherwise}. \\
     \end{cases}
\end{equation}
We note that in general, a truncated Gaussian distribution does not have unit variance. To this end, in our experiments we independently draw coordinates of $\tilde{X}$ from a truncated Gaussian, and then normalize $\tilde{X}$ by the standard deviation of the truncated Gaussian. Note that under this transformation, coordinates of $\tilde{X}$ need not be limited to $[-b,b]$. We observe that by increasing $b$, the function in~\eqref{truncated} more closely resembles a standard Gaussian pdf. Therefore, by varying $b$, we quantify how each estimator performs as we deviate further from a Gaussian spiked covariance model. To demonstrate this, we plot the pdf of one coordinate of $\tilde{X}$ for varying values of $b$ in Figure~\ref{fig:truncated_b}.

\begin{figure*}[t!]
     \centering
     \begin{tabular}{cc}
&\includegraphics[width=0.44\textwidth,trim=4.2cm 8.5cm 4.2cm 8.5cm, clip]{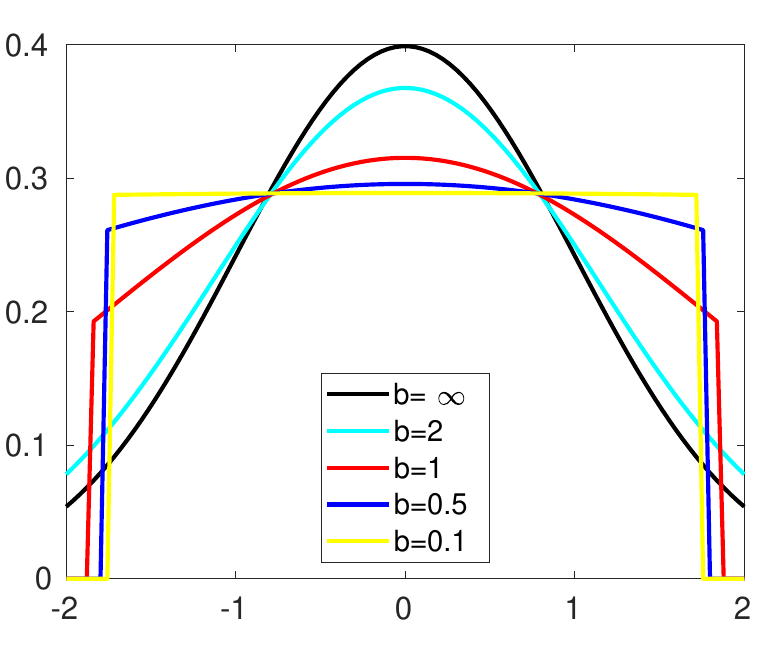}\\
&$x$
\end{tabular}
        \caption{\small The probability density function of coordinates of $\tilde{X}$ for varying values of $b$ in Appendix~\ref{supp:truncated}. }
        \label{fig:truncated_b}
\end{figure*}

We present the estimation error results for this setup in Figure~\ref{fig:synt-non-1} for two values of $n$ and various values of $b$. We see that for larger values of $b$ our method seems to have the lowest estimation error. However, by reducing $b$ (which results in a distribution further from Gaussian), most methods' accuracy reduces, including ours. This suggests that as long as the distribution does not deviate much from the Gaussian, our proposal can have good accuracy.

\begin{figure*}[t!]
     \centering
     \begin{tabular}{lcc}
&  $n=300$ & $n=600$ \\
     \rotatebox{90}{~~~~~~~~~~~~~~~~~~~~~~~~~~~~~~~~~$|\sin\angle(\hat{u},u^*)|$}& \includegraphics[width=0.44\textwidth,trim=.5cm 0cm 0cm 0cm, clip]{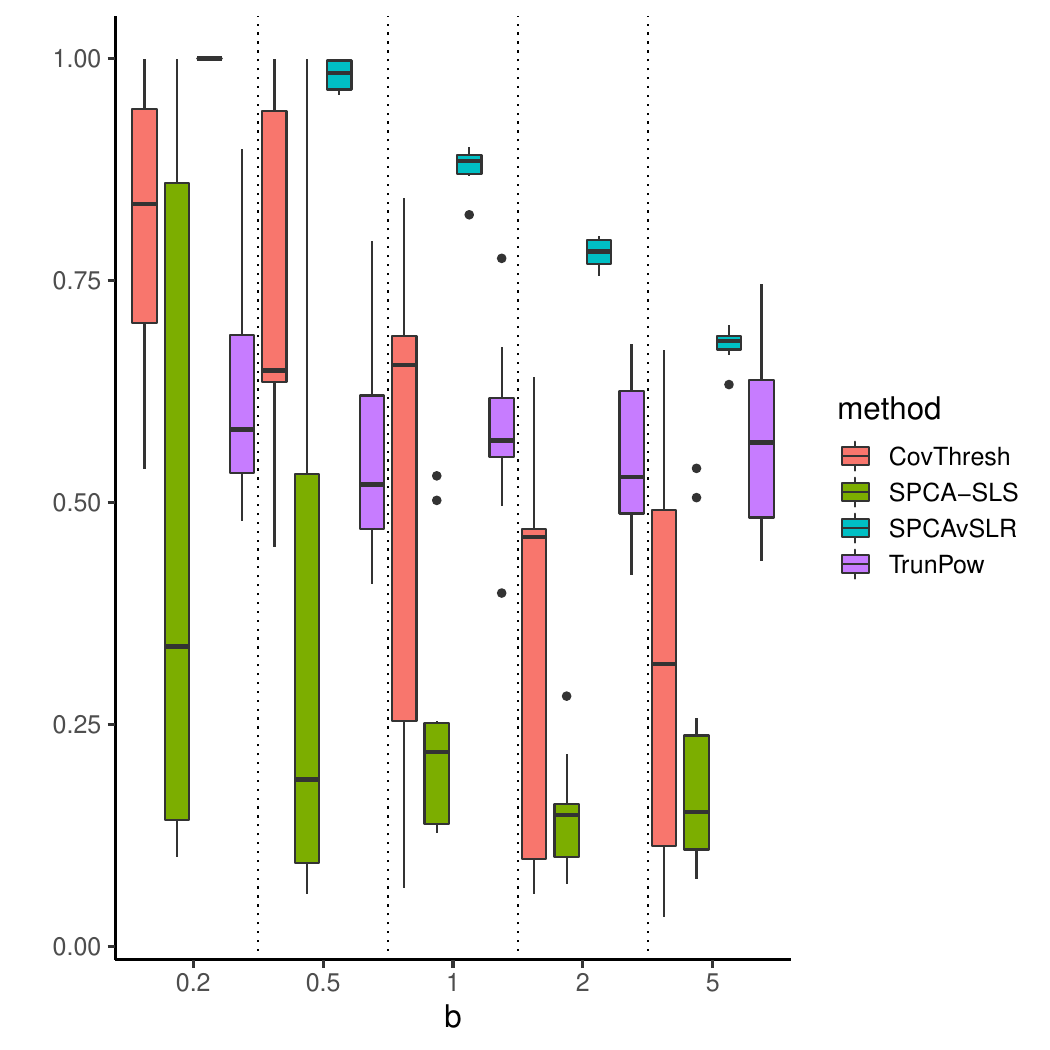}& 
 \includegraphics[width=0.44\textwidth,trim=.5cm 0cm 0cm 0cm, clip]{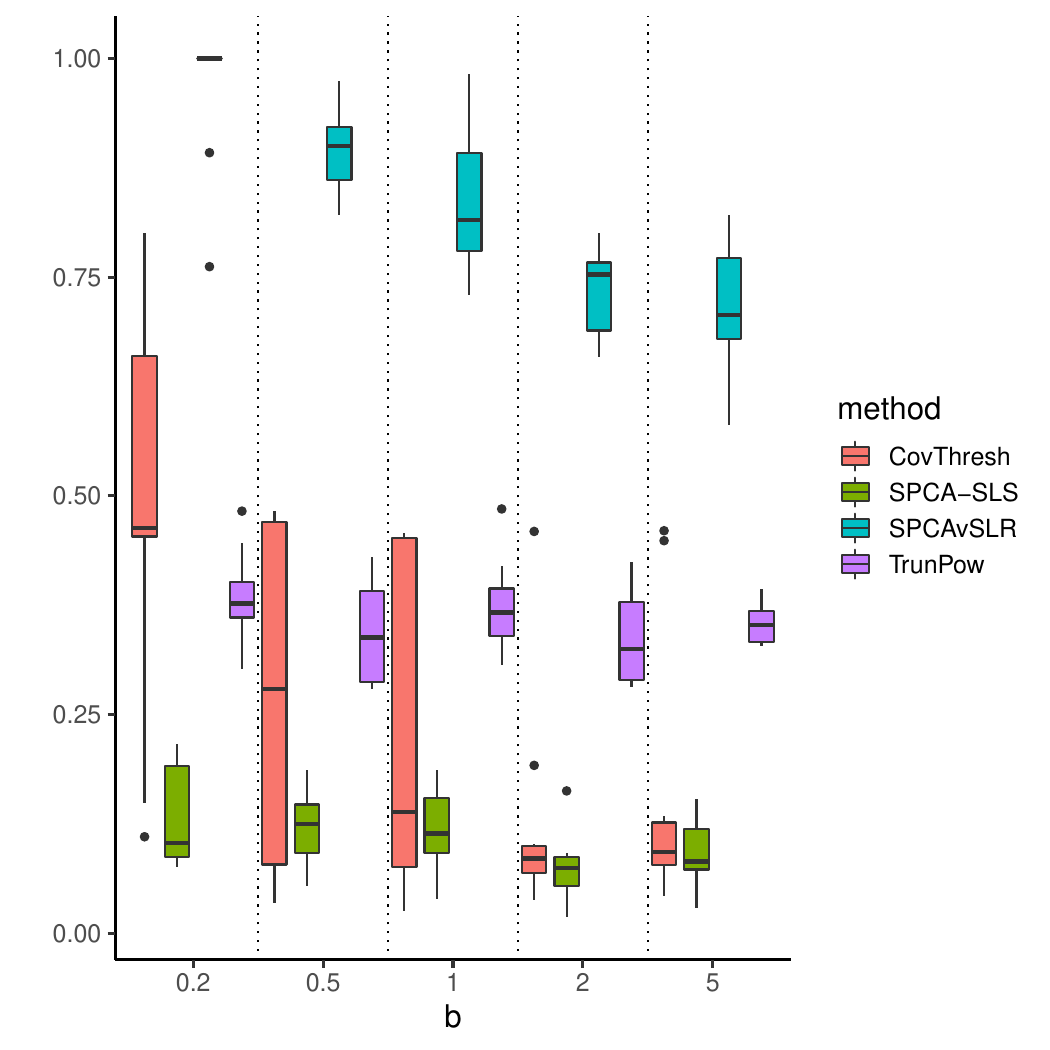}\\ 
\end{tabular}
        \caption{\small Experimental results with the truncated Gaussian distribution in Appendix~\ref{supp:truncated}.}
        \label{fig:synt-non-1}
\end{figure*}

\subsection{Binomial Distribution}\label{supp:binom}
Here we study a discrete distribution. In particular, for any given $N\geq 1$, we draw each coordinate of $\tilde X$ independently from $\text{binom}(N,0.5)$ where $\text{binom}(N,p)$ denotes a binomial distribution with $N$ trials and probability of success $p$. We then standardize $\tilde{X}$ using the mean and the standard deviation of $\text{binom}(N,0.5)$, such that each coordinate of $\tilde X$ has zero mean and unit variance. We note that as we increase $N$, by the central limit theorem, $\text{binom}(N,0.5)$ can be approximated by a Gaussian distribution. Therefore, similar to the previous case, we quantify how each estimator performs as we deviate further from a Gaussian spiked covariance model. The estimation error results for this setup are reported in Figure~\ref{fig:synt-non-2}. Similar to the case of the truncated Gaussian distribution, we see that as the underlying generative distribution deviates from Gaussian (i.e., smaller values of $N$), most methods' accuracy reduces, though for large $N$ we observe our method has the smallest error.

\begin{figure*}[t!]
     \centering
     \begin{tabular}{lcc}
&  $n=300$ & $n=600$ \\
     \rotatebox{90}{~~~~~~~~~~~~~~~~~~~~~~~~~~~~~~~~~$|\sin\angle(\hat{u},u^*)|$}& \includegraphics[width=0.44\textwidth,trim=.5cm 0cm 0cm 0cm, clip]{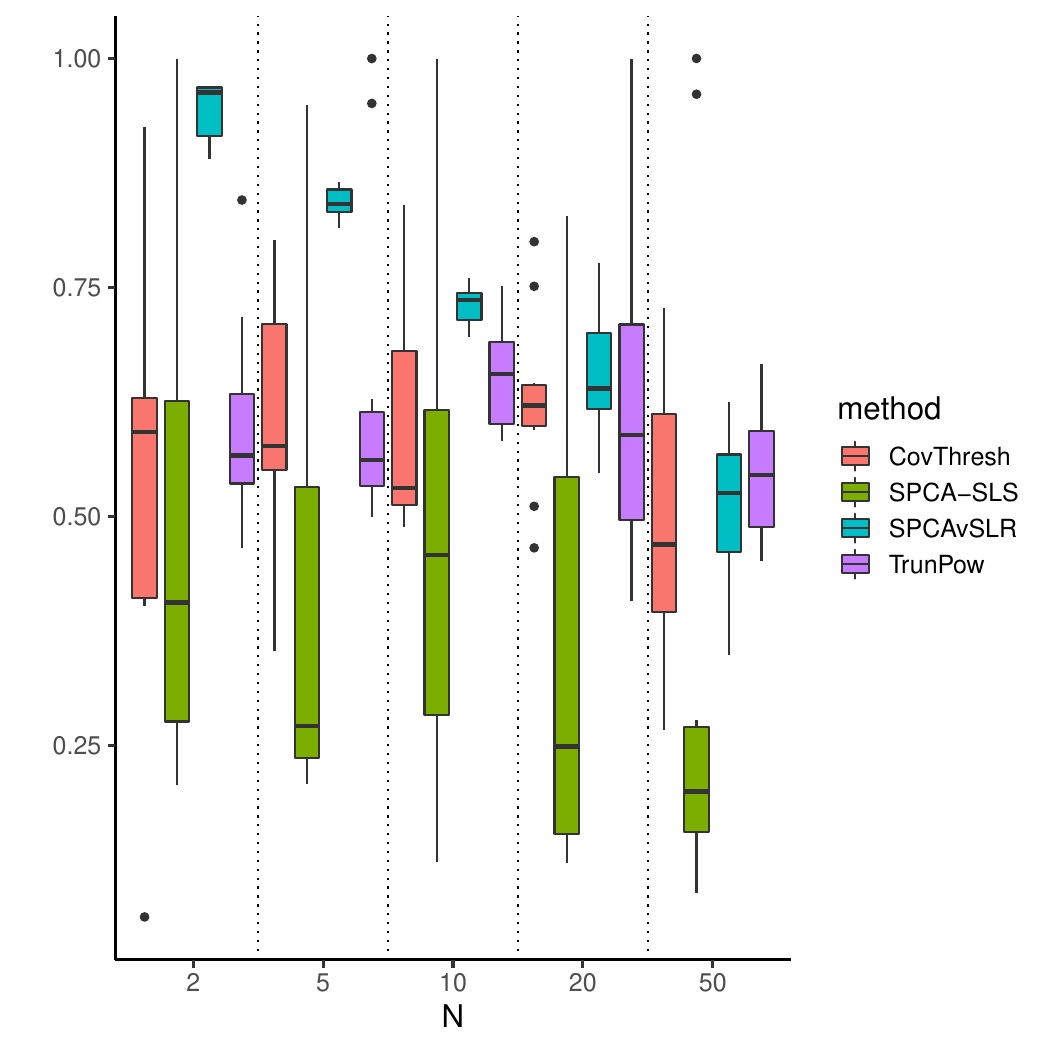}& 
 \includegraphics[width=0.44\textwidth,trim=.5cm 0cm 0cm 0cm, clip]{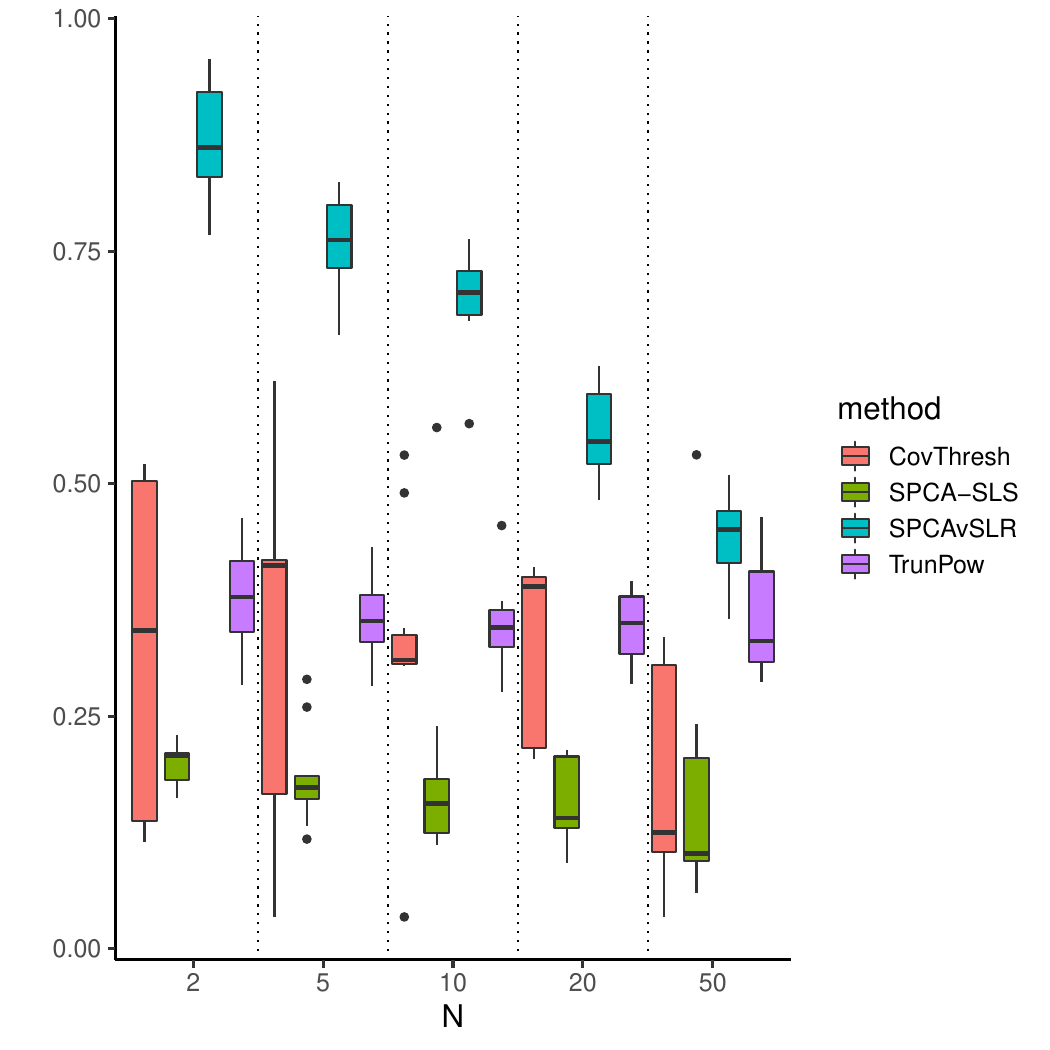}\\ 
\end{tabular}
        \caption{\small  Experimental results with the binomial distribution in Appendix~\ref{supp:binom}.}
        \label{fig:synt-non-2}
\end{figure*}

\end{document}